\documentclass[aps,pra,11pt,onecolumn,nofootinbib,tightenlines]{revtex4-2}

\usepackage{mathrsfs}
\usepackage{graphicx}
\usepackage{comment}
\usepackage{dsfont}
\usepackage{amsmath}
\usepackage{amsfonts}
\usepackage{amssymb}
\usepackage{amsthm}
\usepackage{float}
\usepackage[colorlinks=true,urlcolor=blue,citecolor=blue,linkcolor=blue]{hyperref}
\usepackage{caption}
\usepackage{lipsum}
\usepackage{subcaption}
\usepackage{mwe}
\usepackage{enumerate}
\usepackage{sidecap}
\usepackage{natbib}
\usepackage{appendix}
\usepackage{tikz}
\usepackage{booktabs}
\usepackage{array}
\usepackage{multirow}
\usepackage{forest}
\usepackage{bm}
\usepackage{empheq}
\usepackage[shortlabels]{enumitem}

\newtheorem{theorem}{Theorem}[section]
\newtheorem{lemma}[theorem]{Lemma}
\newtheorem{proposition}[theorem]{Proposition}
\newtheorem{corollary}[theorem]{Corollary}
\newtheorem{conjecture}[theorem]{Conjecture}

\theoremstyle{definition}
\newtheorem{definition}[theorem]{Definition}
\newtheorem{assumption}{Assumption}

\newtheorem{remark}[theorem]{Remark}

\let\oldremark\remark
\renewcommand{\remark}{\oldremark\upshape}

\def \d {\mathrm{d}} 
\newcommand{\D}{\mathcal{D}}

\renewcommand{\H}{\mathbb{H}}

\newcommand{\cS}{\mathcal{S}} 
 
\newcommand{\X}{\mathcal{X}} 
 
\newcommand{\V}{\mathcal{V}} 
\newcommand{\Y}{\mathcal{Y}}

\newcommand{\R}{\mathbb{R}}
\newcommand{\T}{\mathbb{T}}

\definecolor{darker}{RGB}{2.0,100.0,20.0}

\newcommand{\mc}[1]{\mathcal{#1}}

\newcommand{\mb}[1]{\mathbb{#1}}
\newcommand{\ketbra}[2]{\left|#1\middle\rangle\middle\langle #2\right|}
\newcommand{\ptr}[2]{{\rm Tr}_{#1}\left[#2\right]}

\newcommand{\rleq}{\trianglelefteq}
\newcommand{\rgeq}{\trianglerighteq}

\begin{document}

\title{\Large A complete characterisation of conditional entropies}

\begin{abstract}
Entropies are fundamental measures of uncertainty with central importance in information theory and statistics and applications across all the quantitative sciences. Under a natural set of operational axioms, the most general form of entropy is captured by the family of Rényi entropies, parameterized by a real number $\alpha$. 
Conditional entropy extends the notion of entropy by quantifying uncertainty from the viewpoint of an observer with access to potentially correlated side information. However, despite their significance and the emergence of various useful definitions, a complete characterization of measures of conditional entropy that satisfy a natural set of operational axioms has remained elusive.
In this work, we provide a complete characterization of conditional entropy, defined through a set of axioms that are essential for any operationally meaningful definition: additivity for independent random variables, invariance under relabeling, and monotonicity under conditional mixing channels. 
We prove that the most general form of conditional entropy is captured by a family of measures that are exponential averages of Rényi entropies of the conditioned distribution and parameterized by a real parameter and a probability measure on the positive reals. Finally, we show that these quantities determine the rate of transformation under conditional mixing and provide a set of second laws of quantum thermodynamics with side information for states diagonal in the energy eigenbasis.

\end{abstract}

\author{Roberto Rubboli$^{*,\dagger,\ddagger}$}
\author{Erkka Haapasalo$^{*}$}
\author{Marco Tomamichel$^{*,\S}$}

\maketitle

\begingroup
\renewcommand{\thefootnote}{\fnsymbol{footnote}}

\footnotetext[3]{Email: \texttt{ror@math.ku.dk}}       
\footnotetext[1]{Centre for Quantum Technologies, National University of Singapore, Singapore 117543, Singapore} 
\footnotetext[2]{Department of Mathematical Sciences, University of Copenhagen, Universitetsparken 5, 2100 Denmark}                         
\footnotetext[4]{Department of Electrical and Computer Engineering,
National University of Singapore, Singapore 117583, Singapore}                         

\endgroup

\tableofcontents 

\section{Introduction}

Entropies are measures of uncertainty about the outcome of a random experiment. Given a discrete random variable $X$ on $\mc{X}$ governed by a probability mass function (pmf) $P(x)$, its surprisal is the random variable $S(X) = \log \frac{1}{P(X)}$. The Shannon entropy~\cite{shannon1948mathematical} is its expectation\footnote{Unless stated otherwise, all sums are understood to run over the full domain.},
\begin{equation}
    H(X) := \mb{E}[S] = \sum_{x} P(x) \log \frac{1}{P(x)} \,.
\end{equation}

While Shannon entropy plays a central role in information processing by characterizing the ultimate performance limits for a range of tasks, most notably data compression~\cite{shannon1948mathematical}, it is evident that a more fine-grained specification of the surprisal is necessary to understand tail events. Notable examples are the min-entropy, $H_{\min}(X) = \min_{x \in \mc{X}} S(x)$, which plays a pivotal role in cryptography, and the Hartley entropy~\cite{hartley28}, $H_{\max}(X) = \log \sum_{x \in \mc{X}} \mathbf{1}\{ P(x) > 0\}$. To fully characterize the distribution we consider the family of exponential expectations,
\begin{align}
  \mathbb{R} \ni  t \mapsto \mb{E}_t[X] := \frac{1}{t} \log \mb{E}\big[ \exp(t X) \big], 
\end{align}
and observe that $\lim_{t \to 0} \mb{E}_t[X] = \mb{E}[X]$ recovers the usual expectation and the limits $t \to \pm \infty$ recover the maximal and minimal value in $X$ can take, respectively. This function is closely related to the cumulant generation function (cgf) given by $K_t[X] = t\,\mb{E}_t[X]$.

Given this, the distribution of the surprisal $S$ is conveniently characterised in terms of the family of Rényi entropies~\cite{renyi61} with parameter $\alpha \in [0, \infty]$ as
\begin{align}
    H_{\alpha}(X) := \frac{1}{1-\alpha} \log \left( \sum_{x} P(x)^{\alpha} \right) = \mb{E}_{1-\alpha}[S] \,,
\end{align}
The Hartley, Shannon, and min-entropy emerge as point-wise limits when $\alpha \to \{0, 1, \infty\}$, respectively. 
The cgf and Rényi entropies arise naturally whenever tails of the distribution of the surprisal are relevant, e.g., in the large deviation regime of source coding, when investigating error or strong converse exponents, and in one-shot information theory.

Their ubiquity in information theory leads to the question of whether Rényi entropies are complete under some set of well-motivated information-theoretic axioms. Axiomatic approaches to entropy have a rich history, surveyed in the books by Acz\'el-Dar\'oczy~\cite{aczel75} and Ebanks-Sahoo-Sander~\cite{ebanks98}, as well as the review by Csisz\'ar~\cite{csiszar08}. All these approaches have in common that at least one of the axioms is chosen for mathematical convenience and does not have an operational justification. As such, they fail to fully explain the role Rényi entropies play in information theory. 

In contrast, recent work~\cite{gour21_axiomatic} proposes a set of axioms that 
any operationally meaningful measure of entropy should satisfy: monotonicity under doubly stochastic channels, invariance under embedding, additivity for independent variables, and normalization. We shall soon explain the meaning of these concepts.
%
%
Any quantity $\H(X)$ satisfying these axioms can be expressed as a convex combination of Rényi entropies~\cite{mu21_renyi,gour21_axiomatic}.\footnote{More precisely, a one-to-one relationship between entropy and relative entropy is established in~\cite{gour21_axiomatic} and \cite{mu21_renyi} gives a complete characterization of relative entropy. The statement is also a special case of our main theorem.} Namely,
\begin{align}\label{eq:EntropyBarycentre}
    \H(X) = \int_{[0,\infty]} \d\mu(\alpha) \, H_{\alpha}(X)
\end{align}
for some probability measure $\mu$ on the extended positive reals. Our goal here is to derive a similar complete characterization for conditional entropy.

Conditional entropies extend the notion of entropy to settings where additional information, known as side information, is available, quantifying uncertainty from the perspective of an observer with access to it. 
Consider thus two discrete random variables $X$ on $\mc{X}$ and $Y$ on $\mc{Y}$ governed by a joint pmf $P(x,y) = P(x|y)P(y)$ that we decomposed in terms of the conditional pmf $P(x|y)$ and marginal $P(y)$. We define the conditional surprisal given an event $Y = y$ was observed as $S_{y}(X) = \log \frac{1}{P(X|y)}$ for each $y \in \mc{Y}$. Noting that $H(X|Y\!=\!y) := \mb{E}[S_y]$ is a random variable once we vary $y$, the conditional Shannon entropy of $X$ given $Y$ is then expressed as a total expectation, 
\begin{align}
    H(X|Y) := \sum_{x, y} P(x,y) \log \frac{1}{P(x|y)} = \mb{E}\big[\mb{E}[S_Y] \big] \,,
\end{align}

Several definitions of conditional Rényi entropies have been proposed and found operational significance (see, e.g., \cite{teixeira2012conditional} and \cite{fehr14conditional} for reviews that cover some aspects of conditional entropy). We introduce some of the most widely used definitions here, and will later show that they are all special cases of a general class. Two prominent examples are $H_{\alpha}^H$, due to Hayashi and Skorić \emph{et al.}~\cite{hayashi2011exponential,vskoric2011sharp}, and $H_{\alpha}^A$, due to Arimoto~\cite{arimoto1977information}. They can be expressed as exponential expectations of the entropies $H_{\alpha}(X|Y\!=\!y) := \mb{E}_{1-\alpha}[S_y]$, i.e., as a total exponential expectation
\begin{align}
\label{Hayashi conditional entropy}
H_{\alpha}^H(X|Y) &:= \frac{1}{1-\alpha}\log \bigg(\sum_{x,y} P(y) P(x|y)^\alpha\bigg) = \mb{E}_{1-\alpha}\big[ \mb{E}_{1-\alpha}[S_Y] \big] \,,\\
\label{Arimoto conditional entropy}
H_{\alpha}^A(X|Y) &:= \frac{\alpha}{1-\alpha}\log{\bigg(\sum_y P(y)\bigg(\sum_x P(x|y)^\alpha\bigg)^\frac{1}{\alpha}\bigg)} = \mb{E}_{\frac{1-\alpha}{\alpha}} \big[ \mb{E}_{1-\alpha}[S_Y] \big] \,.
\end{align}
Convex combinations of these definitions have been considered in~\cite{hayashi2016uniform}.
A two-parameter family interpolating between them has been introduced by Hayashi and Tan~\cite{hayashi2016equivocations}:
$H_{\alpha,\beta}(X|Y) := \mb{E}_{\beta (1-\alpha)/\alpha}\big[ \mb{E}_{1-\alpha}[S_Y] \big]$.
As part of a study of quantum conditional entropy, its properties were studied in detail by some of the present authors~\cite
{rubboli2024quantum}.
This quantity possesses desirable properties if $(\alpha,\beta)$ are either in $(0,1)^{\times 2}$ or $(1,\infty)^{\times 2}$ and reduce to $H_{\alpha}^H$ and $H_{\alpha}^A$ for $\beta = \alpha$ and $\beta = 1$, respectively. The fact that these quantities have desirable properties for a large range of parameters shows that $\alpha$ and $\beta$ can be chosen largely independently.

In another line of work, Cachin~\cite{cachin1997entropy} and Renner and Wolf~\cite{renner2005simple} introduced the conditional entropies, which can see as limiting cases of the above family:
\begin{align}
    H^C_\alpha(X|Y) & = \sum_yP(y)H_\alpha(X|Y=y) = \mb{E}[\mb{E}_{1-\alpha}[S_Y]]\, \quad \textrm{and} \\
    H^R_\alpha(X|Y) &= \max_{y} H_\alpha(X|Y=y) = \mb{E}_{\infty}[\mb{E}_{1-\alpha}]]\,.    
\end{align} 
for $\alpha < 1$ and with $\max$ replaced by $\min$ for $\alpha > 1$.

Another two-parameter conditional entropy was proposed by Tan and Hayashi~\cite
{tan2018conditional}. It can be written as 
$H_{\alpha,\beta}^T(X|Y) =  \mb{E}_{1-\alpha, (1-\alpha) (1-\beta)/\beta}[\mb{E}_{1-\alpha}[S_{Y}],\, \mb{E}_{1-\beta}[S_{Y}]]$, where we introduced a two-variable exponential expectation
\begin{align}
    \mb{E}_{t_1,t_2}[X_1, X_2] := \frac{1}{t_1+t_2} \log \mb{E}\big[ \exp( t_1 X_1 + t_2 X_2) \big] \,.
\end{align}
This indicates an even larger parameter space of operationally significant conditional entropies. 


Indeed, in this work, we consider a very general form of conditional entropy (that we later prove to be the most general form) that is defined in terms of an exponential average of a continuous set of random variables. For any probability measure $\tau$ and $t \in \mathbb{R}$, we define
\begin{align}
    \mb{E}_{t, \tau}[ \alpha \mapsto X_\alpha] = \frac{1}{t} \log \mb{E}\Bigg[ \exp \Bigg( t \int_{[0,\infty]} X_\alpha\, \d \tau(\alpha) \Bigg) \Bigg] \, ,
\end{align}
which allows us to finally introduce our quantity of interest,
\begin{align}
 & H_{t, \tau}(X|Y) = \mb{E}_{t, \tau} \big[ \alpha \mapsto \mb{E}_{1-\alpha}[S_Y] \big] =\frac{1}{t} \log \sum_{y} P(y)\exp \bigg(t \int_{[0,\infty]}H_\alpha(X|Y\!=\!y)\, \d\tau(\alpha)\bigg).\label{eq:General}
\end{align}
All the entropies listed above arise as special cases of this general conditional entropy for suitable choices of the parameters (see Section~\ref{sec: literature} for more details). 


\paragraph*{An axiomatic approach.}
We consider the following four axioms (see also~\cite{gour2018conditional,gour2024inevitability}), which are motivated by our desire to interpret conditional entropy as a measure of uncertainty from the perspective of an observer with side information. Below, we view any joint probabilities $P$ on $\mc{X}\times\mc Y$ as matrices with rows and columns labeled by $x$ and $y$, respectively, i.e.,
$(P)_{x,y} = P(x,y)$.
\begin{enumerate}
    \item \textbf{Invariance under relabeling.} We require that relabeling outcomes or embedding them into a larger alphabet will not change the conditional entropy. 

    \smallskip
    
    Formally, consider two injective functions $f: \mc{X} \to \mc{X}'$ and $g: \mc{Y} \to \mc{Y}'$ and $Q$ with $Q\big(f(x),g(y)\big) = P(x,y)$ for all $x \in \mc{X}, y \in \mc{Y}$ and zero elsewhere. We require that
    \begin{align}
        \H(X|Y)_P = \H(X'|Y')_Q \,.
    \end{align}

\label{it: relabeling and embedding}
    \item \textbf{Monotonicity under conditional mixing.} We expect that the entropy is non-decreasing when we apply a mixing (doubly stochastic) channel acting on $X$. For example, if $X$ is the value of the top card of a deck, shuffling (i.e., applying a more or less random permutation) should never decrease the uncertainty of $X$ from the perspective of an observer.
    Moreover, we also want to account for conditional operations where the mixing channel depends on the side information. This accounts for the possibility that an observer is correlated with the randomness used in the mixing channel, e.g., the observer might know which shuffling technique is used, reducing the randomness introduced in the process compared to an observer without such information. Such conditional mixing channels can signal from $Y$ to $X$ but do not leak information about $X$ to the observer, and thus ought not reduce uncertainty.
    Finally, we require monotonicity under arbitrary channels applied to the side information $Y$ since local processing of the side information can never reduce our uncertainty of $X$.
    The closure under composition of such operations gives us conditionally mixing channels.

    \smallskip

   When representing a joint probability mass function as a matrix, doubly stochastic channels acting on the system $X$ can be described by right multiplication with a doubly stochastic matrix $S$. In addition, channels acting on the side information correspond to right multiplication by a row-stochastic matrix $D$. Finally, conditional operations can be represented by a family of row sub-stochastic matrices $D_i$ acting on the right, whose sum $\sum_i D_i$ is row-stochastic, together with a corresponding family of doubly stochastic matrices $S_i$ acting on the left (see~\cite{gour2018conditional} for more details).
    Explicitly, conditionally mixing channels from $m \times n$ matrices $P$ to $m \times n'$ matrices $Q$ act as
    \begin{align}
        P \mapsto Q = \sum_{i=1}^k S_i P D_i \,, \label{eq:intro/mixing}
    \end{align}
    where $S_i$ are doubly stochastic $m \times m$ matrices for all $i$ and $D_i$ are sub-stochastic $n \times n'$ matrices such that their sum $\sum_{i=1}^k D_i$ is row-stochastic.   We require that for any conditionally mixing channel, any $P$, and $Q$ as defined in~\eqref{eq:intro/mixing}, it holds that
    \begin{align}
        \H(X|Y)_P \leq \H(X|Y')_Q \,.
    \end{align}
\label{it: conditionally mixing channels}
    \item \textbf{Additivity.} We require that entropies for independent random variables are additive. This axiom expresses the principle that the uncertainty of two independent random variables is the sum of their individual uncertainties, and it admits a natural extension to scenarios with side information.

    \smallskip
    
    Formally, for any joint pmf $P$ and $Q$, we require that their Kronecker product, $P \otimes Q$, satisfies
    \begin{align}
        \H(XX'|YY')_{P\otimes Q} = \H(X|Y)_P + \H(X'|Y')_Q \,.
    \end{align}

    \item \textbf{Normalization}: We fix the entropy of a fair coin toss to be $1$ bit, i.e., we set $\H(X)_P = 1$ for $P = (1/2,1/2)^T$. 

\end{enumerate}




In this work, we prove that any conditional entropy that satisfies these four properties is given by a convex combination of conditional entropies $H_{t,\tau}$ defined in~\eqref{eq:General} and their pointwise limits (see Theorem~\ref{th:barycenter conditional entropies} for more details). 
While this result determines the most general form of a conditional entropy, some choices of the parameters \(t\) and \(\tau\) should be excluded if the resulting \(H_{t,\tau}\) does not satisfy all required axioms.
Specifically, we derive a set of sufficient conditions on the parameters $t$ and $\tau$ under which $H_{t,\tau}$ satisfies all axioms. 
These conditions are as follows:
\begin{enumerate}
    \item $t> 0$, $\;\text{supp}(\tau) \subseteq [0,1]$, and
    \begin{equation}\label{eq:integralcond}
    \int_{[0,+\infty]}\frac{\alpha}{1-\alpha}\,\mathrm{d}\tau(\alpha)\leq \frac{1}{t}
    \end{equation}
    \item $t \leq 0$ and
    \begin{itemize}
    \item[(a)] $\;\text{supp}(\tau) \subseteq [0,1]$ or
    \item[(b)] $\;\operatorname{supp}(\tau) \subseteq [0,1] \cup \{\alpha_*\}$ for some $\;\alpha_* \in (1, +\infty]$ so that also \eqref{eq:integralcond} is satisfied.
    \end{itemize}
\end{enumerate}
In addition, we show that these conditions are also necessary whenever the measure \(\tau\) is discrete and supported on a finite number of points, or more generally when it satisfies a regularity assumption (see Section~\ref{sec: set of conditional entropies} for details).
Our analysis further leads us to conjecture that the same conditions remain necessary for a general measure $\tau$.

\paragraph*{Applications.}

Majorization is a well-studied concept with applications across many fields, including probability theory and statistics~\cite{marshall1979inequalities}, mathematics~\cite{bhatia1997graduate}, and quantum information~\cite{nielsen1999conditions,horodecki2013fundamental}. Conditional majorization provides a natural extension of classical majorization to scenarios involving side information. Recalling the definitions introduced above, given two joint probability mass functions $P$ and $Q$, we say that $P$ conditionally majorizes $Q$ if there exists a conditionally mixing channel that maps $P$ to $Q$. When the underlying dimensions of the distributions do not coincide, suitable embeddings are implicitly understood (see Definition~\ref{def:cond-maj-cond-mix}).
Conditional majorization has found applications in areas such as conditional uncertainty relations in quantum information~\cite{gour2018conditional} and the study of games of chance~\cite{brandsen2022entropy}. 
In this work, we provide an operational interpretation of the newly defined conditional entropies $H_{t,\tau}$ as the quantities that determine the optimal rate $R$ for the multi-copy transformation $ P_{XY}^{\otimes n} \;\longrightarrow\; Q_{X'Y'}^{\otimes R n}$
under conditional majorization, for a sufficiently large number of copies $n$. Specifically, we show that
\begin{equation}
R(P_{XY}\rightarrow Q_{X'Y'}) = \inf\left\{ \frac{H_{t,\tau}(X'|Y')_Q}{H_{t,\tau}(X|Y)_P}\Bigg| (t,\tau) \in \D_{\rm{bulk}}\right\} \cup \left\{ \frac{H_{-\infty,\tau}(X'|Y')_Q}{H_{-\infty,\tau}(X|Y)_P}\Bigg| \tau \in \D_{-\infty}\right\}.
\end{equation}
Here, \(\mc D_{\rm bulk}\) denotes the set of parameters for which \(H_{t,\tau}\) is monotone under conditionally mixing channels, while \(\mc D_{-\infty}\) denotes the corresponding set of parameters for which the limiting entropy \(H_{-\infty,\tau}\), obtained as \(t \to -\infty\), satisfies the same monotonicity property.
This establishes a direct link between conditional entropies and asymptotic transformation rates.

\medskip

Our final result addresses transformations with a conditionally preserved probability distribution. We derive sufficient conditions for the existence of such a conditionally preserving map that, given an input distribution together with another probability distribution acting as a catalyst, produces an output distribution arbitrarily close to the desired target, while leaving the catalyst unchanged at the end of the transformation (see Theorem~\ref{thm:ConservedClassical}).
We further apply this result to derive a set of second laws of thermodynamics in the presence of side information, a setting that has been investigated in several works~\cite{sagawa2008second,morris2019assisted,rio2011thermodynamic,bera2017generalized,ji2025fundamental,narasimhachar2017resource}.
In the absence of side information, a complete set of necessary and sufficient conditions for transforming states that are diagonal in the energy eigenbasis under thermal operations—assisted by a catalyst that is returned unchanged at the end of the process—was established in~\cite{Brandao}.
 These conditions are commonly referred to as the second laws of quantum thermodynamics. In the absence of side information, these laws can be expressed in terms of certain free energies, which are connected to the Rényi relative entropies between a state and its Gibbs state. 
 In our work, we find that for states that are diagonal in the energy eigenbasis, certain quantities, denoted by $F_{t,\tau}$,  can be used to provide a set of sufficient conditions for transformations under conditioned thermal operations assisted by a catalyst. In particular, 
 \(F_{t,\tau}\) can be constructed from \(H_{t,\tau}\) in~\eqref{eq:General} by replacing the entropies in the exponential expectation with the R\'enyi relative entropies between the state and the Gibbs state. These quantities provide a set of sufficient conditions for state transformations under conditioned thermal operations assisted by a catalyst (see Corollary~\ref{cor:secondlaws} for more details).
This partially solves the open problem posed in \cite[Section V.B]{ji2025fundamental}.
This result shows that even in the classical case, the presence of side information significantly complicates the second laws. Nevertheless, for the specific case we consider, the problem admits a partial solution, whereas the fully quantum case remains an open challenge, even in the absence of side information.

\paragraph*{Methods.} 
To derive the general form of conditional entropy, we establish sufficient conditions under which, in the limit of a large number of copies, one joint distribution can be transformed into another joint distribution using conditionally mixing channels. This is because, by known arguments~\cite{mu21_renyi} (see also~\cite{haapasalo_inprep} for a general statement), this problem can be shown to be equivalent to the axiomatic characterization.
To derive these conditions, we apply the general results on preordered semirings developed in~\cite{fritz2023abstract,fritz2023abstractII}. 
In particular, we construct a specific semiring where the conditional entropies \( H_{t,\tau} \) arise as the relevant functions of the theory.   

\bigskip

The remainder of this manuscript is structured as follows: In Section~\ref{sec: formal definitions}, we introduce the formal definitions adopted in the manuscript. In Subsection~\ref{sec: set of entropies}, we derive the general form of entropies as convex combinations of R\'enyi entropies an important result often subsequently needed. In Section~\ref{sec: semiring of conditional entropies}, we define the semiring of conditional majorization and determine the general form of the extremal conditional entropies $H_{t,\tau}$. In Sections~\ref{sec: sufficient conditions measure} and~\ref{sec: necessary conditions measures}, we establish sufficient and necessary conditions on the parameters $t$ and $\tau$ for which the conditional entropies $H_{t,\tau}$ satisfy the required axioms. In Section~\ref{sec: large sample}, we derive large-sample and catalytic results that form the basis for our applications. In Section~\ref{sec: set of conditional entropies}, we prove the general form of conditional entropies, showing that they are convex combinations of the (extremal) entropies $H_{t,\tau}$. In Section~\ref{sec: Applications}, we discuss applications to rates and the second laws of quantum thermodynamics with side information.

\section{Definitions and preliminary results}
\label{sec: formal definitions}
This section provides the formal definitions that serve as the foundation for the remainder of the manuscript. We also present and prove an exhaustive characterization of unconditional entropies serving as a stepping stone towards the characterization of conditional entropies.

\subsection{Definitions relevant for entropies}

We denote the (always finite) set of possible values of a random variable $X$ by $\mc X$. The probability mass function of $X$ is denoted by $P_X$ and often if there is no risk of confusion, we may drop the subscript. To avoid confusion, we also use other names for the probability mass functions like $Q_X$, $R_X$, and the like. We use similar notational conventions for other random variables like $Y$, $Z$, and so forth. We also usually call probability mass functions as probability distributions for brevity.

\begin{definition}[Embedding and relabeling of probability distributions]
Given two probability distributions \(P_{X}\) and \(\widetilde{P}_{\widetilde{X}}\), we say that \(\widetilde{P}_{\widetilde{X}}\) is obtained from \(P_{X}\) through embedding into a larger space and relabeling of outcomes if there exists an injective function \( f : \mathcal{X} \to \widetilde{\mathcal{X}} \) such that  $\widetilde{P}_{\widetilde{X}}(f(x)) = P_X(x)$
for all $x \in \mathcal{X}$
and zero elsewhere.
\end{definition}

\begin{definition}[Doubly stochastic channel]
A channel is called \emph{doubly stochastic} (or \emph{mixing}) if, when acting on a probability vector, it is represented by a matrix with nonnegative entries whose rows and columns each sum to one.
\end{definition}

\begin{definition}[Majorization]
Given two probability distributions $P_X$ and $Q_{X'}$, we say that $P_X$ majorizes $Q_{X'}$, written as
\begin{equation}
    P_X \succeq Q_{X'}
\end{equation}
if there exist probability distributions \(\widetilde{P}_{\widetilde{X}}\) and \(\widetilde{Q}_{\widetilde{X}'}\), obtained from \(P_X\) and \(Q_{X'}\) via embedding and relabeling, and a doubly stochastic map \(D\) such that $D \widetilde{P}_{\widetilde{X}} = \widetilde{Q}_{\widetilde{X}'}$.
\end{definition}

\begin{remark}
    It is known that it suffices to consider embeddings into a space whose dimension equals the maximum of the supports of the input and output probability distributions. Accordingly, in the following, we assume that the probability distributions $P_X$ and $Q_X$ are defined on a common underlying space. 
\end{remark}
It is well known that the existence of a doubly stochastic map is equivalent to the following condition on the entries of the probability distributions, often referred to as the initial sum condition~\cite{birkhoff1946tres} (see also~\cite{marshall1979inequalities}).
\begin{lemma}[Initial sum condition]
\label{lem: Initial sum condition}
Let \( P_X \) and \( Q_X \) be two $d$-dimensional vectors with nonnegative entries. Moreover, let \( P_X^{\downarrow} \) and \( Q_X^{\downarrow} \) denote the vectors obtained by rearranging the components of \( P_X \) and \( Q_X \), respectively, in non-increasing order. Then \( P_X \succeq Q_X \) if and only if
\begin{equation}
    \sum_{i=1}^{k} P_X^{\downarrow}(i) \geq \sum_{i=1}^{k} Q_X^{\downarrow}(i) \qquad \text{for all } k = 1, \ldots, d-1,
\end{equation}
and equality holds for \( k = d \), that is, $\sum_{i=1}^{d} P_X^{\downarrow}(i) = \sum_{i=1}^{d} Q_X^{\downarrow}(i)$.
\end{lemma}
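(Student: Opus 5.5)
We prove the two directions of Lemma~\ref{lem: Initial sum condition} separately. Throughout, a permutation matrix is doubly stochastic, so we may assume without loss of generality that $P_X=P_X^{\downarrow}$ and $Q_X=Q_X^{\downarrow}$.

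\emph{Necessity.} Suppose $Q_X = D P_X$ with $D$ doubly stochastic, and fix $k$. For any index set $I$ with $|I|=k$ we have
\begin{equation}
\sum_{i\in I} Q_X(i)=\sum_j c_j P_X(j), \qquad c_j=\sum_{i\in I}D_{ij}.
\end{equation}
The coefficients satisfy $0\le c_j\le 1$ (column sums of $D$ equal one) and $\sum_j c_j=k$ (row sums of $D$ equal one). A linear functional of this form, with weights in $[0,1]$ summing to $k$, is maximised on the non-increasing vector $P_X$ by putting weight one on the $k$ largest entries. Hence it is at most $\sum_{i=1}^k P_X^{\downarrow}(i)$. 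Taking $I$ to be the indices of the $k$ largest entries of $Q_X$ gives the inequalities. The case $k=d$ is an equality because column sums equal one, so $D$ preserves the total mass.

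\emph{Sufficiency.} This is the substantive step, and we handle it with the classical Muirhead/Hardy--Littlewood--Pólya construction via T-transforms. A T-transform is a matrix $T=\lambda I+(1-\lambda)\Pi_{jl}$ with $\Pi_{jl}$ a transposition; such a matrix is doubly stochastic.

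We induct on the number of indices where $P\neq Q$. If $P\neq Q$, let $j$ be the largest index with $P(j)>Q(j)$, and let $l>j$ be the smallest index with $P(l)<Q(l)$. Such an $l$ exists because the partial sum conditions with total equality force any surplus to be compensated later. Moreover, for $j<m<l$ we have $P(m)=Q(m)$.

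Set $\delta=\min\{P(j)-Q(j),\,Q(l)-P(l)\}$ and move mass $\delta$ from coordinate $j$ to coordinate $l$. This is a T-transform applied to $P$, with $\lambda$ chosen so that exactly $\delta$ moves; this is possible since $P(j)\ge Q(j)\ge Q(l)\ge P(l)$. The result $P'$ has one more coordinate agreeing with $Q$. We must check that $P'$ still majorizes $Q$ with the same ordering; this bookkeeping is the main obstacle.

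\emph{Ordering.} $P'(j)\ge Q(j)\ge Q(j+1)=P'(j+1)$ when $j+1<l$, and similarly at $l$. So $P'$ remains non-increasing, possibly up to ties, which is harmless.

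\emph{Partial sums.} For $k<j$ and $k\ge l$ the partial sums of $P'$ are unchanged. For $j\le k<l$ they decrease by $\delta$. We need $\sum_{i\le k}P'(i)\ge\sum_{i\le k}Q(i)$ for these $k$. Since the coordinates strictly between $j$ and $l$ agree, it suffices to check $k=j$, where the difference is $\sum_{i\le j}(P-Q)(i)-\delta$. This is nonnegative because $\sum_{i<j}(P-Q)(i)\ge0$ and $\delta\le P(j)-Q(j)$.

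\emph{Conclusion.} After at most $d$ steps we reach $Q$. The composition of doubly stochastic matrices is doubly stochastic, and we undo the initial sorting permutations to obtain the required $D$ with $DP_X=Q_X$. Since the vectors are only assumed nonnegative, the equal-total condition replaces normalisation throughout, and no step used $\sum P=1$.
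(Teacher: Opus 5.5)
Your proof is correct. Necessity bounds $\sum_j c_j P_X(j)$ using weights $c_j\in[0,1]$ that sum to $k$, and sufficiency builds the Hardy--Littlewood--P\'olya/Muirhead chain of T-transforms; this is exactly the textbook argument in Marshall--Olkin, which the paper cites (together with Birkhoff) in place of a proof of its own. The one point to add is that the paper's $\succeq$ allows zero-padding before the doubly stochastic map is applied, but this is harmless for necessity because padding leaves the first $d$ sorted partial sums unchanged.
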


\begin{definition}[Entropy]
\label{def: Entropy}
Let us denote, for any finite random variable $X$, the set of probability distributions on $X$ by $\mc P_X$. A function
\begin{equation}
\H : \bigcup_{X} \mc P_{X} \rightarrow \mathbb{R}_+
\label{eq:ent-gen-def-1st}
\end{equation}
is called entropy if it satisfies the following postulates for all systems $X$, $X'$, and probability vectors $P_{X}$ and $Q_{X'}$:
\begin{enumerate}
    \item Invariance under relabeling and embedding: If $P_X$ can be relabeled and embedded into $Q_{X'}$, then $\H(X)_P=\H(X')_Q$. 
     \label{eq:E-2nd-postulate-iso}
    \item Monotonicity under mixing channels: if $Q_{X'}$ is obtained from $P_{X}$ using a mixing channel, then $ \H(X)_{P} \leq \H(X')_{Q}$.
    \label{eq:E-1st-postulate-mono} 
    \item Additivity: $  \H(XX')_{P \otimes Q} = \H(X)_{P} + \H(X')_{Q}$
     \label{eq:E-2nd-postulate-add}
    \item Normalization: $ \H(X)_{(1/2,1/2)} = 1$
    \label{eq:E-4-postulate-normalization}
\end{enumerate}
\end{definition}
Note that the first two properties imply that if \(P_{X} \succeq Q_{X'}\) (i.e., \(P_X\) majorizes \(Q_{X'}\)), then \(\H(X)_P \leq \H(X')_Q\). In other words, any entropy is monotone under majorization. In the following, we will 
also often use the notation $\H(X)_P=\H(P_X)$. Taking into account condition~\ref{eq:E-2nd-postulate-iso} above, we may effectively consider an entropy $\H$ as a map on the union of $\mc P_n$ over $n\in\mathbb{N}$ where $\mc P_n$ is the set of probability distributions over $\{1,\ldots,n\}$. 

\subsection{Definitions relevant for conditional entropies}

In the sequel, we shall discuss joint probability distributions $P_{XY}:\X\times\Y\to[0,1]$. We may often again drop the subscripts. 
The marginals of \(P_{XY}\) are denoted by \(P_X\) and \(P_Y\), and are defined as
\begin{equation}
P_X(x) = \sum_{y \in \mathcal{Y}} P_{XY}(x,y), \quad \forall x \in \mathcal{X}, \qquad
P_Y(y) = \sum_{x \in \mathcal{X}} P_{XY}(x,y), \quad \forall y \in \mathcal{Y}.
\end{equation}
We will always identify a joint probability distribution $P_{XY}$ with the matrix
\begin{equation}
P_{XY}=\big(P_{XY}(x,y)\big)_{(x,y)\in\X\times\Y}.
\end{equation}
This makes many of our future notation and definitions, including the next one, simpler.

\begin{definition}[Conditionally mixing channel]
A \emph{conditionally mixing channel} is a channel of the form
\begin{equation}
P_{XY} \longmapsto Q_{XY'} = \sum_{i=1}^k S^{(i)} P_{XY} D^{(i)}, \footnote{We chose the letters $S$ and $D$ to reflect the Latin terms for left and right, namely sinister and dexter, respectively.}
\end{equation}
where the following conditions hold:
\begin{enumerate}
    \item The matrices \( S^{(i)} \) are doubly stochastic for all $i=1,\dots,k$.
    
    \item The matrices \( D^{(i)} \) are sub-stochastic for all \( i = 1, \ldots, k \).
    \item The sum $\sum_{i=1}^k D^{(i)}$ is row-stochastic.
\end{enumerate}
\end{definition}
For clarity of notation and to facilitate the understanding of matrix dimensions in the arguments that follow, we note that if \( Q_{XY'} \) is a \( d \times n' \) matrix and \( P_{XY} \) is a \( d \times n \) matrix, then the matrices \( S^{(i)} \) have dimensions \( d \times d \), while the selection matrices \( D^{(i)} \) have dimensions \( n \times n' \).

\begin{definition}[Embedding and relabeling of joint probability distributions]
Given two joint probability distributions \(P_{XY}\) and \(\widetilde{P}_{\widetilde{X}\widetilde{Y}}\), we say that \(\widetilde{P}_{\widetilde{X}\widetilde{Y}}\) is obtained from \(P_{XY}\) through embedding into a larger space and relabeling of outcomes if there exist two injective functions
$f:\X\to\widetilde{\X}$ and $g:\Y\to\widetilde{\Y}$ such that $\widetilde{P}_{\widetilde{X}\widetilde{Y}}(f(x),g(y))=P_{XY}\big(x,y\big)$ for all $x\in\X$ and $y\in\Y$ and zero elsewhere.
\end{definition}

\begin{definition}[Conditional majorization]
\label{def:cond-maj-cond-mix}
Given two probability distributions $P=P_{XY}$ and $Q=Q_{X'Y'}$, we say that $P_{XY}$ conditionally majorizes $Q _{X'Y'}$, and write 
\begin{equation}
    P_{XY}\succeq Q_{X'Y'}
        \label{eq:cond-maj-def-1}
\end{equation}
if there exist probability distributions $\widetilde{P}_{\widetilde{X}\widetilde{Y}}$ and $\widetilde{Q}_{\widetilde{X}'\widetilde{Y}'}$, obtained from \(P_{XY}\) and \(Q_{X'Y'}\) via embedding and relabeling, and a conditionally mixing channel that maps $\widetilde{P}_{\widetilde{X}\widetilde{Y}}$ into $\widetilde{Q}_{\widetilde{X}'\widetilde{Y}'}$.
This means that, up to relabeling and embedding, $P_{XY}$ can be transformed into $Q_{X'Y'}$ using a conditional mixing channel.
\end{definition}

\begin{remark}
\label{remark: no isometries}
We observe that when considering transformations between two probability distributions \(P_{XY}\) and \(Q_{XY'}\) that have the same number of rows, no relabeling or embedding beyond the conditionally mixing channel is required. In other words, if \(P_{XY} \succeq Q_{XY'}\) according to Definition~\ref{def:cond-maj-cond-mix}, then there exists a conditionally mixing channel that maps \(P_{XY}\) to \(Q_{XY'}\). 

To see this, consider embeddings \(P_{\widetilde{X}Y}\) and \(Q_{\widetilde{X}Y'}\) obtained by padding \(P_{XY}\) and \(Q_{XY'}\) with additional rows of zeros. Since the sum of positive elements can vanish only if all elements are zero, it follows that the rows with index \(j > |\mathcal{X}|\) of $S^{(i)} P_{\widetilde{X}Y} R^{(i)}$
must be zero. Moreover, \(P_{\widetilde{X}Y} R^{(i)}\) acts solely on the columns, leaving the additional rows \(j > |\mathcal{X}|\) unchanged at zero. By Birkhoff's theorem, each \(S^{(i)}\) can be expressed as a convex combination of permutations. Any permutation that acts only on the zero elements can be replaced by the identity, while permutations transferring mass from \(j \leq |\mathcal{X}|\) to \(j > |\mathcal{X}|\) are forbidden since the latter rows must remain zero. Therefore, it suffices to consider only permutations acting within the subspace spanned by rows \(j \leq |\mathcal{X}|\).
\end{remark}

\begin{definition}[Conditional entropy]
\label{def: conditional entropy}
For any finite random variables $X$ and $Y$, we denote the set of probability distributions $P_{XY}$ by $\mc P_{XY}$. A function
\begin{equation}
\H : \bigcup_{X, Y} \mc P_{XY} \rightarrow \mathbb{R}_+
\label{eq:cond-ent-gen-def-1st}
\end{equation}
is called a conditional entropy if it satisfies the following postulates for all systems $X$, $Y$, $X'$, $Y'$, and joint probability distributions $P_{XY}$ and $Q_{X'Y'}$:
\begin{enumerate}
    \item Invariance under relabeling and embedding: if $Q_{X'Y'}$ is obtained from $P_{XY}$ through relabeling and embedding, then $\H(X|Y)_P=\H(X'|Y')_Q$. 
 \label{eq:QCE-2nd-postulate-iso}
    \item Monotonicity under conditionally mixing channels: if $Q_{X'Y'}$ is obtained from $P_{XY}$ using a conditionally mixing channel, then $\H(X|Y)_P\leq\H(X'|Y')_Q.$ 
\label{eq:QCE-1st-postulate-mono} 
    \item Additivity: $\H(XX'|YY')_{P \otimes Q} = \H(X|Y)_{P} + \H(X'|Y')_{Q}$.
  \label{eq:QCE-3rd-postulate-add}
    \item Normalization: $\H(X|Y)_{(1/2,1/2) \otimes Q_Y} = 1$
     \label{eq:QCE-4-postulate-normalization}
\end{enumerate}
\end{definition}
Note that the first two properties imply that if \(P_{XY} \succeq Q_{X'Y'}\) (i.e., \(P_{XY}\) conditionally majorizes \(Q_{X'Y'}\)), then \(\H(X|Y)_P \leq \H(X'|Y')_Q\). In other words, any conditional entropy is monotone under conditional majorization. In the following, we will 
also often use the notation $\H(X|Y)_P=\H(P_{XY})$.

\subsection{The set of entropies}
\label{sec: set of entropies}
In this section, we derive the general form of an entropy, showing that it can be written as a convex combination of Rényi entropies. We also refer to this representation as the barycentric decomposition of entropy.
This result builds on~\cite[Proposition 3.7]{Jensen_Kjaerulf_2019}, which establishes sufficient conditions for transforming multiple copies of one probability distribution into another under doubly stochastic maps. We refer to this as a large-sample result for entropy. We then 
follow the proof strategy of~\cite[Theorem 2]{mu21_renyi}, which uses a compact set of conditions characterizing large-sample transformations involving a set of monotones to establish barycentric decompositions based on those same monotones. This approach is quite general, as emphasized in~\cite{haapasalo_inprep}. In the present setting, the monotones coincide with the Rényi entropies.

\begin{lemma}[{\cite[Proposition 3.7]{Jensen_Kjaerulf_2019}}]\label{lemma:Jensen}
Let $P_X,Q_X$ be two probability distributions and assume that 
\begin{equation}
    H_\alpha(P_X) < H_\alpha(Q_X)  \quad \forall \alpha\in[0,+\infty] \,.
\end{equation}
Then, for sufficiently large $n \in \mathbb{N}$ there exists a doubly stochastic matrix $S_n$ such that
$S_nP_X^{\otimes n} = Q_X^{\otimes n}$.
\end{lemma}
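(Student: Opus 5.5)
The plan is to verify the initial sum condition of Lemma~\ref{lem: Initial sum condition} for $P_X^{\otimes n}$ and $Q_X^{\otimes n}$ directly, using the method of types. The result is \cite[Proposition 3.7]{Jensen_Kjaerulf_2019}, but one could also obtain it abstractly from the Vergleichsstellensatz of~\cite{fritz2023abstract}, whose monotone homomorphisms on the majorization semiring are precisely the functions $\exp(\pm H_\alpha)$. Here I sketch the direct route.

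\textbf{Setup.} Pad both distributions with zeros so they live on a common alphabet of size $d$. For $k\le d^n$, write $F^n_P(k)$ for the sum of the $k$ largest entries of $P_X^{\otimes n}$, and similarly $F^n_Q(k)$. The goal is
\begin{equation}
F^n_P(k)\ge F^n_Q(k)\qquad \text{for all } k,
\end{equation}
once $n$ is large. Parametrize $k=\lceil e^{nR}\rceil$ with $R\in[0,\log d]$.

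\textbf{Boundary rates.} Two endpoints are handled separately.
\begin{itemize}
\item For $R > H_0(P)$ we have $F^n_P(k)=1$, so the inequality is trivial. The strict inequality $H_0(P)<H_0(Q)$ guarantees that the range of rates where $Q$ has already saturated is never reached before $P$ saturates.
\item Near $R=0$ (including $k=1$), the inequality follows from $\max_x P(x)^n>\max_x Q(x)^n$, which is $H_\infty(P)<H_\infty(Q)$.
\end{itemize}

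\textbf{Bulk rates: exponents.} In the bulk, the sorted entries of $P^{\otimes n}$ are grouped by type classes, and each type class has a common probability. This gives, up to factors polynomial in $n$ and uniformly in $R$:
\begin{equation}
1-F^n_P(e^{nR})\doteq \exp\Big(-n\sup_{\alpha\in(0,1)}\tfrac{1-\alpha}{\alpha}\big(R-H_\alpha(P)\big)\Big)\quad (R>H(P)),
\end{equation}
\begin{equation}
F^n_P(e^{nR})\doteq \exp\Big(-n\sup_{\alpha>1}\tfrac{\alpha-1}{\alpha}\big(H_\alpha(P)-R\big)\Big)\quad (R<H(P)).
\end{equation}
Each exponent is derived as a constrained optimization of $D(T\|P)$ over types $T$ whose entropy/likelihood threshold places them among the top $e^{nR}$ elements. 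It is then dualized via a Lagrange multiplier, which is where the Rényi entropies appear.

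\textbf{Comparing exponents.} The hypothesis $H_\alpha(P)<H_\alpha(Q)$ for all $\alpha$ makes the $P$-exponents strictly better in each regime:
\begin{itemize}
\item for $R<H(P)$, $F_P$ decays strictly more slowly than $F_Q$;
\item for $R>H(Q)$, $1-F_P$ decays strictly faster than $1-F_Q$;
\item for $R\in[H(P),H(Q)]$, $F_P\to1$ while $F_Q\to 0$, by $H(P)<H(Q)$.
\end{itemize}
In every regime, a strict gap in exponents beats the polynomial prefactors for large $n$.

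\textbf{Main obstacle: uniformity.} The delicate step is making all of this uniform in $R$, since $n$ must be chosen independently of $k$. I would handle it as follows:
\begin{itemize}
\item Use continuity of $\alpha\mapsto H_\alpha$ on the compact set $[0,\infty]$ to get a uniform gap $\delta>0$, $H_\alpha(Q)-H_\alpha(P)\ge\delta$ for all $\alpha$.
\item Convert this into a uniform positive gap between the exponent functions on compact subintervals away from the crossover points $R=H(P),H(Q),H_0,H_\infty$.
\item Treat small neighbourhoods of those crossover points using the endpoint arguments above, together with the monotonicity of $F^n$ in $k$, so that the regions overlap.
\end{itemize}
Exact (non-asymptotic) type-counting bounds such as $(n+1)^{-d}e^{nH(T)}\le|T_n|\le e^{nH(T)}$ keep all error terms explicitly polynomial and uniform in $R$.
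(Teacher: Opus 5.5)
The paper gives no proof of this lemma; it simply quotes \cite[Proposition 3.7]{Jensen_Kjaerulf_2019}. As you note, the lemma could also be extracted from the Vergleichsstellensatz, which is the route the paper takes for its conditional analogue, Theorem~\ref{th: large sample and catalytic}. Your method-of-types argument is therefore a genuinely different, self-contained route, and it is sound.

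Both of your exponent formulas are right. Dualizing $\min_T[D(T\|P)+|H(T)-R|^+]$ with $\alpha=1/(1-\lambda)$ gives $\sup_{\alpha\ge1}\frac{\alpha-1}{\alpha}(H_\alpha(P)-R)$. Dualizing $\min_{H(T)\ge R}D(T\|P)$ with $\alpha=1/(1+\lambda)$ gives $\sup_{\alpha\in(0,1]}\frac{1-\alpha}{\alpha}(R-H_\alpha(P))$. The type-discretization errors are uniform in $R$, because they only involve the uniform continuity of $H$ and $D(\cdot\|P)$ on the simplex over ${\rm supp}(P)$.

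The citation (or the abstract semiring machinery) buys brevity and a uniform treatment of all boundary cases. Your route buys transparency about which R\'enyi entropies control which part of the Lorenz curve: $\alpha>1$ the head, $\alpha<1$ the tail, $H_\infty$ and $H_0$ its two ends. The price is the uniformity bookkeeping.

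That bookkeeping is the one place your sketch is inaccurate. Your third regime, ``$F_P\to1$ while $F_Q\to0$ on $[H(P),H(Q)]$'', is not uniform. By the central limit theorem for $-\log P^{\otimes n}(X^n)$, $F^n_P(e^{nH(P)})\to 1/2$ when $P$ is not flat on its support, and likewise $F^n_Q(e^{nH(Q)})\not\to 0$. Moreover, the endpoint arguments you invoke (saturation above $H_0(P)$, and $k=1$) say nothing near $H(P)$ or $H(Q)$.

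The repair is that there are no genuine crossover points once each comparison is used on its natural range. Write $E_c$ and $E_e$ for your two exponents.
\begin{itemize}
\item $g_1:=E_c^Q-E_c^P$ is strictly positive on all of $[0,H(Q))$. On $[H(P),H(Q))$ this holds since $E_c^P=0<E_c^Q$. Below $H(P)$ it follows by evaluating at the maximizer $\alpha^*\in(1,\infty]$ for $P$.
\item $g_2:=E_e^P-E_e^Q$ is strictly positive on all of $(H(P),H_0(P)]$. On $(H(P),H(Q)]$ this holds since $E_e^Q=0<E_e^P$. Above $H(Q)$ it holds because the supremum for $Q$ is attained at some $\alpha^*\in(0,1)$, as $R\le H_0(P)<H_0(Q)$.
\end{itemize}
Both exponents are suprema of affine functions of $R$, hence convex and lower semicontinuous. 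They are finite on these closed ranges, so $g_1$ and $g_2$ are continuous there. In particular $E_e^P(H_0(P))=D(U_{{\rm supp}(P)}\|P)$ is finite, so $H_0(P)$ is not singular either.

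Now pick $H(P)<a<b<H(Q)$. This gives uniform gaps for the $F$-comparison on $[0,b]$ and for the $(1-F)$-comparison on $[a,H_0(P)]$. These ranges overlap and, together with $F^n_P(k)=1$ for $k\ge|{\rm supp}(P)|^n$, cover every $k$. Your monotonicity-in-$k$ trick would also bridge the neighbourhoods, e.g.\ by comparing $F^n_P(e^{n(H(P)-\eta)})$ with $F^n_Q(e^{n(H(P)+\eta)})$, but it is not needed.
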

The latter result implies a barycentric decomposition for entropies as we state next.
In particular, this yields a barycentric decomposition with respect to a positive measure, which must be normalized if the additional normalization axiom is imposed (see Proposition~\ref{proposition:barycenter entropies} below).
 Notably, this result will subsequently serve as a key ingredient in deriving the general form of a conditional entropy.  Note that, as we discuss in Remark~\ref{remark: negative alpha dilemma}, only the Rényi entropies with 
$\alpha \geq 0$ appear in the decomposition. Indeed, the entropies with $\alpha<0$ fail to satisfy the invariance under relabeling and embedding axiom.

\begin{proposition}[General form of entropies]\label{proposition:barycenter entropies}
Let $\H$ be an entropy according to Definition~\ref{def: Entropy}. Then, there exists a probability measure $\mu:\mc B\big([0,+\infty]\big)\to\mb R_+$ such that for all probability distributions $P_X$
\begin{equation}\label{eq:barycenter}
\H(P_X)=\int_{[0,+\infty]}H_\alpha(P_X)\d\mu(\alpha) \,.
\end{equation}
Furthermore, if the normalization axiom~\ref{eq:E-4-postulate-normalization} is not assumed, the conclusion remains valid upon replacing the probability measure with a finite positive measure.
\end{proposition}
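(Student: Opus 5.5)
We follow the strategy of \cite[Theorem 2]{mu21_renyi}, with Lemma~\ref{lemma:Jensen} supplying the large-sample input. Fix an entropy $\H$. For each distribution $P$ we form the vector $\hat P := \big(H_\alpha(P)\big)_{\alpha\in[0,+\infty]}$, a continuous function on the compact space $[0,+\infty]$. The target is a positive linear functional $\Phi$ on $C([0,+\infty])$ with $\H(P)=\Phi(\hat P)$. Once we have it, the Riesz representation theorem produces the finite positive measure $\mu$. Applying normalization, together with $H_\alpha\big((1/2,1/2)\big)=1$ for every $\alpha$, then shows that $\mu$ is a probability measure.

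\textbf{Step 1 (monotone dependence on $\hat P$).} Suppose $\hat P<\hat Q$ pointwise. Lemma~\ref{lemma:Jensen} gives $P^{\otimes n}\succeq Q^{\otimes n}$ for large $n$. Monotonicity and invariance then give $\H(P^{\otimes n})\le \H(Q^{\otimes n})$, and additivity turns this into $\H(P)\le\H(Q)$.

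We extend this to rational combinations via tensor powers. Suppose $\sum_i a_i\hat P_i < \sum_j b_j \hat Q_j$ with $a_i,b_j\in\mathbb N$. Apply the above to $\bigotimes_i P_i^{\otimes a_i}$ versus $\bigotimes_j Q_j^{\otimes b_j}$; additivity of $H_\alpha$ and of $\H$ gives $\sum a_i\H(P_i)\le\sum b_j\H(Q_j)$. Uniform distributions $u_m$ have $\hat u_m=\log m\cdot\mathbf 1$. Padding with powers of $u_2$ therefore shows that $\H$ is compatible with the additive, positively homogeneous structure of the cone $\mathcal C$ generated by the vectors $\hat P$.

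\textbf{Step 2 (well-defined functional).} For $f=\sum_i q_i\hat P_i-\sum_j r_j\hat Q_j$ in the rational span $V$ of $\{\hat P\}$, define $\Phi(f)=\sum q_i\H(P_i)-\sum r_j\H(Q_j)$. To check this is well defined, suppose two representations give the same $f$. The strict inequality $f<f+\epsilon\mathbf 1$, with $\epsilon$ rational, combined with Step 1, gives $\Phi$-values differing by at most $\epsilon\,\H(u_2)$. Letting $\epsilon\to0$ yields equality.

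The same argument gives monotonicity: $f\le g$ implies $\Phi(f)\le\Phi(g)$. It also gives the bound $|\Phi(f)|\le \H(u_2)\|f\|_\infty$, because $-\|f\|\mathbf 1 \le f\le \|f\|\mathbf 1$ and $\mathbf 1=\hat u_2$.

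So $\Phi$ is a positive, bounded, $\mathbb Q$-linear functional on $V$, and it extends by continuity to the closure of $V$. Because $\mathbf 1\in V$, $V$ is cofinal, so a Hahn–Banach/Kantorovich extension for positive functionals carries $\Phi$ to a positive linear functional on all of $C([0,+\infty])$. (Alternatively, one could show directly that $V$ is dense via Stone–Weierstrass, but the Kantorovich extension suffices.) Riesz then gives $\mu\ge0$ with $\H(P)=\int H_\alpha(P)\,\d\mu(\alpha)$.

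\textbf{Main obstacle.} The key difficulty is that Lemma~\ref{lemma:Jensen} requires \emph{strict} inequality for all $\alpha\in[0,+\infty]$, including the endpoints $\alpha=0$ (support size) and $\alpha=\infty$. The $\epsilon\mathbf 1$ perturbation by uniform distributions is what handles this: it converts non-strict relations into strict ones at a cost that vanishes as $\epsilon\to0$. Continuity of $\alpha\mapsto H_\alpha(P)$ on the compactified interval also needs checking, so that $\hat P\in C([0,+\infty])$. This holds because $H_\alpha$ is continuous in $\alpha$ with finite limits at $0$, $1$ and $\infty$.

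Without normalization, the argument is unchanged, except that $\mu$ has total mass $\H(u_2)$, and so is only a finite positive measure.
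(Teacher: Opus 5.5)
Your proposal is correct and follows essentially the same route as the paper: a strict large-sample comparison from Lemma~\ref{lemma:Jensen}, then padding with copies of the fair coin to pass from strict to non-strict inequalities (your $\epsilon\mathbf 1$ perturbation plays the role of the paper's $Q^{\otimes m}\otimes R$ step), then a well-defined positive functional on the span of the evaluation functions, a Kantorovich extension, and Riesz--Markov--Kakutani, with normalization fixing $\mu([0,+\infty])=\H(u_2)=1$. Drop the parenthetical Stone--Weierstrass remark, since $V$ is not known to be closed under pointwise products; nothing in your argument relies on it.
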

\begin{proof}
    The result follows from Lemma \ref{lemma:Jensen}; the proof is essentially the same as that of \cite[Theorem 2]{mu21_renyi} and this result can be seen as a special case of Theorem 7 of \cite{haapasalo_inprep}. We give an outline of this proof for completeness. This same proof also works for Theorem~\ref{th:barycenter conditional entropies} later in this work with slight adaptations.

    The first step is to show that $H_\alpha(P_X)\leq H_\alpha(Q_X)$ for all $\alpha\in[0,\infty]$ implies $\mb H(P_X)\leq\mb H(Q_X)$. First, let us assume that $H_\alpha(P_X)< H_\alpha(Q_X)$ for all $\alpha\in[0,\infty]$. According to Lemma \ref{lemma:Jensen}, this means that, for any $n\in\mb N$ sufficiently large, there is a doubly stochastic matrix $S_n$ such that $S_nP_X^{\otimes n}=Q_X^{\otimes n}$. Thus, since $\mb H$ is monotone and additive,
    \begin{equation}
    n\mb H(P_X)=\mb H(P_X^{\otimes n})\leq\mb H(Q_X^{\otimes n})=n\mb H(Q_X),
    \end{equation}
    i.e., $\mb H(P_X)\leq\mb H(Q_X)$. Let us then assume that $H_\alpha(P_X)\leq H_\alpha(Q_X)$ for all $\alpha\in[0,\infty]$. Denoting by $R$ the two-outcome uniform distribution $(1/2,1/2)$, it follows that
    \begin{equation}
    H_\alpha(P_X^{\otimes m})=mH_\alpha(P_X)\leq mH_\alpha(Q_X)<mH_\alpha(Q_X)+1=H_\alpha(Q_X^{\otimes m}\otimes R\big)
    \end{equation}
    for all $m\in\mb N$ and $\alpha\in[0,\infty]$. Thus, according our initial step, we also have
    \begin{equation}
    m\mb H(P_X)=\mb H(P_X^{\otimes m})\leq\mb H(Q_X^{\otimes m}\otimes R)=m\mb H(Q_X)+1
    \end{equation}
    for all $m\in\mb N$. This means $\mb H(P_X)\leq\mb H(Q_X)+1/m$ for all $m\in\mb N$, so that $\mb H(P_X)\leq\mb H(Q_X)$, and the initial result is proven.

    For the details of the rest of the proof, we refer to the proof of Theorem 7 of \cite{haapasalo_inprep} and that of Theorem 2 of \cite{mu21_renyi}. The strategy is to define a positive linear functional on the set $C\big([0,\infty]\big)$ of continuous real functions on the compact set $[0,\infty]$ using the initial step proven above and then use the representation result for such functionals.
    
    Consider the evaluation functions ${\rm ev}_P\in C\big([0,\infty]\big)$, ${\rm ev}_P(\alpha)=H_\alpha(P)$ at any finite probability distribution $P$ and all $\alpha\in[0,\infty]$. By `evaluation' we essentially mean that we actually identify $[0,\infty]$ with the set of R\'{e}nyi entropies $H_\alpha$, $\alpha\in[0,\infty]$. This kind of identification has to be done in the general case of Theorem 7 of \cite{haapasalo_inprep} where the situation is such that large-sample ordering conditions are given by a compact set of functions which is Hausdorff. This is also the case of Theorem \ref{th:barycenter conditional entropies} later on. It can be shown very easily, using our initial step, that we may set up a well-defined positive functional $F$ on the positive cone generated by the evaluation functions in $C\big([0,\infty]\big)$ through $F({\rm ev}_P)=\mb H(P)$. For the well-definedness, note that if $P$ and $Q$ are such that ${\rm ev}_P={\rm ev}_Q$, i.e., $H_\alpha(P)=H_\alpha(Q)$ for all $\alpha\in[0,\infty]$, then, according to our initial result, also $\mb H(P)=\mb H(Q)$. The positivity of $F$ follows similarly. With some extra work, we may show that $F$ extends into a positive functional $G$ on the vector space $V$ spanned by the evaluation functions.
    
    Suppose that $f\in C\big([0,\infty]\big)$ and fix any $n\in\mb N$ such that $n\geq\|f\|_\infty$. Denoting by $1$ the constant function 1 on $[0,\infty]$ and by $R$ the two-outcome uniform distribution, we have
    \begin{equation}
    f\leq \|f\|_\infty\,1\leq n\,1=n\,{\rm ev}_R={\rm ev}_{R^{\otimes n}},
    \end{equation}
    so that all functions in $C\big([0,\infty]\big)$ are upper bounded by functions from $V$. Now a theorem due to Kantorovich \cite{Kantorovich} related to the Hahn-Banach theorem implies that $G$ extends into a positive linear functional $I:C\big([0,\infty]\big)\to\R$. Due to the Riesz-Markov-Kakutani theorem, there is a finite positive measure $\mu:\mc B\big([0,\infty]\big)\to\R_+$ (the Borel $\sigma$-algebra associated with the one-point compactification of the natural topology of $[0,\infty)$) such that
    \begin{equation}
    I(f)=\int_{[0,\infty]}f\,\mathrm{d}\mu,\qquad\forall f\in C\big([0,\infty]\big).
    \end{equation}
    This means
    \begin{equation}
    \mb H(P)=F({\rm ev}_P)=I({\rm ev}_P)=\int_{[0,\infty]}{\rm ev}_P(\alpha)\,\mathrm{d}\mu(\alpha)=\int_{[0,\infty]}H_\alpha(P)\,\mathrm{d}\mu(\alpha).
    \end{equation}
    Denoting again by $R$ the two-outcome uniform distribution, we have
    \begin{equation}
    1=\mb H(R)=\int_{[0,\infty]}\underbrace{H_\alpha(R)}_{=1}\,\mathrm{d}\mu(\alpha)=\mu\big([0,\infty]\big),
    \end{equation}
    i.e., $\mu$ is a probability measure.
\end{proof}

\begin{remark}
\label{remark: negative alpha dilemma}
  One might wonder why Rényi entropies for $\alpha < 0$ are excluded from Proposition~\ref{proposition:barycenter entropies}. The reason lies in the requirement of invariance under isometries stated in Definition~\ref{def: Entropy}. Intuitively, when a full-rank probability distribution is embedded into a larger space—effectively appending zeros to the distribution—the Rényi entropy for $\alpha < 0$ diverges, since raising zero to a negative power yields infinity. As a result, the entropy changes from a finite to an infinite value under such an embedding, violating the isometric invariance axiom~\ref{eq:E-2nd-postulate-iso}.

To formalize this argument, suppose there exists an entropy function that coincides with the Rényi entropy of order $\alpha < 0$ on full-rank probability distributions, is invariant under embeddings, and satisfies monotonicity under doubly stochastic maps—namely, $H_\alpha(X')_P \geq H_\alpha(X')_Q$ whenever $Q_{X'} = S P_{X'}$ for some doubly stochastic matrix $S$. For $\alpha < 0$, such monotonicity is expected due to the Schur concavity of the function $\left( \sum_i x_i^\alpha \right)^{1/\alpha}$. In particular, for negative orders $\alpha<0$, the entropies $H_\alpha$ are expected to be monotone non-increasing, rather than non-decreasing, under doubly stochastic maps.
 Let us define
    \begin{align}
    P_X=\begin{pmatrix}
        \frac{1}{2} \\
        \frac{1}{2}
    \end{pmatrix} \,, \quad
    P_{X'}=\begin{pmatrix}
        \frac{1}{2} \\
        \frac{1}{2}  \\
        0
    \end{pmatrix} 
\,,\quad S=\begin{pmatrix}
            1-\varepsilon & 0 & \varepsilon \\
            0& 1 & 0\\
            \varepsilon & 0 & 1-\varepsilon \\
        \end{pmatrix} \,, \quad 
       Q_{X'} = SP_{X'} = \frac{1}{2}
       \begin{pmatrix}
           1-\varepsilon\\
           1\\
           \varepsilon
       \end{pmatrix} \,.
    \end{align}
Here, $P_{X'}$ represents $P_X$ embedded into a larger space, and $Q_{X'}$ is the output of the doubly stochastic map $S$ applied to $P_X'$. Then, invariance under isometries and monotonicity under a doubly stochastic map would imply the chain of inequalities
    \begin{align}
        1= H_\alpha(X)_P = H_\alpha(X')_P \geq H_\alpha(X')_Q = \frac{1}{1-\alpha}\log{\left(\frac{1}{2^\alpha}\left((1-\varepsilon)^{\alpha}+1+\varepsilon^\alpha\right)\right)} \xrightarrow[\varepsilon \to 0]{} +\infty.
    \end{align}
However, since $1 \neq  +\infty$, this leads to a contradiction. Therefore, no entropy function can simultaneously coincide with the Rényi entropy of order $\alpha < 0$ on full-rank probability distributions, be invariant under embeddings, and satisfy monotonicity under doubly stochastic maps.
\end{remark}

\section{The semiring of conditional majorization}
\label{sec: semiring of conditional entropies}
In this section, we introduce the main technical tool used to establish sufficient conditions for multiple-copy transformation (large sample result) that lead to the general form of conditional entropies (barycentric decomposition). Our approach relies on the real-algebraic theory of preordered semirings developed in \cite{fritz2023abstract,fritz2023abstractII}. This framework is particularly powerful because it provides a large sample theorem (Theorem \ref{thm:Fritz2022}) whenever one can endow the underlying problem with a suitable semiring structure.
The large sample comparison theorem provides a characterization of the sufficient conditions under which multiple copies of one element of the semiring can be transformed into multiple copies of another. 
In our setting, the semiring is essentially the set of joint probability distributions $P_{XY}$, and the large-sample theorem yields necessary and sufficient conditions under which multiple copies of a joint distribution $P_{XY}$ can be transformed into multiple copies of another joint distribution $Q_{XY}$ under conditionally mixing operations in terms of the so-called monotone homomorphisms (see later for the definition) of the theory. In the case of conditional entropies, the monotone homomorphisms and derivations are exactly the extremal conditional entropies (up to a constant and a logarithm). These are the conditional entropies in terms of which any general conditional entropy can be expressed as a convex combination. From here, a standard argument \cite{mu21_renyi,haapasalo_inprep} allows us to deduce from the large sample theorem that any conditional entropy can be expressed as a convex combination of these extremal conditional entropies.

We begin by introducing the fundamental concepts underlying the theory of semirings and the general results that will be employed in our manuscript in Section~\ref{sec: background semiring}. We then construct the semiring associated with conditional entropies and derive the general form of monotone homomorphisms and derivations for this particular setting in Sections~\ref{sec: all monotone homorphisms}, and~\ref{sec: derivations}.

\subsection{Background on preordered semirings and a \textit{Vergleichsstellensatz}}
\label{sec: background semiring}
We define a preordered semiring as a tuple $S = (\mathcal{S}, +, \cdot, 0, 1, \rleq)$, where $\mathcal{S}$ is a set equipped with binary operations of addition $+$ and multiplication $\cdot$, a zero element $0 \in \mathcal{S}$, a multiplicative unit $1 \in \mathcal{S}$, and a preorder relation $\rleq$  (a reflexive and transitive binary relation) defined on $\mathcal{S}$  satisfying
\begin{equation}
    x \rleq y \ \Rightarrow \ 
\begin{cases}
x + a \ \rleq\ y + a, \\[0.3em]
x a \ \rleq\ y a,
\end{cases}
\end{equation}
for all $a \in \mathcal{S}$.
Moreover, 
$(\mathcal{S}, +, 0)$ and $(\mathcal{S}, \cdot, 1)$ are commutative semigroups, and the multiplication distributes over the addition.
For compactness, we omit the multiplication dot when writing products in the semiring.
We write $x \sim y$ for the equivalence relation induced by $\rleq$, meaning that $x \sim y$ if and only if there exist $z_1, \ldots, z_n \in \mathcal{S}$ such that
\begin{equation}
    x \rleq z_1 \rgeq z_2 \rleq \cdots \rgeq z_n \rleq y.
\end{equation}
A preordered semiring $S$ has polynomial growth if it admits a power universal element $u \in \mathcal{S}$, that is,
\begin{equation}
    x \rleq y \quad \Rightarrow \quad \exists\, k \in \mathbb{N}:\ y \rleq x u^k.
\end{equation}
We say that a preordered semiring $S$ is a preordered semidomain if
\begin{equation}
    \begin{aligned}
&xy = 0 \ \Rightarrow\ x = 0 \ \text{or} \ y = 0, \\
&0 \rleq x \rleq 0 \ \Rightarrow\ x = 0.
\end{aligned}
\end{equation}
Finally, $S$ is zerosumfree if $x + y = 0$ implies $x = 0 = y$.

Given preordered semirings $S$ and $T$, we say that a map $\Phi:\cS\to T$ is a {\it monotone homomorphism} if it satisfies the properties
\begin{enumerate}
\item Additivity: $\Phi(x+y)=\Phi(x)+\Phi(y)$ for all $x,y\in \cS$,
\item Multiplicativity: $\Phi(xy)=\Phi(x)\Phi(y)$ for all $x,y\in \cS$,
\item Monotonicity: $x\rleq y$ $\Rightarrow$ $\Phi(x)\rleq\Phi(y)$, and
\item $\Phi(0)=0$ and $\Phi(1)=1$.
\end{enumerate}
We call a monotone homomorphism \textit{degenerate} if $ x \rleq y \quad \Rightarrow \quad \Phi(x) = \Phi(y)$. Otherwise, it is called \textit{nondegenerate}.
We need to consider monotone homomorphisms with values in certain special semirings, namely:
\begin{enumerate}
    \item $\mathbb{R}_+$: the half-line $[0,+\infty)$ with the usual addition, multiplication, and total order.
    \item $\mathbb{R}_+^{\mathrm{op}}$: the same set with reversed order. 
    \item $\mathbb{T}\mathbb{R}_+$: the half-line $[0,+\infty)$ with the usual multiplication, order, and the tropical sum $x + y := \max\{x,y\}$.
    \item $\mathbb{T}\mathbb{R}_+^{\mathrm{op}}$: the same as for $\mathbb{T}\mathbb{R}_+$, but with reversed order.
\end{enumerate}
The term \emph{temperate reals} refers to the pair $(\mathbb{R}_+, \mathbb{R}_+^{\mathrm{op}})$, while \emph{tropical reals} denotes the pair $(\mathbb{T}\mathbb{R}_+, \mathbb{T}\mathbb{R}_+^{\mathrm{op}})$.
Given a monotone homomorphism $\Phi: \cS \to \mathbb{R}_+$, an additive map $\Delta: \cS \to \mathbb{R}$ is called a derivation at $\Phi$ (or a $\Phi$-derivation) if it satisfies the Leibniz rule
\begin{equation}
    \Delta(xy) = \Delta(x)\,\Phi(y) + \Phi(x)\,\Delta(y)
\end{equation}
for all $x, y \in S$.
In this work, we are interested in $\Phi$-derivations at degenerate homomorphisms that are also monotone, i.e.,
\begin{equation}
    x \rleq y \quad \Rightarrow \quad \Delta(x) \leq \Delta(y).
\end{equation}
The following theorem provides both a large-sample and a catalytic result. It belongs to a series of results collectively known as the ``Vergleichsstellens\"{a}tze'' \cite{fritz2023abstractII}.
\begin{theorem}[Based on Theorem 8.6 in \cite{fritz2023abstractII}]\label{thm:Fritz2022}
Let $S$ be a zerosumfree preordered semidomain with a power universal element $u$. Assume that for some $d\in\mb N$ there is a surjective homomorphism $\|\cdot\|:\cS\to\R_{>0}^d\cup\{(0,\ldots,0)\}$ with trivial kernel and such that
\begin{equation}\label{eq:surjectivehomomorphismproperties}
a\rgeq b\ \Rightarrow\ \|a\|=\|b\|\quad {\rm and} \quad \|a\|=\|b\|\ \Rightarrow\ a\sim b.
\end{equation}
Denote the component homomorphisms of $\|\cdot\|$ by $\|\cdot\|_{(j)}$, $j=1,\ldots,d$. Let $x,y\in S\setminus\{0\}$ with $\|x\|=\|y\|$. If 
\begin{itemize}
\item[(i)] for every $\mb K\in\{\R_+,\R_+^{\rm op},\T\R_+,\T\R_+^{\rm op}\}$ and every nondegenerate monotone homomorphism $\Phi : \cS \to \mb K$ with trivial kernel, we have $\Phi(x) > \Phi(y)$, and
\item[(ii)] $\Delta(x) > \Delta(y)$ for every monotone $\|\cdot\|_{(j)}$-derivation $\Delta : \cS \to \R$ with $\Delta(u) = 1$ for all component indices $j = 1,\ldots,d$,
\end{itemize}
then
\begin{enumerate}[label=(\alph*),ref=2(\alph*)]
 \item there exists a nonzero $c\in \cS$ such that $cx\rgeq cy$, and 
\label{it: catalytic}
\item if additionally $x$ is power universal, then $x^n \rgeq y^n$ for all sufficiently large $n\in\mb N$. 
\label{it: asymptotic}
\end{enumerate}
Conversely, if either of these properties holds for at least an integer $n$ or a catalyst $c$, then the inequalities in items (i) and (ii) above hold non-strictly. 
\end{theorem}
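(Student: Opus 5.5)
The plan is to obtain the forward direction by checking that our hypotheses are exactly those of Theorem 8.6 in \cite{fritz2023abstractII}, and to prove the converse directly from the homomorphism and Leibniz properties.

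\emph{Forward direction.} I would restate Fritz's theorem in the present notation and check its hypotheses one by one: $S$ is a zerosumfree preordered semidomain; $S$ has polynomial growth with power universal element $u$; and there is a surjective homomorphism $\|\cdot\|$ onto $\R_{>0}^d\cup\{0\}$ with trivial kernel. Condition~\eqref{eq:surjectivehomomorphismproperties} says that $\|\cdot\|$ is constant on comparable elements and that its fibres are exactly the $\sim$-classes. This is the ``degenerate part'' of the spectrum that Fritz's theorem singles out: monotone homomorphisms into $\R_+$ that are degenerate factor through $\|\cdot\|$, and at such points one must test derivations instead.

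Two normalisations need checking against Fritz's formulation.
\begin{enumerate}[label=(\roman*)]
\item Fritz tests only nondegenerate monotone homomorphisms with trivial kernel. In a semidomain, those with nontrivial kernel either cannot be strict at $x,y$ or reduce to trivial ones, so restricting to trivial kernel loses nothing.
\item The derivations are normalised by $\Delta(u)=1$. A monotone $\|\cdot\|_{(j)}$-derivation with $\Delta(u)=0$ vanishes under the polynomial growth assumption, and positive rescaling does not affect strict inequalities.
\end{enumerate}
With these in place, (i) and (ii) are verbatim the spectral conditions of Fritz's theorem. Its conclusion then gives a nonzero catalyst $c$ with $cx\rgeq cy$, which is item~\ref{it: catalytic}. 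When $x$ is itself power universal, the same theorem gives $x^n\rgeq y^n$ for all large $n$, which is item~\ref{it: asymptotic}.

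If Fritz states (b) in the weaker form $u^{k}x^n\rgeq u^k y^n$, I would absorb the factor $u^k$ using power universality of $x$. Concretely, I would apply the result to a slightly perturbed pair, using that the strict inequalities have slack and that $\Phi$ and $\Delta$ depend continuously on powers. This absorption step is where I expect the only real care to be needed.

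\emph{Converse direction.} This is elementary.
\begin{itemize}
\item \emph{Homomorphisms, catalytic case.} Suppose $cx\rgeq cy$ with $c\neq0$, and let $\Phi$ be a monotone homomorphism with trivial kernel. Monotonicity and multiplicativity give $\Phi(c)\Phi(x)\geq\Phi(c)\Phi(y)$ (reversed in the op cases). Since $\Phi(c)>0$, we may cancel to get $\Phi(x)\geq\Phi(y)$.
\item \emph{Homomorphisms, asymptotic case.} From $x^n\rgeq y^n$ we get $\Phi(x)^n\geq\Phi(y)^n$. Since $t\mapsto t^n$ is monotone on $\R_+$, this gives $\Phi(x)\geq\Phi(y)$.
\item \emph{Derivations, catalytic case.} The Leibniz rule gives $\Delta(cx)=\Delta(c)\|x\|_{(j)}+\|c\|_{(j)}\Delta(x)$. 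Using $\|x\|=\|y\|$, the terms $\Delta(c)\|x\|_{(j)}$ cancel in $\Delta(cx)\geq\Delta(cy)$. Dividing by $\|c\|_{(j)}>0$, which holds by trivial kernel, yields $\Delta(x)\geq\Delta(y)$.
\item \emph{Derivations, asymptotic case.} By induction $\Delta(x^n)=n\|x\|_{(j)}^{n-1}\Delta(x)$, and similarly for $y$. Since $\|x\|_{(j)}=\|y\|_{(j)}>0$, the inequality $\Delta(x^n)\geq\Delta(y^n)$ gives $\Delta(x)\geq\Delta(y)$.
\end{itemize}
Thus the inequalities in (i) and (ii) hold non-strictly, as claimed.
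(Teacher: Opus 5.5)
Your proposal is correct and takes the same route as the paper, which gives no proof of its own but imports the statement directly from Fritz's Theorem~8.6. Hence the trivial-kernel restriction, the normalisation $\Delta(u)=1$ and conclusion (b) for power universal $x$ need no reconciliation or $u^k$-absorption, and your direct verification of the converse (cancelling $\Phi(c)>0$ and $\|c\|_{(j)}>0$, and using $\Delta(x^n)=n\|x\|_{(j)}^{n-1}\Delta(x)$) is a correct self-contained addition. You should drop your side justifications, which are not load-bearing but are inaccurate. A monotone derivation with $\Delta(u)=0$ need not vanish (e.g.\ $\delta\circ\|\cdot\|_{(j)}$ for a nonzero derivation $\delta$ of the field $\R$); it is only constant on $\sim$-classes and so can never be strict. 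Nontrivial-kernel homomorphisms can also be nondegenerate and strict: here the total weight of the deterministic columns is one, into $\R_+^{\rm op}$. Their omission is therefore a genuine feature of Fritz's theorem, not something your argument shows.
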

Large-sample ordering as described in item~\ref{it: asymptotic} of Theorem~\ref{thm:Fritz2022} implies catalytic ordering as in item~\ref{it: catalytic} of Theorem~\ref{thm:Fritz2022}, where the catalyst can be chosen as
\begin{equation}
    c = \sum_{\ell=0}^{n-1} x^\ell y^{n-1-\ell}
\end{equation}
for sufficiently large $n \in \mathbb{N}$. This implication was originally established in \cite{duan2005multiple} for the case where $x, y$ are probability vectors, but the argument extends naturally to the more general framework considered here.
The converse, however, generally fails to hold. In particular,~\cite[Theorem 3]{feng2006relation} presents a method for constructing explicit counterexamples within a specific context.

\subsection{Definition of the conditional majorization semiring}
In this section, we define the semiring of conditional majorization. In particular, we need to specify the elements of the tuple $S=(\mathcal{S}, +, \cdot, 0, 1, \rgeq)$.

\begin{enumerate}

\item \textbf{The set of elements ``$\cS$"}. The elements of the semiring are, essentially, all (not normalized) joint probability distributions $P_{XY}$.
However, since distributions that differ only by embedding into a larger outcome space and/or permutation of outcomes are considered equivalent, the set $\cS$ is formally defined as the set of equivalence classes of joint probability distributions under this similarity relation. 
We make this precise in the following.

We can represent (unnormalized) joint distributions as matrices, with rows corresponding to the outcomes of the variable $X$ and columns to those of $Y$. For example, if $X$ has $d$ outcomes and $Y$ has $n$ outcomes, then $P_{XY}$ can be represented as an $d \times n$ matrix
\begin{equation}
P_{XY} = \begin{pmatrix}
P(x_1,y_1) & \hdots & P(x_1,y_n) \\
\vdots &  & \vdots \\
P(x_d,y_1) & \hdots & P(x_d,y_n) 
\end{pmatrix} \,.
\end{equation}
We also denote the set of outcomes $\X=\{x_1,...,x_d\}$ and $\Y=\{y_1,...,y_n\}$.
We denote with $\mathcal{V}_{d \times n}$ the set of all $d \times n$ matrices with non-negative entries. Furthermore, define
\begin{equation}
    \mathcal{V} := \bigcup_{d,n \in \mathbb{N}} \mathcal{V}_{d \times n}
\end{equation}
as the set of all matrices with non-negative entries of arbitrary finite dimension.

We say that two joint probability distributions $P_{XY}$ and $Q_{X'Y'}$ are similar, denoted
\begin{equation}
    P_{XY} \approx Q_{X'Y'}  \,,
\end{equation}
if one can be obtained from the other by permuting rows or columns, and/or inserting or deleting rows or columns consisting entirely of zeros.
In other words, $P_{XY}$ and $Q_{X'Y'}$ represent the same distribution up to reordering of outcomes and inclusion or exclusion of impossible events. In yet other words, there is $R_{X''Y''}$ such that both $P_{XY}$ and $Q_{X'Y'}$ can be relabeled and embedded into $R_{X''Y''}$ when we lift the notion of relabeling and embedding to non-normalized distributions.
For $P \in \mathcal{V}$, we define 
    $[P] := \{ P' \in \mathcal{V} \mid P \approx P' \}$
as the equivalence class of $P$ under the relation $\approx$. The elements of the semiring are then given by the quotient set
\begin{equation}
    \mathcal{S} := \mathcal{V} / \approx,
\end{equation}
which consists of all equivalence classes of $\mathcal{V}$ under the similarity relation $\approx$.

\item \textbf{The addition ``$+$"}.
Addition in $\mathcal{S} = \mathcal{V}/\!\approx$ is defined via the direct sum of matrices. For two joint distributions $P_{XY}$ and $Q_{X'Y'}$, their direct sum is given by
\begin{equation}
    P_{XY} \oplus Q_{X'Y'} := \begin{pmatrix}
P_{XY} & 0 \\
0 & Q_{X'Y'}
\end{pmatrix}.
\end{equation}
Hence, the sum of two equivalence classes $[P_{XY}]$ and $[Q_{X'Y'}]$ in $\mathcal{S}$ is defined as the equivalence class of the direct sum of any representatives of the respective classes
\begin{equation}
    [P_{XY}] + [Q_{X'Y'}] := [P_{XY} \oplus Q_{X'Y'}] \,.
\end{equation}

\item \textbf{The multiplication ``$\cdot$"}. 
Multiplication in $\mathcal{S} = \mathcal{V}/\!\approx$ is defined via the tensor product (or Kronecker product) of two matrices. For $P_{XY} \in \V_{d \times n}$, $Q_{X'Y'} \in \V_{d' \times n'}$ 
\begin{equation}
P_{XY} \otimes Q_{X'Y'} = \begin{pmatrix}
P(x_1,y_1)Q_{X'Y'} & \hdots & P(x_1,y_n) Q_{X'Y'} \\
\vdots &  & \vdots \\
P(x_d,y_1) Q_{X'Y'} & \hdots & P(x_d,y_n) Q_{X'Y'} 
\end{pmatrix} \,,
\end{equation}
where the product of a scalar and a matrix is defined as entry-wise (i.e., element-wise) multiplication. For example,
\begin{equation}
P(x_1,y_1)Q_{X'Y'} = \begin{pmatrix}
P(x_1,y_1)Q(x'_1,y'_1) & \hdots & P(x_1,y_1) Q(x'_1,y'_{n'}) \\
\vdots &  & \vdots \\
P(x_1,y_1) Q(x'_{d'},y'_1) & \hdots & P(x_1,y_1) Q(x'_{d'},y'_{n'}) 
\end{pmatrix} \,.
\end{equation}
Accordingly, the multiplication of two equivalence classes $[P_{XY}]$ and $[Q_{X'Y'}]$ in $\mathcal{S}$ is defined as the equivalence class of the tensor product of any representatives of the respective classes
\begin{equation}
    [P_{XY}] \cdot [Q_{X'Y'}] := [P_{XY} \otimes Q_{X'Y'}] \,.
\end{equation}

\item \textbf{The zero element ``$0$"}. We define a zero element $0$ in $\V/\!\approx$ as the equivalence class of the $1\times 1$ matrix $0$.

\item \textbf{The unit element ``$1$"}.  We define the unit element $1$ as the equivalence class of the $1\times 1$ matrix $1$.

\item \textbf{The preorder ``$\rgeq$"}. 
\label{item: preorder}
We write
\begin{equation}
\label{eq: relation preorder}
\quad[Q_{X'Y'}]\rgeq[P_{XY}].
\end{equation}
if $P_{XY}$ conditionally majorizes $Q_{X'Y'}$, that is, $P_{XY} \succeq Q_{X'Y'}$ in the sense of Definition~\ref{def:cond-maj-cond-mix} when we lift this definition also for non-normalized distributions. Note that in the latter relation, the roles of $P$ and $Q$ are reversed compared to those in equation~\eqref{eq: relation preorder}.
\end{enumerate}

\subsection{Invariance of the homomorphisms under embedding and shifting}
\label{sec: invariance embedding and shifting}
In this section, we prove that monotone homomorphisms are invariant under embeddings of the $X$ register. This invariance is consistent with the requirement that homomorphisms assign a unique value to each element of an equivalent class. Moreover, we show that they are also invariant under column-wise shifts.
This property will be crucial in characterizing the general form of such homomorphisms, which in turn determines the general form of conditional entropies. 

\bigskip
We begin by introducing some notation. Let $0_{n \times 1}$ denote the column vector of zeros of dimension $n$, $0_{1 \times n}$ the corresponding row vector, and $0_{n \times n}$ the $n \times n$ zero matrix. We denote by $E^{j \times j}_{n \times n}$ the matrix of size $n \times n$ with a 1 in the $(j, j)$-th entry and zeros elsewhere. Finally, $I_{n \times n}$ stands for the $n \times n$ identity matrix.

The next lemma states that there exists a transformation from $P_{XY}\oplus 0_{1\times n} $ and
\begin{equation}
    P_{X'Y} = \begin{pmatrix}
        P_{XY}\\
        0_{1 \times n} 
    \end{pmatrix}
\end{equation}
and vice versa. Hence, since the homomorphisms are additive under direct sums and assign zero to the null element, it follows that they assign the same value to both $P_{XY}$ and $P_{X'Y}$.

\begin{lemma}[Embedding]\label{lemma:embedding}
Let $P_{XY}\in\V_{d\times n}$. Denote the disjoint union of $X$ and $\{1\}$ by $X'$. We have
    \begin{equation}\label{eq:bothways}
P_{XY}\oplus 0_{1\times n}\succeq  \begin{pmatrix}
        P_{XY}\\
        0_{1 \times n}
    \end{pmatrix}\succeq P_{XY}\oplus 0_{1 \times n}.
\end{equation} 
\end{lemma}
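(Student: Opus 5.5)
Both relations in \eqref{eq:bothways} will be established by writing down explicit conditionally mixing channels; both objects have $d+1$ rows, so Remark~\ref{remark: no isometries} lets us work without any further embedding. Concretely,
\begin{equation*}
A:=P_{XY}\oplus 0_{1\times n}=\begin{pmatrix} P_{XY} & 0_{d\times n}\\ 0_{1\times n} & 0_{1\times n}\end{pmatrix}\in\V_{(d+1)\times 2n},\qquad B:=\begin{pmatrix} P_{XY}\\ 0_{1\times n}\end{pmatrix}\in\V_{(d+1)\times n}.
\end{equation*}
Here the zero block in $A$ is written as an $n$-column block so that $A$ has $2n$ columns, one copy of $Y$ for each summand. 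Since appended zero columns are immaterial up to $\approx$, this is harmless.

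For the first relation $A\succeq B$ we only need to act on the side information. We take $k=1$, $S^{(1)}=I_{(d+1)\times(d+1)}$, and $D^{(1)}=\begin{pmatrix} I_{n\times n}\\ I_{n\times n}\end{pmatrix}\in\V_{2n\times n}$. The matrix $D^{(1)}$ is row-stochastic, since it merges the two copies of $Y$ into one. A direct computation gives $AD^{(1)}=\begin{pmatrix}P_{XY}\\0_{1\times n}\end{pmatrix}=B$.

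For the reverse relation $B\succeq A$ we must split the side information back into two copies. A single column-splitting channel cannot do this, because it would place $P_{XY}$ in both blocks. We therefore use $k=2$ with $D^{(1)}=\begin{pmatrix} I_{n\times n} & 0_{n\times n}\end{pmatrix}$ and $D^{(2)}=\begin{pmatrix} 0_{n\times n} & I_{n\times n}\end{pmatrix}$. Each of these is sub-stochastic, and their sum $\begin{pmatrix} I & I\end{pmatrix}$ is row-stochastic. We choose $S^{(1)}=I$ and $S^{(2)}=\Pi$, where $\Pi$ is a permutation matrix sending all rows of $B$ to the last row, which is possible only up to the fact that $P_{XY}$ has $d$ nonzero rows.

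This is the main obstacle. A permutation cannot map $d$ rows onto the single zero row, so the naive choice of $\Pi$ fails. I would handle it by choosing the output more cleverly: we only need to reach something $\approx A$, and the second block of $A$ is zero anyway. The correct choice is therefore $S^{(2)}=I$ with the map $B\mapsto BD^{(1)}+0$. To arrange this, I use $D^{(1)}=\begin{pmatrix} I & 0\end{pmatrix}$ alone with $k=1$; it is already row-stochastic. Then $BD^{(1)}=\begin{pmatrix}P_{XY}&0\\0&0\end{pmatrix}=A$, so the splitting problem disappears.

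Thus both directions reduce to checking the stochasticity conditions for these explicit matrices and then invoking Definition~\ref{def:cond-maj-cond-mix}, lifted to non-normalized distributions. One point needs care: the direct sum $P_{XY}\oplus 0_{1\times n}$ really adds $n$ zero columns, so the bookkeeping of the column dimension must match the block form above, up to $\approx$.
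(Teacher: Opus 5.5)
Your final argument is correct and is exactly the paper's proof: the row-stochastic merge $D=\begin{pmatrix} I_{n\times n}\\ I_{n\times n}\end{pmatrix}$ sends $P_{XY}\oplus 0_{1\times n}$ to $\begin{pmatrix} P_{XY}\\ 0_{1\times n}\end{pmatrix}$, and the row-stochastic padding $D=\begin{pmatrix} I_{n\times n} & 0_{n\times n}\end{pmatrix}$ (with $k=1$, $S^{(1)}=I$) sends it back. The intermediate detour through $k=2$ and a permutation $\Pi$ answers an obstacle that does not exist, since the single matrix $\begin{pmatrix} I_{n\times n} & 0_{n\times n}\end{pmatrix}$ is already row-stochastic and does not copy $P_{XY}$ into both blocks, so that paragraph can simply be deleted.
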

\begin{proof}
The proof follows directly by considering a channel acting on the conditioning system that appends and removes outcomes associated with zero entries. Let us construct these channels explicitly. 
Let us denote the matrix in the middle of \eqref{eq:bothways} by $P_{X'Y}$, where $X'$ is the extension of the variable $X$ with one additional outcome. Similarly, let $Y'$ denote the extension of $Y$ by $n$ additional outcomes. Then, $P_{XY} \oplus 0_{1 \times n} \in \mathcal{V}_{X'Y'}$.
We prove the two directions by explicitly writing down the channel that transforms one into the other.

Let us first prove that $P_{XY} \oplus 0_{1 \times n} \succeq P_{X'Y}$. Let us define the matrix that discards $n$ outcomes from $Y'$, namely  
\begin{equation}
    D=\begin{pmatrix}
        I_{n \times n} \\
        I_{n \times n} 
    \end{pmatrix} \,.
\end{equation}
Then, we have that 
\begin{equation}
   (P_{XY} \oplus 0_{1\times n} ) D = \begin{pmatrix}
        P_{XY} & 0_{d \times n}\\
        0_{1 \times n} & 0_{1\times n}
    \end{pmatrix} \begin{pmatrix}
        I_{n \times n} \\
        I_{n \times n} 
    \end{pmatrix} = \begin{pmatrix}
        P_{XY}\\
        0_{1 \times n}
    \end{pmatrix}  = P_{X'Y} \,.
\end{equation}
To prove that $P_{X'Y} \succeq P_{XY} \oplus 0_{1 \times n}$, one can add outcomes that never occur by defining the matrix
\begin{equation}
    D=\begin{pmatrix}
        I_{n \times n} & 0_{n \times n}
    \end{pmatrix} \,.
\end{equation}
Then, we have that 
\begin{equation}
   P_{X'Y} D = \begin{pmatrix}
        P_{XY} \\
        0_{1 \times n} 
    \end{pmatrix} \begin{pmatrix}
        I_{n \times n} & 0_{n \times n}
    \end{pmatrix} = \begin{pmatrix}
        P_{XY} & 0_{d \times n}\\
        0_{1 \times n} & 0_{1 \times n}
    \end{pmatrix}  = P_{XY} \oplus 0_{1\times n} \,.
\end{equation} 
\end{proof}

The next lemma states that the columns can be shifted without affecting the value of any homomorphism, as there exists a channel implementing the transformation in both directions.
\begin{lemma}[Shifting]\label{lemma:shifting}
    Let $P_{XY} \in \V_{d\times n}$. Moreover,  for each $i=1,...,n$, let $P_{X,Y=y_i}$ be the $i$-th column of $P_{XY}$, that is, $P_{XY} = (P_{X,Y=y_1},...,P_{X,Y=y_n})$. Define
    \begin{equation}
\widetilde{P}_{\widetilde{X}Y}=\bigoplus_{i=1}^n P_{X,Y=y_i}= \begin{pmatrix}
P_{X,Y=y_1} & 0_{d \times 1}  & \hdots & 0_{d \times 1} \\
0_{d \times 1} & P_{X,Y=y_2} & \hdots & 0_{d \times 1} \\
\vdots &  & \vdots & \vdots\\
0_{d \times 1} & \hdots & 0_{d \times 1} & P_{X,Y=y_n}\,
\end{pmatrix} \,,
\end{equation} 
where $\widetilde{X}=X\sqcup\cdots\sqcup X$ is the $n$-fold separate union of $X$. We have
\begin{equation}
 \begin{pmatrix}
        P_{XY}\\
        0_{dn \times n}
    \end{pmatrix}\succeq\widetilde{P}_{\widetilde{X}Y}\succeq  \begin{pmatrix}
        P_{XY}\\
        0_{dn \times n}
    \end{pmatrix}.
\end{equation}
\end{lemma}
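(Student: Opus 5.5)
Both matrices in the statement have $n$ columns and $d(n+1)$ rows: $\widetilde{P}_{\widetilde{X}Y}$ has $dn$ rows, and we pad it with $d$ zero rows so that it matches $\bigl(\begin{smallmatrix}P_{XY}\\ 0_{dn\times n}\end{smallmatrix}\bigr)$. The plan is to write down explicit conditionally mixing channels in both directions. Each will have the form $\sum_{i=1}^n S^{(i)} M D^{(i)}$, where $D^{(i)} = E^{i\times i}_{n\times n}$ projects onto the $i$-th column and $S^{(i)}$ is a permutation matrix, hence doubly stochastic. Since $\sum_i E^{i\times i}_{n\times n} = I_{n\times n}$ is row-stochastic and each $E^{i\times i}$ is sub-stochastic, this is a valid conditionally mixing channel. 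The idea is simply that the side information $Y=y_i$ tells the channel which block of rows the $i$-th column should be moved to, and a permutation conditioned on $i$ does the move.

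\emph{First direction.} Index the rows of the padded space as $d(n+1)$ rows split into $n+1$ blocks of size $d$. Block $0$ holds $P_{XY}$ in the left matrix, and blocks $0,\dots,n-1$ hold the block-diagonal columns of $\widetilde{P}$, with block $n$ zero. For each $i$, let $S^{(i)}$ be the permutation that swaps block $0$ with block $i-1$ (identity if $i=1$). Then
\[
S^{(i)}\begin{pmatrix}P_{XY}\\ 0_{dn\times n}\end{pmatrix}E^{i\times i}_{n\times n}
\]
is the matrix whose only nonzero column is column $i$, equal to $P_{X,Y=y_i}$ placed in block $i-1$. Summing over $i$ gives exactly $\widetilde{P}$ padded with zeros.

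\emph{Second direction.} Use the same $D^{(i)}=E^{i\times i}$ and the same swaps $S^{(i)}$, which are involutions. Applying them to the padded $\widetilde{P}$ brings the $i$-th column's block back to block $0$. Summing recovers $P_{XY}$ on top of zeros.

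Finally, by Remark~\ref{remark: no isometries} and the relabeling/embedding freedom in Definition~\ref{def:cond-maj-cond-mix}, adding zero rows is allowed, so these channels establish both relations. The only real obstacle is bookkeeping of the row dimensions, which requires padding both matrices to a common size. Lemma~\ref{lemma:embedding}, or directly the definition's embedding clause, handles this.
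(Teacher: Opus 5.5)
Your proposal is correct and is essentially the paper's proof. The paper also selects the $j$-th outcome of $Y$ with $D^{(j)}=E^{j\times j}_{n\times n}$, uses a block-swap permutation $S^{(j)}$ exchanging the first $d$ rows with the $j$-th block, and reuses the same matrices for the reverse direction.

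Your explicit padding of $\widetilde{P}_{\widetilde{X}Y}$ by $d$ zero rows settles the row-count mismatch that the paper glosses over: its $S^{(j)}$ are $dn\times dn$, which does not match the stated $0_{dn\times n}$ block. That padding is justified by the embedding clause of Definition~\ref{def:cond-maj-cond-mix} rather than by Remark~\ref{remark: no isometries}.
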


\begin{proof}
Let us first prove that
\begin{equation}
    \begin{pmatrix}
        P_{XY}\\
        0_{dn \times n}
    \end{pmatrix}\succeq\widetilde{P}_{\widetilde{X}Y}\,.
\end{equation}
For this case, the proof follows by first measuring the $Y$ register and, depending on the outcome, performing an appropriate relabeling of the outcomes in the $X$ register.
 Explicitly, let us define $D^{(j)}= E^{j \times j}_{n \times n}$ which selects the outcome $y_j$ of the $Y$ register. For example, for $j=2$, we have
\begin{equation}
    D^{(2)}=\begin{pmatrix}
        0 & 0 & 0_{1 \times (n-2)}  \\
       0 & 1 & 0_{1 \times (n-2)} \\
       0_{(n-2) \times 1} & 0_{(n-2) \times 1} & 0_{(n-2) \times (n-2)} 
    \end{pmatrix} \,.
\end{equation}
The matrix $S^{(j)}$ relabels the outcomes on the $X$ register. For example, for $j=2$, we have 
\begin{equation}
    S^{(2)}=\begin{pmatrix}
        0_{d \times d} & I_{d \times d} & 0_{d \times d(n-2)} \\
        I_{d \times d} & 0_{d \times d} & 0_{d \times d(n-2)} \\
        0_{d(n-2) \times d} & 0_{d(n-2) \times d} & I_{d(n-2) \times d(n-2)}
    \end{pmatrix} \,.
\end{equation}
To prove the other direction, we use the same matrices as before to select the appropriate outcome of $Y$ and then apply the same permutation to restore the original ordering of the entries.
\end{proof}

\begin{remark}\label{rem:SomOfColumns}
Lemma \ref{lemma:shifting} essentially tells us that, for a matrix $P_{XY}\in\V_{d \times n}$ with columns $P_{X,Y=y_i}$, we may write
\begin{equation}
[P_{XY}]\rgeq[P_{X,Y=y_1}]+\cdots+[P_{X,Y=y_n}]\rgeq [P_{XY}] \,.
\end{equation}
This will subsequently aid us in finding all the monotone homomorphisms of $S$ into $\mb K\in\{\mb R_+,\mb R_+^{\rm op},\mb{TR}_+,\mb{TR}_+^{\rm op}\}$.
\end{remark}

\subsection{Some important properties of the semiring $S$}
To use the large-sample results in Theorem~\ref{thm:Fritz2022}, we must first show that the semiring is of polynomial growth, i.e., it contains a power-universal element. Since the large sample results apply only when the input element is power-universal, we also need to determine exactly which elements have this property. Lastly, we must construct a surjective degenerate homomorphism $\|\cdot\|:\cS\to\R_{>0}^d\cup\{0\}$ with the properties required in Theorem~\ref{thm:Fritz2022}. We begin with the last point. In the following lemma, we show that such a homomorphism can be chosen as the total weight, given by the sum of the matrix entries associated with elements of the semiring. 
Finally, note that the total weight is a scalar quantity rather than a vector, and therefore there exists only a single component homomorphism of \(\|\cdot\|\) in Theorem~\ref{thm:Fritz2022}, namely \(j = 1\).
\begin{lemma}
\label{lem: degenerate and zig-zag}
In the preordered semiring of conditional majorization $S$, the map
\begin{equation}
\|[P_{XY}]\| = \sum_{x\in\mc X}\sum_{y\in\mc Y} P_{XY}(x,y) \,,
\end{equation}
where $P_{XY}$ can be chosen to be any representative of the equivalence class $[P_{XY}]$,
defines a surjective homomorphism $\|\cdot\| : \cS \mapsto \mathbb{R}_+$ with trivial kernel. Moreover, it satisfies
\begin{equation}
[P] \rgeq [Q] \quad \Rightarrow \quad \|[P]\| = \|[Q]\| \quad \Rightarrow\quad  [P] \sim [Q]\,.
\end{equation}
\end{lemma}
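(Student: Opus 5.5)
We split the claim into four routine checks: that the total weight is well defined and a homomorphism, that it is surjective with trivial kernel, that it is invariant under the preorder, and the zig-zag property.

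\textbf{Well-definedness and homomorphism property.} Permuting rows or columns, or inserting or deleting zero rows or columns, does not change the sum of the entries. Hence $\|\cdot\|$ is well defined on $\mathcal S=\mathcal V/\!\approx$. For the algebraic structure:
\begin{itemize}
\item Additivity holds because the entries of $P\oplus Q$ are those of $P$ and $Q$ together with zeros.
\item Multiplicativity holds because the sum of all entries of $P\otimes Q$ factorizes as $\big(\sum_{x,y}P(x,y)\big)\big(\sum_{x',y'}Q(x',y')\big)$.
\item $\|0\|=0$ and $\|1\|=1$ are immediate.
\end{itemize}
Surjectivity onto $\mathbb R_+$ follows from the $1\times 1$ matrices $[a]$ with $a\ge0$. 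The kernel is trivial since a nonnegative matrix with zero total sum is the zero matrix, which is $\approx$-equivalent to the $1\times1$ zero matrix.

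\textbf{Invariance under the preorder.} Suppose $[P]\rgeq[Q]$, i.e. $Q\succeq P$. Relabeling and embedding preserve the total weight, so it suffices to show that a conditionally mixing channel preserves it. Let $\mathbf 1$ denote the all-ones vector of the appropriate dimension. Then
\begin{equation}
\mathbf 1^T\Big(\sum_i S^{(i)}QD^{(i)}\Big)\mathbf 1=\sum_i \mathbf 1^TQD^{(i)}\mathbf 1=\mathbf 1^TQ\Big(\sum_iD^{(i)}\Big)\mathbf 1=\mathbf 1^TQ\mathbf 1 .
\end{equation}
The first equality uses that each $S^{(i)}$ is doubly stochastic, so $\mathbf 1^TS^{(i)}=\mathbf 1^T$. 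The last uses that $\sum_iD^{(i)}$ is row-stochastic.

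\textbf{Zig-zag property.} Suppose $\|[P]\|=\|[Q]\|=w$. If $w=0$, both classes are $0$ and there is nothing to prove. Otherwise we connect both $P$ and $Q$ to the $1\times1$ matrix $[w]$ with one step in each direction.
\begin{itemize}
\item The column-summing map $D=\mathbf 1_{n\times1}$ is row-stochastic and sends $P$ to its $X$-marginal column $P_X$.
\item On the $X$ register, $P_X$ (padded with zeros) can be mixed to the uniform vector $(w/d,\dots,w/d)^T$ by the doubly stochastic matrix $\frac1d J$, where $J$ is the all-ones matrix. Alternatively, Lemma~\ref{lem: Initial sum condition} applies directly, since the uniform vector is majorized by any vector of the same total sum.
\item The vector $(w,0,\dots,0)^T$ majorizes every nonnegative vector with total sum $w$, so it can be mapped onto $P_X$ by a doubly stochastic matrix.
\item Column-wise, a single column can be spread back to $P$ by a conditional mixing whose $D^{(i)}$ select columns. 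Equivalently, by Remark~\ref{rem:SomOfColumns}, it suffices to reach $\bigoplus_y P_{X,Y=y}$, and this is obtained from $(w,0,\dots)^T$ by a single doubly stochastic map after padding, since $(w,0,\dots)$ majorizes any vector of total mass $w$.
\end{itemize}
Thus $[w]\rleq[P]$ in the sense that $P$ can be reached from $[w]$, and likewise $[w]\rleq[Q]$. Hence $[P]\sim[Q]$.

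The only mildly delicate step is the orientation of the preorder in the zig-zag argument. Since the $\sim$ relation only requires some chain, a common lower or upper element suffices. The point-mass element $[w]$ in fact lies below everything with weight $w$, so both inequalities go in the same direction and give $[P]\sim[Q]$ directly.
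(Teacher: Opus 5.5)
Your well-definedness, homomorphism, surjectivity and trivial-kernel checks are correct, and so is the weight-preservation computation $\mathbf 1^T(\cdot)\mathbf 1$, which the paper only asserts. The gap is in the step your conclusion rests on: that $P$ can be reached from the point mass $[w]$. The claim is true, but neither justification in your fourth bullet works.

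(i) Suppose the single column being spread back is $P_X$, as in the chain $[w]\to P_X\to P$ of bullets 3--4. Each output column of a conditionally mixing channel applied to $P_X$ has the form $\sum_i D^{(i)}_j S^{(i)}P_X$. After normalization this is a convex combination of vectors majorized by $P_X/w$, so it is itself majorized by $P_X/w$. Hence $P$ is unreachable whenever some $P_{X|Y=y}$ is not majorized by $P_X/w$. For example, take $P=\mathrm{diag}(1/2,1/2)$: here $P_X$ is a fair coin, but $X$ is deterministic given $Y$.

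(ii) A doubly stochastic matrix acts only on the $X$ register and keeps the number of columns. It therefore cannot turn $(w,0,\dots,0)^T$ into the $n$-column matrix $\bigoplus_y P_{X,Y=y}$. What your majorization fact actually produces is the stacked column $\mathrm{vec}(P)$, with all columns of $P$ on top of each other. That is a different element of $\mathcal S$, since the side information is lost, and by the same example it does not reach $P$ either.

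The missing idea is that the side information must be created by the right-hand maps, with the correct weights. Take $D^{(i)}=\frac{P_Y(y_i)}{w}e_i^T$; these $1\times n$ rows sum to $P_Y/w$, which is row-stochastic. Take $S^{(i)}$ doubly stochastic with first column $P_{X|Y=y_i}$, for instance a circulant matrix. Then $\sum_i S^{(i)}(w,0,\dots,0)^T D^{(i)}=P$. This is the paper's construction, done there in two steps: first a $D$ whose rows all equal $P_Y/\|P\|$, then conditional $S^{(i)}$.

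Alternatively, your first two bullets already give a complete proof via the common upper element you mention at the end. Pad $P$ and $Q$ to a common number $d$ of rows and set $U=(w/d,\dots,w/d)^T$. Then $P\succeq P_X\succeq U$ and $Q\succeq Q_X\succeq U$, so $[P]\rleq[U]\rgeq[Q]$ and hence $[P]\sim[Q]$. This route is simpler than the paper's, since it uses only discarding $Y$ and uniform mixing. As written, however, your proof concludes through $[w]$, and that step is unjustified.
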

\begin{proof}
Clearly, the total weight is a homomorphism since $\|[P]\cdot [Q]\|=\|[P\otimes Q]\|=\|[P]\|\|[Q]\|$ and $\|[P]+[Q]\|=\|[P\oplus Q]\|=\|[P]\|+\|[Q]\|$ for all $[P],[Q]\in\cS$. 

The first direction, namely that $[P] \rgeq [Q] \Rightarrow \|[P]\| = \|[Q]\|$, follows immediately, since conditionally mixing operations preserve the total weight $\|P\|$ of any element $P$. This also shows that the total weight is a degenerate homomorphism since it is invariant under the preorder $\rgeq$.

Let us not turn to the implication $\|[P]\|=\|[Q]\|\Rightarrow [P]\sim [Q]$.  We show that if we define $\|P\|=\|Q\| = c$, then $P\preceq c\succeq Q$ 
and hence $[P]\sim [Q]$. In particular, we show that from the element $[c]$ we can generate both $P$ and $Q$. This is straightforward by considering the matrix with a single non-zero entry $c$ in the top-left corner. Indeed, starting from this matrix, we can construct both $P$ and $Q$. For example, let us focus on $P_{XY} \in \V_{d\times n}$.
Then the steps are the following. Let us denote with $P_{Y}$ the row vectors with components $P_Y(y_i) = \sum_jP_{XY}(x_j,y_i)$. 
First, we create for each outcome $Y$ a column whose first entry is $P_Y(y_i)$ and all other entries are zero. This is done by multiplying on the r.h.s by a row stochastic matrix.
\begin{align}
\begin{pmatrix}
     P_{Y} \\
    0_{(d-1) \times n}   
\end{pmatrix}=
\begin{pmatrix}
    \|P_{XY}\| & 0_{1\times (n-1)} \\
    0_{(d-1)\times 1} & 0_{(d-1)\times (n-1)}
\end{pmatrix} 
\begin{pmatrix}
    \|P_{XY}\|^{-1} P_{Y} \\
    \vdots \\
    \|P_{XY}\|^{-1} P_{Y}    
\end{pmatrix}\,.
\end{align}
Next, we measure the register $Y$ and, conditioned on the outcome $y_i$, we transform the deterministic column (with a single non-zero entry) into a more distributed column $P_{X,Y=y_i}$. Let us denote with $P_{X,Y=y_1}$ the column with components $P_{X,Y=y_1}(x_j)=P_{X,Y}(x_j,y_1)$ and with $P_{X|Y=y_1}$ the column with components $P_{X|Y=y_1}(x_j)=P_{X,Y}(x_j,y_1)/P_Y(y_1)$,
For the first outcome $y_1$, we obtain
\begin{align}
\begin{pmatrix}
     P_{X,Y=y_1} & 0_{d \times (n-1)}  
\end{pmatrix}=
\begin{pmatrix}
    P_{X|Y=y_1} & \star_{1\times (d-1)} 
\end{pmatrix} 
\begin{pmatrix}
     P_{Y} \\
    0_{(d-1) \times n}   
\end{pmatrix}
\begin{pmatrix}
    1 & 0_{1\times (n-1)}\\
    0_{(n-1) \times 1} & 0_{1\times (n-1)} 
\end{pmatrix}
\end{align}
Here, $\star_{1\times (d-1)}$ is any entry that completes the matrix to a doubly stochastic one. By summing over all the outcomes $y_i$ for all $i=1,..,n$, we obtain the desired probability $P_{XY}$. This shows that overall, we have
\begin{equation}
    P_{XY} = \sum_{i=1}^n S^{(i)} P DD^{(i)} \,, 
\end{equation}
where 
\begin{align}
P=\begin{pmatrix}
    \|P_{XY}\| & 0_{1\times (n-1)} \\
    0_{(d-1)\times 1} & 0_{(d-1)\times (n-1)}
\end{pmatrix} \,, \quad 
    D=\begin{pmatrix}
    \|P_{XY}\|^{-1} P_{Y} \\
    \vdots \\
    \|P_{XY}\|^{-1} P_{Y}    
\end{pmatrix} \,,
\end{align}
and for example, for $i=1$, we have
\begin{align}
D^{(1)}=\begin{pmatrix}
    1 & 0_{1\times (n-1)}\\
    0_{(n-1) \times 1} & 0_{1\times (n-1)} 
\end{pmatrix} \,,\quad 
S^{(1)} = \begin{pmatrix}
    P_{X|Y=y_1} & \star_{1\times (d-1)} 
\end{pmatrix}  \,.
\end{align}

\end{proof}

As we mentioned above, to apply the large sample and catalytic results of Theorem~\ref{thm:Fritz2022}, we must also identify a power universal element. By definition, this implies that the semiring is of polynomial growth. As defined in Section~\ref{sec: background semiring}, a power universal element is one that, when provided in sufficient copies, enables the realization of any transformation between elements of the semiring. Within the semiring of conditional majorization, such an element corresponds to any probability distribution exhibiting non-zero randomness in the $X$ register—that is, a probability distribution that is not fully deterministic in $X$. Furthermore, this condition must be satisfied for every possible outcome of the $Y$ register.

\begin{proposition}
\label{prop: power universals}
The following statements are equivalent for $[R_{XY}] \in \cS$:
\begin{enumerate}
    \item \( [R_{XY}] \in \cS \) is power universal.
    \label{item: power universal}
     \item  \( \|[R_{XY}]\| = 1 \), and every non-zero column \( R_{X,Y=y} \) of any representative \( R_{XY} \) contains at least two non-zero entries.
    \label{item: two non-zero entries}
    \item There exists \( r \in (0,1) \) such that
    \begin{equation}\label{eq:pucond}
    \begin{pmatrix}
        r \\
        1 - r
    \end{pmatrix}
    \succeq R_{XY}.
    \end{equation}
    \label{item: majorization R_r}
\end{enumerate}
Consequently, the semiring $S$ is of polynomial growth.
\end{proposition}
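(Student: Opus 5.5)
The plan is to prove the cycle \ref{item: power universal} $\Rightarrow$ \ref{item: two non-zero entries} $\Rightarrow$ \ref{item: majorization R_r} $\Rightarrow$ \ref{item: power universal}. Throughout, recall the orientation of the preorder: $x\rleq y$ means that (a representative of) $x$ can be converted into $y$ by a conditionally mixing channel. Power universality of $u$ therefore asks that whenever $x\to y$, also $y\to x\otimes u^{k}$ for some $k$. The final claim, polynomial growth, then follows because $(1/2,1/2)^T$ satisfies item~\ref{item: two non-zero entries}.

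\textbf{\ref{item: power universal} $\Rightarrow$ \ref{item: two non-zero entries}.}
\begin{itemize}
\item \emph{Normalization.} Apply the definition with $x=y=1$. This gives $1\rleq u^k$, and Lemma~\ref{lem: degenerate and zig-zag} (total weight is invariant under $\rleq$) yields $\|u\|^k=1$, hence $\|u\|=1$.
\item \emph{No deterministic column.} Take $x=[1]$ and $y=[(1/2,1/2)^T]$; clearly $x\rleq y$. Power universality then gives a conditionally mixing channel from $(1/2,1/2)^T$ (suitably embedded, cf.\ Remark~\ref{remark: no isometries}) to $u^k$.
\item \emph{Invariant contradicting a deterministic column.} For any output $\sum_i S^{(i)} v D^{(i)}$ of an input column $v$ whose entries are at most half its weight, every output column $j$ equals $\sum_i D^{(i)}_{1j}S^{(i)}v$. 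Since each $S^{(i)}v$ has entries at most $1/2$, every output column has all entries at most half of its own weight.
\item \emph{Conclusion.} If some nonzero column of $u$ had a single nonzero entry, then $u^k$ would have a nonzero column $u_{X,Y=y}^{\otimes k}$ with a single nonzero entry, violating the invariant.
\end{itemize}
This direction is the step I expect to need the most care. The main point is making the invariant robust under the embeddings permitted in Definition~\ref{def:cond-maj-cond-mix}; zero rows and columns do not affect it.

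\textbf{\ref{item: two non-zero entries} $\Rightarrow$ \ref{item: majorization R_r}.} Let $m=\max_y\max_x R(x|y)$ over the nonzero columns; by hypothesis $m<1$. Put $r=\max(m,1/2)\in(0,1)$. Then, by Lemma~\ref{lem: Initial sum condition}, $(r,1-r,0,\dots)^T\succeq R_{X|Y=y}$ for every $y$, since a vector is majorized by $(r,1-r)$ as soon as its largest entry is at most $r$. Now mimic the construction in the proof of Lemma~\ref{lem: degenerate and zig-zag}:
\begin{itemize}
\item right-multiply by the row-stochastic matrix whose rows are $R_Y$, producing columns $R_Y(y)\,(r,1-r,0,\dots)^T$;
\item select each column with $D^{(y)}=E^{y\times y}$ and apply a doubly stochastic $S^{(y)}$ sending $(r,1-r,0,\dots)^T$ to $R_{X|Y=y}$.
\end{itemize}
Summing over $y$ gives $R_{XY}$.

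\textbf{\ref{item: majorization R_r} $\Rightarrow$ \ref{item: power universal}.} Let $x\rleq y$, so that $\|x\|=\|y\|=c$; the case $c=0$ is trivial. Since $(r,1-r)^{\otimes k}\to R^{\otimes k}$ and the preorder is compatible with multiplication, it suffices to show $y\to x\otimes s^{\otimes k}$ with $s=(r,1-r)^T$ for large $k$.
\begin{itemize}
\item \emph{Forget the side information.} Right-multiply $y$ by the all-ones column to obtain a single column $v$ of weight $c$. Let $\varepsilon>0$ be the smallest nonzero entry of $v/c$.
\item \emph{Split into target columns.} Let the target $t=x\otimes s^{\otimes k}$ have columns $t_j$ with weights $w_j$. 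Using $D^{(j)}$ equal to the row vector $(0,\dots,w_j/c,\dots,0)$, the column $v$ is split into the columns $(w_j/c)\,v$.
\item \emph{Match each column.} It remains to apply doubly stochastic maps $S^{(j)}$ with $(w_j/c)v\mapsto t_j$. This requires $v/c\succeq t_j/w_j$ after padding with zeros.
\item \emph{Check the majorization.} Each normalized target column has all entries at most $q^k$, where $q=\max(r,1-r)<1$. The sum of its $\ell$ largest entries is therefore at most $\ell q^k$. The sum of the $\ell$ largest entries of $v/c$ is at least $\ell\varepsilon$ for $\ell\le|\mathrm{supp}\,v|$, and equals $1$ beyond that. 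Choosing $k$ with $q^k\le\varepsilon$ settles every partial-sum inequality of Lemma~\ref{lem: Initial sum condition}.
\end{itemize}
This gives $y\rleq x\,u^k$, so $u$ is power universal.
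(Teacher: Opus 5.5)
Your proof is correct and essentially the paper's own argument, following the same cycle (1)$\Rightarrow$(2)$\Rightarrow$(3)$\Rightarrow$(1) with the same three ingredients. First, $(1/2,1/2)^T$ cannot be conditionally mixed into anything with a deterministic nonzero column; your ``entries at most half the column weight'' invariant is a cleaner phrasing of the paper's reasoning. Second, the construction $(r,1-r)^T\mapsto(r,1-r)^T R_Y$ is followed by column-wise doubly stochastic maps. Third, the entries of $x\otimes(r,1-r)^{\otimes k}$ become uniformly small, and collapsing the side information into a single column is just a particular choice of the paper's row-stochastic $D$ with $P_Y D=Q_Y$.

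One small point: if $k=0$ is allowed in the definition of power universality, the choice $x=y=1$ gives no information. In that case, derive $\|u\|=1$ from your second bullet instead. There the invariant already rules out $k=0$, because $u^0=1$ is itself a deterministic column.
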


\begin{proof}
We prove separately the implications $\ref{item: power universal} \implies \ref{item: two non-zero entries}$, $\ref{item: two non-zero entries} \implies \ref{item: majorization R_r}$ and $\ref{item: majorization R_r} \implies \ref{item: power universal}$. This shows that the statements are all equivalent.
In the following, for brevity, we simply write $P_{XY}$ to refer to a representative of the equivalence class $[P_{XY}]$, without always stating it explicitly. Throughout this proof, we assume that $R_{XY}\in\mc V_{d\times n}$ (i.e.\ $|\mc X|=d$ and $|\mc Y|=n$) and we number $\mc X$ and $\mc Y$ explicitly to aid our matrix calculations.

\bigskip

Let us start with the implication $\ref{item: power universal} \implies \ref{item: two non-zero entries}$. The fact that \( \|[R_{XY}]\| = 1 \) is immediate, since conditional majorization preserves the total weight of elements. If instead \( \|[R_{XY}]\| \neq 1 \), the transformation would not be possible. Next, we show that if $R_{XY}$ contains a column with only one non-zero entry, then it cannot be a power universal. The key idea is that if \( R_{XY} \) contains a column with only one non-zero entry, then any tensor power \( R_{XY}^{\otimes m} \) will also contain a column of the same form. However, starting from a mixed probability distribution, it is impossible to generate a deterministic distribution conditioned on the event that \( Y \) corresponds to such a column. This follows from the fact that doubly stochastic maps cannot produce a deterministic distribution from a non-deterministic one; instead, they only increase the level of mixing.
Explicitly, let us define \( P_{XY} = (1/2, 1/2) \) and take \( Q_{XY} = 1 \) to be the single unit element. Note that we use the subscript $XY$ for both, as the smaller matrix can always be embedded into the space of the larger one by padding with zeros.
Observe that \( Q_{XY} \succeq P_{XY} \).  Assume now that \( R_{XY} \) is power universal. Then, there exists a sufficiently large \( m \in \mathbb{N} \) such that
\begin{equation}
    P_{XY} \succeq Q_{XY} \otimes R_{XY}^{\otimes m} = R_{XY}^{\otimes m}\,.
\end{equation}
This implies that, for a matrix \( R_{XY} \in \V_{d \times n} \), there exist an integer \( k \in \mathbb{N} \), doubly stochastic matrices \( S^{(1)}, \ldots, S^{(k)}\), and (without loss of generality—since the zero entries in the embedding will eliminate the effect of the remaining components) single-row matrices \( D^{(i)} = (d^{(i)}_1, \ldots, d^{(i)}_{n^m}) \),
such that an embedding of the vector \( P_{XY} \) can be transformed into \( R_{XY}^{\otimes m} \) 
\begin{align}\label{eq:apu3}
R_{XY}^{\otimes m}&=\sum_{i=1}^k S^{(i)}
\begin{pmatrix}
    \frac{1}{2}\\
    \frac{1}{2} \\
    0_{(d^m-2)\times 1}
\end{pmatrix}
\big(d^{(i)}_{1}\ \cdots\ d^{(i)}_{n^m}\big)=\sum_{i=1}^k S^{(i)} 
\begin{pmatrix}
    \frac{1}{2}d^{(i)}_{1}\ \cdots\ \frac{1}{2}d^{(i)}_{n^m} \\
   \frac{1}{2} d^{(i)}_{1}\ \cdots\ \frac{1}{2}d^{(i)}_{n^m} \\
   0_{(d^m-2) \times 1}
\end{pmatrix} \,.
\end{align}
We now distinguish two cases: either \( d_j^{(i)} = 0 \) or \( d_j^{(i)} > 0 \). In both situations, the column of \( R_{XY}^{\otimes m} \) with a single non-zero entry cannot be generated by any column of the matrix \( P_{XY} D^{(i)} \). Indeed, if \( d_j^{(i)} = 0 \), the corresponding column is identically zero; if \( d_j^{(i)} > 0 \), then the resulting column contains at least two non-zero entries of the form \( \frac{1}{2} d_j^{(i)} \), and a doubly stochastic matrix cannot reduce this uncertainty to produce a deterministic outcome. This follows from the fact that both the rows and columns of a doubly stochastic matrix must sum to one, thus preserving entropy rather than eliminating it.

\bigskip
Let us now turn to the implication $\ref{item: two non-zero entries} \implies \ref{item: majorization R_r}$
Let us first assume that $\|R_{XY}\|=1$ and $R_{XY}$ is such that its columns have supports with at least two points. We need to find $r\in(0,1)$ such that the condition in item (3) holds. 
The key idea is that each column of \( R_{XY} \) containing at least two non-zero entries can be generated from a vector of the form \( (r, 1 - r) \), for some sufficiently large \( r \in (0,1) \). Such a vector represents an almost deterministic distribution, which—after suitable embedding—can be transformed, via mixing, into any non-deterministic distribution with the same support. Furthermore, the same vector \( (r, 1 - r) \) can be used to generate all columns of \( R_{XY} \), since we are free to enlarge the system \( Y \) to accommodate additional outcomes if needed. Explicitly, 
because of the restriction on $R_{XY}$, we have $R_{XY}(x_j,y_i)<R_Y(y_i)$ for all $x_j\in X$ and $y_i\in Y$ such that the column $R_{X,Y=y_i}$ is non-zero. Therefore, there exists a $r\in[1/2,1)$ such that
\begin{equation}
R_Y(y_i)r\geq R_{XY}(x_j,y_i)\qquad\forall(x_j,y_i)\in X\times Y:\ R_{X,Y=y_i}\neq0.
\end{equation}
By Lemma~\ref{lem: Initial sum condition}, it follows that for each \( y_i \in Y \) with \( R_{X,Y=y_i} \neq 0 \), the two-dimensional embedded vector $\widetilde{R}_Y(y_i) := \big(r R_Y(y_i), (1 - r) R_Y(y_i)\big)$
majorizes the column \( R_{X,Y=y_i} \).
Hence, for each \( y_i \in Y \) with \( R_{X,Y=y_i} \neq 0 \), there exists a doubly stochastic matrix \( S^{(i)} \) and an embedding of the vector \( \widetilde{R}_Y(y_i) \), denoted by \( \widehat{R}_Y(y_i) \), such that $S^{(i)} \widehat{R}_Y(y_i) = R_{X,Y=y_i}$.
To isolate the outcome \( y_i \in Y \), we define the matrices \( D^{(i)} \) with a $1$ in the \( (y_i, y_i) \)-th position and zeros elsewhere. Treating \( R_Y \) as a row-stochastic vector and indexing the outcomes of $Y$ as \( i = 1,..,n \), we obtain
\begin{align}
\left(\begin{array}{c}
r\\1-r
\end{array}\right)&\succeq\left(\begin{array}{c}
r\\1-r
\end{array}\right)R_Y\\
&=\left(\begin{array}{ccc}
rR_Y(y_1)&\cdots&rR_Y(y_n)\\
(1-r)R_Y(y_1)&\cdots&(1-r)R_Y(y_n)
\end{array}\right)\\
&\succeq\sum_{i=1}^n S^{(i)}\left(\begin{array}{ccc}
\widehat{R}_Y(y_1)\cdots \widehat{R}_Y(y_n)
\end{array}\right)R^{(i)}\\
&=\big(S^{(y_1)}\widehat{R}_Y(y_1)\cdots S^{(y_n)}\widehat{R}_Y(y_n)\big)\\
&=\big(R_{X,Y=y_1}\cdots R_{X,Y=y_n}\big)\\
&=R_{XY}.
\end{align}
Hence, we obtain what we wanted to prove.

\bigskip

Finally, let us prove the implication $\ref{item: majorization R_r} \implies \ref{item: power universal}$. We need to show that if there exists $r\in(0,1)$ such that $R_r\succeq R_{XY}$, where $R_r$ is the matrix (column) appearing on the left-hand side of \eqref{eq:pucond},
then $[R_{XY}]$ is power universal. To establish this, we first show that $[R_r]$ is a power universal element; that is, for any elements $[P_{XY}]$ and $[Q_{XY}]$ such that $\|[P_{XY}]\| = \|[Q_{XY}]\|$, there exists a sufficiently large integer $m \in \mathbb{N}$ such that
$P_{XY} \succeq Q_{XY} \otimes R_r^{\otimes m}
$. This, in turn, implies that [$R_{XY}]$ is also power universal, since $R_r\succeq R_{XY}$ implies the chain of inequalities
\begin{equation}
    P_{XY} \succeq Q_{XY} \otimes R_r^{\otimes m}\succeq Q_{XY} \otimes R_{XY}^{\otimes m}\,.
\end{equation}
Note that we use the subscript $XY$ for both $P_{XY}$ and $Q_{XY}$, since the smaller matrix can always be embedded into the larger one by zero-padding.
Without loss of generality, we may assume that $r \in [1/2, 1)$, as the case $r \in (0, 1/2)$ lies in the same equivalence class due to the ability to permute elements.
The fact that $[R_r]$ is power universal follows from the observation that, given a sufficiently large number of copies of $R_r^{\otimes m}$, the non-zero entries of the resulting vector become arbitrarily small. In particular, the largest entry is $r^m$, which approaches zero as $m \to \infty$.
By the initial sum condition for majorization (see Lemma~\ref{lem: Initial sum condition}), this implies that any column of a suitably large embedded distribution $P_{X,Y=y_i}$ can be transformed into $Q_{X,Y=y_i} \otimes R_r^{\otimes m}$, provided the corresponding weights match; that is, when $P_Y(y_i) = Q_Y(y_i)$.
If the column weights differ, we can apply an additional stochastic map to $P_{XY}$ to adjust the marginals accordingly. Explicitly, let $D$ be a row-stochastic matrix such that
\begin{equation}
    P_{Y}D=Q_Y
\end{equation}
and denote $\widetilde{Q}_{XY}:=P_{XY}D$. It is easy to notice that the columns $\widetilde{Q}_{X,Y=y}$ have the same weight as the columns $Q_{X,Y=y}$ for each $y$.
Moreover,  let $m\in\mb N$ be large enough so that
\begin{equation}
r^m\max_{x \in\mc X}Q_{XY}(x,y)\leq \min_{\substack{x \in\mc X \\ \widetilde{Q}(x, y) \neq 0}} \widetilde{Q}(x, y) \qquad\forall y\in\mc Y.
\end{equation}
It then follows, by Lemma~\ref{lem: Initial sum condition}, that for every $y \in\mc Y$, the column $\widetilde{Q}_{X,Y=y}$ majorizes the column $Q_{X,Y=y} \otimes R^{\otimes m}$.
Let $S^{(i)}$ be a doubly stochastic matrix such that
\begin{equation}
    S^{(i)} \widehat{Q}_{X,Y=y_i} = Q_{X,Y=y_i} \otimes R^{\otimes m} \,,
\end{equation}
where we denoted with $\widehat{Q}_{X,Y}$ a suitably embedding of $\widetilde{Q}_{XY}$.
Furthermore, let $D^{(i)}$ denote the matrix that selects the outcome $y_i$, as defined in Lemma~\ref{lemma:embedding}; that is, the matrix with a 1 at entry $(y_i, y_i)$ and zeros elsewhere.
We now have
\begin{align}
P_{XY}&\succeq P_{XY}D=\widetilde{Q}_{XY}\succeq\sum_{i=1}^kS^{(i)}\widehat{Q}_{XY}D^{(i)} =  Q_{XY}\otimes R^{\otimes m}\,.
\end{align}
Finally, the semiring has polynomial growth, as it evidently contains at least one power-universal element.

\end{proof}

Let us note that, whenever $Q_Y$ is a probability distribution, the joint distribution $(1/2,1/2)\times Q_Y$ corresponds to a power universal of the semiring $S$ according to the above result. Note that this is exactly the kind of joint distribution that we use to normalize our general conditional entropies in the normalization axiom 4 of Definition~\ref{def: conditional entropy}.

\subsection{The homomorphisms}
\label{sec: all monotone homorphisms}
In this section, we derive the general form of conditional entropy satisfying the axioms stated in Definition~\ref{def: conditional entropy}. Within the theory of preordered semirings, monotone homomorphisms and derivations are functions designed to satisfy analogous properties. In particular, for the semiring of conditional majorization, these functions correspond—up to an additive constant and a logarithmic factor—to the extremal conditional entropies, which we later denote by $H_{t,\tau}$. As we prove later, these entropies generate the entire family of conditional entropies via convex combinations (see Theorem~\ref{th:barycenter conditional entropies}). Hence, for our purposes, it is crucial to determine the form of the homomorphisms and derivations, which is achieved by characterizing them through their mathematical properties.

As we reviewed in Section~\ref{sec: background semiring}, we need to consider monotone homomorphisms \(\Phi: S \to \mathbb{K}\) in the fields $
    \mathbb{K} = \mathbb{R}_+,\mathbb{R}_+^{\mathrm{op}},\mathbb{TR}_+,\mathbb{TR}_+^{\mathrm{op}} \,.$
The cases $\mathbb{R}_+,\mathbb{R}_+^{\mathrm{op}}$ correspond to the temperate case, whereas $\mathbb{TR}_+,\mathbb{TR}_+^{\mathrm{op}} $ correspond to the tropical cases.
As we show later, the temperate cases correspond to the conditional entropy $H_{t,\tau}$ (up to a constant and a logarithmic factor), with the probability measure $\tau$ and $t>0$ on $\mathbb{R}_+$ and $t<0$ on $\mathbb{R}_+^{\mathrm{op}}$ (see Definition~\ref{def: extremal entropies}). The tropical cases yield the limits $t\to+\infty,-\infty$. In particular, $\mathbb{K} = \mathbb{TR}_+^{\rm op}$ gives rise to the entropies $H_{-\infty,\tau}$ (see Definition~\ref{def: extremal entropies}), corresponding to the limiting form of $H_{t,\tau}$ as $t\to-\infty$.
Finally, the case \(\mathbb{TR}_+\) gives rise to \(H_{+\infty,0}\), which corresponds to the limiting form of \(H_{t,\tau}\) as \(t \to +\infty\) and to the choice of \(\tau\) as a point measure at \(\alpha = 0\) (see Definition~\ref{def: extremal entropies}). As discussed later, in this last limiting regime, the convexity properties of the function enforce that only point measures supported at \(\alpha = 0\) are admissible. In the following result, we express the statement in terms of a measure \(\mu\), which we later decompose as \(\mu = t\,\tau\), where \(t\) denotes the total mass of \(\mu\) and \(\tau\) is the associated probability measure.

\begin{proposition}
\label{prop: monotone homorphisms}
Let \( S\) be the semiring of conditional majorization. The monotone homomorphisms $\Phi:\mc S\to\mb K\in\{\R_+,\R_+^{\rm op},\T\R_+,\T\R_+^{\rm op}\}$ are of the form
\begin{equation}
\label{eq:FinalForm}
\Phi(P_{XY}) =
\begin{cases} 
\displaystyle \sum_{y \in \mathcal{Y}} P_Y(y) \exp\Bigg(\int_{[0,+\infty]} H_\alpha(P_{X|Y=y_i}) \, \mathrm{d}\mu(\alpha) \Bigg), & \mb K\in\{\mathbb{R}_+,\mathbb{R}_+^{\mathrm{op}}\}, \\
\displaystyle \max \limits_{\substack{y \in \mathcal{Y} \\ P_Y( y) > 0}} \exp\Bigg(\int_{[0,+\infty]} H_\alpha(P_{X|Y=y}) \, \mathrm{d}\mu(\alpha) \Bigg), &\mb K=\mathbb{TR}_+^{\mathrm{op}}, \\
\displaystyle \max \limits_{\substack{y \in \mathcal{Y} \\ P_Y( y) > 0}} \exp\left(k H_0(P_{X|Y=y_i}) \, \right), &\mb K=\mathbb{TR}_+ \,,
\end{cases}
\end{equation}
where $\mu$ is a finite positive measure in case $\mb K=\mathbb{R}_+$, and it is a negative measure in cases $\mb K\in\{\mathbb{R}_+^{\mathrm{op}},\T\R_+^{\rm op}\}$. Moreover, in case $\mb K=\mathbb{TR}_+$,  $k$ is a positive constant. 
\end{proposition}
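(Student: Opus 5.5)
The strategy is to reduce the problem to single columns via Remark~\ref{rem:SomOfColumns}, and then to classify homomorphisms on the subsemiring of column vectors using the known classification for ordinary majorization.

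\textbf{Reduction to columns.} Since $\Phi$ is monotone, the two-sided relation $[P_{XY}]\rgeq\sum_i[P_{X,Y=y_i}]\rgeq[P_{XY}]$ of Lemma~\ref{lemma:shifting} gives
\[
\Phi(P_{XY})=\sum_{y}\Phi(P_{X,Y=y})
\]
in the relevant addition. This is the ordinary sum for $\R_+,\R_+^{\rm op}$ and the maximum for the tropical targets. Lemma~\ref{lemma:embedding} shows that zero rows are irrelevant, and $\Phi(0)=0$ removes zero columns.

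\textbf{Factoring out the column weight.} A column $v=P_{X,Y=y}$ with weight $w=P_Y(y)$ factors as $v=[w]\cdot[v/w]$, where $[w]$ is the $1\times 1$ matrix. So $\Phi(v)=\Phi([w])\,\phi(v/w)$, with $\phi$ the restriction of $\Phi$ to normalized column vectors.

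The map $w\mapsto\Phi([w])$ is additive, multiplicative and satisfies $\Phi(1)=1$. It is also monotone under the preorder, which relates $[w]$ and $[w']$ only when $w=w'$; monotonicity alone therefore does not pin it down. To obtain positivity I would use that every $[w]$ is dominated, through $[w]\rgeq[w/2]+[w/2]$ and similar relations (via Lemma~\ref{lem: degenerate and zig-zag}), by elements that are nonnegative in $\mb K$. An additive, multiplicative, positive map $\R_+\to\R_+$ is the identity. In the tropical case the map satisfies $\Phi([w+w'])=\max$, and multiplicativity then forces $\Phi([w])=1$ for $w>0$. This gives the weight $P_Y(y)$ in the temperate case and no weight in the tropical case.

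\textbf{Classifying $\phi$ on columns.} Restricted to single columns, the preorder is ordinary majorization (a lone column with $D$ stochastic reduces to mixing). The map $\phi$ is multiplicative under $\otimes$, and monotone in the reverse direction of majorization for $\R_+$ and $\T\R_+$, and in the same direction for the op cases.

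Set $\log\phi=:\mb F$. This $\mb F$ is additive, invariant under embedding, and monotone, so it is a (possibly sign-reversed) entropy in the sense of Definition~\ref{def: Entropy}, without normalization. Proposition~\ref{proposition:barycenter entropies}, in its unnormalized version, gives $\mb F=\int H_\alpha\,\d\mu$. Here $\mu$ is a positive finite measure when $\mb F$ is monotone nondecreasing under mixing, which is the case $\R_+$. In the op cases $-\mb F$ is an entropy, so $\mu$ is a negative measure. This yields the first two lines of \eqref{eq:FinalForm}.

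\textbf{The $\T\R_+$ case.} I expect this to be the main obstacle. Beyond column monotonicity, one must use genuinely conditional channels that split a column across several $Y$ outcomes: a column $v$ can be sent to $\tfrac12 v\oplus\tfrac12 v$ and, conversely, merged. With the max-sum and no weight, this forces $\phi$ to be compatible with convex mixtures of columns, namely
\[
\phi(\lambda v+(1-\lambda)v')\ \le\ \max\{\phi(v),\phi(v')\}\quad\text{(or the reverse inequality)},
\]
combined with $\phi$ being increasing under mixing.

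I would try to show that this quasi-convexity/monotonicity conflict kills every $H_\alpha$ with $\alpha>0$. Take $v$ deterministic and $v'$ spread out; the mixture has full support, so $\int H_\alpha\,\d\mu$ must be controlled only through the support size. This suggests testing with near-deterministic columns of the form $(1-\varepsilon,\varepsilon/(d-1),\dots)$, which would force $\mu$ to be concentrated at $\alpha=0$, giving $\mu=k\delta_0$ with $k>0$.

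The delicate points are identifying exactly which mixing inequality the tropical addition imposes, and justifying the positivity argument for $\Phi$ on scalars.
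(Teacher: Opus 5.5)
Your reduction to columns, your handling of the column weights, and your use of Proposition~\ref{proposition:barycenter entropies} on normalized columns follow the paper's Steps 1--3. One correction on the scalar step: nonnegativity is automatic, because every $\mathbb K$ has underlying set $[0,\infty)$. What actually needs monotonicity is additivity. It comes from the fact that $[w]+[w']$ and $[w+w']$ dominate each other: permute $X$ conditionally and merge or split $Y$-outcomes, as in the paper's Step 2. Additivity plus nonnegativity then gives $\Phi([w])=w$, which is a valid alternative to the paper's boundedness-plus-multiplicativity argument with $x=1/2$.

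The genuine gap is the $\mathbb{TR}_+$ case. You leave the direction of the mixing inequality open, and the relation you cite cannot settle it: splitting $v$ into $\tfrac12 v\oplus\tfrac12 v$ and merging back only gives the tautology $\phi(v)=\max\{\phi(v),\phi(v)\}$.

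The missing idea is that merging two \emph{different} columns is a one-way channel on $Y$. Applying $D=(1,1)^T$ sends $\lambda v\oplus(1-\lambda)v'$ to $\lambda v+(1-\lambda)v'$. In $\mathbb{TR}_+$, $\Phi$ is nondecreasing along conditionally mixing channels and the tropical weights equal $1$. This yields strong quasi-concavity, $\phi(\lambda v+(1-\lambda)v')\ge\max\{\phi(v),\phi(v')\}$ (cf.\ Proposition~\ref{prop:convconc}).

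The direction is essential. In your test case the reverse inequality is already implied by monotonicity under mixing, so it carries no information. With the correct direction your test closes, as follows.
\begin{itemize}
\item Take $p_\lambda=\lambda e_1+(1-\lambda)U_d$. Strong quasi-concavity gives $\phi(p_\lambda)\ge\phi(U_d)$.
\item Since $U_d$ is obtained from $p_\lambda$ by mixing, also $\phi(p_\lambda)\le\phi(U_d)$.
\item Hence $\int H_\alpha(p_\lambda)\,\mathrm{d}\mu(\alpha)=\mu([0,\infty])\log d$ for all $\lambda<1$.
\item Let $\lambda\to1$ and use dominated convergence, with $0\le H_\alpha\le\log d$ and $H_\alpha(p_\lambda)\to0$ for every $\alpha>0$. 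This gives $\mu((0,\infty])=0$, so $\mu=k\delta_0$.
\end{itemize}

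The paper reaches the same conclusion slightly differently. Any two distributions with equal support are each a proper convex mixture of the other with some third distribution. Strong quasi-concavity then forces $\phi$ to depend only on the support, which leaves only $H_0$.
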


\begin{proof}
The proof is organized into several key steps that lead to the general form of the monotone homomorphisms. Specifically, we proceed as follows:
\begin{enumerate}
    \item Step 1: Additivity. We use the shifting lemma~\ref{lemma:shifting} to reduce the problem to determining the monotone homomorphism for a fixed column vector \( P_{X,Y=y} \), corresponding to a specific value \( Y = y \).
    \item Step 2: Multiplicativity. We then use the property that monotone homomorphisms must be multiplicative under tensor products to further simplify the problem to normalized conditional distributions \( P_{X|Y=y} \).
    \item Step 3: Inner barycenter. We invoke Proposition~\ref{proposition:barycenter entropies}, which states that the most general form of an entropy can be expressed as a linear combination of Rényi entropies. 
     \item Step 4: Simplifying $\mathbb{TR}_+$. We use the fact that, in this case, the monotone homomorphisms are strongly quasi-concave, which allows us to further simplify their form and to single out point measures supported at $\alpha=0$.
\end{enumerate}

\bigskip
\bigskip

\noindent\textbf{Step 1: Additivity}. The first step is to use the additivity of the homomorphism under direct sums to reduce the problem from the matrix $P_{XY}$ to a single column vector $P_{X,Y=y}$ for a fixed value $Y = y$.
Lemma~\ref{lemma:shifting}, together with the fact that the homomorphism is additive in both the temperate and tropical sense under summation (see also Remark~\ref{rem:SomOfColumns}), leads to the following additive form
\begin{equation}\label{eq:additivity}
\Phi(P_{XY}) =
\begin{cases}
\sum_{y \in \Y} \Phi(P_{X,Y=y}), & \mathbb{R}_+,\mathbb{R}_+^{\mathrm{op}}, \\
\max_{y \in \Y} \Phi(P_{X,Y=y}), & \mathbb{TR}_+,\mathbb{TR}_+^{\mathrm{op}} \,.
\end{cases}
\end{equation}
Hence, the problem reduces to finding the homomorphisms for the column vector $P_{X,Y=y}$ with fixed $Y=y$.

\bigskip
\bigskip
 \noindent\textbf{Step 2: Multiplicativity}. The second step applies the multiplicativity of the homomorphism under tensor products, specifically scalar multiplication, to reduce the problem from the unnormalized vector $P_{X,Y=y}$ to the normalized vector probability distribution $P_{X|Y=y} = P_{X,Y=y}/P_Y(y)$, where $P_Y(y) = \sum_x P_{XY}(x,y) = P_Y(y)$ is the total weight of the colum $P_{X,Y=y}$. We then obtain
 \begin{equation}\label{eq:normalize}
     \Phi(P_{X,Y=y}) = \Phi\left(P_Y(y) \otimes \frac{P_{X,Y=y}}{P_Y(y)}\right) =\Phi(P_Y(y)) \Phi\left(\frac{P_{X,Y=y}}{P_Y(y)}\right) = \Phi(P_Y(y)) \Phi\big(P_{X|Y=y}\big) \,.
 \end{equation}
 
Next, we analyze the form of $\Phi(P_Y(y))$, which corresponds to the homomorphism applied to the single matrix entry $P_Y(y) \in \mb R_+$. 
By definition, the homomorphism satisfies $\Phi(0) = 0$ for the zero element and $\Phi(1) = 1$ for the unit element.
Moreover, from the multiplicativity of $\Phi$, it follows that $\Phi(xy)=\Phi(x)\Phi(y)$ for all $x,y\in\mb R_+$. Thus, if the restriction of $\Phi$ to $\mathbb{R}_+$ satisfies a suitable regularity condition—such as being bounded on an interval with non-empty interior—it follows that $\Phi(x) = x^\gamma$ for some $\gamma \in \mathbb{R}$.
 We shall next show that $\Phi(x)\leq 1$ for all $x\in[0,1]$. Let $x\in[0,1]$. We easily see that
\begin{equation}\label{eq:x1}
\left(\begin{array}{cc}
x&0\\
0&1-x
\end{array}\right)\succeq\left(\begin{array}{cc}
1&0\\
0&0
\end{array}\right)\succeq\left(\begin{array}{cc}
x&0\\
0&1-x
\end{array}\right) \,.
\end{equation}
The first implication follows from the fact that we can apply a doubly stochastic map to the second column, conditioned on the second outcome of the register $Y$, and then merge both outcomes into a single $Y$ outcome. The second implication follows by applying a stochastic map on $Y$ that discards the second outcome and creates two new outcomes with probabilities $x$ amd $1-x$. Finally, we permute the outcomes of the $X$ register conditioned on the second outcome of the $Y$ register. Since the homomorphism must be monotone under both actions, equality must hold for both elements, and thus, for the temperate case, the sum property implies that
\begin{equation}
1=\Phi(1)=\Phi\left[\left(\begin{array}{cc}
x&0\\
0&1-x
\end{array}\right)\right]=\Phi(x)+\Phi(1-x)\geq\Phi(x) \,,
\end{equation}
since $\Phi(1-x)\geq 0$. The tropical case is similar since $\max\{\Phi(x),\Phi(1-x)\} \geq \Phi(x)$. Hence, the homomorphisms are bounded and, for what was discussed earlier, $\Phi(x)=x^\gamma$ for all $x>0$ for some $\gamma\in\mb R$. Next, we show that $\gamma = 1$ in the temperate cases $\mathbb{R}_+$ and $\mathbb{R}^{\mathrm{op}}_+$, while $\gamma = 0$ in the tropical cases $\mathbb{TR}_+$ and $\mathbb{TR}^{\mathrm{op}}_+$.
By substituting $x = \frac{1}{2}$, and using the equivalence established in \eqref{eq:x1} along with the (temperate or tropical) additivity, we obtain
\begin{equation}
1 = \Phi(1) =
\begin{cases}
2\,\Phi\!\left(\tfrac{1}{2}\right) = 2^{1-\gamma}, & \quad \mathbb{R}_+,\mathbb{R}_+^{\mathrm{op}}, \\
\Phi\!\left(\tfrac{1}{2}\right) = 2^{-\gamma}, & \quad \mathbb{TR}_+,\mathbb{TR}_+^{\mathrm{op}} \,.
\end{cases}
\end{equation}

This establishes that $\gamma = 1$ in the temperate cases $\mathbb{R}_+$ and $\mathbb{R}^{\mathrm{op}}_+$, and $\gamma = 0$ in the tropical cases $\mathbb{TR}_+$ and $\mathbb{TR}^{\mathrm{op}}_+$. In the above arguments, we implicitly assumed that $P_{Y}(y)>0$. In the situation where $P_Y(y)=0$ we have that $\Phi(P_{X,Y=y}) =0$ since $\Phi(0)=0$. This implies that in the cases $\mathbb{TR}_+$ and $\mathbb{TR}^{\mathrm{op}}_+$, only the $Y$ outcomes such that $P_Y(y)>0$ must be included in the maximization. Substituting the values of $\gamma$ into \eqref{eq:additivity}, we obtain
\begin{equation}\label{eq:general1}
\Phi(P_{XY}) =
\begin{cases}
\sum_{y \in \Y} P_Y(y)\,\Phi\!\left(P_{X|Y=y}\right), & \mathbb{R}_+,\mathbb{R}_+^{\mathrm{op}}, \\
\max \limits_{\substack{y \in \mathcal{Y} \\ P_Y( y) > 0}} \Phi\!\left(P_{X|Y=y}\right), & \mathbb{TR}_+,\mathbb{TR}_+^{\mathrm{op}} \,.
\end{cases}
\end{equation}

 \bigskip
 \bigskip
 
\noindent \textbf{Step 3: Inner barycenter}. The final step is to determine the form of the homomorphism for a normalized column vector $P_{X|Y=y}$. Using multiplicativity, for two column vectors $P_X$ and $Q_{X'}$, we have that
\begin{equation}
    \Phi(P_X \otimes Q_{X'}) = \Phi(P_X) \Phi(Q_{X'}) \,.
\end{equation}
Furthermore, by monotonicity under doubly stochastic maps, we have that for any vector $P_X$ and any doubly stochastic matrix $S$,
\begin{equation}
\begin{cases}
 \Phi(P_X)\leq \Phi(SP_X), & \quad \mathbb{R}_+,\mathbb{TR}_+, \\
\Phi(P_X) \geq \Phi(SP_X), & \quad \mathbb{R}_+^{\mathrm{op}},\mathbb{TR}_+^{\mathrm{op}}\,.
\end{cases}
\end{equation}
Therefore, the function $P_X\mapsto\log{\Phi(P_X)}$ is additive under tensor product and monotone under doubly stochastic maps. Using Proposition~\ref{proposition:barycenter entropies}, we obtain that the function $\log{\Phi(P_X)}$ is a linear combination of Rényi entropies. This leads to the form
\begin{equation}
\label{eq:FinalForm proof}
\Phi(P_{XY}) =
\begin{cases} 
\displaystyle \sum_{y \in \mathcal{Y}} P_Y(y) \exp\Bigg(\int_{[0,+\infty]} H_\alpha(P_{X|Y=y}) \, \mathrm{d}\mu(\alpha) \Bigg), & \mathbb{R}_+,\mathbb{R}_+^{\mathrm{op}}, \\
\displaystyle \max \limits_{\substack{y \in \mathcal{Y} \\ P_Y( y) > 0}} \exp\Bigg(\int_{[0,+\infty]} H_\alpha(P_{X|Y=y}) \, \mathrm{d}\mu(\alpha) \Bigg), & \mathbb{TR}_+,\mathbb{TR}_+^{\mathrm{op}}.
\end{cases}
\end{equation}
where $\mu$ is a positive measure in cases $\mathbb{R}_+$ and $\mathbb{TR}_+$, and it is a negative measure in the opposite cases $\mathbb{R}^{\mathrm{op}}_+$ and $\mathbb{TR}^{\mathrm{op}}_+$. 

\bigskip
\bigskip

\noindent \textbf{Step 4: Simplifying $\mathbb{TR}_+$} In the case of $\mathbb{TR}_+$, the form of the homomorphisms can be further restricted. Specifically, only the case $\alpha = 0$ for the inner barycenter is admissible. To show this, let us first prove that for elements of the semiring that are column vectors, the tropical homomorphisms in $\mathbb{TR}_+$ depend only on the support of the vector. Given $P\neq Q$ with the same support, we have that there exists $s,t \in (0,1)$ and $P',Q'$ such that
\begin{align}
   tQ+(1-t)Q'=P\,, \quad  sP+(1-s)P'=Q 
\end{align}
We only need to consider monotone homomorphisms. In \(\mathbb{TR}_+\), Proposition~\ref{prop:convconc} shows that this is equivalent to the homomorphisms being strongly quasi-concave, with the latter property defined in equation~\eqref{eq: def strongly quasi-conc}.
Then, we use the strong quasi-concavity to get
\begin{align}
    \Phi(P) \geq \max \{\Phi(Q),\Phi(Q')\} \geq \Phi(Q) \geq \max \{\Phi(P),\Phi(P')\} \geq \Phi(P) \,.
\end{align}
Hence, $\Phi(P) = \Phi(Q)$, which implies that $\Phi$ depends only on the support of the distribution. Since the only Rényi entropy that depends solely on the support, and not on the specific probabilities, is $H_0$, we can further restrict the monotone homomorphisms in $\mathbb{TR}_+$ to
\begin{equation}
    \Phi(P_{XY}) = \max \limits_{\substack{y \in \mathcal{Y} \\ P_Y( y) > 0}}
    \exp(kH_0(P_{X|Y=y}))
\end{equation}
for some positive constant $k>0$.
All the results together prove the proposition.
\end{proof}

\subsection{The derivations}
\label{sec: derivations}
In the previous sections, we examined the monotone homomorphisms, which (up to a constant and a log factor) form a subset of the extremal conditional entropies—those that, through convex combinations, generate the entire set of conditional entropies. However, to obtain a complete characterization of this set, it is also necessary to consider derivations at $\|\cdot\|$, which we prove to arise as limiting cases of homomorphisms. In particular, we show that these derivations coincide with generalized convex combinations of the conditional entropies \(H_{0,\alpha}\) (see Definition~\ref{def: extremal entropies}), corresponding to the limiting form of \(H_{t,\tau}\) as \(t \to 0\).

Let us point out that, according to Theorem \ref{thm:Fritz2022}, we only need to consider derivations $\Delta$ with $\Delta\big(P_X\otimes(1/2,1/2)\big)=1$ for some (and, as we shall see, consequently for all) probability distributions $P_X$ since $P_X\otimes(1/2,1/2)$ is a power universal according to Proposition \ref{prop: power universals}.

\begin{proposition}
\label{prop: monotone derivations}
Let \( S\) be the semiring of conditional majorization. Then, the extremal monotone derivations at \(\|\cdot\|\), as defined in Lemma~\ref{lem: degenerate and zig-zag}, are given by
\begin{equation}
\label{eq:ExtDerivation}
H_{0,\alpha}(P)
= \sum_{y \in \mathcal{Y}} P_Y(y) \, H_\alpha\!\big(P_{X|Y = y}\big)\,,
\end{equation}
for \(\alpha \in [0,+\infty]\).
\end{proposition}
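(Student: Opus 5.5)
We follow the same reduction scheme as in the proof of Proposition~\ref{prop: monotone homorphisms}, now with the Leibniz rule at the degenerate homomorphism $\|\cdot\|$ replacing multiplicativity.

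\textbf{Step 1: reduction to columns.} Let $\Delta$ be a monotone $\|\cdot\|$-derivation. Since $\Delta$ is additive and monotone, it is constant on $\sim$-classes generated by two-sided conditional majorization. Remark~\ref{rem:SomOfColumns} gives $[P_{XY}]\rgeq\sum_y [P_{X,Y=y}]\rgeq[P_{XY}]$, and Lemma~\ref{lemma:embedding} handles zero rows. Together these give
\begin{equation}
\Delta(P_{XY})=\sum_{y}\Delta(P_{X,Y=y}).
\end{equation}

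\textbf{Step 2: scalars.} Write $P_{X,Y=y}=P_Y(y)\otimes P_{X|Y=y}$. The Leibniz rule gives
\begin{equation}
\Delta(P_{X,Y=y})=\Delta(P_Y(y))\,\|P_{X|Y=y}\|+P_Y(y)\,\Delta(P_{X|Y=y}) = \Delta(P_Y(y)) + P_Y(y)\,\Delta(P_{X|Y=y}).
\end{equation}
For the scalar part, set $f(x)=\Delta([x])$ for $x\geq 0$. It satisfies $f(xy)=xf(y)+yf(x)$, so $f(x)=x\,g(\log x)$ with $g$ additive. Moreover, \eqref{eq:x1} shows that $[x]+[1-x]\sim[1]$, hence $f(x)+f(1-x)=f(1)=0$.

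For regularity, compare a column $x\cdot e_1$ with $x\cdot(1/2,1/2)$: the first majorizes the second, which by monotonicity yields a one-sided bound on $f$ on $(0,1]$. With this bound, $g$ is linear, so $f(x)=c\,x\log x$. Then $f(x)+f(1-x)=0$ for all $x$ forces $c=0$. Hence
\begin{equation}
\Delta(P_{XY})=\sum_y P_Y(y)\,\Delta(P_{X|Y=y}).
\end{equation}
I expect this regularity step, which rules out pathological solutions of the Cauchy-type equation, to be the main obstacle. If monotonicity does not bound $f$ directly, I will try exploiting $[x]\oplus[1-x]\sim[1]$ together with $f(x^n)=nx^{n-1}f(x)$ to derive a contradiction unless $f\equiv 0$.

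\textbf{Step 3: inner barycenter.} On normalized vectors $P_X$, the Leibniz rule with $\|P_X\|=1$ gives $\Delta(P\otimes Q)=\Delta(P)+\Delta(Q)$, so $\Delta$ is additive. Monotonicity under doubly stochastic maps and invariance under embeddings are inherited from the semiring preorder. Positivity of $\Delta$ on probability vectors follows from $1\rleq P_X$, since $\Delta(1)=0$ by Leibniz. The normalization $\Delta(u)=1$ for $u=(1/2,1/2)$ completes the hypotheses of Definition~\ref{def: Entropy}. Proposition~\ref{proposition:barycenter entropies} then yields a probability measure $\nu$ on $[0,\infty]$ with
\begin{equation}
\Delta(P_{XY})=\int_{[0,\infty]}\sum_yP_Y(y)H_\alpha(P_{X|Y=y})\,\d\nu(\alpha)=\int_{[0,\infty]} H_{0,\alpha}(P)\,\d\nu(\alpha).
\end{equation}

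\textbf{Step 4: extremality and monotonicity.} The set of normalized monotone derivations is convex, and its extreme points are the Dirac measures $\nu=\delta_\alpha$, which give \eqref{eq:ExtDerivation}. It remains to check that each $H_{0,\alpha}$ is indeed a monotone derivation. Additivity and the Leibniz rule follow by direct computation. Monotonicity under $P\mapsto\sum_i S^{(i)}PD^{(i)}$ holds for the following reasons. Every output column is a mixture of doubly stochastic images of input columns. The map $q\mapsto \|q\|H_\alpha(q/\|q\|)$ is concave and $1$-homogeneous, because it is the perspective of a concave Schur-concave function; this is true for $\alpha\in[0,1]$, and for $\alpha>1$ it follows from quasi-concavity together with the expected-value form. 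By Jensen, this gives the desired inequality.
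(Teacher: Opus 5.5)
Your Steps 1 and 3 follow the paper. Step 2, however, contains a step that cannot succeed, and Step 4 asserts something false.

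In Step 2 you try to prove $f=\Delta|_{\R_+}\equiv 0$, but the regularity bound you propose does not exist. By the Leibniz rule, $\Delta(x\,e_1)=f(x)$ and $\Delta\big(x\,(1/2,1/2)\big)=f(x)+x\,\Delta(u)$. So monotonicity only yields $\Delta(u)\geq 0$, and $f(x)$ cancels. This is structural rather than accidental. The preorder only relates elements of equal total weight (Lemma~\ref{lem: degenerate and zig-zag}), so no monotonicity argument can ever constrain $f$. In fact $f\equiv 0$ is false in general:
\begin{itemize}
\item The two-sided comparison between $[x]+[y]$ and $[x+y]$ (the argument of \eqref{eq:x1}) makes $f$ additive. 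Together with the Leibniz rule, this makes $f$ a derivation of $\R$ over $\mathbb{Q}$.
\item Nonzero such derivations exist. They are built from a transcendence basis of $\R$ over $\mathbb{Q}$ and are necessarily non-measurable.
\item For any such $f$, the map $f\circ\|\cdot\|$ is a monotone $\|\cdot\|$-derivation, since comparable elements have equal norm.
\item Hence $H_{0,\alpha}+f\circ\|\cdot\|$ is a normalized monotone derivation with $\Delta|_{\R_+}=f\neq 0$. Your fallback identities ($[x]+[1-x]\sim[1]$ and $f(x^n)=nx^{n-1}f(x)$) are satisfied by such $f$ as well.
\end{itemize}
The missing idea is the paper's Step 1. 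Do not try to prove $\Delta|_{\R_+}=0$; normalize it away by passing to $\Delta'=\Delta-\Delta(\|\cdot\|)$. This $\Delta'$ is again a monotone $\|\cdot\|$-derivation and has $\Delta'(u)=\Delta(u)$. It differs from $\Delta$ only by a term that cancels whenever $\|x\|=\|y\|$, so it is interchangeable with $\Delta$ in Theorem~\ref{thm:Fritz2022}. Equivalently, additivity of $f$ gives $\Delta(P)=f(\|P\|)+\sum_y P_Y(y)\Delta(P_{X|Y=y})$, so your formula does hold on probability distributions, because $f(1)=0$. Also note that $\Delta$ is constant on two-sided classes $P\succeq Q\succeq P$, not on the zigzag classes $\sim$; the latter identify all elements of equal norm.

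In Step 4, your claim that $H_{0,\alpha}$ is monotone for $\alpha>1$ is wrong. Jensen applied to the perspective $q\mapsto\|q\|H_\alpha(q/\|q\|)$ requires genuine concavity of $H_\alpha$, and quasi-concavity does not suffice. Take two equally likely values of $Y$ with conditional distributions $(0.4,0.3,0.3)$ and $(0.6,0.2,0.2)$. Merging the two values of $Y$ (a row-stochastic map on the conditioning system) lowers $H_{0,\infty}$ from $-\tfrac12\log 0.24$ to $\log 2$. The paper shows in Propositions~\ref{prop: prop measure deriv} and~\ref{prop:Derivations} that only $\alpha\in[0,1]$ yield monotone derivations. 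The paper's proof of the present proposition only establishes that every normalized monotone derivation is a barycentre $\int H_{0,\alpha}\,\d\nu(\alpha)$ of these candidates. Drop the $\alpha>1$ claim and restrict your monotonicity check to $\alpha\in[0,1]$, where your perspective/Jensen argument is correct.
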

\begin{proof}
We need to study of monotone derivations at $\|\cdot\|$. These are additive maps \(\Delta \colon S \to \mathbb{R}\) that preserve the order, in the sense that \([P] \rgeq [Q]\)—and hence \(P \preceq Q\) for representatives of the corresponding equivalence classes—implies \(\Delta(P) \geq \Delta(Q)\). Moreover, they satisfy the Leibniz rule with respect to the semiring product,
\begin{align}\label{eq:LeibnizRule}
\Delta(P_{XY}\otimes Q_{X'Y'}) 
= \Delta(P_{XY})\,\|Q_{X'Y'}\|
+ \|P_{XY}\|\,\Delta(Q_{X'Y'}) .
\end{align}

for all $P_{XY},Q_{X'Y'}\in\V$. 

The proof is organized into several key steps that lead to the general form of the monotone derivations. Specifically, we proceed as follows:
\begin{enumerate}
    \item Step 1: Additive constant. We begin by showing that the derivations can be taken to satisfy \(\Delta(x) = 0\) for all \(x \in \mathbb{R}_+\).
    \item Step 2: General form. We then exploit the properties of monotone derivations to determine their general form.
    \item Step 3: Extremal derivations. We identify the extremal derivations, which generate the entire set of derivations through convex combinations.
\end{enumerate}

\medskip

\noindent\textbf{Step 1: Additive constant.} We first show that, without loss of generality, the derivation can be assumed to satisfy $\Delta(x) = 0$ for all $x \in \mathbb{R}_+$. To establish this, we prove that from any given derivation $\Delta$, one can construct another derivation $\Delta'$ which satisfies $\Delta'(x) =0$ and hence it is interchangeable with $\Delta$ as defined in Definition 8.2 of \cite{fritz2023abstractII}. Let $\Delta$ be a derivation.  We notice that, for $x,y\in\mb R_+$,
\begin{equation}
\Delta(xy)=\Delta(x)y+x\Delta(y),
\end{equation}
implying that $\delta:=-\Delta|_{\mb R_+}:\mb R_+\to\mb R$ is a derivation.
Define a new map $\Delta' \colon \mathcal{V} \to \mathbb{R}$ by
\begin{equation}
    \Delta'(P) := \Delta(P) - \Delta(\|P\|).
\end{equation}
By construction, $\Delta'(x) = 0$ for all $x \in \mathbb{R}_+$. It therefore suffices to show that $\Delta$ and $\Delta'$ are interchangeable. This will follow if $\Delta'$ is itself a monotone derivation at $\|\cdot\|$. While this is a general fact, we provide a direct proof here for completeness. We have, for all $x,y\in\mb R_+$,
\begin{equation}
\left(\begin{array}{cc}
x&0\\
0&y
\end{array}\right)\succeq\left(\begin{array}{cc}
x+y&0\\
0&0
\end{array}\right)\succeq\left(\begin{array}{cc}
x&0\\
0&y
\end{array}\right),
\end{equation}
by the same argument that we used in the multiplicativity part of Section~\ref{sec: all monotone homorphisms}. Therefore, we obtain that
\begin{equation}
\Delta(x+y)=\Delta\left[\left(\begin{array}{cc}
x&0\\
0&y
\end{array}\right)\right]=\Delta(x)+\Delta(y)\,.
\end{equation}
In the following, for compactness, we omit the subscripts $X,Y$.
We now have, for all $P,Q\in\V$,
\begin{align}
\Delta'(P\oplus Q)&=\Delta(P\oplus Q)-\Delta(\|P\oplus Q\|)\\
&=\Delta(P)+\Delta(Q)-\Delta(\|P\|+\|Q\|)\\
&=\Delta(P)+\Delta(Q)-\Delta(\|P\|)-\Delta(\|Q\|)\\
&=\Delta'(P)+\Delta'(Q)
\end{align}
and
\begin{align}
\Delta'(P\otimes Q)&=\Delta(P\otimes Q)-\Delta(\|P\otimes Q\|)\\
&=\Delta(P)\|Q\|+\|P\|\Delta(Q)-\Delta(\|P\|\|Q\|)\\
&=\Delta(P)\|Q\|+\|P\|\Delta(Q)-\Delta(\|P\|)\|Q\|-\|P\|\Delta(\|Q\|)\\
&=\Delta'(P)\|Q\|+\|P\|\Delta'(Q),
\end{align}
showing that $\Delta'$ is a derivation at $\|\cdot\|$. To prove monotonicity, we consider $[P]$ and $[Q]$ such that $[P]\rgeq[Q]$. Then, we have
\begin{equation}
\Delta'(P)=\underbrace{\Delta(P)}_{\geq\Delta(Q)}-\Delta(\underbrace{\|P\|}_{=\|Q\|})\geq\Delta(Q)-\Delta(\|Q\|)=\Delta'(Q).
\end{equation}
Thus, $\Delta'$ is also a monotone homomorphism, and we may replace $\Delta$ with $\Delta'$. This means that we are free to assume that $\Delta(x)=0$ for all $x\in\mb R_+$.

\medskip
\noindent \noindent\textbf{Step 2: General form.}
Next, we derive the general form of the monotone derivations. First, we derive the action of $\Delta$ on individual normalized columns. Since $\Delta$ satisfies the axioms of an entropy functional, Proposition~\ref{proposition:barycenter entropies} implies the existence of a finite positive measure $\tau \colon \mathcal{B}([0, +\infty]) \to\R_+$ such that for a probability vector $P_X$
\begin{equation}
\Delta(P_X)=\int_{[0,+\infty]}H_\alpha(P_X)\d\tau(\alpha)\,.
\end{equation}
Combining Lemma~\ref{lemma:shifting} with the additivity and monotonicity of $\Delta$, along with the observations made above, we now obtain the following expression for any joint probability distribution $P_{XY}$
\begin{align}
\Delta(P_{XY})&=\sum_{y\in \Y}\Delta(P_{X,Y=y})=\sum_{y\in \Y}\Delta\big(P_Y(y)P_{X|Y=y}\big)\\
&=\sum_{y\in \Y}\underbrace{\Delta\big(P_Y(y)\big)}_{=0}\|P_{X|Y=y}\|_1+P_Y(y)\Delta(P_{X|Y=y})\\
&=\sum_{y\in \Y}P_Y(y)\Delta(P_{X|Y=y})\\
&=\sum_{y\in \Y}P_Y(y)\int_{[0,+\infty]}H_\alpha(P_{X|Y=y})\d\tau(\alpha).
\end{align}
Hence, we obtain the final form for the derivations
\begin{equation}
\label{eq:derivFinalForm} 
    \Delta(P_{XY}) = \sum_{y\in \Y}P_Y(y)\int_{[0,+\infty]}H_\alpha(P_{X|Y=y})\d\tau(\alpha).
\end{equation}
Note that the additivity of $\Delta$ is immediate given the form in \eqref{eq:derivFinalForm}. It is also straightforward to verify that the form above ensures $\Delta$ satisfies the Leibniz rule. Explicitly, if we denote $R_{XX'YY'}:=P_{XY}\otimes Q_{X'Y'}$, the additivity
of the R\'{e}nyi entropies implies that
\begin{align}
\Delta(P_{XY}\otimes Q_{X'Y'})&=\Delta(R_{XX'YY'})\\
&=\sum_{y\in \Y}\sum_{y'\in \Y'}R_{YY'}(y,y')\int_{[0,+\infty]}H_\alpha\big(R_{XX'|YY'=(y,y')}\big)\d\tau(\alpha)\\
&=\sum_{y\in \Y}\sum_{y'\in \Y'}P_Y(y)Q_{Y'}(y')\int_{[0,+\infty]}H_\alpha\big(P_{X|Y=y}\otimes Q_{X'|Y'=y'}\big)\d\tau(\alpha)\\
&=\sum_{y\in Y}\sum_{y'\in \Y'}P_Y(y)Q_{Y'}(y')\int_{[0,+\infty]}H_\alpha\big(P_{X|Y=y}\big)\d\tau(\alpha)\\
&\qquad +\sum_{y\in \Y}\sum_{y'\in \Y'}P_Y(y)Q_{Y'}(y')\int_{[0,+\infty]}H_\alpha\big(Q_{X'|Y'=y'}\big)\d\tau(\alpha)\\
&=\underbrace{\|Q_{Y'}\|_1}_{=\|Q_{X'Y'}\|}\sum_{y\in \Y}P_Y(y)\int_{[0,+\infty]}H_\alpha\big(P_{X|Y=y}\big)\d\tau(\alpha)\\
&\qquad +\underbrace{\|P_Y\|_1}_{=\|P_{XY}\|}\sum_{y'\in \Y'}Q_{Y'}(y')\int_{[0,+\infty]}H_\alpha\big(Q_{X'|Y'=y'}\big)\d\tau(\alpha)\\
&=\Delta(P_{XY})\|Q_{X'Y'}\|+\|P_{XY}\|\Delta(Q_{X'Y'}).
\end{align}
Finally, note that the normalization condition $\Delta\big(P_X\otimes(1/2,1/2)\big)=1$ implies that the positive measure $\tau$ must be a probability measure.

\medskip
\noindent \noindent\textbf{Step 3: Extremal derivations.} Finally, we identify the extremal derivations that generate the entire set of derivations via convex combinations. For $\alpha\in[0,+\infty]$, let us consider the derivations (or conditional entropies)
\begin{equation}\label{eq:ExtDerivation proof}
H_{0,\alpha}(P)=\sum_{y\in\mc Y} P_Y(y)H_\alpha(P_{X|Y=y})\,.
\end{equation}
From equation~\eqref{eq:derivFinalForm}, we observe that any derivation \(\Delta\) can be written as
\begin{equation}
    \Delta(P) = \int_{[0,+\infty]} H_{0,\alpha}(P)\, d\tau(\alpha) 
\end{equation}
for some probability measure \(\tau\). Hence, each derivation is a convex combination (or, rather, a barycentre) of the extremal derivations \(H_{0,\alpha}(P)\). 
Consequently, since our goal is to characterize the extremal elements of the set—in order to express the general form of conditional entropies through barycentric (i.e., convex) decompositions—it suffices to consider the derivations
\(H_{0,\alpha}\) with \(\alpha \in [0,+\infty]\).
\end{proof}

\section{Sufficient conditions for the parameters}
\label{sec: sufficient conditions measure}
In this section, we establish sufficient conditions for the parameters under which the homomorphisms and derivations (and, consequently, the associated conditional entropies) are monotone under conditional majorization. In the following section, we show that these conditions are also necessary under additional assumptions. These assumptions include, as special cases, discrete measures with finite support and measures that vanish sufficiently fast as $\alpha\to 1$.

In the previous sections, we derived the general form of monotone homomorphisms and derivations, both of which depend on a choice of positive measure. However, not all such choices are valid, as some lead to violations of monotonicity under conditionally mixing channels. In particular, as we show later, while monotonicity under doubly stochastic maps always holds, monotonicity under channels acting on the conditioning system fails for some choices of the parameters. To characterize the admissible cases, we examine the convexity properties of the homomorphism as a function of the joint probability matrix—a standard and often more tractable approach. Indeed, as discussed in Appendix~\ref{app: convexity and monotonicity}, convexity and monotonicity under conditionally mixing channels are essentially equivalent. Moreover, it suffices to establish the convexity properties required to ensure monotonicity under channels acting on the conditioning system, which form a subclass of conditionally mixing channels as characterized in Lemma~\ref{lem: extension to general channels}. Indeed, monotonicity under doubly stochastic maps on the first system is readily verified for all values of the parameters.

\medskip
Let us first identify the function for which we need to establish convexity or concavity properties. We observe that, in the temperate case, the homomorphisms take the form~\eqref{eq:FinalForm}
\begin{align}\label{eq:TemperateForm}
\Phi(P_{XY})&=\Phi_{t,\tau}(P_{XY})=
\sum_{y\in \Y} P_Y(y)\exp{\left(t\int_{[0,+\infty]}H_\alpha(P_{X|Y=y})\d\tau(\alpha)\right)} \,,
\end{align}
where $\tau$ is a probability measure, the normalized version of the measure $\mu$ of \eqref{eq:FinalForm}, and $t=\mu\big([0,+\infty]\big)\in\R\setminus\{0\}$. As detailed in Appendix~\ref{app: convexity and monotonicity}, we need to analyze the convexity properties of the function $P_{XY} \mapsto \Phi(P_{XY})$.
Therefore, in the following analysis, it suffices to study the convexity properties of the function defined for $\vec{x} \in \mathbb{R}^d_+$ by
\begin{equation}\label{eq:conditionfunct}
    \vec{x} \longmapsto \|\vec{x}\|_1 \exp\left( t\int_{[0,+\infty]} H_\alpha(\hat{x}) \,\d\tau(\alpha) \right)\,,
\end{equation}
where $\|\vec{x}\|_1=\sum_i x_i$ and $\hat{x} = \vec{x}/\|\vec{x}\|_1$. We begin with the discrete case, where the measure is supported on a finite set of points. We then extend these results to a general measure by constructing a suitable discrete approximation and taking the limit in which the approximation converges to the original measure. Indeed, convexity and monotonicity properties are preserved under taking limits.

Finally, at the end of the section, we establish sufficient conditions for both the tropical case and the derivations by taking limits of the conditions obtained earlier for the temperate case. Indeed, these two cases can be viewed as limiting regimes of the temperate case, where the total weight of the measure $t$ diverges to infinity or vanishes to zero, respectively.

\subsection{Sufficient conditions for the discrete case}
The discrete case exploits these two basic facts. The first fact is Minkowski's inequality (see e.g.~\cite[Theorem 9]{bullen2013handbook})
\begin{lemma}
\label{lem: Minkowski}
Let $\alpha \in \mathbb{R}$. Then the function, defined on $\mathbb{R}_{>0}^d$,
\begin{equation}
    \vec{x} \longmapsto \left(\sum_{i=1}^d x_i^\alpha\right)^{\frac{1}{\alpha}}
\end{equation}
\begin{enumerate}[itemsep=1pt]
    \item is convex if $\alpha > 1$;
    \item is concave if $\alpha < 1$.
\end{enumerate}
\end{lemma}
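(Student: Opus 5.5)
The plan is to exploit the fact that the function $f_\alpha(\vec x) = \big(\sum_{i=1}^d x_i^\alpha\big)^{1/\alpha}$ is positively homogeneous of degree one on $\mathbb{R}^d_{>0}$. For such functions, convexity is equivalent to subadditivity, $f(\vec x+\vec y)\le f(\vec x)+f(\vec y)$, and concavity is equivalent to superadditivity. Indeed, $f(\lambda\vec x+(1-\lambda)\vec y)$ and $\lambda f(\vec x)+(1-\lambda)f(\vec y)$ reduce to comparing $f(\vec u+\vec v)$ with $f(\vec u)+f(\vec v)$ for $\vec u=\lambda\vec x$ and $\vec v=(1-\lambda)\vec y$. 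So the claim reduces to the classical Minkowski inequality for $\alpha\ge 1$ and the reverse Minkowski inequality for $\alpha<1$ (including negative $\alpha$, with strictly positive entries).

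A cleaner, case-free route is a Hessian/perspective argument. First, I will treat the edge case $\alpha=0$: here $f_0$ is not defined by the formula itself. The statement has to be read as excluding $\alpha=0$, or as the limit geometric-mean type normalization, which is concave as a limit of concave functions. I will then handle $\alpha\neq 0$ by writing $f_\alpha(\vec x) = g(\phi(\vec x))$, where $\phi(\vec x)=\sum_i x_i^\alpha$ and $g(s)=s^{1/\alpha}$.

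The core step is to verify the sign of the Hessian restricted to directions $\vec h$. By homogeneity, $\vec h=\vec x$ is a null direction, so it suffices to compute
\[
\frac{d^2}{d\epsilon^2}f_\alpha(\vec x+\epsilon\vec h)\Big|_{\epsilon=0}
= (1-\alpha)\, f_\alpha(\vec x)^{1-2\alpha}\Big[\Big(\sum_i x_i^{\alpha-1}h_i\Big)^2 - \Big(\sum_i x_i^\alpha\Big)\Big(\sum_i x_i^{\alpha-2}h_i^2\Big)\Big].
\]
The bracket is $\le 0$ by Cauchy--Schwarz, applied to the vectors $(x_i^{\alpha/2})$ and $(x_i^{\alpha/2-1}h_i)$. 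Hence the second derivative has the sign of $-(1-\alpha)$: it is nonnegative, giving convexity, when $\alpha>1$, and nonpositive, giving concavity, when $\alpha<1$. The prefactor $f_\alpha^{1-2\alpha}$ is positive on $\mathbb{R}^d_{>0}$, and the derivative of $g$ contributes a factor $1/\alpha$ whose sign cancels against the $\alpha$ factors produced by differentiating $x_i^\alpha$. This sign bookkeeping is the only place requiring care.

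I expect the main obstacle to be exactly that bookkeeping for $\alpha<0$, where $1/\alpha$ and $\alpha(\alpha-1)$ are both sign-flipped, so it must be confirmed that the overall sign remains $(1-\alpha)$. If it becomes messy, I will fall back to the homogeneity/superadditivity argument. In that route, reverse Minkowski for $\alpha<1$ follows from reverse Hölder, or equivalently from the concavity of the weighted power mean in the vector argument. Since the result is standard (as cited from Bullen's handbook), the remaining argument is only this verification.
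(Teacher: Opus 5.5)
Your proposal is correct. The paper gives no argument of its own for this lemma; it only cites the classical Minkowski inequality (Bullen's handbook, Theorem~9). That is essentially your first route: degree-one homogeneity converts convexity and concavity into sub- and superadditivity, which are Minkowski for $\alpha>1$ and reverse Minkowski for $\alpha<1$, $\alpha\neq 0$, on strictly positive vectors.

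Your Hessian route is a genuinely self-contained alternative, and it is cleaner than you fear. Set $S=\sum_i x_i^\alpha$, $A=\sum_i x_i^{\alpha-1}h_i$ and $B=\sum_i x_i^{\alpha-2}h_i^2$. The chain rule produces the coefficients $\frac{1}{\alpha}\cdot\frac{1-\alpha}{\alpha}\cdot\alpha^2=1-\alpha$ and $\frac{1}{\alpha}\cdot\alpha(\alpha-1)=\alpha-1$. Hence $\frac{d^2}{d\epsilon^2}f_\alpha(\vec x+\epsilon\vec h)\big|_{\epsilon=0}=(1-\alpha)S^{1/\alpha-2}(A^2-SB)$ holds identically for every $\alpha\neq 0$. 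No separate sign bookkeeping is needed for $\alpha<0$.

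Your remark that $\vec h=\vec x$ is a null direction is superfluous. It is just the equality case of Cauchy--Schwarz, and your computation already covers all $\vec h$. Since $\mathbb{R}^d_{>0}$ is open and convex and $f_\alpha$ is $C^2$ there, the sign of this directional second derivative settles convexity or concavity.

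The citation buys brevity and the familiar form of the inequality. Your computation buys a uniform one-line proof that covers negative $\alpha$ without invoking reverse H\"older.

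Your caveat about $\alpha=0$ is also right. For $d\ge 2$ the unnormalized expression has no finite limit: it tends to $+\infty$ as $\alpha\to 0^+$ and to $0$ as $\alpha\to 0^-$. So the lemma has to be read with $\alpha\neq 0$. Only the normalized power mean converges, to the geometric mean, which is concave.
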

The second fact is a well-known fact about convexity properties of multivariate divergences~\cite[Lemma 8]{mu21_renyi} (see also~\cite[Proposition 13]{farooq2024matrix})
\begin{lemma}
\label{lem: Matrix majorization}
Let $N \in \mathbb{N}$ and let $\vec{\beta} = (\beta_1, \dots, \beta_N) \in \mathbb{R}^N$ be such that $\sum_{\ell=1}^N \beta_\ell = 1$. Then the function, defined on $\mathbb{R}_{>0}^N$,
\begin{equation}
    \vec{y} \longmapsto \prod_{\ell=1}^N y_\ell^{\beta_\ell}
\end{equation}
\begin{enumerate}[itemsep=1pt]
    \item is concave if and only if $\beta_\ell \geq 0$ for all $\ell$;
    \label{it: concavity matrix majorization}
    \item is convex if and only if there exists a unique index $k$ such that $\beta_{k} > 0$ and $\beta_\ell \leq 0$ for all $\ell \neq k$.
     \label{it: convexity matrix majorization}
\end{enumerate}
\end{lemma}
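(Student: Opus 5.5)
The plan is a second-order test. Write $f(\vec y)=\prod_{\ell=1}^N y_\ell^{\beta_\ell}$ on the open convex set $\mathbb R^N_{>0}$. Since $f$ is smooth there, concavity (convexity) is equivalent to negative (positive) semidefiniteness of the Hessian at every point. Because $\partial_\ell f = f\,\beta_\ell/y_\ell$, the Hessian is
\begin{equation}
\nabla^2 f(\vec y) = f(\vec y)\Big(\vec b\,\vec b^{\,T}-\mathrm{diag}\big(\beta_\ell/y_\ell^2\big)\Big),\qquad b_\ell=\beta_\ell/y_\ell .
\end{equation}
Evaluating the quadratic form at a direction $\vec v$ and substituting $z_\ell=v_\ell/y_\ell$, which is a bijection of directions at each fixed $\vec y$, gives
\begin{equation}
\vec v^{\,T}\nabla^2 f(\vec y)\,\vec v = f(\vec y)\Big[\Big(\sum_\ell \beta_\ell z_\ell\Big)^2-\sum_\ell \beta_\ell z_\ell^2\Big].
\end{equation}
Since $f>0$, both parts of Lemma~\ref{lem: Matrix majorization} reduce to the sign of $Q(\vec z):=(\sum_\ell\beta_\ell z_\ell)^2-\sum_\ell\beta_\ell z_\ell^2$ over all $\vec z\in\mathbb R^N$, independently of $\vec y$. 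Throughout we use $\sum_\ell\beta_\ell=1$.

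\textbf{Concavity.} Suppose all $\beta_\ell\ge0$. Then $\beta$ is a probability vector, and $-Q(\vec z)$ is the variance of $z$ under $\beta$. Hence $Q\le0$ by Jensen's (or the Cauchy--Schwarz) inequality, and $f$ is concave. Conversely, suppose some $\beta_k<0$. Take $\vec z=e_k$; then $Q(e_k)=\beta_k^2-\beta_k>0$, so $f$ is not concave.

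\textbf{Convexity, sufficiency.} Suppose $\beta_k>0$ and $c_\ell:=-\beta_\ell\ge0$ for $\ell\neq k$. Set $s=\sum_{\ell\ne k}c_\ell$, so that $\beta_k=1+s$. If $s=0$, then $f=y_k$ is linear and there is nothing to prove. Otherwise put $w=\sum_{\ell\ne k}c_\ell z_\ell$. Cauchy--Schwarz with weights $c_\ell$ gives $\sum_{\ell\neq k} c_\ell z_\ell^2\ge w^2/s$. Therefore
\begin{equation}
Q(\vec z)\ \ge\ (\beta_k z_k-w)^2-\beta_k z_k^2+\frac{w^2}{s}
=\beta_k s\, z_k^2-2\beta_k z_k w+\beta_k\frac{w^2}{s}
=\beta_k\Big(\sqrt{s}\,z_k-\frac{w}{\sqrt s}\Big)^2\ \ge 0,
\end{equation}
where the middle equality uses $\beta_k(\beta_k-1)=\beta_k s$ and $1+1/s=\beta_k/s$. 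Hence $f$ is convex.

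\textbf{Convexity, necessity.} Since $\sum_\ell\beta_\ell=1$, at least one $\beta_\ell$ is positive. Suppose two of them, say $\beta_1,\beta_2$, are positive. Choose $\vec z=(\beta_2,-\beta_1,0,\dots,0)$, so that $\sum_\ell\beta_\ell z_\ell=0$. Then $Q(\vec z)=-\beta_1\beta_2(\beta_1+\beta_2)<0$, so $f$ is not convex. Thus convexity forces exactly one positive exponent, with all the others nonpositive.

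The only step that needs real care is the sufficiency direction for convexity, where the completion of the square above is the key computation. The remaining directions are immediate from explicit test vectors.
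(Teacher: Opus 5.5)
Your proof is correct and complete. The Hessian formula checks out, and so does the substitution $z_\ell=v_\ell/y_\ell$, which reduces everything to the sign of the point-independent form $Q(\vec z)=(\sum_\ell\beta_\ell z_\ell)^2-\sum_\ell\beta_\ell z_\ell^2$. The variance argument for concavity, the weighted Cauchy--Schwarz bound with completion of the square for convexity (including the degenerate case $s=0$), and both witness directions are also all correct.

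The paper itself gives no proof of this lemma. It imports the statement as a known fact about multivariate divergences from \cite[Lemma 8]{mu21_renyi} and \cite[Proposition 13]{farooq2024matrix}, so your route differs in being self-contained.

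The citation keeps the exposition short and places the dichotomy in its original context. That dichotomy is: all exponents nonnegative versus exactly one positive exponent, which is the parameter range of monotone multivariate R\'enyi divergences.

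Your argument buys two things. First, the paper no longer depends on external sources for a step it uses in Lemma~\ref{lem: discrete measure}. Second, it produces explicit second-order witnesses, and these matter for this paper. Your value $Q(e_k)=\beta_k(\beta_k-1)$ appears in the limiting second derivatives of Proposition~\ref{prop: first prop measure}, up to positive factors. If you rescale your second direction to $(1/\beta_1,-1/\beta_2,0,\dots,0)$, the value $Q=-(\beta_1+\beta_2)/(\beta_1\beta_2)$ appears in Propositions~\ref{prop: second prop measure} and~\ref{prop: third prop measure}. Those propositions obtain these expressions after embedding such directions in high dimensions. Your computation therefore makes transparent where the paper's later counterexamples come from.
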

We then combine these two facts to prove concavity and convexity properties in the case where $\tau$ is a discrete measure supported on a finite number of points.
\begin{lemma}\label{lem: discrete measure}
Consider a discrete probability measure $\tau:\mc B\big([0,+\infty]\big)\to[0,1]$ and $t\in\R\setminus\{0\}$. Let us assume that $\tau$ is supported on finitely many points $\alpha_1,\ldots,\alpha_N\in[0,\infty]$ enumerated in increasing order. Let us pick $d\in\mathbb{N}$. The function defined on  $\vec{x}\in\R^d_{\geq 0} \setminus\{0\}$
\begin{equation}
\vec{x}\mapsto \|\vec{x}\|_1\exp{\left(t\int_{[0,+\infty]}H_\alpha(\hat{x})\,d\tau(\alpha)\right)}
\end{equation}
where $\hat{x}:=\vec{x}/\|\vec{x}\|_1$ is
\begin{enumerate}
\item concave if $t>0$, $\alpha_1,\ldots,\alpha_N<1$, $\sum_{\ell=1}^N\frac{\alpha_\ell}{1-\alpha_\ell}\tau(\{\alpha_\ell\})\leq 1/t$ and
\item convex if $t<0$ and either
\begin{enumerate}[label=(\alph*),ref=2(\alph*)]
\item $\alpha_1,\ldots,\alpha_N\leq 1$ or
\label{item: sufficient discrete convex 1}
\item $\alpha_1,\ldots,\alpha_{N-1}<1$, $\alpha_N\geq 1$, and $\sum_{\ell=1}^N\frac{\alpha_\ell}{1-\alpha_\ell}\tau(\{\alpha_\ell\})\geq1/t$.
\label{item: sufficient discrete convex 2}
\end{enumerate}
\end{enumerate}
\end{lemma}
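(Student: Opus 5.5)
The plan is to rewrite the function as a weighted geometric mean of $\ell_\alpha$-type quasi-norms of $\vec x$, and then combine Lemma~\ref{lem: Minkowski} with Lemma~\ref{lem: Matrix majorization} through the standard composition rules for convex and concave functions.

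\textbf{Step 1: product form.} Write $w_\ell:=\tau(\{\alpha_\ell\})$ and $\|\vec x\|_\alpha:=(\sum_i x_i^\alpha)^{1/\alpha}$. First assume every $\alpha_\ell\notin\{0,1,\infty\}$. Since $\sum_i\hat x_i^{\alpha}=\|\vec x\|_\alpha^{\alpha}\|\vec x\|_1^{-\alpha}$, we have
\begin{equation}
\exp\big(t w_\ell H_{\alpha_\ell}(\hat x)\big)=\Big(\sum_i \hat x_i^{\alpha_\ell}\Big)^{\frac{t w_\ell}{1-\alpha_\ell}}=\|\vec x\|_{\alpha_\ell}^{\beta_\ell}\,\|\vec x\|_1^{-\beta_\ell},\qquad \beta_\ell:=\frac{t w_\ell\,\alpha_\ell}{1-\alpha_\ell}.
\end{equation}
Hence the function equals $g\big(\|\vec x\|_1,\|\vec x\|_{\alpha_1},\dots,\|\vec x\|_{\alpha_N}\big)$, where $g(\vec y)=\prod_{\ell=0}^N y_\ell^{\beta_\ell}$ and $\beta_0:=1-\sum_{\ell\ge1}\beta_\ell$. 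The exponents sum to one, so Lemma~\ref{lem: Matrix majorization} applies to $g$. Moreover, $g$ is nondecreasing in $y_\ell$ when $\beta_\ell\ge0$ and nonincreasing when $\beta_\ell\le0$.

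\textbf{Step 2: composition.} For $\alpha<1$ the coordinate $\|\vec x\|_\alpha$ is concave and for $\alpha>1$ it is convex (Lemma~\ref{lem: Minkowski}); the coordinate $\|\vec x\|_1$ is linear.

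In case 1 we have $t>0$ and all $\alpha_\ell<1$. Then $\beta_\ell\ge0$ for $\ell\ge1$, and the integral condition $t\sum_\ell w_\ell\alpha_\ell/(1-\alpha_\ell)\le1$ is exactly $\beta_0\ge0$. So $g$ is concave and nondecreasing in every argument, and every argument is concave; hence the composition is concave.

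In case 2(a) we have $t<0$ and all $\alpha_\ell<1$. Then $\beta_\ell\le0$ for $\ell\ge1$ and $\beta_0\ge1>0$, so $g$ is convex with the unique positive index $0$. The function $g$ is nonincreasing in the concave arguments and nondecreasing in the linear one, so the composition is convex.

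In case 2(b) we have $\alpha_N>1$, so $\beta_N>0$ because $t<0$. The sum condition of the statement is meant to encode $\sum_{\ell\ge1}\beta_\ell\ge1$, i.e.\ $\beta_0\le0$ (I will double-check the direction of the inequality against the sign of $t$). Under this condition $N$ is the unique positive index. Then $g$ is convex, nondecreasing in the convex argument $\|\vec x\|_{\alpha_N}$, and nonincreasing in the concave arguments and in the linear argument $\|\vec x\|_1$. Hence the composition is convex.

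\textbf{Step 3: limiting orders.} The points $\alpha_\ell\in\{0,1,\infty\}$ (and $\alpha_N=1$ in 2(b), which falls under 2(a)) are handled by approximation. I would replace $\alpha_\ell$ by nearby values from the correct side: $\alpha\to0^+$, $\alpha\to1^-$ in case 2(a), and $\alpha\to\infty$ through finite values. Near $\alpha=1$ the exponent $\beta_\ell$ blows up, but only toward $-\infty$ in case 2(a), which keeps the sign pattern intact. The hypotheses are preserved along the approximation (for case 1 the terms with $\alpha_\ell\to0$ contribute vanishing $\beta_\ell$). Pointwise limits preserve convexity and concavity, so it suffices to check pointwise convergence of the functions on $\R^d_{\ge0}\setminus\{0\}$.

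The step I expect to be the main obstacle is $\alpha\to0^+$ on the boundary of the orthant. Here $\sum_i\hat x_i^\alpha$ converges to the support size with the convention $0^\alpha=0$, and this convergence must hold at every point, including zero coordinates, so that concavity passes to the discontinuous $H_0$ term. Similarly, the case $\alpha\to1$ needs $\exp(tw H_\alpha)\to\exp(twH_1)$ uniformly enough near the boundary. Since only pointwise convergence is needed for convexity or concavity, I would verify these pointwise limits directly rather than rely on continuity.
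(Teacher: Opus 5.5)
Your Steps 1--2 follow the paper's argument exactly. The paper also writes the function as $\|\vec x\|_1^{\bar\beta}\prod_\ell(\sum_i x_i^{\alpha_\ell})^{\beta_\ell/\alpha_\ell}$, with the same exponents $\beta_\ell=t\,\tau(\{\alpha_\ell\})\alpha_\ell/(1-\alpha_\ell)$ and $\bar\beta=1-\sum_\ell\beta_\ell$. It then combines Lemma~\ref{lem: Minkowski} with the (anti)monotonicity of the powers, closes with Lemma~\ref{lem: Matrix majorization}, and obtains the endpoint orders by pointwise limits. Your monotone-composition bookkeeping of the sign patterns is correct in all three cases.

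You should settle the direction in 2(b) rather than leave it open. Since $t<0$, the requirement $\beta_0\le 0$ means $t\sum_\ell w_\ell\frac{\alpha_\ell}{1-\alpha_\ell}\ge 1$, i.e.\ $\sum_\ell w_\ell\frac{\alpha_\ell}{1-\alpha_\ell}\le 1/t$. This is the condition the paper's proof actually uses, and it matches Proposition~\ref{prop: sufficient conditions} and Definition~\ref{def: D_bulk}; the ``$\geq 1/t$'' in the statement is a typo. With the printed direction you would have $\beta_0\ge 0$ together with $\beta_N>0$, and for $\beta_0>0$ Proposition~\ref{prop: second prop measure} shows that convexity genuinely fails. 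Note also that $\sum_\ell w_\ell\frac{\alpha_\ell}{1-\alpha_\ell}\le 1/t<0$ forces $w_N>0$ and $\alpha_N>1$, because the atoms in $[0,1)$ contribute nonnegatively. This is what makes $N$ the unique positive index, which you use implicitly.

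The one step that fails as written is your claim in Step~3 that the hypotheses survive approximating an atom at $\alpha_\ell=0$ by $\epsilon>0$.
\begin{itemize}
\item The replacement raises $\sum_\ell w_\ell\frac{\alpha_\ell}{1-\alpha_\ell}$ by $w_\ell\epsilon/(1-\epsilon)>0$.
\item In case 1 and in case 2(b), this violates $\sum\le 1/t$ whenever that constraint holds with equality. The fact that $\beta_\ell\to 0$ does not rescue this.
\item By contrast, the convergence $\sum_i\hat x_i^{\epsilon}\to|\mathrm{supp}(\hat x)|$, which you single out as the main obstacle, is immediate since $0^\epsilon=0$.
\end{itemize}

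There are two easy repairs.
\begin{itemize}
\item Create slack first. Replace $t$ by some $t'<t$ of the same sign; this makes the constraint strict in both cases. Then approximate the atom at $0$, and finally let $t'\uparrow t$.
\item Skip the approximation at $0$. That atom has $\beta_\ell=0$, so the function equals $|\mathrm{supp}(\vec x)|^{t w_\ell}\,h(\vec x)$. Here $h\ge 0$ is built from the remaining atoms with the same $\beta_0$, so it is concave (resp.\ convex) by Steps~1--2. For $\lambda\in(0,1)$ we have $\mathrm{supp}(\lambda\vec x+(1-\lambda)\vec z)=\mathrm{supp}(\vec x)\cup\mathrm{supp}(\vec z)$. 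Hence the support factor at the mixture is at least the larger of its two endpoint values when $t>0$, and at most the smaller when $t<0$. Multiplying this through the inequality for $h$ gives the claim.
\end{itemize}

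The other limits are fine: $\alpha\uparrow 1$ in 2(a) carries no sum constraint. Approximating $\alpha_N=\infty$ by a finite $A$ only lowers the sum, since $A/(1-A)<-1$, so the corrected constraint is preserved.
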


\begin{proof}
    The proof leverages the discreteness of the measure in conjunction with Minkowski's inequality and convexity of multivariate divergences in Lemma~\ref{lem: Minkowski} and~\ref{lem: Matrix majorization}.
    Using that $\hat{x}=\vec{x}/\|\vec{x}\|_1$, we can rewrite 
    \begin{align}
    \|\vec{x}\|_1 \exp\left( t\sum_\ell H_{\alpha_\ell}(\hat{x})\,\tau(\{\alpha_\ell\}) \right)
    = \|\vec{x}\|_1^{1 - t\sum_\ell\frac{\alpha_\ell}{1-\alpha_\ell} \,\tau(\{\alpha_\ell\})} \exp\left( t\sum_\ell H_{\alpha_\ell}(\vec{x}) \,\tau(\{\alpha_\ell\}) \right)
\end{align}
as long as $\alpha_\ell\neq 1$ for any $\ell\in\{1,\ldots,N\}$. Hence, we need to study the properties of the function
    \begin{equation}
    \vec{x} \longmapsto \left(\sum_{i} x_i \right)^{1 - t\sum_\ell \frac{\alpha_\ell}{1-\alpha_\ell} \, \tau(\{\alpha_\ell\})} \prod_{\ell} \left(\sum_{i} x_i^{\alpha_\ell} \right)^{\tau(\{\alpha_\ell\})\frac{1}{1-\alpha_\ell}t}  = \left(\sum_{i} x_i \right)^{\bar{\beta}} \prod_{\ell} \left(\sum_{i} x_i^{\alpha_\ell} \right)^{\frac{\beta_\ell}{\alpha_\ell}}\,,
    \end{equation}
    where we defined the exponents $\beta_\ell = t\frac{\alpha_\ell}{1-\alpha_\ell} \, \tau(\{\alpha_\ell\})$ and $\bar{\beta}= 1 - \sum_\ell \beta_\ell $. We discuss the three cases separately. 

\bigskip

    Let us start with the first case. It follows that $\beta_\ell \geq 0$ for all $\ell$. By Lemma~\ref{lem: Minkowski}, the function $\vec{x} \mapsto \big(\sum_{i=1}^d x_i^{\alpha_\ell}\big)^{1/\alpha_\ell}$ is concave. Hence, for two vectors $\vec{x}$ and $\vec{z}$ and a coefficient $\lambda \in [0,1]$, we have that
\begin{align}
    \left(\sum_{i} (\lambda x_i+(1-\lambda)z_i)^{\alpha_\ell}\right)^\frac{1}{\alpha_\ell} \geq \lambda \left(\sum_{i} x_i^{\alpha_\ell}\right)^\frac{1}{\alpha_\ell} + (1-\lambda) \left(\sum_{i}  z_i^{\alpha_\ell}\right)^\frac{1}{\alpha_\ell} \,.
\end{align}
Moreover, the exponent $\beta_\ell \geq 0$, so the corresponding power is monotone. Therefore, we have that
\begin{align}
    \left(\sum_{i} (\lambda x_i+(1-\lambda)z_i)^{\alpha_\ell}\right)^{\frac{\beta_\ell}{\alpha_\ell}} \geq \left(\lambda \left(\sum_{i} x_i^{\alpha_\ell}\right)^\frac{1}{\alpha_\ell} + (1-\lambda) \left(\sum_{i}  z_i^{\alpha_\ell}\right)^\frac{1}{\alpha_\ell}\right)^{\beta_\ell} \,.
\end{align}
Finally, we use the joint concavity of multivariate divergences from Lemma~\ref{lem: Matrix majorization}, item~\ref{it: concavity matrix majorization}. Since it holds that $\beta_\ell >0$, we only need to ensure that $0 \leq \bar{\beta}= 1- \sum_\ell\beta_\ell $. Under this condition, we have that
\begin{align}
    &\left(\sum_{i} \lambda x_i+(1-\lambda )z_i \right)^{\bar{\beta}} \prod_{\ell} \left(\sum_{i} (\lambda x_i+(1-\lambda ) z_i)^{\alpha_\ell} \right)^{\frac{\beta_\ell}{\alpha_\ell}} \\
    &\qquad \geq \lambda \left(\sum_{i} x_i \right)^{\bar{\beta}} \prod_{\ell} \left(\sum_{i} x_i^{\alpha_\ell} \right)^{\frac{\beta_\ell}{\alpha_\ell}} + (1-\lambda) \left(\sum_{i} y_i \right)^{\bar{\beta}} \prod_{\ell} \left(\sum_{i} y_i^{\alpha_\ell} \right)^{\frac{\beta_\ell}{\alpha_\ell}} \,.
\end{align}
The constraint $ 0 \leq \bar{\beta} =1-\sum_\ell\beta_\ell$ yields the condition on the measure appearing on the first item of the lemma. 

\bigskip

 Let us turn to the second case in the statement,~\ref{item: sufficient discrete convex 1}. In this case, the measure is negative. We begin by assuming $\alpha_N<1$. The extension to include the possibility $\alpha_N=1$ can then be obtained via a limit argument (see the discussion below). It follows that $\beta_\ell < 0$ for all $\ell$. By Lemma~\ref{lem: Minkowski}, the function $\vec{x} \mapsto \big(\sum_{i=1}^d x_i^{\alpha_\ell}\big)^{1/\alpha_\ell}$ is concave. Hence, for two vectors $\vec{x}$ and $\vec{z}$ and a coefficient $\lambda \in [0,1]$, we have that
\begin{align}
    \left(\sum_{i} (\lambda x_i+(1-\lambda)z_i)^{\alpha_\ell}\right)^\frac{1}{\alpha_\ell} \geq \lambda \left(\sum_{i} x_i^{\alpha_\ell}\right)^\frac{1}{\alpha_\ell} + (1-\lambda) \left(\sum_{i}  z_i^{\alpha_\ell}\right)^\frac{1}{\alpha_\ell} \,.
\end{align}
Moreover, the exponent $\beta_\ell < 0$, so the corresponding power is antimonotone. Therefore, we have that
\begin{align}
    \left(\sum_{i} (\lambda x_i+(1-\lambda)z_i)^{\alpha_\ell}\right)^{\frac{\beta_\ell}{\alpha_\ell}} \leq \left(\lambda \left(\sum_{i} x_i^{\alpha_\ell}\right)^\frac{1}{\alpha_\ell} + (1-\lambda) \left(\sum_{i}  z_i^{\alpha_\ell}\right)^\frac{1}{\alpha_\ell}\right)^{\beta_\ell} \,.
\end{align}
Finally, we use the joint convexity result from Lemma~\ref{lem: Matrix majorization}, item~\ref{it: convexity matrix majorization}. Since it holds that $\beta_\ell <0$, we only need to ensure that $0 \leq \bar{\beta}= 1- \sum_\ell\beta_\ell $. However, this condition already follows from the fact that $\beta_\ell<0$ and hence we do not impose any additional constraint on the measure.  Under this condition, we have that
\begin{align}
    &\left(\sum_{i} \lambda x_i+(1-\lambda )z_i \right)^{\bar{\beta}} \prod_{\ell} \left(\sum_{i} (\lambda x_i+(1-\lambda ) z_i)^{\alpha_\ell} \right)^{\frac{\beta_\ell}{\alpha_\ell}} \\
    &\qquad \leq \lambda \left(\sum_{i} x_i \right)^{\bar{\beta}} \prod_{\ell} \left(\sum_{i} x_i^{\alpha_\ell} \right)^{\frac{\beta_\ell}{\alpha_\ell}} + (1-\lambda) \left(\sum_{i} y_i \right)^{\bar{\beta}} \prod_{\ell} \left(\sum_{i} y_i^{\alpha_\ell} \right)^{\frac{\beta_\ell}{\alpha_\ell}} \,.
\end{align}
Hence, we proved the second item of the lemma.

\bigskip

 Let us turn to case~\ref{item: sufficient discrete convex 2}. 
 Note that the condition $\sum_\ell\mu(\{\alpha_\ell\}) \frac{\alpha_\ell}{1-\alpha_\ell} \leq \frac{1}{t}$ ensures $\tau(\{\alpha_N\}) \neq 0$.
By Lemma~\ref{lem: Minkowski}, the function $\vec{x} \mapsto \big(\sum_{i=1}^d x_i^{\alpha_\ell}\big)^{1/\alpha_\ell}$ is concave for $\ell\leq N-1$. Hence, for two vectors $\vec{x}$ and $\vec{z}$ and a coefficient $\lambda \in [0,1]$, we have that, for $\ell\leq N-1$,
\begin{align}
    \left(\sum_{i} (\lambda x_i+(1-\lambda)z_i)^{\alpha_\ell}\right)^\frac{1}{\alpha_\ell} \geq \lambda \left(\sum_{i} x_i^{\alpha_\ell}\right)^\frac{1}{\alpha_\ell} + (1-\lambda) \left(\sum_{i}  z_i^{\alpha_\ell}\right)^\frac{1}{\alpha_\ell} \,.
\end{align}
Additionally, for $\alpha_N > 1$, the function is convex and hence
\begin{align}
    \left(\sum_{i} (\lambda x_i+(1-\lambda)z_i)^{\alpha_N}\right)^{\frac{1}{\alpha_N}} \leq \lambda \left(\sum_{i} x_i^{\alpha_k}\right)^{\frac{1}{\alpha_N}} + (1-\lambda) \left(\sum_{i}  z_i^{\alpha_k}\right)^{\frac{1}{\alpha_N}} \,.
\end{align}
Moreover, the exponent $\beta_\ell < 0$ for $\ell\leq N-1$, so the corresponding power is antimonotone. Therefore, we have that
\begin{align}
    \left(\sum_{i} (\lambda x_i+(1-\lambda)z_i)^{\alpha_\ell}\right)^{\frac{\beta_\ell}{\alpha_\ell}} \leq \left(\lambda \left(\sum_{i} x_i^{\alpha_\ell}\right)^\frac{1}{\alpha_\ell} + (1-\lambda) \left(\sum_{i}  z_i^{\alpha_\ell}\right)^\frac{1}{\alpha_\ell}\right)^{\beta_\ell} \,.
\end{align}
Additionally, for $\alpha_N > 1$, the exponent $\beta_N = \tau(\{\alpha_N\}) \frac{\alpha_N}{1-\alpha_N}t > 0$ and hence the corresponding power is monotone. Therefore, we have that
\begin{align}
   \qquad \;\, \left(\sum_{i} (\lambda x_i+(1-\lambda)z_i)^{\alpha_N}\right)^\frac{\beta_N}{\alpha_N} \leq \left(\lambda \left(\sum_{i} x_i^{\alpha_N}\right)^\frac{1}{\alpha_N} + (1-\lambda) \left(\sum_{i}  z_i^{\alpha_N}\right)^\frac{1}{\alpha_N}\right)^{\beta_N} \,.
\end{align}
Finally, we use the joint convexity result from Lemma~\ref{lem: Matrix majorization}, item~\ref{it: convexity matrix majorization}. Since it holds that $\beta_\ell <0$ for $\ell\leq N-1$ and $\beta_N >0$, we only need to ensure that $ \bar{\beta}= 1- \sum_\ell\beta_\ell \leq 0$.  Under this condition, we have that
\begin{align}
    &\left(\sum_{i} \lambda x_i+(1-\lambda )z_i \right)^{\bar{\beta}} \prod_{\ell} \left(\sum_{i} (\lambda x_i+(1-\lambda ) z_i)^{\alpha_\ell} \right)^{\frac{\beta_\ell}{\alpha_\ell}} \\
    &\qquad \leq \lambda \left(\sum_{i} x_i \right)^{\bar{\beta}} \prod_{\ell} \left(\sum_{i} x_i^{\alpha_\ell} \right)^{\frac{\beta_\ell}{\alpha_\ell}} + (1-\lambda) \left(\sum_{i} y_i \right)^{\bar{\beta}} \prod_{\ell} \left(\sum_{i} y_i^{\alpha_\ell} \right)^{\frac{\beta_\ell}{\alpha_\ell}} \,.
\end{align}
The constraint $ \bar{\beta} =1-\sum_\ell\beta_\ell \leq 0$ yields the condition on the measure appearing on the third item of the lemma. 

\bigskip

Finally, note that the endpoints of the interval can be included—provided they are not excluded by the condition on the measure—since the pointwise limit of convex functions is itself convex. For instance, in the second case where $t < 0$ and $\alpha_\ell\leq 1$, the endpoint $\alpha_N = 1$ can be included by observing that
\begin{align}
   &\|\vec{x}\|_1^{1 - t\sum_\ell\frac{\alpha_\ell}{1-\alpha_\ell} \,\tau(\{\alpha_\ell\})} \exp\left( t\sum_\ell H_{\alpha_\ell}(\vec{x}) \,\tau(\{\alpha_\ell\}) + tH(\hat{x}) \,\mu(\{1\})\right)\\
   & \qquad = \lim_{\alpha' \rightarrow 1^-} \|\vec{x}\|_1^{1 - t\sum_{\ell=1}^{N-1}\frac{\alpha_\ell}{1-\alpha_\ell} \,\tau(\{\alpha_\ell\})-\frac{t\alpha'}{1-\alpha'} \,\tau(\{1\})} \exp\left( t\sum_\ell H_{\alpha_\ell}(\vec{x}) \,\tau(\{\alpha_\ell\}) 
    +tH_{\alpha'}(\vec{x}) \,\tau(\{1\})\right) \,,
\end{align}
and applying the convexity result for each $\alpha' < 1$. This indeed implies convexity in the limit as $\alpha' \to 1^-$.

\end{proof}

\subsection{Sufficient conditions for the temperate homomorphisms - general measure}
In this section, we extend the convexity result to the case where $\tau$ is a general probability measure. The idea is to approximate $\tau$ by a sequence of finitely supported probability measures $(\tau_n)_{n \in \mathbb{N}}$, obtained by discretizing the intervals with increasingly fine partitions. Convexity holds for each $\tau_n$ by the result from the previous section. Since $(\tau_n)$ converges to $\tau$ (in a suitable sense), and convexity is preserved under pointwise limits, the result carries over to the continuous case. It turns out that the condition
\begin{equation}\label{eq:IntegralCond}
\int_{[0,+\infty]}\frac{\alpha}{1-\alpha}\,\d\tau(\alpha)\leq\frac{1}{t}
\end{equation}
is crucial.  Here, we adopt the convention that the integral equals $+\infty$ if the measure $\tau$ has support at \(\alpha = 1\).

\begin{proposition}\label{prop: sufficient conditions}
Let $\tau:\mc B\big([0,+\infty]\big)\to\mb R$ be a probability measure, $t\in\R\setminus\{0\}$. The function defined on $\vec{x}\in\R^d_{\geq 0}$ 
\begin{equation}
\vec{x} \longmapsto \|\vec{x}\|_1 \exp\left( t\int_{[0,+\infty]} H_\alpha(\hat{x}) \,\d\tau(\alpha) \right)
\end{equation}
where $\hat{x}=\vec{x}/\|\vec{x}\|_1$ is
\begin{enumerate}
\item concave if $t>0$, $\;\textup{supp}(\tau) \subseteq [0,1]$, and \eqref{eq:IntegralCond} holds and 
\label{it: first sufficient}
\item convex if $t<0$ and either
\begin{enumerate}[label=(\alph*),ref=2(\alph*)]
 \item $\textup{supp}(\tau)\subseteq[0,1]$ or
\label{it: second sufficient (a)}
 \item there is $\alpha_*\in(1,+\infty]$ such that $\textup{supp}(\tau)\subseteq[0,1]\cup\{\alpha_*\}$ and \eqref{eq:IntegralCond} holds.
\label{it: second sufficient (b)}
\end{enumerate}
\end{enumerate}
\end{proposition}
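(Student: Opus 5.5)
The plan is to reduce Proposition~\ref{prop: sufficient conditions} to Lemma~\ref{lem: discrete measure} by discretizing $\tau$ and passing to a pointwise limit, since pointwise limits of concave (convex) functions are concave (convex). Fix $d$. For each $n$, partition $[0,1)$ into finitely many intervals of mesh at most $1/n$, and define $\tau_n$ by moving the mass of each interval to a single representative point. In case~\ref{it: second sufficient (b)} the atom at $\alpha_*$ is kept unchanged. Mass at $\alpha=1$ is handled separately: it is allowed only in case~\ref{it: second sufficient (a)}, since the integral convention excludes it elsewhere.

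The representative points are chosen according to the case.
\begin{itemize}
\item[(1)] In case~\ref{it: first sufficient}, the weight $\alpha/(1-\alpha)$ is increasing on $[0,1)$, so I choose the left endpoint of each interval. Then
\[
\int\frac{\alpha}{1-\alpha}\,\d\tau_n\leq\int\frac{\alpha}{1-\alpha}\,\d\tau\leq\frac1t ,
\]
and item 1 of Lemma~\ref{lem: discrete measure} applies to each $\tau_n$.
\item[(2a)] In case~\ref{it: second sufficient (a)} no integral condition is needed, so any choice of points in $[0,1]$ works. Any atom at $1$ is kept as is; the lemma already covers $\alpha_N=1$ through its limiting remark.
\item[(2b)] In case~\ref{it: second sufficient (b)} the inequality must go the other way, because the lemma requires $\sum_\ell\frac{\alpha_\ell}{1-\alpha_\ell}\tau_n(\{\alpha_\ell\})\geq 1/t$ with $t<0$. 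Here $\frac{\alpha}{1-\alpha}\geq0$ on $[0,1)$ and is negative at $\alpha_*$. I choose the left endpoint again, which lowers the $[0,1)$ contribution, so $\int\frac{\alpha}{1-\alpha}\,\d\tau_n\leq\int\frac{\alpha}{1-\alpha}\,\d\tau$. Multiplying by $t<0$ gives the correct direction, $t\int\frac{\alpha}{1-\alpha}\d\tau_n\geq t\int\cdots\geq1$. I will double-check this sign bookkeeping against the lemma's formulation via $\bar\beta\le0$. For $\alpha_*=+\infty$ I read $\frac{\alpha}{1-\alpha}=-1$.
\end{itemize}

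Next I need pointwise convergence. For fixed $\vec x$ with full support, $\alpha\mapsto H_\alpha(\hat x)$ is continuous and bounded on $[0,\infty]$ (bounded by $\log d$). Since $\tau_n\to\tau$ weakly, this gives $\int H_\alpha(\hat x)\,\d\tau_n\to\int H_\alpha(\hat x)\,\d\tau$. Points of $\mathbb{R}^d_{\geq0}$ without full support are handled the same way, because $\alpha\mapsto H_\alpha$ stays continuous on $[0,\infty]$ when restricted to the support. The only subtlety is the left-continuity at $\alpha=0$, where $H_0$ counts the support; with partitions containing $0$ as a point, I expect this to be fine.

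Continuity of $\exp$ then gives pointwise convergence of the functions in~\eqref{eq:conditionfunct}, and concavity/convexity on $\mathbb{R}^d_{\geq0}\setminus\{0\}$ passes to the limit. At $\vec x=0$ the function is extended by $0$. This extension preserves concavity in case~1 because the function is nonnegative and positively homogeneous of degree one. For convexity in case~2, positive homogeneity together with convexity on rays gives the extension directly.

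I expect the main obstacle to be the weak convergence near $\alpha=1$ when $\tau$ has mass accumulating at $1^-$. The integral $\int\frac{\alpha}{1-\alpha}\d\tau$ can be finite while the discretizations must still satisfy the constraint, and left endpoints handle this. However, $H_\alpha$ is continuous at $1$, so convergence of the entropy integrals is not the issue. The real care is needed in case~(2a) when $\tau$ has an atom at $1$ together with a continuous part near $1$: the discretization must keep all points $\le1$. Choosing left endpoints on $[0,1)$ and leaving the atom at $1$ untouched achieves this.
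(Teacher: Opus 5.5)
Your proposal is correct and follows essentially the same route as the paper. Both replace $\tau$ on $[0,1)$ by left-endpoint discretizations $\tau_n$, so that, since $\alpha\mapsto\alpha/(1-\alpha)$ is increasing, the constraint $\sum_\ell\frac{\alpha_\ell}{1-\alpha_\ell}\tau_n(\{\alpha_\ell\})\le 1/t$ is inherited; both keep the atom at $1$ or $\alpha_*$, apply Lemma~\ref{lem: discrete measure}, and pass to the pointwise limit. Your partition does not depend on $\vec x$, which is cleaner than the paper's staircase built from level sets of $\alpha\mapsto H_\alpha(\hat x)$, because the lemma must be applied to a single fixed discrete measure at all three points of a convex combination.

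On your sign worry in case (2b): the $\geq 1/t$ in the lemma's statement is a typo. Its proof actually needs $\bar\beta\le 0$, i.e.\ $t\sum_\ell\frac{\alpha_\ell}{1-\alpha_\ell}\tau_n(\{\alpha_\ell\})\ge 1$, which is exactly what you derive. So the required condition is the same $\int\frac{\alpha}{1-\alpha}\,\d\tau_n\le 1/t$ as in case~1, and the same left-endpoint choice works in both cases; no reversal is needed.
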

Note that in the first case of  $t<0$, the measure $\tau$ may have support at $\alpha=1$.  In contrast, in the second case of \(t < 0\) and for \(t > 0\) the measure cannot have support at \(\alpha = 1\). Indeed, our convention implies that the integral diverges, and consequently, the bound in \eqref{eq:IntegralCond} would be violated.

\begin{proof}
We begin by considering the first case. The idea is to construct a discrete approximation that enables the application of the result from the previous section for each finite level of the approximation. We then take the limit as the approximation becomes increasingly fine, observing that convexity is preserved in the limit. 

To show that the approximation converges to the desired function, we apply the dominated or monotone convergence theorem, although, as the functions that we study are monotone, this convergence is rather a simple fact. For this, we rely on the monotonicity of both the Rényi entropies $\alpha \mapsto H_\alpha(\hat{x})$ and the function $\alpha \mapsto \frac{\alpha}{1 - \alpha}$. Our goal is to apply the monotone convergence theorem to approximate the integral using a sequence of discrete functions $f^{(N)}$
\begin{equation}
\int_{[0,1]} f(\alpha)\d\tau(\alpha) = \lim_{N \to \infty} \int_{[0,1]} f^{(N)}(\alpha) \d\tau(\alpha).
\end{equation}
To invoke this theorem, we must construct a sequence of functions $(f^{(N)})_{N \in \mathbb{N}}$ that satisfy the monotonicity assumption: each $f^{(N)}$ takes only finitely many values (i.e.,\ $f^{(N)}$ is a simple function), there is a $\tau$-integrable function $g$ such that $f^{(N)}(\alpha) \leq g(\alpha)$ for all $\alpha \in [0,1]$ and $N\in\mb N$, and converges pointwise, that is, $f^{(N)}(\alpha) \to f(\alpha) $ as $ N \to \infty$. 

We want to construct monotone (or at least dominated) pointwise approximations for both the integral $\int_{[0,1]} H_\alpha(\hat{x}) \,\d\tau(\alpha)$ and the constraint $\int_{[0,1]} \frac{\alpha}{1 - \alpha} \,\d\tau(\alpha)\leq 1/t$. For the latter one, because of the inequality involved, we want to use the monotone convergence theorem and approach to the integral from below. As we show below, the monotonicity of the approximations of the latter constraint is needed to ensure that the condition on the parameters for a discrete measure in Lemma~\ref{lem: discrete measure} is always satisfied, so that we can apply the result to each discrete approximation and eventually take the limit. 
Note that the assumption on the measure implies $\tau(\{1\}) = 0$, which is why we include $\alpha = 1$ in the integration interval.

The discrete approximation $f^{(N)}$ is done using a staircase construction.
For each $N \in \mathbb{N}$, let $\{b_n^{(N)}\}_{n=1}^N \subset \mathbb{R}_+$ be a strictly increasing sequence with $b_1^{(N)} = 0$.
The staircase approximation is then given by
\begin{equation}
f^{(N)}(\alpha) := \sum_{n=1}^{N-1} b_{n+1}^{(N)}  \chi_{f^{-1}\big ([b_n^{(N)}, b_{n+1}^{(N)})\big)}(\alpha),
\end{equation}
Note that we are converging to $f$ from the above; for the integral involving $\alpha\mapsto\alpha/(1-\alpha)$ we will approach from below. Here $\chi_A(\alpha)$ is the characteristic function of a measurable set $A \subseteq [0,1]$, defined by
\begin{equation}
    \chi_A(\alpha) = 
  \begin{cases}
  1 & \text{if } \alpha \in A, \\
  0 & \text{otherwise,}
  \end{cases}
\end{equation}
 and $f^{-1}[b_n, b_{n+1})$ is the preimage of the interval $[b_n, b_{n+1})$, that is,
 \begin{equation}
     f^{-1}\big([b_n, b_{n+1})\big) := \{\alpha \in [0,1] \mid f(\alpha) \in [b_n, b_{n+1}) \}.
 \end{equation}
Hence, by the dominated convergence theorem, we obtain the approximation
\begin{equation}
\int_{[0,1]} f(\alpha) \d\tau(\alpha) = \lim_{N \to \infty} \sum_{n=1}^{N-1} b_n^{(N)}  \tau\left(f^{-1}\left([b_{n+1}^{(N)}, b_{n+1}^{(N)})\right)\right).
\end{equation}
In our case, we set $f = \mathrm{ev}_{\hat{x}}$, defined by $\mathrm{ev}_{\hat{x}}(\alpha) := H_\alpha(\hat{x})$. In this case, $f\leq H_0(\hat{x})$ and, as $\tau$ is a probability measure, the constant upper bound is $\tau$-integrable, and we may use the dominated convergence theorem.
Hence, the staircase approximation yields a rigorous way to evaluate the integral $\int_{[0,1]} H_\alpha(\hat{x}) \,\d\mu(\alpha)$ using only simple functions.

We now use the result of the previous section for finite approximation to establish the result for the general case.
Possibly excluding a few initial indices $n = 1, \ldots, m$, we may assume that for each $n = 1, \ldots, N - 1$, there exist parameters $\alpha^{(N)}_n \in [0,1)$ such that $b^{(N)}_n = H_{\alpha^{(N)}_m}(\hat{x})$; note that, since $\alpha\mapsto H_\alpha(\hat{x})$ is non-increasing, the indexing of $b^{(N)}_n$ and $\alpha^{(N)}_m$ runs oppositely. In the following, we disregard these initial exceptions and relabel the sequence accordingly. We define the intervals
\begin{align}
    &I^{(N)}_{n}={\rm ev}_{\hat{p}}^{-1}\big(\big[b^{(N)}_n,b^{(N)}_{n+1}\big)\big)=\big[\alpha^{(N)}_n,\alpha^{(N)}_{n+1}\big) \quad \text{for} \;\; n=1,\ldots,N-2 \\
    & I^{(N)}_{N-1} = {\rm ev}_{\hat{p}}^{-1}\big([b^{(N)}_{N-1},+\infty\big)\big)=\big[\alpha^{(N)}_{N-1},1\big)
\end{align}
With these definitions, the above convergence arguments imply that $\Phi_N(\vec{x})\to\Phi(\vec{x})$ as $N\to\infty$ where
\begin{equation}
\Phi_N(\vec{x}):=\|\vec{x}\|_1\exp{\left(t\sum_{n=1}^{N-1}b_n^{(N)}\tau\big(I^{(N)}_n\big)\right)}.
\end{equation}
Denoting $\tau^{(N)}_n:=\tau\big(I^{(N)}_n\big)$, and $\beta^{(N)}_n:=\tau^{(N)}_nt\alpha^{(N)}_n/(1-\alpha^{(N)}_n)$, we have
\begin{align}
\Phi_N(\vec{x})&=\|\vec{x}\|_1\exp{\left(t\sum_{n=1}^{N-1}\tau^{(N)}_n b^{(N)}_n\right)}\\
&=\|\vec{x}\|_1\exp{\left(t\sum_{n=1}^{N-1}\tau^{(N)}_n H_{\alpha^{(N)}_n}(\hat{x})\right)}\\
&=\|\vec{x}\|_1\exp{\left(t\sum_{n=1}^{N-1}\frac{\tau^{(N)}_n}{1-\alpha^{(N)}_n}\log{\sum_{i=1}^d\hat{x}_i^{\alpha^{(N)}_n}}\right)}\\
&=\|\vec{x}\|_1\prod_{n=1}^{N-1}\left(\sum_{i=1}^d\hat{x}_i^{\alpha^{(N)}_n}\right)^{\beta^{(N)}_n/\alpha^{(N)}_n}\\
&=\|\vec{x}\|_1^{\beta^{(N)}_0}\prod_{n=1}^{N-1}\left(\sum_{i=1}^d x_i^{\alpha^{(N)}_n}\right)^{\beta^{(N)}_n/\alpha^{(N)}_n}\label{eq:PhiN}
\end{align}
where $\beta_0^{(N)}:=1-\beta^{(N)}_1-\cdots-\beta^{(N)}_N$ and in the last line, we used the relation between the normalized and unnormalized vectors, namely $\hat{x}_i = x_i / \|\vec{x}\|_1$.

To prove concavity, we need to show that for two vectors $\vec{w}$ and $\vec{z}$ such that
$\vec{x}=t\vec{w}+(1-t)\vec{z}$, we have
\begin{equation}\label{eq:Phiconc}
\Phi(\vec{x})\geq t\Phi(\vec{w})+(1-t)\Phi(\vec{z}).
\end{equation}
As done for $\Phi(\vec{x})$ above, we define similar approximations $\Phi_N(\vec{w})$ and $\Phi_N(\vec{z})$ for $\Phi(\vec{w})$ and $\Phi_N(\vec{z})$. Using the form~\eqref{eq:PhiN} and Lemma~\ref{lem: discrete measure} for discrete measures, we obtain that
\begin{equation}
\Phi_N(\vec{x})\geq t\Phi_N(\vec{w})+(1-t)\Phi_N(\vec{z})
\end{equation}
for 
\begin{equation}
\sum_{n=1}^{N-1}\frac{\alpha^{(N)}_n}{1-\alpha^{(N)}_n}\tau_{n}^{(N)} \leq \frac{1}{t} \,.
\end{equation}
We then take the limit on both sides of the above inequality to establish the desired statement. Moreover, the condition on the measure becomes
\begin{equation}
\int_{[0,1]}\frac{\alpha}{1-\alpha}\d\tau(\alpha) \leq \frac{1}{t} \,.
\end{equation}
Indeed, we may once again apply the monotone convergence theorem by observing that the function $f : [0,1) \to \mathbb{R}_+$, given by $f(\alpha) = \frac{\alpha}{1 - \alpha}$, is non-decreasing. Consider the approximation $f^{(N)} : [0,1) \to \mathbb{R}_+$ defined as
\begin{equation}
    f^{(N)} = \sum_{n=1}^{N-1} \frac{\alpha_n^{(N)}}{1 - \alpha_n^{(N)}} \chi_{I_n^{(N)}} 
\end{equation}
which satisfies $f^{(N)} \leq f$ for all $N \in \mathbb{N}$. Then,
\begin{equation}
    \sum_{n=1}^{N-1} \frac{\tau_n^{(N)} \alpha_n^{(N)}}{1 - \alpha_n^{(N)}} = \int_{[0,1)} f^{(N)} \,\d\tau \leq \int_{[0,1]} \frac{\alpha}{1 - \alpha} \,\d\tau(\alpha) \leq 1/t\,.
\end{equation}
The second case follows from a similar argument. 

The final case is likewise analogous, where we isolate the point mass at $\alpha_* > 1$ and apply the same approximation to the part of the measure supported on $[0,1)$. It is worthwhile to note that case~\ref{it: second sufficient (a)} can be seen as a limiting argument from~\ref{it: second sufficient (b)} when $\alpha_*\to 1^{+}$: We let $\alpha_*>1$ with $\tau(\{\alpha_k\})>0$ converge to 1. At some point, when $\alpha_*$ is close enough to 1, the integral condition in~\ref{it: second sufficient (b)} becomes trivial. We also get the case $\alpha_*=\infty$ in~\ref{it: second sufficient (b)} through a limit argument.
\end{proof}

\subsection{Monotonicity under doubly stochastic maps}
\label{sec: monotonicty doubly stochastic}
As mentioned at the beginning of this section, monotonicity under doubly stochastic maps acting on the first register \(X\) always holds, and its proof is straightforward.
Indeed, since the exponential function is monotone and the map
\begin{equation}
    P_X \mapsto \int_{[0,+\infty]} H_\alpha(P_X) \,\d\mu(\alpha)
\end{equation}
is monotone under doubly stochastic maps (as each $H_\alpha(P_X)$ individually is), it follows directly from the explicit form of $\Phi$ in~\eqref{eq:FinalForm} that $\Phi$ is monotone as well. 

Hence, under the above condition on the measure, Lemma~\ref{lem: extension to general channels} ensures that the monotone homomorphisms are monotone under conditionally mixing channels as required for a conditional entropy in Definition~\ref{def: conditional entropy}.

\subsection{Sufficient conditions for tropical homorphisms}
In this section, we establish sufficient conditions on the measure for the tropical homomorphisms. The key observation is that monotonicity under conditionally mixing channels is preserved under limits; that is, the limit of a monotone quantity remains monotone. This is because taking limits on both sides of an inequality does not alter its direction. Consequently, the convexity and concavity conditions established above, which are equivalent to monotonicity for temperate homomorphisms (see Appendix~\ref{app: convexity and monotonicity}), provide the basis for extending monotonicity properties to tropical homomorphisms.

In the following, we establish a monotonicity result for the conditional entropies that can be constructed from the tropical homomorphisms. We start with the case $\mb{TR}_+^{\rm op}$.
\begin{proposition}
\label{prop: sufficient tropical}
Let $\tau:\mc B\big([0,+\infty]\big)\to\mb R$ be a probability measure. Then, the function
\begin{align}\label{eq:TropicalForm}
     P_{XY} \longmapsto &\min \limits_{\substack{y \in \mathcal{Y} \\ P_Y( y) > 0}}  \int_{[0,+\infty]} H_\alpha(P_{X|Y=y}) \,\d\tau(\alpha)
\end{align}
is monotone under conditionally mixing channels acting on $P_{XY}$ if one of the following holds:
\begin{enumerate}[label=(\alph*),ref=2(\alph*),itemsep=0pt,topsep=5pt]
    \item  $\;\textup{supp}(\tau) \subseteq [0,1]$.
    \label{it: 1 for tropical}
    \item $\;\operatorname{supp}(\tau) \subseteq [0,1] \cup \{\alpha_*\}$, $\;\alpha_* \in (1, +\infty]$, $\;\displaystyle\int_{[0,+\infty]} \frac{\alpha}{1 - \alpha} \, \mathrm{d}\tau(\alpha) \leq 0$.
    \label{it: 2 for tropical}
\end{enumerate}
\end{proposition}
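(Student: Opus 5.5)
The plan is to obtain the tropical entropy as a pointwise limit of the temperate conditional entropies with $t\to-\infty$, and then transfer monotonicity through the limit. Monotone inequalities survive pointwise limits. For $t<0$ and a probability measure $\tau$, set
\begin{equation}
H_{t,\tau}(X|Y)_P=\frac{1}{t}\log\Phi_{t,\tau}(P_{XY}),
\end{equation}
with $\Phi_{t,\tau}$ as in \eqref{eq:TemperateForm}, and write $h_y:=\int H_\alpha(P_{X|Y=y})\,\d\tau(\alpha)$. Since $t<0$, we have
\begin{equation}
\frac1t\log\sum_{y:P_Y(y)>0}P_Y(y)e^{t h_y}\longrightarrow\min_{y:P_Y(y)>0}h_y
\end{equation}
as $t\to-\infty$. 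This is the standard log-sum-exp limit: the sum is squeezed between $P_Y(y_{\min})e^{t\min h}$ and $e^{t\min h}$, and the prefactors contribute $O(1/|t|)$. So \eqref{eq:TropicalForm} is the pointwise limit of $H_{t,\tau}$ along any sequence $t\to-\infty$.

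Next I need, for each admissible $\tau$ and all sufficiently negative $t$, that $H_{t,\tau}$ is monotone non-decreasing under conditionally mixing channels. In case (a), Proposition~\ref{prop: sufficient conditions}, item~\ref{it: second sufficient (a)}, gives convexity of $\vec x\mapsto\|\vec x\|_1\exp(t\int H_\alpha(\hat x)\d\tau)$ for every $t<0$, with no further constraint. By the convexity-monotonicity equivalence of Appendix~\ref{app: convexity and monotonicity} (Lemma~\ref{lem: extension to general channels}), together with monotonicity under doubly stochastic maps on $X$ (Section~\ref{sec: monotonicty doubly stochastic}), $\Phi_{t,\tau}$ is then non-increasing under conditionally mixing channels. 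Because $t<0$, $\frac1t\log$ reverses the order, so $H_{t,\tau}$ is non-decreasing.

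In case (b) the hypothesis is $\int\frac{\alpha}{1-\alpha}\d\tau\le 0$. For every $t<0$ this implies $\int\frac{\alpha}{1-\alpha}\d\tau\le 0\le$ anything nonnegative, but the condition needed in item~\ref{it: second sufficient (b)} is read, as in Lemma~\ref{lem: discrete measure} case~\ref{item: sufficient discrete convex 2}, as $\bar\beta=1-t\int\frac{\alpha}{1-\alpha}\d\tau\le0$, i.e.
\begin{equation}
\int\frac{\alpha}{1-\alpha}\d\tau\le \frac1t\ \ (<0).
\end{equation}
This is the main obstacle: the hypothesis $\le0$ only yields $\le 1/t$ in the limit $t\to-\infty$, since $1/t\uparrow0$. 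If the integral is strictly negative, say $-c$, it suffices to take $t\le -1/c$, and then all large negative $t$ work. For the boundary case where the integral equals $0$, I would perturb $\tau$. The plan is to move a small amount of mass $\varepsilon$ onto the atom at $\alpha_*>1$ (or slightly increase $\alpha_*$'s weight and renormalize). This makes the integral strictly negative and produces $\tau_\varepsilon\to\tau$ weakly, with $\int H_\alpha\,\d\tau_\varepsilon\to\int H_\alpha\,\d\tau$ because $\alpha\mapsto H_\alpha(\hat x)$ is bounded and continuous on $[0,\infty]$. Then each tropical functional for $\tau_\varepsilon$ is monotone by the previous argument. Letting $\varepsilon\to0$ preserves monotonicity, since the min over finitely many $y$ of converging quantities converges. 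The case $\alpha_*=\infty$ is handled the same way, with $\frac{\alpha}{1-\alpha}\to-1$ there.

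Finally, combine the two limits: for a conditionally mixing channel $P\mapsto Q$ we have $H_{t,\tau}(P)\le H_{t,\tau}(Q)$ for all large negative $t$. Passing to $t\to-\infty$ (and, in the boundary case, $\varepsilon\to0$) gives the claimed monotonicity of \eqref{eq:TropicalForm}. One detail to check is that the min ranges only over $y$ with positive marginal on each side; the log-sum-exp limit automatically discards zero-weight columns, so this is consistent.
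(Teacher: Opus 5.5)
Your proposal follows the paper's proof: you obtain the tropical functional as the $t\to-\infty$ limit of $H_{t,\tau}$ and import monotonicity from items 2(a) and 2(b) of Proposition~\ref{prop: sufficient conditions}. Your extra perturbation $\tau_\varepsilon=(1-\varepsilon)\tau+\varepsilon\delta_{\alpha_*}$ for the boundary case $\int\frac{\alpha}{1-\alpha}\,\mathrm{d}\tau(\alpha)=0$ is genuinely needed, because there the paper's claim that $\int\frac{\alpha}{1-\alpha}\,\mathrm{d}\tau(\alpha)\le 1/t$ holds for all sufficiently negative $t$ fails; your double limit (first $t\to-\infty$, then $\varepsilon\to 0$) correctly closes that gap.
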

 Note that the result below, according to Proposition~\ref{prop:convconc}, implies the quasi-convexity of the tropical homomorphisms in $\mb{TR}_+^{\rm op}$.
\begin{proof}
    The conditional entropies constructed from the tropical homomorphisms can be obtained by letting $t\to-\infty$. Thus, to obtain sufficient conditions, it is enough to take these limits and verify which of the conditions we derived in the previous sections on the measure continue to hold. 
    
    Explicitly, we have the limit
    \begin{align}
    \label{eq: limit tropical}
&\lim_{t \to -\infty} \frac{1}{t} 
\log\left( \sum_{y \in \mathcal{Y}} P_Y(y) 
\exp\left( t \int_{[0,+\infty]} H_\alpha\big(P_{X|Y = y}\big) \,\mathrm{d}\tau(\alpha) \right) \right) \\
&\hspace{15em}= \min \limits_{\substack{y \in \mathcal{Y} \\ P_Y( y) > 0}} \int_{[0,+\infty]} H_\alpha\big(P_{X|Y = y}\big) \,\mathrm{d}\tau(\alpha) \,.
\end{align}
Let us start with the first case, i.e., we want to show that monotonicity holds whenever the probability measure $\mu$ satisfies $\operatorname{supp}(\mu) \subseteq [0,1]$. This we get directly when we combine the limit in \eqref{eq: limit tropical} to case~\ref{it: second sufficient (a)} of Proposition \ref{prop: sufficient conditions}.

Let us now turn to the second case.
In this case, we assume that the conditions of item~\ref{it: 2 for tropical} of the claim hold for a probability measure $\mu$.  
It is straightforward to see that for sufficiently large $|t|$ as $t\rightarrow-\infty$, the following condition~\ref{it: second sufficient (b)} on convexity in Proposition~\ref{prop: sufficient conditions} holds:
\begin{equation}
\label{eq: condition mu for M}
t<0,\;\operatorname{supp}(\tau) \subseteq [0,1] \cup \{\alpha_*\},\;
\alpha_* \in (1, +\infty],\;
\int_{[0,+\infty]} \frac{\alpha}{1 - \alpha} \,\mathrm{d}\tau(\alpha) \leq \frac{1}{t} \,.
\end{equation}
Hence, we obtain the desired result.
\end{proof}
Next, we turn to $\mb{TR}_+$.
\begin{proposition}
\label{prop: sufficient tropical no}
The function $P_{XY} \longmapsto \max_{y \in \Y}   H_0(P_{X|Y=y})$
is monotone under conditionally mixing channels acting on $P_{XY}$. 
\end{proposition}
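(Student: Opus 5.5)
We need to show that $P_{XY}\mapsto \max_{y:P_Y(y)>0} H_0(P_{X|Y=y})$, i.e.\ the logarithm of the largest support size among the nonzero columns, does not decrease under a conditionally mixing channel $P\mapsto Q=\sum_i S^{(i)}PD^{(i)}$.

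We argue directly with supports, which avoids any limit. Throughout we work with nonnegative matrices, so no cancellations occur.

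First, write $s(v)$ for the support size of a nonnegative column $v$, and set $M(P):=\max_{y:P_Y(y)>0} s(P_{X,Y=y})$; then $H_0$ of a nonzero column is $\log s$. The key support facts are the following.
\begin{enumerate}
\item[(i)] For nonnegative vectors, $s(u+v)\ge \max\{s(u),s(v)\}$.
\item[(ii)] For $c>0$, $s(cv)=s(v)$.
\item[(iii)] For a doubly stochastic $S$ and a nonnegative $v$, $s(Sv)\ge s(v)$.
\end{enumerate}
Fact (iii) follows from Birkhoff's theorem: $Sv=\sum_k\lambda_k\pi_k v$ with $\lambda_k>0$, each $\pi_kv$ has the same support size as $v$, and (i) applies.

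Second, fix a column $y$ of $P$ with $P_Y(y)>0$ attaining $M(P)$. Because $\sum_i D^{(i)}$ is row-stochastic, row $y$ of $\sum_i D^{(i)}$ sums to $1$. Hence there exist $i$ and $y'$ with $D^{(i)}_{y,y'}>0$. The output column is
\begin{equation}
Q_{X,Y'=y'}=\sum_i\sum_{z} D^{(i)}_{z,y'}\,S^{(i)}P_{X,Y=z},
\end{equation}
and every term in this sum is nonnegative. By (i), $s(Q_{X,Y'=y'})\ge s\big(D^{(i)}_{y,y'}S^{(i)}P_{X,Y=y}\big)$. By (ii) and (iii), the right-hand side equals $s(S^{(i)}P_{X,Y=y})\ge s(P_{X,Y=y})=M(P)$.

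The column $Q_{X,Y'=y'}$ is nonzero, so $Q_{Y'}(y')>0$ and $y'$ is admissible in the maximum defining $M(Q)$. Therefore $M(Q)\ge M(P)$. Taking logarithms gives the claimed monotonicity. The normalization $P_{X|Y=y}=P_{X,Y=y}/P_Y(y)$ does not change supports, so the result is the same for the conditional distributions as for the unnormalized columns.

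Third, conditional majorization also allows relabeling and embedding. These only permute rows and columns or add zero rows and columns, so they leave $M$ unchanged, and the statement extends to conditional majorization.

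The main subtlety is producing an output column $y'$ that actually receives mass from the maximizing input column. Row-stochasticity of $\sum_i D^{(i)}$ supplies this, and nonnegativity guarantees the support cannot shrink through cancellation.

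As a consistency check, the result could alternatively be obtained as the $t\to+\infty$ limit of Proposition~\ref{prop: sufficient conditions}(1) with $\tau=\delta_0$, since $\frac{\alpha}{1-\alpha}=0$ at $\alpha=0$. The direct support argument is cleaner and is the one we use.
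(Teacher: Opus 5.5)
Your proof is correct, and it takes a genuinely different route from the paper's. The paper does not argue with supports. It obtains the statement as the $t\to+\infty$ limit of the bulk entropies $H_{t,\delta_0}$, using $\frac{1}{t}\log\sum_y P_Y(y)\exp\big(tH_0(P_{X|Y=y})\big)\to\max_{y:P_Y(y)>0}H_0(P_{X|Y=y})$. Monotonicity of each $H_{t,\delta_0}$ with $t>0$ comes from item~1 of Proposition~\ref{prop: sufficient conditions}; the integral condition holds trivially because $\alpha/(1-\alpha)=0$ at $\alpha=0$. That concavity is then passed through the concavity--monotonicity equivalence of Proposition~\ref{prop:convconc}, the extension Lemma~\ref{lem: extension to general channels}, and monotonicity under doubly stochastic maps. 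Finally the paper uses that pointwise limits of monotone functions stay monotone. This is exactly the route you mention as a consistency check.

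Your direct argument has several advantages. It is elementary and self-contained: it needs only Birkhoff's theorem and nonnegativity. It isolates the actual mechanism: row-stochasticity of $\sum_i D^{(i)}$ forces the maximizing column to send mass to some output column, and neither doubly stochastic maps nor nonnegative sums can shrink a support. It also sidesteps the degenerate endpoint $\alpha=0$ in the Minkowski/product-of-powers argument of Lemma~\ref{lem: discrete measure}. There, $(\sum_i x_i^\alpha)^{1/\alpha}$ and the exponent $\beta_\ell/\alpha_\ell$ are not directly meaningful, so an extra limit $\alpha\to0^+$ is implicitly needed.

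The paper's route buys something structural instead. It exhibits $H_{+\infty,0}$ as a limit point of the admissible bulk family, which is what is later used to drop $H_{+\infty,0}$ from the infimum in Theorem~\ref{thm: rate}. Together with the remark after the proof, it also explains why $\alpha=0$ is the only point measure that survives the limit $t\to+\infty$.
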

\begin{proof}
Also in this case, the proof follows by a limit argument. We have the limit
    \begin{align}
    \label{eq: limit tropical no}
&\lim_{t \to +\infty} \frac{1}{t} 
\log\left( \sum_{y \in \mathcal{Y}} P_Y(y) 
\exp\left( t  H_0\big(P_{X|Y = y}\big) \, \right) \right) = \max \limits_{\substack{y \in \mathcal{Y} \\ P_Y( y) > 0}}  H_0\big(P_{X|Y = y}\big)  \,.
\end{align}
The result then follows by combining the limit in \eqref{eq: limit tropical no} with case~\eqref{it: first sufficient} for $\alpha = 0$ in Proposition~\ref{prop: sufficient conditions}, together with the fact that the limit of a monotone quantity remains monotone.
\end{proof}

\begin{remark}
    We note that unless $\alpha=0$, the condition~\eqref{it: first sufficient} in Proposition~\ref{prop: sufficient conditions} 
cannot be used to obtain an alternative set of sufficient conditions for $\mathbb{T}\mathbb{R}_+$, since the limit $t\to+\infty$ eventually violates the integral constraint. Since such limits give rise to monotone homomorphisms in $\mathbb{T}\mathbb{R}_+$, this observation is consistent with the fact that only the monotone homomorphism with $\alpha = 0$ is admissible in $\mathbb{T}\mathbb{R}_+$ as established in Section~\ref{sec: all monotone homorphisms}.
\end{remark}

\subsection{Sufficient conditions for derivations}
We now turn to the derivations. 
\begin{proposition}
 \label{prop: sufficient derivations}
 The function
 \begin{align}
      P_{XY} \longmapsto &\sum_{y \in \Y} P_Y(y)  H_\alpha(P_{X|Y=y}) \,,
 \end{align}
 is monotone under conditionally mixing channels acting on $P_{XY}$ if $\alpha \in [0,1]$.
 \end{proposition}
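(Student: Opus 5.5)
The natural route, parallel to the tropical cases in Propositions~\ref{prop: sufficient tropical} and~\ref{prop: sufficient tropical no}, is a limit argument $t\to 0$ applied to the temperate conditional entropies. Take $\tau=\delta_\alpha$ with $\alpha\in[0,1)$ and consider, for $t<0$,
\begin{equation}
G_t(P_{XY}) := \frac{1}{t}\log \sum_{y\in\mathcal{Y}} P_Y(y)\exp\big(t\,H_\alpha(P_{X|Y=y})\big) = \frac{1}{t}\log \Phi_{t,\delta_\alpha}(P_{XY}) .
\end{equation}
By case~\ref{it: second sufficient (a)} of Proposition~\ref{prop: sufficient conditions}, for every $t<0$ the column function $\vec x\mapsto \|\vec x\|_1\exp(tH_\alpha(\hat x))$ is convex, since $\operatorname{supp}(\delta_\alpha)\subseteq[0,1]$ imposes no constraint on $t$. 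Together with the monotonicity under doubly stochastic maps from Section~\ref{sec: monotonicty doubly stochastic}, Lemma~\ref{lem: extension to general channels} then shows that $\Phi_{t,\delta_\alpha}$ is non-increasing under conditionally mixing channels. Because $t<0$, the map $u\mapsto \frac1t\log u$ is decreasing, so $G_t$ is non-decreasing under conditionally mixing channels, i.e.\ it is monotone in the sense of Definition~\ref{def: conditional entropy}.

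Next I will take the pointwise limit $t\to0^-$. For fixed $P_{XY}$ the quantities $H_\alpha(P_{X|Y=y})$ are finite, and a first-order expansion of $\log\mathbb{E}[e^{tZ}]$ gives
\begin{equation}
G_t(P_{XY}) \xrightarrow[t\to0^-]{} \sum_{y} P_Y(y) H_\alpha(P_{X|Y=y}) ,
\end{equation}
which is the usual fact $\lim_{t\to0}\mathbb{E}_t[Z]=\mathbb{E}[Z]$ from the introduction. Since an inequality $G_t(P)\le G_t(Q)$ for all $t<0$ persists in the limit, the target function is monotone for every $\alpha\in[0,1)$.

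For the endpoint $\alpha=1$ I will use a second limit $\alpha\to1^-$, exactly as in the endpoint discussion in the proof of Lemma~\ref{lem: discrete measure}: $H_\alpha(P_{X|Y=y})\to H(P_{X|Y=y})$ pointwise for each column, and monotonicity again passes to the limit. Alternatively, case~\ref{it: second sufficient (a)} already allows support at $\alpha=1$, so one may take $\tau=\delta_1$ directly.

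The main obstacle is making sure the monotonicity transfers correctly through Lemma~\ref{lem: extension to general channels}, in particular the sign bookkeeping for $\mathbb{R}_+^{\rm op}$ (convexity giving a decrease of $\Phi$ and hence an increase of $G_t$), and checking that no hidden constraint on $t$ appears near $0$. Case~\ref{it: second sufficient (a)} has none, which is precisely why the $t<0$ side is the one to use. Approaching from $t>0$ instead would only cover $\alpha$ with $\frac{\alpha}{1-\alpha}\le 1/t$, but that condition is also harmless as $t\to0^+$, so it gives a consistency check.

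As a fallback, a direct proof is available. The map $P_{XY}\mapsto\sum_y \|P_{X,Y=y}\|_1 H_\alpha(\widehat{P_{X,Y=y}})$ is a sum over columns of the perspective-type function $\vec x\mapsto \|\vec x\|_1 H_\alpha(\hat x)$, which is concave for $\alpha\in[0,1]$ because it is the derivative at $t=0$ of the concave/convex family above. One would then apply the appendix equivalence between concavity and monotonicity directly.
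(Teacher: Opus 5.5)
Your proposal is correct and follows essentially the same route as the paper. The paper also obtains the statement as the $t\to0$ limit of $H_{t,\delta_\alpha}$, whose monotonicity comes from case~\ref{it: second sufficient (a)} of Proposition~\ref{prop: sufficient conditions} via the convexity--monotonicity equivalence of the appendix, and it uses that monotonicity is preserved under pointwise limits. Your explicit sign bookkeeping for $t<0$ and your direct treatment of $\alpha=1$ through $\tau=\delta_1$ make explicit what the paper leaves implicit.
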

 According to Lemma~\ref{lem: conv derivations}, which shows that convexity and monotonicity are equivalent, the above monotonicity result also establishes the convexity of the function.
 \begin{proof}
 The derivations can be obtained by choosing a single point measure and letting the total weight of the measure tend to $0$. Thus, to obtain sufficient conditions, it is enough to take these limits and verify which of the conditions we derived in the previous sections on the measure continue to hold. 
    
Explicitly, we have the limit
    \begin{align}
     \label{eq: limit derivations}
 &\lim_{t \to 0} \frac{1}{t} 
 \log\left( \sum_{y \in \mathcal{Y}} P_Y(y) 
 \exp\left( t H_\alpha\big(P_{X|Y = y}\big)  \right) \right) = \sum_{y \in \mathcal{Y}} P_Y(y)  H_\alpha\big(P_{X|Y = y}\big)  \,.
 \end{align}
 We want to show that monotonicity holds whenever $\alpha \in [0,1]$.
 The monotonicity in this case can be recovered from the conditions on the temperate case in Proposition~\ref{prop: sufficient conditions}, item~\ref{it: second sufficient (a)} by considering the limit $t\to 0$. Explicitly, we need to choose a single point measure concentrated at $\alpha=1$.
We next apply the convexity conditions from Proposition~\ref{prop: sufficient conditions}, item~\ref{it: second sufficient (a)}, which are equivalent to monotonicity, to the measure $\tau$.
Combining these conditions with the limit in~\eqref{eq: limit derivations}, and using that monotonicity is preserved under limits, we obtain monotonicity whenever $\alpha \in [0,1]$.
\end{proof}

The results of this section thus single out the admissible homomorphisms and derivations which, up to logarithmic and constant factors, coincide with the admissible conditional entropies. These extremal conditional entropies will later appear in Theorem~\ref{th:barycenter conditional entropies}, which states the general form of conditional entropies.

\section{Necessary conditions for the parameters}
\label{sec: necessary conditions measures}
In this section, we show that the sufficient conditions on the parameters derived in the previous section that ensure monotonicity under conditionally mixing channels are also necessary, provided some additional assumptions hold. In particular, we assume that the measure vanishes as $\alpha \to 1$ faster than $1/(1-\alpha)$; this class of measures includes, as a special case, discrete measures with finite support. Although these regularity assumptions are required to make our analysis rigorous, we expect---as discussed in the later sections---that the sufficient conditions are in fact necessary in full generality.

Our approach consists in identifying an explicit perturbation direction and computing the second derivative of the function along this direction. By analyzing the sign of the second derivative, we show that outside the range specified in Proposition~\ref{prop:convconc}, the function fails to exhibit the expected convexity or concavity (which is equivalent to monotonicity under conditionally mixing channels). 
Finally, we also note that the problem is inherently challenging: obtaining a sharp characterization of the admissible range requires constructing counterexamples in increasingly high dimensions. Indeed, our numerical investigations indicate that no such violations occur in low-dimensional settings. While the proofs are rather intricate, the underlying idea is simple. We therefore provide a sketch of the proof strategy used to establish the necessary conditions.

\bigskip 
\textbf{Proof idea}. The main idea is to leverage the counterexamples for the convexity properties of the function
\begin{equation}
\label{eq: multivariate divergences function}
    \vec{v} \longmapsto v_1^{\beta_0} v_2^{\beta_1} \cdots v_d^{\beta_d} 
\end{equation}
with $\beta_0=1-\sum_{i=1}^d\beta_i$. Such counterexamples can be found in~\cite[Lemma 8]{mu21_renyi} (see also~\cite[Proposition 13]{farooq2024matrix}). In the previous section, we argued that it is sufficient to study the convexity properties of the function
\begin{equation}
    \vec{x} \longmapsto \|\vec{x}\|_1 \exp\left( t\int_{[0,+\infty]} H_\alpha(\hat{x}) \,\d\tau(\alpha) \right)\,.
\end{equation}
This function is connected to the one defined in equation~\ref{eq: multivariate divergences function} whenever the measure $\tau$ is discrete, under the identification
\begin{align}
    &\beta_i \rightarrow \tau(\{\alpha_i\})  t\frac{\alpha_i}{1-\alpha_i} \,, \qquad  v_i \rightarrow\left(\sum_j x_j^{\alpha_i}\right)^\frac{1}{\alpha_i} \,.
\end{align}
The key difference is that, while the variables $v_i$ can be chosen independently in the first case, the terms $\bigl(\sum_j x_j^{\alpha_i}\bigr)^{1/\alpha_i}$ are coupled: varying $\vec{x}$ simultaneously affects all of them. Nevertheless, one can control the variations of $\vec{x}$ so that they influence only a subset of these terms. This is achieved by constructing instances in very large dimensions. Concretely, consider appending a $d$-dimensional tail with weight $1/d$ to the vector $\vec{x}=(x_1,\ldots,x_k)$, i.e., consider the vector
\begin{equation}
   \vec{y} = (x_1,\ldots,x_k, 
   \underbrace{1/d, \ldots, 1/d}_{d\ \text{times}} ) \,.
\end{equation}
It is easy to verify that the additional tail affects only the terms with $\alpha_i<1$ (which diverge), while leaving those with $\alpha_i>1$ little affected. Indeed, we have that
\begin{equation}
    \left(\sum_j y_j^{\alpha_i}\right)^{\tfrac{1}{\alpha_i}} 
= \left(\sum_{j=1}^k x_j^{\alpha_i} + d^{\,1-\alpha_i}\right)^{\tfrac{1}{\alpha_i}}
\;\;\xrightarrow[d \to \infty]{}\;\;
\begin{cases}
+\infty, & \alpha_i < 1, \\[6pt]
\sum_{j=1}^k x_j^{\alpha_i}, & \alpha_i > 1.
\end{cases}
\end{equation}

Below, we establish a set of necessary conditions under the assumption that the measure satisfies the following property.
\begin{assumption}\label{assumption:finite integrals}
The positive/negative measure $\nu:\mc B\big([0,+\infty]\big)\to\mb R$ (typically $\nu=t\tau$) satisfies the following:
\begin{align}
\left|\int_{[0,1)}\nu(\alpha)\frac{\alpha}{1-\alpha}\right| <+\infty\,,\quad \left|\int_{(1,+\infty]}\nu(\alpha)\frac{\alpha}{1-\alpha}\right|<+\infty\,.
\end{align}
\end{assumption}
As mentioned earlier, this assumption requires the measure to vanish faster than $1/(1-\alpha)$ as $\alpha \to 1$, since otherwise the relevant integrals would diverge. In particular, this class of measures includes, as a special case, discrete measures with finite support, for which the corresponding integrals are finite.

\subsection{Necessary conditions; temperate case}
The temperate case corresponds to the bulk entropies $H_{t,\tau}$.
As discussed at the beginning of Section~\ref{sec: sufficient conditions measure}, it is sufficient to study the convexity properties of the function
\begin{equation}
    \vec{x} \longmapsto \|\vec{x}\|_1 \exp\left( t\int_{[0,+\infty]} H_\alpha(\hat{x}) \,\d\tau(\alpha) \right)
\end{equation}
with a probability measure $\tau$ and $t\in\R\setminus\{0\}$. For notational convenience, we will often combine the weight $t$ with the probability measure $\tau$ and treat their product as a positive/negative measure (i.e., a generally unnormalized finite measure that may take either positive or negative values depending on the sign of $t$). We denote this measure by $\mu$, defined as
\begin{align}
    \mu := t\,\tau \, .
\end{align}
We also define the coefficient
\begin{equation}
\label{eq: definition beta0}
     \beta_0 = 1 - \int_{[0,+\infty]\setminus\{1\}} \frac{\alpha}{1 - \alpha} \, \mathrm{d}\mu(\alpha) \,.
\end{equation}

Since the proof is rather involved, we divide the argument into several propositions.
We first start with the case where the measure $\mu$ is positive, or equivalently, the $\mathbb{R}_+$ case.
\begin{enumerate}
    \item In Proposition~\ref{prop: first prop measure}, we show that the measure cannot have support on values $\alpha > 1$, and the total weight on points with $\alpha < 1$ must be bounded.
\end{enumerate}
We then consider the case where the measure is negative, or equivalently, the $\mathbb{R}^{\rm opp}_+$ case.
\begin{enumerate}
    \item In Proposition~\ref{prop: prop no alpha=1}, we show that if the measure is supported at $\alpha=1$, then no support for $\alpha>1$ is allowed. This solves the case when the measure has support at $\alpha=1$.
    \item In Proposition~\ref{prop: second prop measure}, we show that if the measure vanishes at $\alpha=1$ and $\beta_0>0$, then it cannot have support on points with $\alpha>1$.
    \item In Proposition~\ref{prop: third prop measure}, we show that if the measure is zero at $\alpha=1$, and $\beta_0\leq 0$, then the measure cannot have support on more than one point $\alpha>1$.
\end{enumerate}
Together, these results show that the sufficient conditions derived in the previous section are tight under Assumption~\ref{assumption:finite integrals}. A more detailed discussion is given in Section~\ref{sec: towards complete characterization}.

\bigskip

We begin with the case where $t>0$, i.e.,\ $\mu>0$. In this case, we demonstrate that the measure cannot be supported on values $\alpha \geq 1$, and that the total weight on points with $\alpha \leq 1$ must be bounded.

\begin{proposition}
\label{prop: first prop measure}
Let $\mu:\mc B\big([0,+\infty]\big)\to\mb R$ be a positive measure that satisfies Assumption~\ref{assumption:finite integrals}. Then, the function
\begin{equation}
    \vec{x} \longmapsto \|\vec{x}\|_1 \exp\left( \int_{[0,+\infty]} H_\alpha(\hat{x}) \,\d\mu(\alpha) \right) \,,
\end{equation}
where $\hat{x} = \vec{x}/\|\vec{x}\|_1$,
    is concave only if $\;\textup{supp}(\mu) \subseteq [0,1]$, and $\;\displaystyle\int_{[0,1]} \frac{\alpha}{1 - \alpha} \, \mathrm{d}\mu(\alpha) \leq 1$.
\end{proposition}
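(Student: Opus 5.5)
We argue by contraposition and show that whenever one of the two conditions fails, the function fails to be concave. The test vectors come from the large-dimension "tail" construction described in the proof idea. Throughout we write, for $\alpha\neq1$,
\begin{equation}
\Phi(\vec{x})=\|\vec{x}\|_1^{\beta_0}\exp\left(\int_{[0,+\infty]\setminus\{1\}}\frac{\alpha}{1-\alpha}\log\Big(\sum_i x_i^\alpha\Big)^{1/\alpha}\d\mu(\alpha)\right)\exp\big(\mu(\{1\})H(\hat x)\big),
\end{equation}
with $\beta_0$ as in \eqref{eq: definition beta0}. This formula is legitimate under Assumption~\ref{assumption:finite integrals}, which keeps both pieces of the exponent finite. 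It is the continuous analogue of the product $v_0^{\beta_0}\prod_\ell v_\ell^{\beta_\ell}$ from Lemma~\ref{lem: discrete measure}.

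\textbf{Step 1: no support on $\alpha\geq1$ is allowed.} First take an atom at $\alpha=1$. Then $\int\frac{\alpha}{1-\alpha}\d\mu=+\infty$, so this case is already excluded by the convention; alternatively, one can check directly that $\|\vec x\|_1 e^{cH(\hat x)}$ with $c>0$ is not concave along suitable directions in large dimension. The main case is $\mu((1,+\infty])>0$. Here I use the vector $\vec y=(x_1,\dots,x_k,\frac{\epsilon}{d},\dots,\frac{\epsilon}{d})$ and let $d\to\infty$. For $\alpha<1$ the factors $v_\alpha=(\sum_j y_j^\alpha)^{1/\alpha}$ blow up like $d^{(1-\alpha)/\alpha}$. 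For $\alpha>1$ the factors converge to $(\sum_{j\le k}x_j^\alpha)^{1/\alpha}$. Next I perturb only the head $(x_1,\dots,x_k)$ along a direction $\vec h$ with $\sum h_i=0$ and compute the second derivative of $\log\Phi$ and of $\Phi$. As $d\to\infty$, the contributions from the $\alpha<1$ part are suppressed: their derivatives carry factors $x_i^{\alpha-1}/\sum_j y_j^\alpha\to0$. The $\|\cdot\|_1$ factor contributes nothing because $\sum h_i=0$. In the limit the second derivative therefore reduces to that of $\exp\big(\int_{(1,\infty]}\frac{\alpha}{1-\alpha}\log v_\alpha\,\d\mu\big)$, which is a product of Minkowski norms raised to \emph{negative} exponents $\beta_\alpha<0$. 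Composing a convex norm with a negative power gives a convex function. I then pick $\vec h$ such that the norm $v_{\alpha}$ has nonzero second variation (for example a non-uniform head, moving mass between two unequal entries), which makes the limiting second derivative strictly positive. By continuity this contradicts concavity for $d$ large.

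\textbf{Step 2: the integral bound.} Now assume $\operatorname{supp}\mu\subseteq[0,1)$. I mimic the counterexample of Lemma~\ref{lem: Matrix majorization}: concavity of $\prod v_\ell^{\beta_\ell}$ with $\beta_\ell\ge0$ requires $\beta_0\ge0$. Choose a direction that changes $\|\vec x\|_1$ while keeping the normalized vector $\hat x$ fixed. Along this ray $\Phi$ is linear, so no information is gained there. Instead I consider a direction that adds mass to a fresh coordinate, i.e.\ $\vec x(s)=(\vec x,s)$ near $s=0$. The Rényi factors then behave like $\sum x_i^\alpha+s^\alpha$, and these terms dominate for small $s$. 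A cleaner option is the tail trick again: $\vec y(s)=(\vec x, s/d,\dots,s/d)$ with $d\to\infty$. Here $(\sum_j y_j^\alpha)^{1/\alpha}\approx (d^{1-\alpha}s^\alpha)^{1/\alpha}$, so that
\begin{equation}
\Phi(\vec y(s))\approx (c+s)^{\beta_0}\, s^{1-\beta_0}\cdot C_d ,
\end{equation}
where the constant $C_d$ factors out. The function $s\mapsto (c+s)^{\beta_0}s^{1-\beta_0}$ fails to be concave when $\beta_0<0$, since the exponent $1-\beta_0$ exceeds $1$. This forces $\beta_0\geq0$, which is exactly $\int\frac{\alpha}{1-\alpha}\d\mu\le1$.

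\textbf{Main obstacle.} The delicate part is making these asymptotics rigorous for a general, non-discrete $\mu$. Specifically, the error terms that are uniform in $\alpha$ must be integrated against $\mu$ near $\alpha=1$, where $\frac{\alpha}{1-\alpha}$ blows up and where $d^{(1-\alpha)/\alpha}$ does not diverge uniformly. My first attempt would be to split $[0,1)$ into $[0,1-\delta]$ and $(1-\delta,1)$. On the first piece the limits are uniform. The second piece carries small $\mu$-weight of $\frac{\alpha}{1-\alpha}$ by Assumption~\ref{assumption:finite integrals}, and I would bound its contribution to the second derivative by a constant times that small integral, then let $\delta\to0$ after $d\to\infty$.
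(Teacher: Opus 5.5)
Your Step~2 is essentially sound. Along $s\mapsto\vec y(s)$ the sign of the second derivative is that of $L'^2+L''$, which tends to $\beta_0(\beta_0-1)\big(\frac{1}{c+s}-\frac{1}{s}\big)^2$ and is positive when $\beta_0<0$. That identity, rather than ``the exponent exceeds $1$'', is the actual reason, and it is the same mechanism as Lemma~\ref{lem: Matrix majorization} and the paper's $\beta_2(\beta_2-1)$ term. Your reduction in Step~1 to the $\alpha>1$ part is also fine.

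The gap is the last step of Step~1. A Minkowski norm raised to a negative power is not convex. For $\alpha\in(1,\infty)$, $\beta<0$, $u$ uniform and $0\neq h$ with $\sum_i h_i=0$, the map $\lambda\mapsto\|u+\lambda h\|_\alpha^{\beta}$ has a strict maximum at $\lambda=0$. Worse, write $L=\beta\log N$ with $N(\lambda)=\|x+\lambda h\|_\alpha$. Then $L'^2+L''=\frac{\beta}{N^2}\big[(\beta-1)(N')^2+NN''\big]$, so for $\beta<0$ the curvature $N''>0$ you propose to exploit pushes towards \emph{concavity}. A violation can only come from the slope term $(1-\beta)(N')^2$ beating $NN''$.

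Your concrete choice really fails. Take $\mu=c\,\delta_{3/2}$ with small $c>0$ and head $(s,1-s)$. The map $s\mapsto\log\|(s,1-s)\|_{3/2}$ is strictly convex on $(0,1)$. Hence the limiting $L'^2+L''=9c^2\big((\log N)'\big)^2-3c(\log N)''$ is negative, and no counterexample appears, although the proposition asserts non-concavity for this $\mu$.

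The missing idea is to make every $\alpha>1$ term asymptotically \emph{affine} in the perturbation, so that $(\log N_\alpha)''=-\big((\log N_\alpha)'\big)^2$ and only the favourable sign survives. The paper does this with two flat blocks: $d$ entries $p/d$ and $d^2$ entries $(1-p)/d^2$. It perturbs along the traceless direction moving mass from the first block to the second. Up to $\log d$ constants, every $H_\alpha$ with $\alpha>1$ becomes $\frac{\alpha}{1-\alpha}\log(p-\lambda)$, and every $H_\alpha$ with $\alpha<1$ becomes $\frac{\alpha}{1-\alpha}\log(1-p+\lambda)$. At $\lambda=0$ the sign-determining quantity tends to
\[
\Big(\mu(\{1\})\log d-\frac{\beta_1}{p}+\frac{\beta_2}{1-p}\Big)^2-\frac{\beta_1}{p^2}-\frac{\beta_2}{(1-p)^2}-\mu(\{1\})\Big(\frac{1}{p}+\frac{1}{1-p}\Big),
\]
where $\beta_1\le 0$ and $\beta_2\ge0$ are the $\mu$-integrals of $\frac{\alpha}{1-\alpha}$ over $(1,+\infty]$ (value $-1$ at $\alpha=+\infty$) and over $[0,1)$. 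When $\mu(\{1\})=0$, letting $p\to0$ leaves $\beta_1(\beta_1-1)/p^2>0$ unless $\beta_1=0$. In that case $\beta_2(\beta_2-1)/(1-p)^2$ forces $\beta_2\le 1$, so one configuration yields both conditions. Within your tail framework, a head consisting of a single entry $a\ll\epsilon$ whose mass is varied gives the same affineness.

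Finally, your treatment of an atom at $\alpha=1$ is circular. Assumption~\ref{assumption:finite integrals} allows $\mu(\{1\})>0$. The convention that the integral is then $+\infty$ merely encodes that the conclusion is violated; what must be proved is that concavity fails. Non-concavity of the factor $e^{cH(\hat x)}$ alone does not transfer to the full product. In the construction above, this case is handled by the $\mu(\{1\})\log d$ term in the first derivative, whose square dominates everything else as $d\to\infty$.
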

\begin{proof}
The idea of the proof is to construct a point and a perturbation direction along which the second derivative of the function becomes strictly positive whenever the condition on the measure from the above proposition is violated. This contradicts concavity, since for a concave function, the second derivative must be non-positive in all directions. To achieve this, we explicitly construct a counterexample in increasingly high dimensions and show that such violations can indeed be found in the high-dimensional regime. Specifically, we first show that the measure cannot have support for $\alpha \geq 1$. Then, we establish that the measure’s weight on $\alpha < 1$ must be bounded. Both results are proved using the same perturbation. The assumption ensures that the resulting integrals in the analysis are all finite.

First, we introduce the vector $\vec{x} = \hat{x}\|\vec{x}\|_1$ and split the integral over the range $[0, +\infty]$ into several parts
\begin{align}
    &f(\vec{x})= \left( \sum_i x_i \right)^{\beta_0} \exp\Bigg( 
    \int_{[0,1)} \d\mu(\alpha)  H_\alpha(\vec{x}) \, 
    + \int_{(1, +\infty)} \d\mu(\alpha)  H_\alpha(\vec{x}) \, \notag\\
    &\quad\; \qquad \qquad \qquad\qquad\qquad\hspace{5em}
    + \mu(1) \sum_i H_1(\hat{x}) + \mu(+\infty) H_\infty(\hat{x})
    \Bigg) \,,
\end{align}
where $\beta_0 = 1- \int_{[0,1)}\d \mu(\alpha) \frac{\alpha}{1-\alpha} - \int_{(1,+\infty)}\d \mu(\alpha) \frac{\alpha}{1-\alpha}$.  Note that these integrals are well-defined because of Assumption~\ref{assumption:finite integrals}. Here, 
\begin{align}
    H_\alpha(\vec{x}) = \frac{1}{1-\alpha}\log{\sum_ix_i^\alpha} \,, \quad  H_1(\hat{x}) =-\sum_i\frac{x_i}{\sum_j x_j} \log \left( \frac{x_i}{\sum_j x_j}  \right) \,\quad  H_\infty(\hat{x}) =-\log \max_i \frac{x_i}{\sum_j x_j} \,.
\end{align}
To lighten our notations, we denoted $\mu(\alpha):=\mu(\{\alpha\})$ for $\alpha\in[0,+\infty]$.
For constant $p \in (0,1)$, we define the point $\vec{x}$ and the direction $\vec{v}$
\begin{align}
\vec{x}=\bigg[\underbrace{\frac{p}{d},\ldots,\frac{p}{d}}_{d\,\textrm{times}},\underbrace{\frac{1-p}{d^2},\ldots,\frac{1-p}{d^2}}_{d^2\,\textrm{times}}\bigg] \,,
\quad 
\vec{v}=c\bigg[\underbrace{-\frac{1}{d},\ldots,-\frac{1}{d}}_{d\,\textrm{times}},\underbrace{\frac{1}{d^2},\ldots,\frac{1}{d^2}}_{d^2\,\textrm{times}}\bigg]
\end{align}

Here, the constant $c>0$ ensures that $p - c >0$ as we send $p \rightarrow 0$. Indeed, we can rewrite \(\vec{x} + \lambda \vec{v} = (1 - \lambda) \vec{x} + \lambda \vec{v}'\), where \(\vec{v}' = \vec{x} + \vec{v}\). To ensure that this is a valid convex combination in the semiring, we require that \(\vec{v}' \geq 0\), i.e., \(\vec{v}'\) must be an admissible element of the semiring. Nevertheless, the factor 
$c$ only contributes an overall constant and does not affect the second derivative. Moreover, since we consider the limit of an infinitesimal perturbation, this constant is not needed and will therefore be omitted in the analysis.

By substituting the chosen point and direction, we have that
\begin{align}
    f(\vec{x}+\lambda\vec{v}) = \exp\Bigg(&\int_{[0,1)}\d \mu(\alpha)\frac{\alpha}{1-\alpha}\log{\left(d^{1-\alpha}(p-\lambda)^\alpha+d^{2(1-\alpha)}(1-p+\lambda)^\alpha\right)^\frac{1}{\alpha}} \notag \\
&+\int_{(1,+\infty)}\frac{\alpha}{1-\alpha}\log{\left(d^{1-\alpha}(p-\lambda)^\alpha+d^{2(1-\alpha)}(1-p+\lambda)^\alpha\right)^\frac{1}{\alpha}} \notag \\
    &\qquad\quad\qquad \qquad \;\; \qquad \qquad +\mu(1) h_3(\lambda) - \mu( +\infty) \log g_3(\lambda) \Bigg) \,.
\end{align}
Here, we defined the functions
\begin{align}
    &h_3(\lambda) = \left(-(p-\lambda) \log{\frac{p-\lambda}{d}}-(1-p+\lambda)\log{\frac{1-p+\lambda}{d^2}} \right) \,, \\
    & g_3(\lambda) = \max \left\{\frac{p-\lambda}{d},\frac{1-p+\lambda}{d^2}\right\} = \frac{p-\lambda}{d} \,,
\end{align}
where the second holds for large enough $d$.
The derivatives of the latter functions are
\begin{align}
    &h_3'(\lambda) =  \log{\frac{(p-\lambda)d}{1-p+\lambda}} \,, \quad h_3''(\lambda) = -\left(\frac{1}{p-\lambda}+\frac{1}{1-p+\lambda}\right)  \\
    &g_3'(\lambda) = -\frac{1}{d}\,, \quad g_3''(\lambda) =0
\end{align}
Moreover, we have the asymptotic behavior
\begin{align}
    h_3(0) \rightarrow (2-p) \log{d}\,, \; h_3'(0) \rightarrow  \;\log{d} \,, \; h_3''(0) \rightarrow -\left(\frac{1}{p}+\frac{1}{1-p}\right)  \,.
\end{align}
Furthermore, if we define the functions
\begin{align}
&g_1(\lambda)=\left(\varepsilon^{1-\alpha}(p-\lambda)^\alpha+(1-p+\lambda)^\alpha\right)^\frac{1}{\alpha}\\
    & g_2(\lambda)=\left((p-\lambda)^\alpha+\varepsilon^{\alpha-1}(1-p+\lambda)^\alpha\right)^\frac{1}{\alpha}
\end{align}
and we set $\varepsilon =1/d$. Note that, up to constant factors that do not affect the sign of the derivative, the function becomes
\begin{align}
    f(\vec{x}+\lambda\vec{v}) = \exp\Bigg(&\int_{[0,1)}\d \mu(\alpha)\frac{\alpha}{1-\alpha}\log{g_1(\lambda)} +\int_{(1,+\infty)}\frac{\alpha}{1-\alpha}\log{g_2(\lambda)} \notag \\
    &\quad \qquad \qquad \qquad \qquad +\mu( 1) h_3(\lambda) - \mu( +\infty) \log g_3(\lambda) \Bigg) \,.
\end{align}
Next, we compute the second derivative with respect to $\lambda$ and show that if the condition on the measure in the above proposition is not satisfied, the second derivative is positive. This implies the existence of a convex direction, contradicting concavity. We use the formulas for the second derivatives of the exponential and the logarithm of a function
\begin{align}
    &\frac{\d^2}{\d\lambda^2} \exp(h(\lambda)) = \exp(h(\lambda)) \left((h'(\lambda))^2+h''(\lambda)\right) \,, \\
&\frac{\d^2}{\d\lambda^2} \log(g(\lambda)) = \frac{g''(\lambda)}{g(\lambda)} - \left( \frac{g'(\lambda)}{g(\lambda)} \right)^2
\,.
\end{align}

A straightforward computation shows that the second derivative is
\begin{align}
    &\frac{\d^2}{\d \lambda^2}f(\vec{x}+\lambda\vec{v}) \\
    & = \exp\Bigg(\int_{[0,1)}\d \mu(\alpha)\frac{\alpha}{1-\alpha}\log{g_1(\lambda)}+\int_{(1,+\infty)}\d \mu(\alpha)\frac{\alpha}{1-\alpha}\log{g_2(\lambda)} \notag\\
    &\hspace{5em}-\mu( +\infty) \log g_3(\lambda) + \mu(1)h_3(\lambda) \Bigg) \notag \\
    & \quad  \times \Bigg[\left(\int_{[0,1)}\d \mu(\alpha)\frac{\alpha}{1-\alpha}\frac{g'_1(\lambda)}{g_1(\lambda)}+\int_{(1,+\infty)}\d \mu(\alpha)\frac{\alpha}{1-\alpha}\frac{g'_2(\lambda)}{g_2(\lambda)} -\mu( +\infty)\frac{g_3'(\lambda)}{g_3(\lambda)} + \mu(1)h_3'(\lambda)\right)^2 \notag\\
    &\;\;\qquad +\int_{[0,1)}\d \mu(\alpha)\frac{\alpha}{1-\alpha}\left(-\frac{g'_1(\lambda)^2}{g_1(\lambda)^2}+\frac{g''_1 (\lambda)}{g_1(\lambda)}\right)+ \int_{( 1,+\infty)}\d \mu(\alpha)\frac{\alpha}{1-\alpha}\left(-\frac{g'_2(\lambda)^2}{g_2(\lambda)^2}+\frac{g''_2 (\lambda)}{g_2(\lambda)}\right)\notag \\
    &  \hspace{19.2 em}- \mu(+\infty)\left(-\frac{g'_3(\lambda)^2}{g_3(\lambda)^2}+\frac{g''_3 (\lambda)}{g_3(\lambda)}\right) + \mu(1)h_3''(\lambda) \Bigg]\,.
\end{align}
Next, we evaluate the second derivative in $\lambda=0$ and consider the limit $\varepsilon \rightarrow 0$. We define the constants 
\begin{equation}
\beta_2=\int_{[0.1)} \d \mu(\alpha) \frac{\alpha}{1-\alpha} \,, \quad \beta_1=\int_{(1,+\infty)} \d\mu(\alpha) \frac{\alpha}{1-\alpha}  - \mu(+\infty)
\end{equation}
We obtain up to positive constants the asymptotic (small $\varepsilon$) behavior
\begin{align}
    &\left.\frac{\d^2 }{\d\lambda^2}f(\vec{x}+\lambda\vec{v})\right\vert_{\lambda=0}  \rightarrow \left[\left(\mu(1)\log{d}-\frac{\beta_1}{p}+\frac{\beta_2}{1-p}\right)^2-\frac{\beta_1}{p^2}-\frac{\beta_2}{(1-p)^2}- \mu(1)\left(\frac{1}{p}+\frac{1}{1-p}\right)\right] \,.
\end{align}
Note that we also invoked uniform convergence on the respective intervals, together with Assumption~\ref{assumption:finite integrals}, which is readily verified from the simple form of the functions inside the integrals.
We distinguish several cases. First, consider the case where $\mu(1) > 0$. In this setting, taking the limit $d \to +\infty$ yields an infinite and hence positive value, which provides the desired counterexample. Next, we consider the case $\mu(1) = 0$. To show that no points $\alpha > 1$ are allowed in this case, we expand the expression above to obtain
\begin{equation}
    \frac{\beta_1(\beta_1-1)}{p}+\frac{\beta_2(\beta_2-1)}{1-p}-2\frac{\beta_1\beta_2}{p(1-p)} \,.
\end{equation}
We then take the limit $p \rightarrow 0$  and obtain
\begin{equation}
    \left.\frac{\d^2 }{\d\lambda^2}f(\vec{x}+\lambda\vec{v})\right\vert_{\lambda=0} \rightarrow \frac{\beta_1(\beta_1-1)}{p} \,.
\end{equation}
Since $\beta_1 \leq 0$, the presence of any points $\alpha > 1$ with nonzero measure yields a direction along which the second derivative is positive, thereby providing the convexity needed for the counterexample. Finally, in the case $\beta_1 = 0$, we obtain
\begin{equation}
    \left.\frac{\d^2 }{\d\lambda^2}f(\vec{x}+\lambda\vec{v})\right\vert_{\lambda=0} \rightarrow \frac{\beta_2(\beta_2-1)}{(1-p)^2} 
\end{equation}
that shows that $\beta_2 \leq 1$, which is precisely the constraint on the measure in the proposition.
\end{proof}

Next, we consider the case where the measure is negative. 
We first prove that if $\mu(1) \neq 0$, then no support on the range $\alpha>1$ is allowed.

\begin{proposition}
\label{prop: prop no alpha=1}
Let $\mu:\mc B\big([0,+\infty]\big)\to\mb R$ be a negative measure such that $\mu(1) \neq 0$ and satisfies Assumption~\ref{assumption:finite integrals}.
Then, the function
\begin{equation}
    \vec{x} \longmapsto \|\vec{x}\|_1 \exp\left( \int_{[0,+\infty]} H_\alpha(\hat{x}) \,\d\mu(\alpha) \right)
\end{equation}
    is convex only if $\;\textup{supp}(\mu) \subseteq [0,1]$.
\end{proposition}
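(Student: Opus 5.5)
We reuse the perturbation from the proof of Proposition~\ref{prop: first prop measure}: the point $\vec{x}$ with $d$ entries $p/d$ and $d^2$ entries $(1-p)/d^2$, and the direction $\vec{v}$ moving mass from the first block to the second. We then show that if $\mu(1)\neq 0$ and $\mu$ charges $(1,+\infty]$, the second derivative of $\lambda\mapsto f(\vec{x}+\lambda\vec{v})$ at $\lambda=0$ becomes strictly negative for suitable $d$ and $p$. That contradicts convexity. Since $\mu$ is negative, we write $\mu=-|\mu|$ and keep track of signs throughout. Assumption~\ref{assumption:finite integrals} makes $\beta_1=\int_{(1,+\infty)}\frac{\alpha}{1-\alpha}\d\mu-\mu(+\infty)$ and $\beta_2=\int_{[0,1)}\frac{\alpha}{1-\alpha}\d\mu$ finite, so the expansion from the previous proof applies verbatim. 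With $\varepsilon=1/d\to 0$ and up to the positive exponential prefactor, the bracket is
\begin{equation}
\Big(\mu(1)\log d-\tfrac{\beta_1}{p}+\tfrac{\beta_2}{1-p}\Big)^2-\tfrac{\beta_1}{p^2}-\tfrac{\beta_2}{(1-p)^2}-\mu(1)\Big(\tfrac1p+\tfrac1{1-p}\Big).
\end{equation}

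The key difficulty is that the $\log d$ term now has the wrong sign for a cheap contradiction. Because $\mu(1)<0$, the square diverges like $\mu(1)^2\log^2 d>0$, which is compatible with convexity. So the point and direction must be chosen so that the $h_3'$ contribution cancels. The plan is to let $p$ depend on $d$, or equivalently to tune the block sizes. Note that $\beta_1\geq 0$ here: for $\alpha>1$ we have $\frac{\alpha}{1-\alpha}<0$ and $\d\mu<0$, and $-\mu(+\infty)\geq 0$. So $-\beta_1/p$ is negative and can cancel $\mu(1)\log d<0$? No: both are negative, so they cannot cancel. We therefore flip the direction, moving mass from the second block to the first, replacing $\vec v$ by $-\vec v$. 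The square is invariant under this flip, so flipping alone does not help.

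The real fix is to move the two-scale tail structure so that the Shannon term has small derivative. We take a three-block point: a heavy block of weight $p$ spread over $d$ entries, a tail of weight $1-p-q$ spread over $d^2$ entries that is left unperturbed, and a block of weight $q$ spread over $d$ entries as well. The direction transfers mass between the heavy block and the $q$-block. Since both blocks have the same per-entry scale $1/d$, $h_3'$ stays bounded ($\log(p/q)$-type terms) with no $\log d$ divergence. Meanwhile the $d^2$-tail still makes $g_1$ insensitive at leading order for $\alpha<1$, and the $\alpha>1$ terms still see only the heavy block.

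Sending $q\to 0$ after $d\to\infty$, the Shannon term contributes $\mu(1)h_3''\sim-\mu(1)/q>0$ to the bracket. The $\alpha>1$ part contributes, as before, a term of order $\beta_1(\beta_1-1)/p$ from the heavy-block entries, and the interaction term is bounded. I expect the main obstacle to be getting the relative scaling so that a strictly negative contribution survives. One option is to transfer mass between two sub-blocks of the heavy block, where only the $\alpha>1$ and $\alpha=1$ terms react. Then the $\alpha>1$ curvature, driven by $g_3$ with the $\max$, has $H_\infty$ giving a kink, and $H_\alpha$ for $\alpha>1$ is Schur-concave with strictly negative second derivative. This must dominate the $\mu(1)$ curvature, and we achieve that by choosing the sub-block sizes so that the Shannon curvature $|\mu(1)|(1/a+1/b)$ is beaten. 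If direct domination fails, the fallback is the limit argument from Proposition~\ref{prop: sufficient conditions}: approximate the $\alpha=1$ atom by atoms at $\alpha'\to 1^-$ with weight making $\beta_0\to-\infty$. We then use that the resulting function has a point mass at $1$ giving $\frac{\alpha'}{1-\alpha'}\mu(1)\to+\infty$ in $1-\beta_0$, and invoke the analysis of Proposition~\ref{prop: third prop measure}/\ref{prop: second prop measure} type to conclude that support on $(1,+\infty]$ is forbidden.
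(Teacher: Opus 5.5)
Your proposal does not reach a proof. Your diagnosis is right: the perturbation of Proposition~\ref{prop: first prop measure} gives $f''\to+\infty$ when $\mu(1)<0$. But you never fix a point and direction and check the sign, and each route you sketch fails for a structural reason.

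Write $\psi=\log f$. Since $\mu\le 0$ and every $H_\alpha$ is Schur-concave, every piece of $\int H_\alpha(\hat x)\,\mathrm{d}\mu(\alpha)$ is Schur-convex. So along any direction that moves $\hat x$ monotonically in the majorization order, all pieces have first derivatives of the same sign, and nothing can cancel.

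\begin{itemize}
\item \emph{Three-block direction.} This is exactly of that type. With the tail frozen, in the limit the exponent depends only on $(p+\lambda,q-\lambda)$. The $\alpha>1$ part and the Shannon part $|\mu(1)|\big[(p+\lambda)\log(p+\lambda)+(q-\lambda)\log(q-\lambda)\big]$ push $\psi'$ the same way, and the Shannon part adds $|\mu(1)|(1/p+1/q)>0$ to $\psi''$. Take $\mu=-a\delta_1-b\delta_\infty$ with $b\ge 1$, which meets all hypotheses, and $p>q$. Then $f''/f=(\psi')^2+\psi''\ge b(b-1)/p^2+a(1/p+1/q)>0$, so there is no counterexample.
\item \emph{Transfer inside the heavy block.} This is worse. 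At a symmetric point $\psi'=0$, and every term $\mu(\{\alpha\})H_\alpha$ has nonnegative second derivative, because the negative curvature of $H_\alpha$ flips sign when multiplied by $\mu<0$. That direction is convex.
\item \emph{Fallback via $\alpha'\to1^-$.} This is not an argument. Convexity survives pointwise limits, but non-convexity does not. So non-convexity of approximants with atoms at $\alpha'\to 1^-$ says nothing about the limit. (Those approximants in fact have $\beta_0\to+\infty$, not $-\infty$.)
\end{itemize}

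The missing idea is the opposite of your goal. The Shannon first derivative must not be suppressed; it must be used to cancel the first derivative of the $\alpha>1$ part. That requires a direction along which $H_1$ decreases while the $H_\alpha$, $\alpha>1$, increase, or vice versa.

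The paper's proof aims at exactly this cancellation. Its tail of total weight $-\mu(1)\log d$ is chosen so that the $\log d$-size Shannon derivative becomes $O(1)$ and offsets the $+1$ from the part of the integral over $(1,+\infty]$, leaving $-1/\beta_1$. If you follow it, be aware of a problem. Its $h_3$ keeps the normalization $\beta_1-\mu(1)\log d$ fixed, whereas $\|\vec x+\lambda\vec v\|_1=\beta_1+\lambda-\mu(1)\log d$. That $\vec v$ moves $\hat x$ up in majorization order, so the correct Shannon contribution to $\psi'(0)$ tends to $+1$, not $-1$. The bracket then tends to $4-1/\beta_1$, not $-1/\beta_1$.

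A construction that does realize the cancellation:
\begin{itemize}
\item \emph{Point and direction.} Take a normalized point with one heavy entry $s$, a block of $M$ entries $w/M$, and a tail of $N$ entries $\epsilon/N$. Take the traceless direction with $-1$ on the heavy entry, $\tfrac{1+\delta}{M}$ on each block entry, and $-\tfrac{\delta}{N}$ on each tail entry.
\item \emph{First derivative.} Here $\mu(1)H_1'=-|\mu(1)|\big[\log s+(1+\delta)\log\tfrac{M}{w}-\delta\log\tfrac{N}{\epsilon}\big]$. You can tune this through $N$ to cancel $-\beta_1/s-\beta_2\delta/\epsilon$, where $\beta_1=\int_{(1,+\infty]}\frac{\alpha}{1-\alpha}\,\mathrm{d}\mu>0$ and $\beta_2=\int_{[0,1)}\frac{\alpha}{1-\alpha}\,\mathrm{d}\mu\le 0$.
\item \emph{Second derivative.} The Shannon curvature only adds $|\mu(1)|\big(\tfrac{1}{s}+\tfrac{(1+\delta)^2}{w}+\tfrac{\delta^2}{\epsilon}\big)$.
\item \emph{Limit.} Let $M\to\infty$. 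This forces $N/M\to\infty$, so $s$ dominates the $\alpha>1$ terms and the tail dominates the $\alpha<1$ terms. Assumption~\ref{assumption:finite integrals} controls $\alpha$ near $1$.
\end{itemize}

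You get $\psi'(0)\to 0$ and $\psi''(0)\to-\beta_1/s^2+O(1/s)$. This is negative for small $s$, hence $f''<0$.
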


\begin{proof}
    \bigskip
We need to show that if $\mu(1) \neq 0$, then no support on the range $\alpha>1$ is allowed. Let us suppose that there is some support on $\alpha=1$ and $\alpha>1$ define
\begin{equation}
    \beta_0 = 1- \int_{[0,+\infty]\setminus\{1\}} \frac{\alpha}{1-\alpha}\d\mu(\alpha) \,,  \quad \beta_1 = \int_{(1,+\infty]} \frac{\alpha}{1-\alpha}\d\mu(\alpha) > 0 \,.
\end{equation}
Here, at the limit point $\alpha \to +\infty$, the function $\alpha/(1-\alpha)$ is taken to be equal to $1$, which is consistent with the definition of the Rényi entropy for $\alpha=+\infty$.
Note that the latter integrals are well-defined because of Assumption~\ref{assumption:finite integrals}.
We define the point $\vec{x}$ and the direction $\vec{v}$
\begin{align}
\vec{x}=\bigg[\underbrace{\frac{\beta_1}{d},\ldots,\frac{\beta_1}{d}}_{d\,\textrm{times}},\underbrace{\frac{-\mu(1)\log{d}}{d^2},\ldots,\frac{-\mu(1)\log{d}}{d^2}}_{d^2\,\textrm{times}}\bigg] \,,
\quad 
\vec{v}=\bigg[\underbrace{\frac{1}{d},\ldots,\frac{1}{d}}_{d\,\textrm{times}},\underbrace{0,\ldots,0}_{d^2\,\textrm{times}}\bigg] \,.
\end{align}
We recall that $\mu(1)<0$. By substituting the chosen point and direction, we have that
\begin{align}
&f(\vec{x}+\lambda\vec{v}) =\notag\\
&\qquad \exp\Bigg(\beta_0\log{(\beta_1+\lambda-\mu(1)\log{d})}\notag\\
&\qquad\qquad\; +\int_{[0,1)}\d \mu(\alpha)\frac{\alpha}{1-\alpha}\log{\left(d^{1-\alpha}(\beta_1+\lambda)^\alpha+d^{2(1-\alpha)}(-\mu(1)\log{d})^\alpha\right)^\frac{1}{\alpha}} \notag \\
&\quad\qquad\quad\;+\int_{(1,+\infty]}\frac{\alpha}{1-\alpha}\log{\left(d^{1-\alpha}(\beta_1+\lambda)^\alpha+d^{2(1-\alpha)}(-\mu(1)\log{d})^\alpha\right)^\frac{1}{\alpha}} +\mu(1) h_3(\lambda) \Bigg) \,.
\end{align}
Here, we defined the functions
\begin{align}
    h_3(\lambda) =& \left(-(\beta_1+\lambda) \log{\frac{\beta_1+\lambda}{d}}-(-\mu(1)\log{d})\log{\frac{-\mu(1)\log{d}}{d^2}} \right) \frac{1}{\beta_1+(-\mu(1)\log{d})} + \notag\\
    &\quad +\log{(\beta_1+(-\mu(1)\log{d}))}\,.
\end{align}
The derivatives of the latter function are
\begin{align}
    \hspace{-0.65em}h_3'(\lambda) = \frac{1}{\beta_1+(-\mu(1)\log{d})}\Bigg(\log{\Big(\frac{d}{\beta_1+\lambda}\Big)} -1\Bigg)\,, \,h_3''(\lambda) = \frac{c^2}{\beta_1+(-\mu(1)\log{d})}\Bigg(-\frac{1}{\beta_1+\lambda}\Bigg) 
\end{align}
Moreover, we have the asymptotic behavior
\begin{align}
     h_3'(0) \rightarrow  \frac{1}{\beta_1+(-\mu(1)\log{d})}\log{d} \,, \; h_3''(0) \rightarrow -\frac{1}{\beta_1} \frac{1}{\beta_1+(-\mu(1)\log{d})}\,.
\end{align}
Furthermore, let us set $\varepsilon=1/d$ and define the functions
\begin{align}
&g_0(\lambda)=(\log{(1/\varepsilon)})^{-1}(\beta_1+c\lambda)+(-\mu(1))\,,\\
&g_1(\lambda)=\left(\varepsilon^{1-\alpha}(\beta_1+c\lambda)^\alpha+(-\mu(1)\log{(1/\varepsilon)})^\alpha\right)^\frac{1}{\alpha}\,,\\
    & g_2(\lambda)=\left((\beta_1+c\lambda)^\alpha+\varepsilon^{\alpha-1}(-\mu(1)\log{(1/\varepsilon)})^\alpha\right)^\frac{1}{\alpha}\,.
\end{align}
Then, the function becomes up to positive constant
\begin{align}
    &f(\vec{x}+\lambda\vec{v}) \\
    &\quad =\exp\Bigg(\beta_0 \log{g_0(\lambda)}+\int_{[0,1)}\d \mu(\alpha)\frac{\alpha}{1-\alpha}\log{g_1(\lambda)} +\int_{(1,+\infty)}\frac{\alpha}{1-\alpha}\log{g_2(\lambda)} +\mu( 1) h_3(\lambda) \Bigg) \,.
\end{align}
Next, we compute the second derivative with respect to $\lambda$. 
We obtain up to the constants the asymptotic (large $d$) behavior
\begin{align}
    &\left.\frac{\d^2 }{\d\lambda^2}f(\vec{x}+\lambda\vec{v})\right\vert_{\lambda=0}  \rightarrow \left[\left( \frac{\mu(1)}{\beta_1+(-\mu(1)\log{d})}\log{d}+1\right)^2-\frac{1}{\beta_1}- \frac{\mu(1)}{\beta_1+(-\mu(1)\log{d})} \frac{1}{\beta_1} \right] \,.
\end{align}
In the limit of large $d$, the expression approaches $-1/\beta_1$, which is negative, indicating a concave direction and thus contradicting convexity.
\end{proof}

Next, we show that if $\beta_0$, defined in~\eqref{eq: definition beta0}, is positive and the measure is zero at $\alpha=1$, then the measure cannot have support on points with $\alpha > 1$.

\begin{proposition}
\label{prop: second prop measure}
Let $\mu:\mc B\big([0,+\infty]\big)\to\mb R$ be a negative measure such that $\mu(1)=0$, satisfies Assumption~\ref{assumption:finite integrals} and 
\begin{equation}
   \int_{[0,+\infty]\setminus\{1\}} \frac{\alpha}{1-\alpha} \,\d\mu(\alpha) < 1 \,.
\end{equation}
Then, the function
\begin{equation}
    \vec{x} \longmapsto \|\vec{x}\|_1 \exp\left( \int_{[0,+\infty]} H_\alpha(\hat{x}) \,\d\mu(\alpha) \right)
\end{equation}
    is convex only if $\;\textup{supp}(\mu) \subseteq [0,1]$.
\end{proposition}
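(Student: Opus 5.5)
We argue by contradiction, mirroring the second-derivative strategy of Proposition~\ref{prop: first prop measure}. Suppose $\mu$ is negative, $\mu(1)=0$, Assumption~\ref{assumption:finite integrals} holds, $\beta_0>0$, and $\mu$ charges $(1,+\infty]$. We will exhibit a point $\vec{x}$ and a direction $\vec{v}$ along which $\lambda\mapsto f(\vec{x}+\lambda\vec{v})$ has strictly negative second derivative at $\lambda=0$. Since the function is smooth near such interior points, this contradicts convexity. As in the previous proofs, we split off the norm and write
\begin{equation}
f(\vec{x})=\Big(\sum_i x_i\Big)^{\beta_0}\exp\Bigg(\int_{[0,1)}\frac{\alpha}{1-\alpha}\log\Big(\sum_i x_i^\alpha\Big)^{\frac{1}{\alpha}}\d\mu(\alpha)+\int_{(1,+\infty]}\frac{\alpha}{1-\alpha}\log\Big(\sum_i x_i^\alpha\Big)^{\frac{1}{\alpha}}\d\mu(\alpha)\Bigg).
\end{equation}
We introduce the constants
\begin{equation}
\beta_2:=\int_{[0,1)}\tfrac{\alpha}{1-\alpha}\d\mu\leq 0, \qquad \beta_1:=\int_{(1,+\infty]}\tfrac{\alpha}{1-\alpha}\d\mu>0,
\end{equation}
with the convention $\alpha/(1-\alpha)=1$ at $\alpha=+\infty$, as in Proposition~\ref{prop: prop no alpha=1}. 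Then $\beta_0=1-\beta_1-\beta_2$.

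The construction reuses the two-block vector with a large-dimensional tail. Take
\begin{equation}
\vec{x}=\Big[\underbrace{\tfrac{a}{d},\ldots,\tfrac{a}{d}}_{d},\underbrace{\tfrac{b}{d^2},\ldots,\tfrac{b}{d^2}}_{d^2}\Big],
\end{equation}
and let $\vec{v}$ be a direction that changes the two block weights $(a,b)$ linearly, for instance $\vec{v}=[\tfrac{v_a}{d},\ldots;\tfrac{v_b}{d^2},\ldots]$. As $d\to\infty$, with the rescalings used in Proposition~\ref{prop: first prop measure}, the terms with $\alpha<1$ behave like $\log$ of the tail weight $b$, the terms with $\alpha>1$ behave like $\log$ of the head weight $a$, and the norm term is $\log(a+b)$. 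Uniform convergence of the integrands, guaranteed by Assumption~\ref{assumption:finite integrals}, then lets us pass to the limit in the second derivative. In the limit, $f$ restricted to this two-parameter family is, up to positive constants, the function
\begin{equation}
g(a,b)=(a+b)^{\beta_0}\,a^{\beta_1}\,b^{\beta_2}.
\end{equation}
This is a multivariate-divergence-type function in three "variables" $(a+b,a,b)$ whose exponents sum to $1$. The key observation is that in Lemma~\ref{lem: Matrix majorization} the convexity criterion requires exactly one positive exponent. Here both $\beta_0>0$ and $\beta_1>0$, so that criterion fails. However, the variables are coupled, so we cannot invoke the lemma directly; we must check the failure inside the two-dimensional family explicitly.

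The core computation is the Hessian of $\log g$ plus the squared gradient, that is, the sign of $\partial_\lambda^2 g$ along $(v_a,v_b)$. Choose $v_b=0$ and vary only $a$. Then
\begin{equation}
\frac{\partial_a^2 g}{g}=\Big(\frac{\beta_0}{a+b}+\frac{\beta_1}{a}\Big)^2-\frac{\beta_0}{(a+b)^2}-\frac{\beta_1}{a^2}.
\end{equation}
Take the regime $a\to\infty$ with $b$ fixed; equivalently, scale as in Proposition~\ref{prop: prop no alpha=1}, where the head weight dominates. In this regime the expression becomes $\big((\beta_0+\beta_1)^2-\beta_0-\beta_1\big)/a^2=\big((1-\beta_2)^2-(1-\beta_2)\big)/a^2$, which is $\geq 0$ because $\beta_2\leq 0$. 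This choice of regime therefore does not produce a violation.

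The other regime, $a\to0$, gives $\beta_1(\beta_1-1)/a^2$. This is negative exactly when $\beta_1<1$, which is not guaranteed. So the real plan is to optimize over the ratio $s=a/(a+b)\in(0,1)$ and over the direction. Along $v_b=0$, the second derivative has the sign of $(\beta_0 s+\beta_1)^2-\beta_0 s^2-\beta_1$. At $s=1$ this equals $(1-\beta_2)(-\beta_2)$, which is $\geq0$. At $s=0$ it equals $\beta_1^2-\beta_1$. The quadratic in $s$ has leading coefficient $\beta_0(\beta_0-1)$, and $\beta_0-1=-\beta_1-\beta_2$ can have either sign.

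If varying only $a$ turns out insufficient, we use a general direction $(v_a,v_b)$. The two-variable function $g$ is $1$-homogeneous, so its Hessian is singular along the radial direction, and convexity on the cone is equivalent to convexity of the one-variable section $s\mapsto g(s,1-s)$. That section is proportional to $s^{\beta_1}(1-s)^{\beta_2}$. Its second derivative near $s\to0^+$ behaves like $\beta_1(\beta_1-1)s^{\beta_1-2}$, which is negative if $0<\beta_1<1$. In the case $\beta_1\geq1$, where $1-\beta_2-\beta_1=\beta_0>0$, the section still fails to be convex somewhere in the interior. We would show this by examining $\frac{\d^2}{\d s^2}\log h + \big(\frac{\d}{\d s}\log h\big)^2$ for $h(s)=s^{\beta_1}(1-s)^{\beta_2}$, namely
\begin{equation}
\Big(\frac{\beta_1}{s}-\frac{\beta_2}{1-s}\Big)^2-\frac{\beta_1}{s^2}-\frac{\beta_2}{(1-s)^2},
\end{equation}
and choosing $s$ that makes this negative using $\beta_1+\beta_2<1$.

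This last sign analysis is the main obstacle. For the exponent function $(v_1,v_2,v_3)\mapsto v_1^{\beta_0}v_2^{\beta_1}v_3^{\beta_2}$ restricted to the coupled surface $v_1=v_2+v_3$, we must show that a non-convex direction exists whenever $\beta_0>0$ and $\beta_1>0$. If the simple two-block family cannot realize it, we would add a third block of intermediate scale $d^{3/2}$. That block decouples one more group of $\alpha$'s, or makes the norm variable independent of the head and tail, and so recovers the non-coupled situation of Lemma~\ref{lem: Matrix majorization}, item~\ref{it: convexity matrix majorization}. That lemma forbids two positive exponents.
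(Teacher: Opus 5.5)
There is a genuine gap: the two-block family you build everything on provably cannot produce the counterexample in part of the parameter range. Your reduction is fine up to the limit $g(a,b)=(a+b)^{\beta_0}a^{\beta_1}b^{\beta_2}$, and, as you note, by $1$-homogeneity $g$ is convex iff $h(s)=s^{\beta_1}(1-s)^{\beta_2}$ is convex on $(0,1)$. However,
\[
s^2(1-s)^2\,\frac{h''(s)}{h(s)}=\beta_1(\beta_1-1)(1-s)^2-2\beta_1\beta_2\,s(1-s)+\beta_2(\beta_2-1)\,s^2 .
\]
When $\beta_1\ge 1$ and $\beta_2\le 0$, every term is non-negative. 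So $h$ is convex, and hence $g$ (its perspective) is convex.

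This regime is allowed by the hypotheses. For example, $\mu=-\delta_{1/2}-\delta_3$ gives $\beta_2=-1$, $\beta_1=3/2$ and $\beta_0=1/2>0$. Your claim that the section ``still fails to be convex somewhere in the interior'' using $\beta_1+\beta_2<1$ is therefore false. No choice of the ratio $s$ or of $(v_a,v_b)$ in any two-block family helps, because there the norm is tied to head plus tail. As written, your argument covers only $\beta_1<1$.

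The missing ingredient is the one you leave as an unexecuted fallback, and it is the entire content of the paper's proof. You need a third block that contributes to $\|\vec x\|_1$ but is negligible in $\sum_i x_i^\alpha$ for every $\alpha\neq 1$, so that the norm becomes an independent variable. The paper takes $\vec x=[\beta_1,(1/d)^{\times d},(1/d^2)^{\times d^2}]$ and $\vec v=[-1,(\varsigma/d)^{\times d},0^{\times d^2}]$, with $\varsigma$ tuned so that $\|\vec x+\lambda\vec v\|_1\propto \beta_0+\lambda$. Then:
\begin{itemize}
\item the $\alpha>1$ terms see only the head $\beta_1-\lambda$;
\item the $\alpha<1$ terms see only the frozen tail;
\item the limit is, up to constants, $(\beta_0+\lambda)^{\beta_0}(\beta_1-\lambda)^{\beta_1}$.
\end{itemize}
Its second derivative at $\lambda=0$ is $\beta_0^{\beta_0}\beta_1^{\beta_1}\big((1-1)^2-1/\beta_0-1/\beta_1\big)<0$, which exploits exactly the two positive exponents $\beta_0,\beta_1$. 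This direction is necessarily not trace-preserving, which is admissible because the function is defined on all of $\mathbb{R}^{d'}_{\geq 0}$.

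Your intermediate $d^{3/2}$ block would serve the same purpose, since it is dominated by the head for $\alpha>1$ and by the tail for $\alpha<1$. To have a proof, though, you must actually carry out this decoupled construction and exhibit the negative direction, rather than appeal to Lemma~\ref{lem: Matrix majorization} in outline after a step that fails.
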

\begin{proof}
Similarly to the previous case, we identify a direction along which the second derivative is negative whenever the condition on the measure is violated—that is, if there exists a point in the support of the measure with $\alpha > 1$. This corresponds to a concave direction, contradicting the requirement of convexity. 

We denote the weight on the $\alpha\in(1,+\infty)$ range and the power of the sum of the components of the vector as
\begin{equation}
    \beta_1=\int_{(1,+\infty)} \d \mu(\alpha) \frac{\alpha}{1-\alpha} \,, \quad \beta_0=1-\int_{[0,+\infty] \setminus\{1\}} \d \mu(\alpha) \frac{\alpha}{1-\alpha} \,.
\end{equation}
Note that these integrals are well-defined because of Assumption~\ref{assumption:finite integrals}.

Note that because the measure is negative and $\alpha > 1$, it follows that $\beta_1 > 0$. Moreover, by assumption $\beta_0 > 0$.
Next, we define the constant $\varsigma= (\beta_1+2)/\beta_0$ as well as the point and direction
\begin{align}
\vec{x}=\bigg[\beta_1,\underbrace{\frac{1}{d},\ldots,\frac{1}{d}}_{d\,\textrm{times}},\underbrace{\frac{1}{d^2},\ldots,\frac{1}{d^2}}_{d^2\,\textrm{times}}\bigg] \,,
\quad 
\vec{v}=\bigg[-1,\underbrace{\frac{\varsigma}{d},\ldots,\frac{\varsigma}{d}}_{d\,\textrm{times}},\underbrace{0,\ldots,0}_{d^2\,\textrm{times}}\bigg] \,.
\end{align}
Note that the direction is not normalized. Indeed, we have
\begin{equation}
    \sum_i x_i+ \lambda v_i = \beta_1+2-\lambda(\varsigma-1) = \frac{\beta_1+2}{\beta_0}(\beta_0+\lambda) \,.
\end{equation}
We have up to the positive and finite constants
that the function becomes
\begin{align}
    &f(\vec{x} + \lambda v_i)\\
    &\quad= (\beta_0+ \lambda)^{\beta_0}\exp{\left(\int_{[0,1)}\d \mu(\alpha)\frac{\alpha}{1-\alpha}\log{g_1(\lambda)}+\int_{(1,+\infty]}\d \mu(\alpha)\frac{\alpha}{1-\alpha}\log{g_2(\lambda)}\right)}\\
    &\quad=\exp{\left(\beta_0 \log{g_0(\lambda)}+\int_{[0,1)}\d \mu(\alpha)\frac{\alpha}{1-\alpha}\log{g_1(\lambda)}+\int_{(1,+\infty]}\d \mu(\alpha)\frac{\alpha}{1-\alpha}\log{g_2(\lambda)}\right)} \,,
\end{align}
where we defined $\varepsilon=1/d$ and the functions
\begin{align}
    &g_0(\lambda) = \beta_0+c\lambda \,,\\
    & g_1(\lambda) = (\varepsilon^{2(1-\alpha)}(\beta_1-c\lambda)^\alpha+\varepsilon^{1-\alpha}(1-c\varsigma\lambda)^\alpha+1)^\frac{1}{\alpha} \,, \\
    & g_2(\lambda) = ((\beta_1-c\lambda)^\alpha+\varepsilon^{\alpha-1}(1-c\varsigma\lambda)^\alpha+\varepsilon^{2(\alpha-1)})^\frac{1}{\alpha} \,.
\end{align}

The second derivative yields
\begin{align}
    &\frac{\d^2}{\d \lambda^2}f(\vec{x}+\lambda\vec{v})\notag\\
    &\; = \exp{\left(\beta_0 \log{g_0(\lambda)}+\int_{[0,1)}\d \mu(\alpha)\frac{\alpha}{1-\alpha}\log{g_1(\lambda)}+\int_{(1,+\infty]}\d \mu(\alpha)\frac{\alpha}{1-\alpha}\log{g_2(\lambda)}\right)}\notag\\
    &\; \; \quad \quad\qquad  \times \Bigg[\left(\beta_0\frac{g_0(\lambda)'}{g_0(\lambda)}+\int_{[0,1)}\d \mu(\alpha)\frac{\alpha}{1-\alpha}\frac{g'_1(\lambda)}{g_1(\lambda)}+\int_{\alpha >1}\d \mu(\alpha)\frac{\alpha}{1-\alpha}\frac{g'_2(\lambda)}{g_2(\lambda)}\right)^2\notag\\
    &\hspace{7em} +\beta_0\left(-\frac{g'_0(\lambda)^2}{g_0(\lambda)^2}+\frac{g''_0 (\lambda)}{g_0(\lambda)}\right) +\int_{[0,1)}\d \mu(\alpha)\frac{\alpha}{1-\alpha}\left(-\frac{g'_1(\lambda)^2}{g_1(\lambda)^2}+\frac{g''_1 (\lambda)}{g_1(\lambda)}\right)\notag\\
    & \hspace{17em} + \int_{(1,+\infty]}\d \mu(\alpha)\frac{\alpha}{1-\alpha}\left(-\frac{g'_2(\lambda)^2}{g_2(\lambda)^2}+\frac{g''_2 (\lambda)}{g_2(\lambda)}\right)\Bigg] \,.
\end{align}
Taking $\lambda = 0$ and letting $\varepsilon \to 0$, we obtain the following asymptotic behavior as $\varepsilon \to 0$
\begin{equation}
    \left.\frac{\d^2 }{\d\lambda^2}f(\vec{x}+\lambda\vec{v})\right\vert_{\lambda=0} \rightarrow \beta_0^{\beta_0} \beta_1^{\beta_1} \left[\left(1-1\right)^2-\frac{1}{\beta_0}-\frac{1}{\beta_1}\right] \,.
\end{equation}
Expanding and dividing by the positive constant $\beta_0^{\beta_0} \beta_1^{\beta_1}$ we obtain
\begin{equation}
    \left.\frac{\d^2 }{\d\lambda^2}f(\vec{x}+\lambda\vec{v})\right\vert_{\lambda=0} \rightarrow \frac{-(\beta_0+\beta_1)}{\beta_0\beta_1} \,,
\end{equation}
which is negative since $\beta_0, \beta_1 > 0$. This contradicts convexity, and thus we must have $\beta_1 = 0$, meaning that no points with $\alpha > 1$ are allowed.
\end{proof}

\begin{remark}
In the proof of the proposition above, we considered a point $\vec{x}$ that is not normalized, meaning $\sum_i x_i \neq 1$. This is justified because the function is defined on the entire domain $\vec{x} \in \mathbb{R}_+^{d'}$ for some dimension $d'$. Indeed, the semiring includes vectors that are not normalized, and within this broader space, counterexamples can still arise. By semiring theory, the existence of such counterexamples in this extended setting is sufficient to exclude the corresponding measure values in the barycentric decomposition of conditional entropies stated in Theorem~\ref{th:barycenter conditional entropies}, even though the theorem itself concerns normalized probability distributions.
This conclusion relies on the fact that, without loss of generality, we may assume $\|\vec{x}\|_1 = 1$, since any vector $\vec{x}$ can be rescaled by a positive constant, and such scaling does not affect convexity properties.
\end{remark}

\begin{remark}
In the proof of the above proposition, we used a nontraceless direction, i.e. a direction $\vec{v}$ such that $\sum_i v_i \neq 0$.
 This is admissible because the function is defined on the whole domain $\vec{x} \in \mathbb{R}_+^{d'}$ for some dimension $d'$.  In particular, the semiring is defined over the set of unnormalized joint matrices, within which counterexamples can be found. By semiring theory, the existence of such counterexamples in this broader space is sufficient to rule out the corresponding measure values in the barycentric decomposition of conditional entropies in Theorem~\ref{th:barycenter conditional entropies}, even though the theorem itself is stated for normalized probability distributions.

This fact follows from the additivity property of homomorphisms, as we now explain. First, without loss of generality, we may assume that $\|\vec{x}\|_1 < 1$; indeed, $\vec{x}$ can always be rescaled by a positive constant, and such a multiplicative factor does not affect convexity properties.
Next, consider a perturbation of $\vec{x}$ in a direction $\vec{v}$ with $\sum_i v_i \neq 0$, i.e., a direction that is not traceless. To handle this, we embed the problem into a larger space by introducing an additional outcome to the conditioning register $Y$. Instead of a single probability vector $\vec{x}$, we now consider a matrix with two columns, $\vec{x}$ and $\vec{x}'$, corresponding to a joint distribution over $X \times Y$.
In this extended space, by the additivity of the homomorphism $f$, the total function becomes $f(\vec{x}) + f(\vec{x}')$. Now, we perturb $\vec{x}$ as $\vec{x} + \lambda \vec{v}$, and simultaneously perturb $\vec{x}'$ as $\vec{x}' + \lambda \hat{x}' \sum_i v_i$. This ensures that the total weight of the joint system remains constant. Since $f$ depends linearly on the normalization of $\vec{x}'$, this adjustment does not affect any convexity analysis. Therefore, this construction allows us to effectively study perturbations of $\vec{x}$ in directions $\vec{v}$ with $\sum_i v_i \neq 0$, by embedding them into a trace-preserving framework.

\end{remark}
Finally, in the case where the measure is negative, we show that if $\beta_0$, defined in~\eqref{eq: definition beta0},
is negative and the measure is zero at $\alpha=1$, then the measure cannot have support on more than one point $\alpha > 1$.
\begin{proposition}
\label{prop: third prop measure}
Let $\mu:\mc B\big([0,+\infty]\big)\to\mb R$ be a negative measure such that $\mu(1)=0$, satisfies Assumption~\ref{assumption:finite integrals} and
\begin{equation}
\label{eq: cond on beta0}
   \int_{[0,+\infty]\setminus\{1\}} \frac{\alpha}{1-\alpha} \,\d\mu(\alpha) \geq 1 \,.
\end{equation}
Then, the function
\begin{equation}
    \vec{x} \longmapsto \|\vec{x}\|_1 \exp\left( \int_{[0,+\infty]} H_\alpha(\hat{x}) \,\d\mu(\alpha) \right)
\end{equation}
    is convex only if $\;\operatorname{supp}(\mu) \subseteq [0,1] \cup \{\alpha_*\}$, $\alpha_* \in [1, +\infty]$.
\end{proposition}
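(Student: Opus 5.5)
The plan is to reduce to the finite-dimensional criterion of Lemma~\ref{lem: Matrix majorization}, item~\ref{it: convexity matrix majorization}. When $\mu$ is negative with $\mu(1)=0$, the natural exponents are as follows. The exponent $\beta_0$ of $\|\vec{x}\|_1$ is $\le 0$ by \eqref{eq: cond on beta0}. Each $\alpha<1$ carries a weight $\mu(\alpha)\frac{\alpha}{1-\alpha}\le 0$, and each $\alpha>1$ carries a positive weight. That lemma forbids two strictly positive exponents. So I argue by contradiction. Suppose $\operatorname{supp}(\mu)\cap(1,+\infty]$ contains two distinct points. Then I can pick $\alpha_0\in(1,\infty)$ with $\mu(\{\alpha_0\})=0$ such that both $A=(1,\alpha_0)$ and $B=(\alpha_0,+\infty]$ have nonzero mass. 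Set
\[
\beta_A=\int_A\tfrac{\alpha}{1-\alpha}\,\d\mu(\alpha)>0,\qquad \beta_B=\int_B\tfrac{\alpha}{1-\alpha}\,\d\mu(\alpha)>0 .
\]
Both are finite by Assumption~\ref{assumption:finite integrals}, with the convention $\alpha/(1-\alpha)=1$ at $\alpha=\infty$ as in Proposition~\ref{prop: prop no alpha=1}.

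Next I build a family of vectors, in the spirit of the tail trick of the proof idea, that decouples three scalar "coordinates" $u,W,T>0$. The vector is the concatenation of three blocks. The first is one entry $u$. The second is $m$ equal entries of size $W m^{-1/\alpha_0}$. The third is $D$ equal entries of size $T/D$. The regime is $m\to\infty$ and then $D\to\infty$ much faster.

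For $\alpha>1$, the third block contributes $D^{1/\alpha-1}T\to0$ to the norm $\big(\sum_i x_i^\alpha\big)^{1/\alpha}$. The second block contributes $m^{1/\alpha-1/\alpha_0}W$. This is $\gg u$ for $\alpha\in A$ and $\to0$ for $\alpha\in B$. For $\alpha<1$, the third block dominates and gives $D^{(1-\alpha)/\alpha}T$. Finally $\|\vec{x}\|_1\approx T$ once $D$ is chosen with $T\gg u+Wm^{1-1/\alpha_0}$.

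Plugging into the unnormalized form $\|\vec{x}\|_1^{\beta_0}\prod_\ell(\cdots)$ used in Lemma~\ref{lem: discrete measure}, now written with integrals, the function becomes $C(m,D)\,T^{1-\beta_A-\beta_B}\,W^{\beta_A}\,u^{\beta_B}\,(1+o(1))$. Here $C(m,D)$ does not depend on $(u,W,T)$, and the $\alpha<1$ exponents have been absorbed into the exponent of $T$; these exponents sum to $1$. The $o(1)$ terms hold together with their first two derivatives in $(u,W,T)$, locally uniformly. I will justify this by dominated convergence over $\alpha$, using Assumption~\ref{assumption:finite integrals}. The only delicate region is $\alpha$ near $\alpha_0$, where neither block of the $A/B$ split dominates. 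But $\mu(\{\alpha_0\})=0$, so the mass of a shrinking window around $\alpha_0$ tends to zero. For fixed $\alpha\neq\alpha_0$, the logarithmic derivatives converge to the pure monomial ones. The logarithmic derivatives of an $\ell^\alpha$-norm are bounded by $1/(\text{entry})$-type quantities, which gives the required domination.

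With two positive exponents $\beta_A,\beta_B$, Lemma~\ref{lem: Matrix majorization}, item~\ref{it: convexity matrix majorization}, says that $(T,W,u)\mapsto T^{1-\beta_A-\beta_B}W^{\beta_A}u^{\beta_B}$ is not convex. Concretely, I will take the direction $(0,\,1/\beta_A,\,-1/\beta_B)$ in log-coordinates, or simply a point and direction with strictly negative second derivative. Lifting this direction to $\vec{v}$ changes the three block sizes linearly. If $\vec{v}$ is not traceless, I use the embedding remark after Proposition~\ref{prop: second prop measure}. For large $m,D$ this gives a strictly negative second derivative of the original function, contradicting convexity. Hence $\mu$ restricted to $(1,+\infty]$ is either zero or a single atom $\alpha_*$, which gives the claim.

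The main obstacle I expect is the uniform control of the second derivative near $\alpha_0$ and near $\alpha=1^{+}$, where $\frac{\alpha}{1-\alpha}$ blows up. This is exactly where Assumption~\ref{assumption:finite integrals} and the choice of an atomless $\alpha_0$ are used.
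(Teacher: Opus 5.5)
Your overall strategy is the paper's. You build blocks whose $\ell^\alpha$-contributions dominate on different $\alpha$-ranges, reduce to a monomial with two positive exponents, and exhibit a negative second derivative. At $W=\beta_A$, $u=\beta_B$ along $(0,1,-1)$ you recover exactly the paper's limit $-(\beta_A+\beta_B)/(\beta_A\beta_B)$. Choosing the cut $\alpha_0$ atom-free is actually more careful than the paper's fixed cut points $x_2,x_3,x_4$.

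However, the step ``$\|\vec x\|_1\approx T$ once $D$ is chosen with $T\gg u+Wm^{1-1/\alpha_0}$'' fails. Your third block has total mass $D\cdot T/D=T$ independently of $D$, while the second block has mass $Wm^{1-1/\alpha_0}\to\infty$. With the blocks as you define them, $\|\vec x\|_1\approx Wm^{1-1/\alpha_0}$. The limiting monomial is then $T^{\beta_{<1}}\,W^{\beta_0+\beta_A}\,u^{\beta_B}$ with $\beta_{<1}=\int_{[0,1)}\frac{\alpha}{1-\alpha}\,\mathrm{d}\mu\le 0$, not $T^{1-\beta_A-\beta_B}W^{\beta_A}u^{\beta_B}$. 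In exactly the regime of this proposition, $\beta_0\le 0$ by \eqref{eq: cond on beta0}, and the exponent $\beta_0+\beta_A=1-\beta_{<1}-\beta_B$ can be non-positive (e.g.\ no mass below $1$ and $\beta_B\ge 1$). Then only $u$ carries a positive exponent, the monomial is convex by Lemma~\ref{lem: Matrix majorization}, and your argument yields no contradiction.

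The repair is to give the third block its own large scale. Take $D$ entries of size $KT/D$ with $K\gg Wm^{1-1/\alpha_0}$ and $\log D\gg\log K$; the factor $K$ is absorbed into $C$ by homogeneity. Then the logarithm of the ratio of the third block's contribution to the second's (and to the first's) is affine in $\alpha$ on $[0,1]$ and large at both endpoints. So the third block dominates $\|\vec x\|_1$ and every $\alpha\in[0,1]$ uniformly. The second block takes over only after a window $(1,1+\delta)$ with $\delta\approx\log K/\log D\to 0$, whose $\frac{\alpha}{1-\alpha}\,\mathrm{d}\mu$-mass vanishes by Assumption~\ref{assumption:finite integrals}. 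Your claimed form then follows, and your dominated-convergence plan goes through. The paper sidesteps this block change at $\alpha=1$ altogether: one $\lambda$-independent block dominates the whole range $[0,x_2)$ together with the $\ell^1$ norm.

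Two minor points. At $\alpha=\infty$ the weight $\alpha/(1-\alpha)$ must be its limit $-1$, since $H_\infty(\hat x)=\log\|\vec x\|_1-\log\|\vec x\|_\infty$, not $1$. Otherwise an atom of $\mu$ at $\infty$ would contribute negatively to $\beta_B$. Also, a direction ``in log-coordinates'' is not a line, so use the explicit point and linear direction above.
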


\begin{proof}
We need to show that the measure can have support on at most one point $\alpha > 1$. 
In particular, we then construct a point and a direction along which the function is concave by computing the second derivative and showing that it is negative. This provides a counterexample to convexity.

Let us assume that the measure has support on two different points $\alpha_2 > \alpha_1 > 1$. We define the points
\begin{equation}
    x_2 = 1 + \frac{\alpha_1 - 1}{2}, \quad x_3 = \alpha_1 + \frac{\alpha_2 - \alpha_1}{2}, \quad x_4 = \alpha_2 - \delta.
\end{equation}
for some fixed $\delta>0$.
Then, $(x_2,x_3)$ is a neighborhood  of $\alpha_1$ and a value $x_4<\alpha_2$. By our assumption on the support of the measure and recalling that the measure $\mu$ is negative, it follows that
\begin{equation}
    \beta_1 = \int_{(x_2,x_3)} \frac{\alpha}{1-\alpha}\d\mu(\alpha) > 0 \,, \text{and} \quad \beta_2=  \int_{\alpha > x_4} \d\mu(\alpha) \frac{\alpha}{1-\alpha} > 0 \,.
\end{equation} 
We depict these points in Fig~\ref{fig:patch}. 
\begin{figure}[h]
\hspace*{0.15\textwidth}
\includegraphics[width=.8\textwidth]{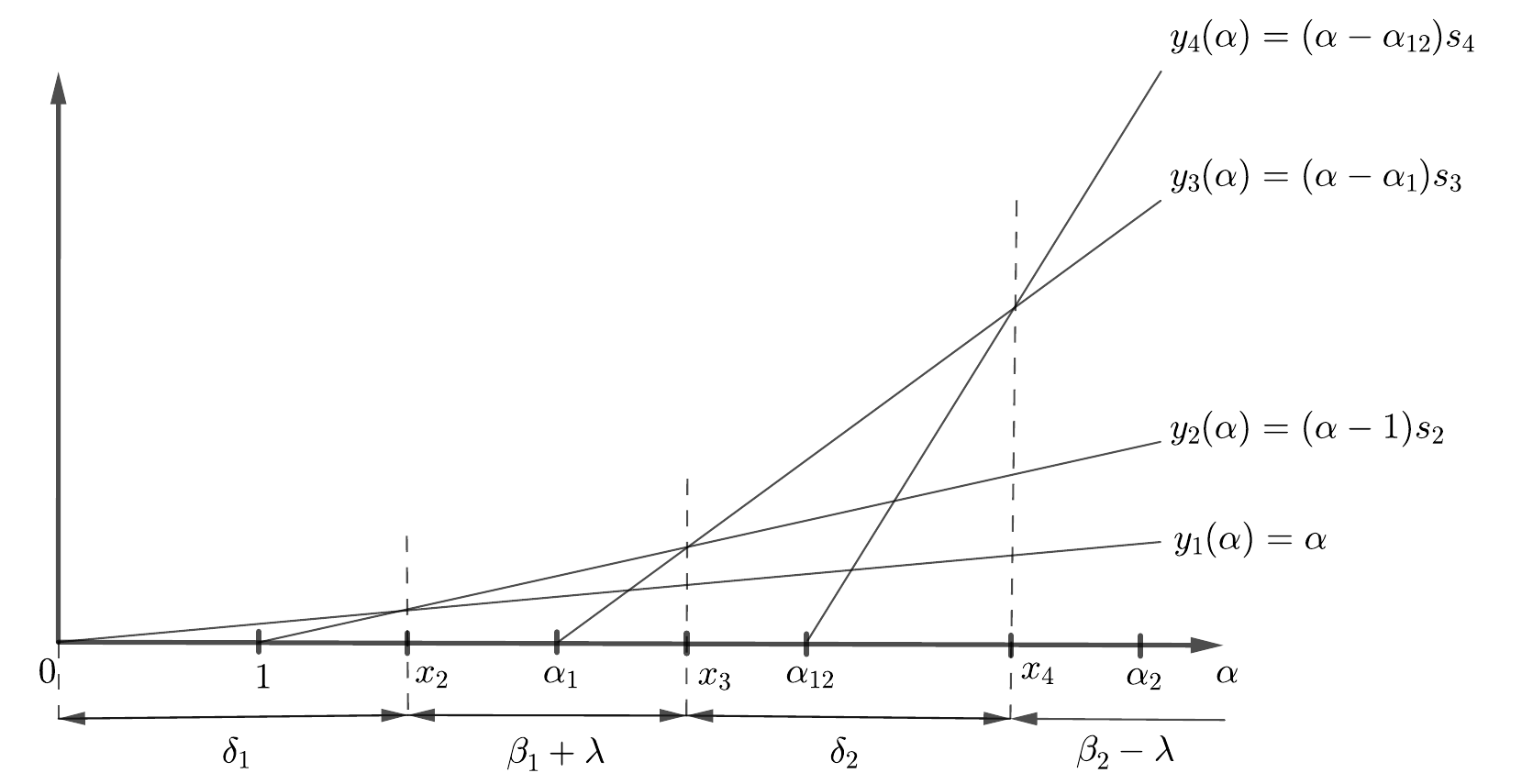}
    \caption{The figure shows the lines defined in equation~\eqref{eq: lines necessary}, and the dominant contributions in equation~\eqref{eq: contributions lines} in the different ranges.}
    \label{fig:patch}
\end{figure}
Note that these integrals are always finite since the pole at $\alpha=1$ of the function $\alpha/(1-\alpha)$ is excluded and the total weight of the measure is finite.
The idea is to define a point and a direction so that, for large dimension values, the relevant function behaves as $\beta_1 + \lambda$ on the interval $\alpha \in (x_2, x_3)$ and as $\beta_2 - \lambda$ on the interval $\alpha > x_4$. We illustrate this situation in Fig.~\ref{fig:patch}. We then define the lines
\begin{equation}
\label{eq: lines necessary}
    y_1(\alpha) = \alpha, \quad y_2(\alpha) = (\alpha - 1) s_2, \quad y_3(\alpha) = (\alpha - \alpha_1) s_3, \quad y_4(\alpha) = (\alpha - \alpha_{12}) s_4,
\end{equation}
where we define $\alpha_{12}=(\alpha_1+\alpha_2)/2$. The slopes $s_i$ satisfy the recursive relation
\begin{equation}
    s_i = \frac{x_i - b_{i-1}}{x_i - b_i} s_{i-1}, 
\end{equation}
where $b_i$ are defined as
\begin{equation}
    b_1 = 0, \quad b_2 = 1, \quad b_3 = \alpha_1, \quad b_4 = \alpha_{12}.
\end{equation}
Hence, each line is of the form $y = (\alpha - b_i) s_i$ where $b_i$ specifies the intersection with the $x$ axis and $s_i$ determines their slope. These lines are depicted in Fig.~\ref{fig:patch}.
Using these definitions, we choose the point and direction
\begin{align}
&\vec{x}=\bigg[\underbrace{d^{s_1} \delta_1,\ldots,d^{s_1} \delta_1}_{\lceil d^{b_4 s_4}\rceil },\underbrace{d^{s_2} \beta_1,\ldots,d^{s_2} \beta_1}_{\lceil d^{b_4 s_4 -b_2 s_2}\rceil},\underbrace{d^{s_3} \delta_2,\ldots,d^{s_3} \delta_2}_{\lceil d^{b_4 s_4 -b_3 s_3}\rceil},d^{s_4} \beta_2\bigg] \,,
\\
&\vec{v}=\bigg[\underbrace{0,\ldots,0}_{\lceil d^{b_4 s_4}\rceil },\underbrace{d^{s_2} ,\ldots,d^{s_2}}_{\lceil d^{b_4 s_4 -b_2 s_2}\rceil},\underbrace{0,\ldots,0}_{\lceil d^{b_4 s_4 -b_3 s_3}\rceil},-d^{s_4} \bigg]\,,
\end{align}
where $\delta_1,\delta_2$ are two fixed positive constants. Here, $\lceil \cdot \rceil$ denotes the ceiling function, ensuring that the number of entries is an integer. We have that the inner function is
\begin{equation}
\label{eq: contributions lines}
    \sum_i(x_i+\lambda v_i)^\alpha = \lceil d^{b_4 s_4}\rceil (d^{s_1}\delta_1)^\alpha  + \lceil d^{b_4 s_4 -b_2 s_2}\rceil (d^{s_2}\beta_1+\lambda d^{s_2})^\alpha + \lceil d^{b_4 s_4 -b_3 s_3}\rceil (d^{s_3}\delta_2)^\alpha + (d^{s_4}\beta_2-\lambda d^{s_4})^\alpha
\end{equation}
that multiplied by $d^{-b_4 s_4}$ gives, up to an irrelevant constant (due to the ceiling) that vanishes in the limit $d$ to infinity, 
\begin{equation}
    \sum_i(x_i+\lambda v_i)^\alpha  =  d^{y_1(\alpha)}\delta_1^\alpha  +  d^{y_2(\alpha)} (\beta_1+\lambda )^\alpha +  d^{y_3(\alpha)} \delta_2^\alpha + d^{y_4(\alpha)}(\beta_2-\lambda)^\alpha \,.
\end{equation}
In particular, for each range, there will be only one function $y(\alpha)$ that is dominant in the limit $d\rightarrow \infty$. Defining $\varepsilon = 1/d$,  the relevant inner functions are, up to terms that disappear as $\varepsilon \rightarrow 0$,
\begin{align}
    &g_1(\lambda) \approx \delta_1^\alpha \,,\;\;g_2(\lambda) \approx (\beta_1+\lambda)^\alpha  \,,\;\;g_3(\lambda) \approx \delta_2^\alpha  \,,\;\;g_4(\lambda) \approx (\beta_2-\lambda)^\alpha  \,.
\end{align}
We then split the integral into several ranges according to Fig.~\ref{fig:patch} and obtain up to positive constant factors
\begin{align}
    &f(\vec{x} + \lambda v_i) \\
    &\; = \exp\Bigg(\beta_0 \log{g_1(\lambda)}+\int_{[0,x_2)}\d \mu(\alpha)\frac{\alpha}{1-\alpha}\log{g_1(\lambda)}+ \int_{[x_2,x_3]}\d \mu(\alpha)\frac{\alpha}{1-\alpha}\log{g_2(\lambda)}\\
    &\qquad \qquad \qquad \qquad \quad \;\;+ \int_{ (x_3,x_4)}\d \mu(\alpha)\frac{\alpha}{1-\alpha}\log{g_3(\lambda)} + \int_{\alpha \geq x_4 }\d \mu(\alpha)\frac{\alpha}{1-\alpha}\log{g_4(\lambda)}\Bigg)
\end{align}
Here, $\beta_0$ is a finite constant due to Assumption~\ref{assumption:finite integrals}.
We then repeat the calculations from the previous examples and, up to constant factors and in the limit $\varepsilon \to 0$, we obtain the following second derivative
\begin{equation}
    \left.\frac{\d^2 }{\d\lambda^2}f(\vec{x}+\lambda\vec{v})\right\vert_{\lambda=0} \rightarrow \frac{-(\beta_1+\beta_2)}{\beta_1\beta_2} \,.
\end{equation}
Since both $\beta_1$ and $\beta_2$ are positive, the second derivative is negative, yielding a counterexample to convexity.
This shows that at most one point with $\alpha > 1$ is allowed, thereby concluding the proof.
\end{proof}

\subsection{Necessary conditions for derivations}
\label{sec: necessary derivations}
Next, we turn to the necessary conditions for the (extremal) derivations.
These are defined as (see equation~\eqref{eq:ExtDerivation})
 \begin{equation}
    H_{0,\alpha}(P_{XY}) = \sum_{y \in \Y} P_Y(y)H_\alpha(P_{X|Y=y}) \,.
 \end{equation} 
 Hence, it is sufficient to consider the convexity properties of the function
\begin{equation}
      \vec{x} \longmapsto \|\vec{x}\|_1 H_\alpha(\hat{x}) \,.
 \end{equation}
Next, we show that in order for the derivations to be monotone, it must be that
$\alpha>1$. we note that, by Lemma~\ref{lem: conv derivations}, monotonicity of derivations requires concavity.
\begin{proposition}
 \label{prop: prop measure deriv}
The function
 \begin{equation}
     \vec{x} \longmapsto \|\vec{x}\|_1  H_\alpha(\hat{x}) 
\end{equation}
     is concave only if $\alpha \in [0,1]$.
 \end{proposition}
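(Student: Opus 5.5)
We prove the contrapositive. Fix $\alpha\in(1,+\infty]$, so that $H_\alpha$ is not covered by Proposition~\ref{prop: sufficient derivations}. We then exhibit a point $\vec{x}$ and a direction $\vec{v}$ such that the second derivative of $\lambda\mapsto g(\vec{x}+\lambda\vec{v})$, with $g(\vec{x}):=\|\vec{x}\|_1H_\alpha(\hat{x})$, is strictly positive at $\lambda=0$. This contradicts concavity.

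\textbf{Reduction to a simple form.} For $\alpha\neq1$ we write
\begin{equation}
g(\vec{x})=\frac{\|\vec{x}\|_1}{1-\alpha}\log\sum_i x_i^\alpha-\frac{\alpha}{1-\alpha}\|\vec{x}\|_1\log\|\vec{x}\|_1
=\frac{\alpha\,\|\vec{x}\|_1}{1-\alpha}\log\frac{N_\alpha(\vec{x})}{\|\vec{x}\|_1},
\end{equation}
where $N_\alpha(\vec{x})=(\sum_i x_i^\alpha)^{1/\alpha}$. For $\alpha>1$ the prefactor $\alpha/(1-\alpha)$ is negative. Hence concavity of $g$ is equivalent to convexity of $\psi(\vec{x}):=\|\vec{x}\|_1\log\big(N_\alpha(\vec{x})/\|\vec{x}\|_1\big)$. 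This $\psi$ is the perspective-type function $s\,\log N_\alpha(\vec{x}/s)$ evaluated at $s=\|\vec{x}\|_1$. Since $N_\alpha$ is convex and $\log$ is concave, we expect convexity of $\psi$ to fail along a suitable direction, in the same spirit as the high-dimensional counterexamples above.

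\textbf{Choice of point and direction.} Following the idea of Proposition~\ref{prop: first prop measure}, we take
\begin{equation}
\vec{x}=\big(p,\underbrace{\tfrac{1-p}{d},\ldots,\tfrac{1-p}{d}}_{d}\big),\qquad \vec{v}=\big(-1,\underbrace{\tfrac1d,\ldots,\tfrac1d}_{d}\big).
\end{equation}
This direction is traceless, so $\|\vec{x}+\lambda\vec{v}\|_1=1$ and $g(\vec{x}+\lambda\vec{v})=H_\alpha$ of a normalized vector. For $\alpha>1$ and $d\to\infty$ the tail contributes $d^{1-\alpha}(1-p+\lambda)^\alpha\to0$, and this holds uniformly together with its first two $\lambda$-derivatives. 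The restricted function therefore converges, in $C^2$ near $\lambda=0$, to
\begin{equation}
\lambda\mapsto \frac{\alpha}{1-\alpha}\log(p-\lambda)=\frac{\alpha}{\alpha-1}\log\frac{1}{p-\lambda}.
\end{equation}
Its second derivative is $\frac{\alpha}{\alpha-1}(p-\lambda)^{-2}>0$, which is a strictly convex direction. For finite but large $d$ the second derivative is still positive, so $g$ is not concave. The case $\alpha=\infty$ is handled in the same way, since $H_\infty(\vec{x}+\lambda\vec{v})=-\log(p-\lambda)$ exactly once $p-\lambda>(1-p+\lambda)/d$. In the language of the earlier remarks, the uniform tail plays the role of a component that ``disappears'' for $\alpha>1$. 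This leaves only the $H_\infty$-like dependence on the largest entry, which has convex $-\log$ behaviour.

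\textbf{Main obstacle.} The main work is justifying the limit interchange, namely that the second derivative at finite $d$ converges to the limiting expression. We will do this by writing $\sum_i(x_i+\lambda v_i)^\alpha=(p-\lambda)^\alpha\big(1+r_d(\lambda)\big)$ with $r_d$, $r_d'$, $r_d''=O(d^{1-\alpha})$ uniformly on a neighbourhood of $0$. Then $\log(1+r_d)$ and its derivatives are $O(d^{1-\alpha})$, which gives the claim. Finally, by Lemma~\ref{lem: conv derivations}, monotonicity of the derivation $H_{0,\alpha}$ requires concavity of $g$. The counterexample thus excludes every $\alpha>1$, leaving $\alpha\in[0,1]$.
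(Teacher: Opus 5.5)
Your proof is correct and follows essentially the paper's route. The paper uses the same head-plus-vanishing-tail construction with $d$ entries $p/d$ and $d^2$ entries $(1-p)/d^2$; this differs from your single entry $p$ plus $d$ entries $(1-p)/d$ only by a $\lambda$-independent additive $\log d$. In both, for $\alpha>1$ the restricted function tends to $\frac{\alpha}{1-\alpha}\log(p-\lambda)$, whose second derivative $-\frac{\alpha}{1-\alpha}p^{-2}$ is positive. Your explicit $O(d^{1-\alpha})$ bounds on $r_d$ and its derivatives justify the limit interchange more carefully than the paper does.
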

 \begin{proof}
 We must show that the measure cannot have support on $\alpha > 1$. By Lemma~\ref{lem: conv derivations}, the derivations must be concave. Thus, we need to show that if the measure has support on $\alpha > 1$, we can construct a convex direction, yielding a counterexample.  

 The point $\vec{x}$ and direction $\vec{v}$ are the same as in Proposition~\ref{prop: first prop measure}. We define for $p \in (0,1)$
\begin{align}
 \vec{x}=\bigg[\underbrace{\frac{p}{d},\ldots,\frac{p}{d}}_{d\,\textrm{times}},\underbrace{\frac{1-p}{d^2},\ldots,\frac{1-p}{d^2}}_{d^2\,\textrm{times}}\bigg] \,,
 \quad 
\vec{v}=\bigg[\underbrace{-\frac{1}{d},\ldots,-\frac{1}{d}}_{d\,\textrm{times}},\underbrace{\frac{1}{d^2},\ldots,\frac{1}{d^2}}_{d^2\,\textrm{times}}\bigg]
 \end{align}
 We have that
 \begin{align}
     \sum_i (x_i+\lambda v_i)^\alpha = d^{1-\alpha}(p-\lambda)^\alpha+d^{2(1-\alpha)}(1-p+\lambda)^\alpha
 \end{align}
 Therefore, the function becomes
 \begin{align}
   &f(\vec{x}+\lambda \vec{v}) =  \frac{\alpha}{1-\alpha}\log{\left(d^{1-\alpha}(p-\lambda)^\alpha+d^{2(1-\alpha)}(1-p+\lambda)^\alpha\right)^\frac{1}{\alpha}} \,,
 \end{align}
 Furthermore, if we define the function
 \begin{align} g(\lambda)=\left((p-\lambda)^\alpha+\varepsilon^{\alpha-1}(1-p+\lambda)^\alpha\right)^\frac{1}{\alpha}
 \end{align}
 and we set $\varepsilon =1/d$, we obtain up to the additive constant factors (that do not affect the derivatives since they are constant)
 that the function becomes
 \begin{align}
 f(\vec{x}+\lambda\vec{v}) = &\frac{\alpha}{1-\alpha}\log{g(\lambda)} \,.
 \end{align}
 Next, we compute the second derivative with respect to $\lambda$ and show that if the condition on the measure in the above proposition is violated, this derivative becomes positive. This yields a convex direction, contradicting concavity. We use the identity for the derivative of the logarithm of a function, namely
 \begin{equation}
 \frac{\d^2}{\d\lambda^2} \log(g(\lambda)) = \frac{g''(\lambda)}{g(\lambda)} - \left( \frac{g'(\lambda)}{g(\lambda)} \right)^2 \,.
 \end{equation}
 The second derivative is
 \begin{align}
     &\frac{\d^2}{\d \lambda^2}f(\vec{x}+\lambda\vec{v}) =\frac{\alpha}{1-\alpha}\left(-\frac{g'(\lambda)^2}{g(\lambda)^2}+\frac{g'' (\lambda)}{g(\lambda)}\right) \,.
 \end{align}
 By taking $\lambda=0$ and the limit $\varepsilon \rightarrow 0$, and defining 
 \begin{equation}
 \beta= \frac{\alpha}{1-\alpha}  \,,
 \end{equation}
we obtain the large $d$ behavior
 \begin{align}
     &\left.\frac{\d^2 }{\d\lambda^2}f(\vec{x}+\lambda\vec{v})\right\vert_{\lambda=0}  \rightarrow -\frac{\beta}{p^2} \,.
 \end{align}
 which, since $\beta \leq 0$, is positive whenever $\beta_1 \neq 0$. This condition is equivalent to the measure having support on $\alpha > 1$, thus yielding the desired convex direction. The case $\alpha=+\infty$ is done in the same way. 
 \end{proof}

We now obtain the following result as a direct corollary of the above result combined with the sufficiency conditions of Proposition \ref{prop: sufficient derivations} and Lemma \ref{lem: conv derivations} in the appendix.

\begin{proposition}\label{prop:Derivations}
The set of derivations of the conditional majorization semiring $S$ at $\|\cdot\|$ is spanned by the quantities $H_{0,\alpha}$ of Definition \ref{def: extremal entropies} with $\alpha\in[0,1]$.
\end{proposition}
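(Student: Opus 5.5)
The plan is to combine three results that are already stated. The target is Proposition~\ref{prop:Derivations}: the monotone derivations at $\|\cdot\|$ (normalized so that $\Delta(u)=1$ on power universals) are exactly the barycentres of $H_{0,\alpha}$ with $\alpha\in[0,1]$.

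\textbf{Reduction to barycentres.} First, I would invoke Proposition~\ref{prop: monotone derivations}. It says that every monotone derivation at $\|\cdot\|$ can, up to the interchangeable normalization $\Delta|_{\mb R_+}=0$ from its Step 1, be written as
\begin{equation}
\Delta(P)=\int_{[0,+\infty]}H_{0,\alpha}(P)\,\d\tau(\alpha)
\end{equation}
for some probability measure $\tau$. Strictly, that argument only uses monotonicity under doubly stochastic maps, via Proposition~\ref{proposition:barycenter entropies}. So the remaining task is to determine which $\tau$ are compatible with monotonicity under the full class of conditionally mixing channels.

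\textbf{Sufficiency.} For every $\alpha\in[0,1]$, Proposition~\ref{prop: sufficient derivations} shows that $H_{0,\alpha}$ is monotone under conditionally mixing channels. It is also a derivation: additivity under $\oplus$ and the Leibniz rule are the computation already done in Step 2 of Proposition~\ref{prop: monotone derivations}. Monotonicity and the derivation property are both preserved under integration against a probability measure. Hence every barycentre of $\{H_{0,\alpha}\}_{\alpha\in[0,1]}$ is a monotone derivation.

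\textbf{Necessity.} Here I would use Lemma~\ref{lem: conv derivations}: monotonicity of $\Delta$ is equivalent to concavity of $\vec x\mapsto\|\vec x\|_1\int H_\alpha(\hat x)\,\d\tau(\alpha)$. I would then rerun the perturbation from Proposition~\ref{prop: prop measure deriv}, using the same point $\vec x$, direction $\vec v$ and parameter $p$, now integrated against $\tau$ instead of a single $\alpha$.

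In the limit $d\to\infty$, the contributions split as follows. Orders $\alpha<1$ enter only through $\log d$-type terms that are linear in $\lambda$, and $\alpha=1$ contributes a bounded negative curvature. Orders $\alpha>1$ contribute approximately $-\frac{\alpha}{1-\alpha}\,\frac{1}{p^2}\,\tau(\d\alpha)>0$. Because a derivation is linear in $\tau$ with no exponential, there are no cross terms, unlike in Proposition~\ref{prop: first prop measure}.

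Letting $p\to0$, the $1/p^2$ term dominates the bounded contributions from $[0,1]$. The second derivative therefore becomes positive whenever
\begin{equation}
\int_{(1,+\infty]}\frac{\alpha}{\alpha-1}\,\d\tau(\alpha)>0,
\end{equation}
that is, whenever $\tau\big((1,+\infty]\big)>0$. This contradicts concavity, so $\operatorname{supp}\tau\subseteq[0,1]$.

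\textbf{Main obstacle.} The delicate step is the limit interchange. Pointwise convergence of the second derivative in $d$ must be justified uniformly in $\alpha$ near $1^{+}$ and near $+\infty$. I would handle this by restricting to a compact subinterval $[1+\eta,M]$ that carries positive $\tau$-mass. On that interval the convergence is uniform. The mass of $\tau$ outside the interval only adds non-negative contributions, and the $[0,1]$ part stays $O(1)$ in $p$. This suffices without Assumption~\ref{assumption:finite integrals}, since the integrand $\alpha/(\alpha-1)$ is bounded on the chosen interval.
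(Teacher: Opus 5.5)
Your reduction via Proposition~\ref{prop: monotone derivations} and your sufficiency step match the paper. There, Proposition~\ref{prop:Derivations} is read off as a corollary of Propositions~\ref{prop: monotone derivations}, \ref{prop: sufficient derivations} and \ref{prop: prop measure deriv} together with Lemma~\ref{lem: conv derivations}, and the necessity computation is only done for a point mass $\tau=\delta_\alpha$. Extending necessity to a general $\tau$ is the right instinct, but your limit analysis for it fails.

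\textbf{The gap.} Write $q_\alpha:=p^\alpha/\big(p^\alpha+d^{1-\alpha}(1-p)^\alpha\big)$. A direct computation gives
\[
\partial_\lambda^2 H_\alpha(\vec x+\lambda\vec v)\Big|_{\lambda=0}=-\alpha\Big[\frac{q_\alpha}{p^2}+\frac{1-q_\alpha}{(1-p)^2}\Big]-\frac{\alpha^2}{1-\alpha}\,\frac{(p-q_\alpha)^2}{p^2(1-p)^2}.
\]
\begin{itemize}
\item For $\alpha<1$ both terms are $\le 0$. As $d\to\infty$ the expression tends to $-\frac{\alpha}{(1-\alpha)(1-p)^2}$. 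So the orders below $1$ do contribute curvature, and it is unbounded as $\alpha\to1^-$.
\item By Fatou, the $[0,1)$ contribution therefore tends to $-\infty$ whenever $\int_{[0,1)}\frac{\alpha}{1-\alpha}\,\mathrm{d}\tau=\infty$. For example, with $\tau=\frac12\,\mathrm{Leb}|_{[0,1]}+\frac12\delta_2$ the total second derivative tends to $-\infty$ at every fixed $p$. Sending $d\to\infty$ before $p\to0$ then gives no contradiction.
\item Restricting to $[1+\eta,M]$ only removes the $(1,+\infty]$ half of Assumption~\ref{assumption:finite integrals}. Your claims that the $[0,1]$ part is $O(1)$ in $p$, and that the assumption is unnecessary, are false as argued. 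Even when that integral is finite, the atom at $1$ contributes $-\tau(\{1\})(\frac1p+\frac1{1-p})=O(1/p)$; this is harmless against $p^{-2}$.
\item Mass in $(1,1+\eta)$ does not contribute non-negatively at finite $d$. As $\alpha\to1^+$ the display tends to $-\frac1p-\frac1{1-p}<0$. One only has a lower bound such as $\partial_\lambda^2H_\alpha\ge-\alpha(\frac1p+\frac1{1-p})-\frac{(\alpha-1)(1-p)^2}{4p^2}$ there, using $q_\alpha\ge p$.
\end{itemize}

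\textbf{How to repair it.} Couple $d$ to $p$ instead of sending $d\to\infty$ first.
\begin{itemize}
\item For $\alpha\in[0,1)$ and $K:=\log\frac{pd}{1-p}\ge0$, one has $\operatorname{logit}(q_\alpha)=\operatorname{logit}(p)-(1-\alpha)K$. Hence $0\le p-q_\alpha\le p\min\{1,(1-\alpha)K\}$, which gives the uniform bound $|\partial_\lambda^2H_\alpha|\le\frac{1}{p(1-p)}+\frac{1+K}{(1-p)^2}$ on $[0,1]$.
\item For $\alpha>1$ one has $1-q_\alpha\le d\,(pd)^{-\alpha}$.
\end{itemize}
Now take $d=e^{1/p}$.
\begin{itemize}
\item The $[0,1]$ part is $O(1/p)$.
\item On $[1+\eta,+\infty]$, $\alpha(1-q_\alpha)\to0$ uniformly, so $p^2\partial_\lambda^2H_\alpha\ge\frac{\alpha}{\alpha-1}-o(1)\ge1-o(1)$ there.
\item After multiplying by $p^2$, the interval $(1,1+\eta)$ costs at most $\frac\eta4+O(p)$.
\end{itemize}
Hence $p^2$ times the second derivative is at least $\tau([1+\eta,+\infty])-\frac\eta4-o(1)$. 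Choose $\eta$ with $\frac\eta4<\tau([1+\eta,+\infty])$, which is possible whenever $\tau((1,+\infty])>0$. Then the second derivative is positive for small $p$, contradicting concavity, and this now genuinely holds without Assumption~\ref{assumption:finite integrals}.
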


\section{Large sample and catalytic conditional majorization}
\label{sec: large sample}
In this section, we present the large-sample and catalytic results, which serve as the main tools for proving the general form of a conditional entropy in Theorem~\ref{th:barycenter conditional entropies} and will also underpin the applications discussed in Section~\ref{sec: Applications}.
In particular, these results follow directly from Theorem~\ref{thm:Fritz2022} on large-sample and catalytic transformations for general semirings, applied to the semiring of conditional majorization, together with the characterization of monotone homomorphisms and derivations established in Section~\ref{sec: semiring of conditional entropies}. 
As shown in the following section (see Theorem~\ref{th:barycenter conditional entropies}), these homomorphisms and derivations coincide, up to a constant and a logarithmic factor, with the extremal conditional entropies that generate the entire family of conditional entropies through convex combinations.
Since conditional entropies, according to Definition~\ref{def: conditional entropy}, require proper normalization and additivity under tensor products, we reformulate the monotone homomorphisms from Proposition~\ref{prop: monotone homorphisms} and the derivations from Proposition~\ref{prop: monotone derivations} as follows.

\begin{definition}
\label{def: extremal entropies}
Let $P_{XY}$ be a probability distribution, $t\in\R\setminus\{0\}$, and $\tau:\mc B\big([0,+\infty]\big)\to\mb R$ be a probability measure. Then, we define
\begin{align}
    &H_{t,\tau}(X|Y)_P = \frac{1}{t}\log \left(\sum_{y \in \Y} P_Y(y)\exp{\left(t\int_{[0,+\infty]}H_\alpha(P_{X|Y=y})\d\tau(\alpha)\right)}\right) \,, \label{eq:TempEnt}\\
    & H_{-\infty,\tau}(X|Y)_P =  \min \limits_{\substack{y \in \mathcal{Y} \\ P_Y( y) > 0}}\int_{[0,+\infty]}H_\alpha(P_{X|Y=y})\d\tau(\alpha) \,, \label{eq:TropEnt}\\
    & H_{0,\alpha}(X|Y)_P =  \sum_{y \in \Y} P_Y(y)H_\alpha(P_{X|Y=y})\,.,\label{eq:DerivEnt} \\
    & H_{+\infty,0}(X|Y)_P =  \max \limits_{\substack{y \in \mathcal{Y} \\ P_Y( y) > 0}} H_0(P_{X|Y=y}) \,. \label{eq:TropEntno}
\end{align}
\end{definition}
Throughout this manuscript, we refer to the conditional entropy \(H_{t,\tau}\) as the bulk conditional entropies, as they are defined for finite values of the parameter \(t\). This distinguishes them from the entropies \(H_{-\infty,\tau}\), \(H_{+\infty,\tau}\), and \(H_{0,\alpha}\), which arise as limiting cases of $H_{t,\tau}$ corresponding to \(t \to -\infty\), \(t \to +\infty\), and \(t \to 0\), respectively.

As we prove later in Theorem~\ref{th:barycenter conditional entropies}, the above quantities encompass all (extremal) conditional entropies for some values of the parameters $t$ and $\tau$. While all extremal entropies necessarily take this form, as is discussed in more detail in Sections~\ref{sec: sufficient conditions measure},~\ref{sec: necessary conditions measures}, and~\ref{sec: set of conditional entropies}, not every choice of the parameters \(t\) and \(\tau\) is admissible. Indeed, for certain parameter choices the resulting conditional entropy may fail to satisfy the axioms of Definition~\ref{def: conditional entropy}, and in particular may violate monotonicity under conditionally mixing channels. See Section \ref{sec: towards complete characterization} where we summarize our current knowledge on the proper parameter sets. Note that, according to Proposition \ref{prop:Derivations}, we already know that $\alpha\in[0,1]$ is the correct parameter region for the quantities $H_{0,\alpha}$ of \eqref{eq:DerivEnt}.
Consequently, we introduce the following sets of parameters and measures.
\begin{definition}
    \label{def: true sets} Let $\D_{\rm bulk}$ denote the set of pairs $(t,\tau)$, with $t\in\mathbb{R}\setminus\{0\}$ and $\tau:\mathcal{B}([0,+\infty])\to[0,1]$ a probability measure, such that $H_{t,\tau}$ in \eqref{eq:TempEnt} is monotone under conditionally mixing channels. Similarly, let $\D_{-\infty}$ denote the set of measures $\tau$ for which $H_{-\infty,\tau}$ in \eqref{eq:TropEnt} is monotone under conditionally mixing channels.
\end{definition}

Next, we present the large-sample and catalytic results for transforming joint probability distributions under conditionally mixing operations, relabeling, and embedding, which jointly define the conditional majorization order.

\begin{theorem}
\label{th: large sample and catalytic}
    Let $P_{XY}$ and $Q_{X'Y'}$ be joint probability distributions. If
    \begin{align}
    & H_{t,\tau}(X|Y)_P < H_{t,\tau}(X'|Y')_Q, \qquad\qquad\qquad   \forall (t,\tau) \in \D_{\rm{bulk}},\label{eq:TempCond} \\
    &H_{-\infty,\tau}(X|Y)_P < H_{-\infty,\tau}(X'|Y')_Q, \; \quad\qquad \;\;\, \forall \tau\in \D_{-\infty},\label{eq:TropCond} \\
    &H_{0,\alpha}(X|Y)_P<H_{0,\alpha}(X'|Y')_Q,  \quad\,\quad\qquad\quad\;\; \forall \alpha\in[0,1]\,,\label{eq:DerivCond} \\
    &H_{+\infty,0}(X|Y)_P < H_{+\infty,0}(X'|Y')_Q\,,
    \end{align}
    then
   \begin{enumerate}[itemsep=0pt]
    \item for sufficiently large \(n \in \mathbb{N}\), it holds that $P_{XY}^{\otimes n} \succeq Q_{X'Y'}^{\otimes n}$.
    \item there exists a joint probability distribution \(R_{X''Y''}\) such that
    \begin{equation}
        P_{XY} \otimes R_{X''Y''} \succeq Q_{X'Y'} \otimes R_{X''Y''}.
    \end{equation}
\end{enumerate}
\end{theorem}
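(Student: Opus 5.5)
The plan is to apply Theorem~\ref{thm:Fritz2022} to the conditional majorization semiring $S$, with $x=[P_{XY}]$ and $y=[Q_{X'Y'}]$, and to translate its hypotheses (i) and (ii) into the entropic inequalities of the statement. The structural prerequisites are already in place. $S$ is zerosumfree and a semidomain: a direct sum is zero only if both summands are, a Kronecker product of nonnegative matrices vanishes only if a factor does, and $0\rleq x\rleq 0$ forces $\|x\|=0$, hence $x=0$. By Proposition~\ref{prop: power universals}, $S$ has polynomial growth with power universal element $u=[(1/2,1/2)^T]$. Lemma~\ref{lem: degenerate and zig-zag} supplies the degenerate homomorphism $\|\cdot\|$ with $d=1$, trivial kernel and property \eqref{eq:surjectivehomomorphismproperties}. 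Both $P$ and $Q$ are normalized, so $\|x\|=\|y\|=1$.

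Next comes the translation of the hypotheses. Note the order reversal: $P\succeq Q$ means $[Q]\rgeq[P]$, so I apply the theorem with $x=[Q]$ and $y=[P]$ and aim for $[Q]^n\rgeq[P]^n$, i.e.\ $P^{\otimes n}\succeq Q^{\otimes n}$.

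By Proposition~\ref{prop: monotone homorphisms}, every nondegenerate monotone homomorphism has one of the forms in \eqref{eq:FinalForm}.
\begin{itemize}
\item For $\mathbb{K}=\R_+$ with $\mu=t\tau$, $t>0$, we get $\Phi=\exp(tH_{t,\tau})$. Then $\Phi(x)>\Phi(y)$ is equivalent to $H_{t,\tau}(Q)>H_{t,\tau}(P)$.
\item For $\R_+^{\rm op}$ with $t<0$, the order is reversed and the sign of $t$ flips again, giving the same inequality.
\item Monotonicity of $\Phi$ in either case means exactly $(t,\tau)\in\D_{\rm bulk}$ (Lemma~\ref{lem: extension to general channels} and the equivalences of the appendix), so \eqref{eq:TempCond} covers all temperate homomorphisms.
\item For $\T\R_+^{\rm op}$, $\Phi=\exp(-|t|H_{-\infty,\tau})$ with $\tau\in\D_{-\infty}$, which is covered by \eqref{eq:TropCond}.
\item For $\T\R_+$, $\Phi=\exp(kH_{+\infty,0})$, which is covered by the last hypothesis.
\end{itemize}
Positive rescalings $k$ or $|t|$ do not affect the strict inequalities. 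I will check that the trivial-kernel requirement holds automatically on these forms.

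By Proposition~\ref{prop: monotone derivations} and Proposition~\ref{prop:Derivations}, every monotone $\|\cdot\|$-derivation normalized by $\Delta(u)=1$ is, after Step~1 of that proof, a barycentre $\int H_{0,\alpha}\,\d\tau(\alpha)$ with $\tau$ a probability measure on $[0,1]$. Strict inequality for every $\alpha\in[0,1]$, namely \eqref{eq:DerivCond}, then gives strict inequality for the barycentre, since the integrand is positive and $\tau$ has mass one. One point needs care: replacing $\Delta$ by $\Delta'$ changes values only by $\Delta(\|\cdot\|)$, and this is equal on $x$ and $y$ because $\|x\|=\|y\|$.

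With (i) and (ii) established, part (a) of Theorem~\ref{thm:Fritz2022} yields a nonzero catalyst $c$ with $c[Q]\rgeq c[P]$. Rescaling $c$ to unit weight, which preserves the order because scalar multiplication is a Kronecker product with a $1\times1$ matrix, gives the distribution $R$ of claim 2. For claim 1, part (b) requires the input element $x=[Q]$ to be power universal, and this is where I expect the main obstacle. By Proposition~\ref{prop: power universals}, $Q$ must have every nonzero column with at least two nonzero entries. I will derive this from the hypothesis on $H_{+\infty,0}$ together with the $H_{-\infty,\tau}$ condition: take $\tau=\delta_0$, which lies in $\D_{-\infty}$ by Proposition~\ref{prop: sufficient tropical}. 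This gives $\min_y H_0(Q_{X|Y=y})>\min_y H_0(P_{X|Y=y})\geq0$, so every conditional column of $Q$ has support of size at least $2$. Hence $[Q]$ is power universal, and part (b) gives $[Q]^n\rgeq[P]^n$ for all large $n$, which is $P^{\otimes n}\succeq Q^{\otimes n}$.
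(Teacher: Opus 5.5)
Your proposal is correct and follows the paper's route. You apply Theorem~\ref{thm:Fritz2022} with the order reversed ($x=[Q]$, $y=[P]$), identify the nondegenerate monotone homomorphisms and normalized derivations with $H_{t,\tau}$, $H_{-\infty,\tau}$, $H_{+\infty,0}$ and $H_{0,\alpha}$, and check that $Q$ is power universal; the only difference is that you certify power universality with $\tau=\delta_0$ (i.e.\ $\min_y H_0(Q_{X|Y=y})>0$) instead of the paper's Hayashi min-entropy $\tau=\delta_\infty$ (both lie in $\widetilde{\D}_{-\infty}\subseteq\D_{-\infty}$), and the $H_{+\infty,0}$ hypothesis you invoke there is not needed for that step.
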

\begin{proof}
The result follows directly from an application of the general large-sample theorem for preordered semirings stated in Theorem~\ref{thm:Fritz2022} along with the specific characterization of monotone homomorphisms and derivations for the conditional majorization semiring in Proposition~\ref{prop: monotone homorphisms} and~\ref{prop: monotone derivations}. Indeed, the entropies appearing in the assumptions coincide with the homomorphisms and derivations up to additive constants and logarithmic factors, which preserve the monotonicity properties. Especially for the derivations, note that, according to Proposition \ref{prop:Derivations}, all derivations at $\|\cdot\|$ are spanned by $H_{0,\alpha}$, $\alpha\in[0,1]$, so having strict inequalities over all these derivations is equivalent to having the strict inequalities of \eqref{eq:DerivCond}. Hence, by appropriately accounting for the direction of the inequalities, the assumptions of the theorems coincide with those required in Theorem~\ref{thm:Fritz2022} on the ordering of homomorphisms and derivations.

In addition, according to Theorem~\ref{thm:Fritz2022}, we need to verify that the input distribution $Q_{X'Y'}$ is power universal. We note that here that \(Q_{X'Y'}\) is the input distribution, rather than \(P_{XY}\), since the preorder of the semiring \(\rgeq\) is reversed relative to the conditional majorization \(\succeq\) according to the definition of the preorder of the conditional majorization semiring in item~\ref{item: preorder} of Section~\ref{sec: semiring of conditional entropies}.
The fact that $Q_{X'Y'}$ is power universal is already guaranteed by the above assumptions on the strict ordering of entropies. Indeed, if $Q_{X'Y'}$ were not power universal, according to Proposition~\ref{prop: power universals}, there would exist a column of $Q_{X'Y'}$ with exactly one nonzero entry. 
In this case, the Hayashi conditional entropy, which is an element of the set \(\mathcal{D}_{-\infty}\)—a well-known fact in the literature (see, e.g., \cite{gour2024inevitability})— would be zero for $Q_{X'Y'}$. Indeed,  
\begin{equation}
    H^{\rm{H}}_{\infty}(X'|Y')_Q = \min_{y \in \mathcal{Y}'}  H_{\infty}\bigl(Q_{X'|Y'=y'}\bigr) \, = 0.
\end{equation}
Since $ H^{\rm{H}}_{\infty}(X|Y)_P\geq0$, the strict inequalities assumed in the theorem imply that $Q_{X'Y'}$ must be power universal.
\end{proof}

 \section{The set of conditional entropies}
\label{sec: set of conditional entropies}
In this section, we present the main result of our work, namely the general characterization of conditional entropies in Theorem \ref{th:barycenter conditional entropies}. Specifically, we show that any conditional entropy defined as a function satisfying the axioms in Definition~\ref{def: conditional entropy} can be expressed as a (generalized) convex combination of the entropies $H_{t,\tau}$ and its limiting forms $H_{0,\alpha}$, $H_{-\infty,\tau}$ and $H_{+\infty,0}$ defined in~\ref{def: extremal entropies}. In this sense, the conditional entropies $H_{t,\tau},H_{0,\alpha},H_{-\infty,\tau}$ and $H_{+\infty,0}$ constitute the extremal points of the set of the conditional entropies; for the extremality, see Theorem 14 of \cite{haapasalo_inprep}. As discussed in Section~\ref{sec: necessary conditions measures}, these quantities fail to be monotone under conditionally mixing channels for some parameter values. Consequently, they do not qualify as conditional entropies—according to Definition~\ref{def: conditional entropy} for arbitrary choices of the parameters. Our description of all conditional entropies has to be stated in terms of the abstract parameter sets $\mathcal{D}_{\mathrm{bulk}}$, $\mathcal{D}_{-\infty}$, and $\mathcal{D}_{0}$ whose exact identity is still unknown. However, we discuss substantial progress towards their full characterization in Section~\ref{sec: towards complete characterization}. In Section \ref{sec: literature}, we review more in depth how the conditional entropies discussed in the literature and also mentioned in the Introduction relate to our general framework.

\subsection{The general form of conditional entropies}

Let us note that the set of conditional entropies characterized in Definition \ref{def: conditional entropy} is convex. This means that, whenever $\mb H_1$ and $\mb H_2$ are conditional entropies (i.e., satisfy the postulates of Definition \ref{def: conditional entropy}) and $\lambda\in[0,1]$, then also $\mb H=\lambda\mb H_1+(1-\lambda)\mb H_2$,
\begin{equation}
\mb H(X|Y)_P=\lambda\mb H_1(X|Y)_P+(1-\lambda)\mb H_2(X|Y)_P\qquad\forall\, P=P_{XY}\,,
\end{equation}
is a conditional entropy. Especially all convex combinations of the conditional entropies $H_{t,\tau}$, $(t,\tau)\in\mc D_{\rm bulk}$, and $H_{-\infty,\tau}$, $\tau\in\mc D_{-\infty}$, and their pointwise limits are also conditional entropies. This fact also extends to the generalized convex combinations of these quantities, i.e.,\ barycentres over the set of these special conditional entropies with respect to any (inner and outer regular) probability measure. For the latter statement, we need the fact that the set of these special conditional entropies is a compact Hausdorff space with respect to the topology of pointwise convergence \cite[Proposition 8.5]{fritz2023abstractII}. The following theorem characterizing the total set of conditional entropies as described in Definition \ref{def: conditional entropy} tells us that any conditional entropy can be expressed as a generalized convex combination of the special conditional entropies characterized in this work.

\begin{theorem}[General form of conditional entropies]
\label{th:barycenter conditional entropies}
Let $\H$ be a conditional entropy in the sense of Definition~\ref{def: conditional entropy}. 
Then $\H$ can be represented as a generalized convex combination of the conditional entropies
$H_{t,\tau}, H_{-\infty,\tau}, H_{0,\alpha}$ and $H_{+\infty,0}$,
introduced in Definition~\ref{def: extremal entropies} with parameters $(t,\tau)\in\mathcal{D}_{\mathrm{bulk}}$, $\tau\in\mathcal{D}_{-\infty}$, and $\alpha\in[0,1]$, respectively, for the first three families. 
\end{theorem}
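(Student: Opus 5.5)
The plan is to follow the proof of Proposition~\ref{proposition:barycenter entropies}, using Theorem~\ref{th: large sample and catalytic} in place of Lemma~\ref{lemma:Jensen}. Let $K$ be the index set of all admissible extremal quantities: pairs $(t,\tau)\in\mc D_{\rm bulk}$, measures $\tau\in\mc D_{-\infty}$, $\alpha\in[0,1]$ (for $H_{0,\alpha}$), and the single point $H_{+\infty,0}$. We equip $K$ with the topology of pointwise convergence on joint distributions. By \cite[Proposition 8.5]{fritz2023abstractII}, $K$ is compact Hausdorff; equivalently, it is the spectrum of monotone homomorphisms and derivations, suitably normalized, with limits $t\to0,\pm\infty$ included. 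For each joint distribution $P=P_{XY}$ we define ${\rm ev}_P\in C(K)$ by ${\rm ev}_P(k)=H_k(X|Y)_P$. Continuity of ${\rm ev}_P$ follows directly from the definition of this topology.

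\textbf{Step 1: monotonicity transfer.} We show that ${\rm ev}_P\le{\rm ev}_Q$ pointwise implies $\H(X|Y)_P\le\H(X'|Y')_Q$.
First suppose the inequality is strict everywhere. Theorem~\ref{th: large sample and catalytic} then gives $P^{\otimes n}\succeq Q^{\otimes n}$ for large $n$. Additivity and monotonicity of $\H$ yield $n\H(P)\le n\H(Q)$.
For the non-strict case, let $R=(1/2,1/2)$ be a column with trivial $Y$. Every $H_k$ is additive and normalized, so $H_k(R)=1$ for all $k$. Hence
\[
H_k(P^{\otimes m})=mH_k(P)<mH_k(Q)+1=H_k(Q^{\otimes m}\otimes R).
\]
Applying the strict case gives $m\H(P)\le m\H(Q)+1$; letting $m\to\infty$ gives $\H(P)\le\H(Q)$.

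\textbf{Step 2: building a positive functional.} Set $F({\rm ev}_P)=\H(P)$. This is well defined by Step 1, and it is additive under $\otimes$ because ${\rm ev}_{P\otimes Q}={\rm ev}_P+{\rm ev}_Q$. We extend $F$ to the differences ${\rm ev}_P-{\rm ev}_Q$, and then to rational and real multiples of the span $V$, as in \cite{mu21_renyi,haapasalo_inprep}.
Positivity on $V$ follows from Step 1: if ${\rm ev}_P-{\rm ev}_Q\ge0$, then $\H(P)\ge\H(Q)$. Rational multiples are handled via tensor powers, since $a\,{\rm ev}_P-b\,{\rm ev}_Q$ corresponds to ${\rm ev}_{P^{\otimes a}}-{\rm ev}_{Q^{\otimes b}}$.
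The constant function $1={\rm ev}_R$ lies in $V$ and dominates any $f\in C(K)$ up to the factor $\|f\|_\infty$. Kantorovich's extension theorem therefore gives a positive linear functional $I$ on $C(K)$ extending $F$. The Riesz--Markov--Kakutani theorem then yields a regular Borel measure $\nu$ on $K$ with
\[
\H(P)=\int_K H_k(P)\,d\nu(k).
\]
Finally, normalization $\H(R\otimes Q_Y)=1$ combined with $H_k(R\otimes Q_Y)=1$ forces $\nu(K)=1$, so $\H$ is a generalized convex combination of the extremal entropies.

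\textbf{Main obstacle: compactness of $K$.} The delicate point is ensuring that $K$ is compact and that every ${\rm ev}_P$ is continuous on it. The bulk parameters $(t,\tau)$ range over a non-compact set, and their limits $t\to0$, $t\to-\infty$, $t\to+\infty$ must be exactly the three boundary families.
I would argue this by identifying $K$ with the compact Hausdorff space of (normalized logarithms of) monotone homomorphisms and derivations provided by \cite{fritz2023abstractII}. I would then check, using Propositions~\ref{prop: monotone homorphisms} and~\ref{prop: monotone derivations}, that its points are exactly the quantities in Definition~\ref{def: extremal entropies}.
The $t\to+\infty$ limits need separate care. Only $H_{+\infty,0}$ survives among them, by Step 4 of Proposition~\ref{prop: monotone homorphisms}, and limits of admissible bulk entropies stay monotone, so no other limit points appear.
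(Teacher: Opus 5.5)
Your argument follows the same route as the paper's sketch: Step~1 via Theorem~\ref{th: large sample and catalytic} plus padding with $R=(1/2,1/2)$, then Kantorovich and Riesz--Markov--Kakutani. Steps~1 and~2 are fine. What fails as you state it is the compactness of $K$, because the $t\to0$ boundary of the bulk family is \emph{not} the set of extremal $H_{0,\alpha}$.

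The column values $a_y=\int H_\alpha(P_{X|Y=y})\,\mathrm{d}\tau(\alpha)$ lie in $[0,\log|\mathcal X|]$. Hence $\frac1t\log\sum_yP_Y(y)e^{ta_y}\to\sum_yP_Y(y)a_y$, so $H_{t,\tau}\to\int H_{0,\alpha}\,\mathrm{d}\tau(\alpha)$ pointwise. For example, take $t_n=-1/n$ and $\tau=\frac12(\delta_0+\delta_{1/2})$. These are admissible, since $(t_n,\tau)\in\widetilde{\mathcal D}_{\rm bulk}\subseteq\mathcal D_{\rm bulk}$. The limit $\frac12(H_{0,0}+H_{0,1/2})$ is not in $K$, for two reasons:
\begin{itemize}
\item Restricted to trivial $Y$ it equals $\frac12(H_0+H_{1/2})$, which is no single $H_\alpha$. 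This rules out every $H_{0,\alpha}$ and $H_{+\infty,0}$.
\item On two-column inputs it averages the column values arithmetically. Any $H_{t',\tau'}$ with $t'\neq0$, or any $H_{-\infty,\tau'}$, with the same trivial-$Y$ restriction would instead give a strict exponential mean or a minimum.
\end{itemize}
So $K$ is not closed in the Hausdorff topology of pointwise convergence, hence not compact. Your planned check that the limit points are exactly the families of Definition~\ref{def: extremal entropies} would therefore fail. Relatedly, the normalized monotone $\|\cdot\|$-derivations form a convex set, because the Leibniz rule is linear in $\Delta$. The ``spectrum'' you invoke thus contains all their barycentres, not only the $H_{0,\alpha}$.

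The repair is cheap.
\begin{itemize}
\item Work on the pointwise closure $\overline K$. It is compact Hausdorff by Tychonoff, since $0\le H_k(X|Y)_P\le\log|\mathcal X|$ for every $k$.
\item The finite-$t$ and $t\to\pm\infty$ limits stay in $K$, as you argued. The only new points are $t\to0$ limits, i.e.\ monotone normalized derivations.
\item By Proposition~\ref{prop:Derivations}, these new points lie in $D=\{\int_{[0,1]}H_{0,\alpha}\,\mathrm{d}\tau(\alpha)\}$, with $\tau$ ranging over probability measures on $[0,1]$. Conversely $D\subseteq\overline K$: let $t\to0^-$ in case 2(a).
\item Step~1 holds on $\overline K$ verbatim, since an inequality on $\overline K$ restricts to $K$. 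Also ${\rm ev}_R\equiv1$ on $\overline K$, so Step~2 yields a probability measure $\nu$ on $\overline K$.
\item $D$ is compact and convex in the locally convex space of real functions on joint distributions: it is the affine continuous image of the weak$^*$-compact set of probability measures on $[0,1]$. By existence of barycentres, the $\nu$-integral over $D$ equals $\nu(D)\int H_{0,\alpha}\,\mathrm{d}\tau_0(\alpha)$ for some $\tau_0$.
\end{itemize}
That last expression is again a generalized convex combination of the $H_{0,\alpha}$, $\alpha\in[0,1]$, which gives the theorem as stated. The paper's own sketch simply asserts compactness of this set by citing \cite[Proposition 8.5]{fritz2023abstractII}. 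So this is a point to make explicit, not a departure from its route.
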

We do not give a full proof here, as this result is proven analogously to Proposition \ref{proposition:barycenter entropies} and it is a special case of Theorem 7 of \cite{haapasalo_inprep}. Let us however give a brief overview: Let us denote by $\mc H$ the set of conditional entropies $H_{t,\tau}$, $H_{-\infty,\tau}$, $H_{0,\alpha}$, and $H_{+\infty,0}$ as defined in \eqref{eq:TempEnt}, \eqref{eq:TropEnt}, and \eqref{eq:DerivEnt} of Definition~\ref{def: extremal entropies}, where the parameters and measures satisfy the required monotonicity conditions. According to Proposition 8.5 of \cite{fritz2023abstractII}, this set is a compact Hausdorff space when equipped with the topology of pointwise convergence. Just as in the proof of Proposition \ref{proposition:barycenter entropies}, now using Theorem \ref{th: large sample and catalytic}, we may easily prove that
\begin{equation}
\Gamma(X|Y)_P \leq \Gamma(X'|Y')_Q,
\;\; \forall \Gamma \in \mc H \;\; \implies \;\; \H(X|Y)_P \leq \H(X'|Y')_Q, \;\; \forall \H \;\text{in Definition~\ref{def: conditional entropy}} \,.
\end{equation}
Using this as a stepping stone, we may define a positive linear functional $G$ on the vector subspace of the set $C(\mc H)$ of continuous real functions on the compact Hausdorff $\mc H$ generated by the evaluation functions ${\rm ev}_{P}$, ${\rm ev}_P(\Gamma)=\Gamma(X|Y)_P$, through $G({\rm ev}_P)=\mb H(X|Y)_P$. Using Kantorovich's theorem we may extend $G$ into a positive linear functional $I:C(\mc H)\to\R$ and the Riesz-Markov-Kakutani theorem implies the existence of a finite positive measure $\nu:\mc B(\mc H)\to\R_+$ (Borel $\sigma$-algebra of the compact Hausdorff space $\mc H$) which is inner and outer regular such that
\begin{equation}
\mb H(X|Y)_P=\int_{\mc H}\Gamma(X|Y)_P\,\mathrm{d}\nu(\Gamma)\,.
\end{equation}
This measure is easily seen to be a probability measure.

We now know that all conditional entropies as defined in Definition \ref{def: conditional entropy} are spanned by the conditional entropies of Definition \ref{def: extremal entropies} with the parameter ranges of Definition \ref{def: true sets}. We also know that the conditional entropies $H_{t,\tau}$, $(t,\tau)\in\mc D_{\rm bulk}$, $H_{-\infty,\tau}$, $\tau\in\mc D_{-\infty}$, $H_{+\infty,0}$, and $H_{0,\alpha}$, $\alpha\in[0,1]$, are the extreme points of the convex set of conditional entropies. Currently, we are left with some uncertainty of the exact characterization of the parameter sets $\mc D_{\rm bulk}$ and $\mc D_{-\infty}$. However, we subsequently discuss steps towards proper identification of these sets.


\bigskip

\subsection{Towards a complete characterization of the sets $\D_{\rm bulk}$ and $\D_{-\infty}$}
\label{sec: towards complete characterization}
In this section, we study the sets $\D_{\rm bulk}$, $\D_{-\infty}$, and $\D_0$ introduced in Definition~\ref{def: true sets} for which the conditional entropies $H_{t,\tau}$, $H_{-\infty,\tau}$, and $H_{0,\tau}$ defined in Definition~\ref{def: extremal entropies} are monotone under conditionally mixing channels. This monotonicity property constitutes the key axiom among those required for a conditional entropy according to Definition~\ref{def: conditional entropy}, as the remaining axioms are readily verified.
In particular, we provide a comprehensive discussion of the results obtained in Sections~\ref{sec: sufficient conditions measure} and~\ref{sec: necessary conditions measures} concerning the parameter regimes for which these quantities are monotone, and we highlight remaining open questions.

The results of Section~\ref{sec: sufficient conditions measure} yield sufficient conditions on the parameters under which the entropies $H_{t,\tau}$, $H_{-\infty,\tau}$, and $H_{0,\alpha}$ belong to the sets $\mathcal{D}_{\mathrm{bulk}}$, $\mathcal{D}_{-\infty}$, and $\mathcal{D}_{0}$, respectively. Accordingly, we introduce below the sets $\widetilde{\mathcal{D}}_{\mathrm{bulk}}$, $\widetilde{\mathcal{D}}_{-\infty}$, and $\widetilde{\mathcal{D}}_{0}$, defined as the collections of parameters satisfying these sufficient conditions. In addition, the results of Section~\ref{sec: necessary conditions measures} provide a collection of necessary conditions for the sets $\mathcal{D}_{\mathrm{bulk}}$, $\mathcal{D}_{-\infty}$, and $\mathcal{D}_{0}$.
These results show that, under additional assumptions—including the particularly relevant case of discrete measures with finite support—the criteria defining the set $\widetilde{\mathcal{D}}_{\mathrm{bulk}}$ are not only sufficient but also necessary, yielding
\(\widetilde{\mathcal{D}}_{\rm bulk} = \mathcal{D}_{\rm bulk}\). Moreover, for \(H_{0,\tau}\), these necessary conditions provide an exact characterization, demonstrating that \(\widetilde{\mathcal{D}}_0 = \mathcal{D}_0\) without any additional assumptions. 
Finally, we conjecture that the sufficient conditions we have established are, in fact, generally necessary.

We begin by formally defining the sets $\widetilde{\D}_{\rm bulk}$, $\widetilde{\D}_0$, and $\widetilde{\D}_{-\infty}$.
\begin{definition}
\label{def: D_bulk}
Let \(\widetilde{\D}_{\rm{bulk}}\) be the set of pairs $(t,\tau)$ consisting of $t\in\R\setminus\{0\}$ and probability measures $\tau:\mc B\big([0,+\infty]\big)\to\mb R$ satisfying one of the following conditions
\begin{enumerate}[itemsep=0pt,topsep=5pt]
    \item $t > 0$, $\;\textup{supp}(\tau) \subseteq [0,1]$, and
    \begin{equation}\label{eq:Itaas}
    \int_{[0,+\infty]}\frac{\alpha}{1-\alpha}\,\mathrm{d}\tau(\alpha)\leq\frac{1}{t}
    \end{equation}
    or
    \item $t < 0$ and
    \begin{itemize}
    \item[(a)] $\;\textup{supp}(\mu) \subseteq [0,1]$ or
    \item[(b)] $\;\operatorname{supp}(\tau) \subseteq [0,1] \cup \{\alpha_*\}$ for some $\alpha_* \in (1, +\infty]$ so that \eqref{eq:Itaas} is satisfied.
    \end{itemize}
\end{enumerate}
\end{definition}
In the integral~\eqref{eq:Itaas}, we adopt the convention that it equals to $+\infty$ if the measure $\tau$ has support at $\alpha=1$.
Hence, while in the first case of  $t<0$, the measure $\tau$ may have support at $\alpha=1$, in the second case of \(t < 0\) and for \(t > 0\) the measure cannot have support at \(\alpha = 1\). Indeed, our convention implies that the integral diverges, and consequently, the bound in \eqref{eq:IntegralCond} would be violated.

\begin{definition}
\label{def: D_infty}
 Let \(\widetilde{\D}_{-\infty}\) be the set of probability measures $\tau:\mc B\big([0,+\infty]\big)\to\mb R_+$ satisfying one of the following conditions
    \begin{enumerate}[itemsep=0pt,topsep=5pt]
    \item  $\;\textup{supp}(\tau) \subseteq [0,1]$.
    \item $\;\operatorname{supp}(\tau) \subseteq [0,1] \cup \{\alpha_*\}$ for some $\;\alpha_* \in (1, +\infty]$ so that $\;\displaystyle\int_{[0,+\infty]} \frac{\alpha}{1 - \alpha} \, \mathrm{d}\tau(\alpha) \leq 0$.
\end{enumerate}
\end{definition}

In the following proposition, we collect the results on the necessary and sufficient conditions for monotonicity derived in Sections~\ref{sec: sufficient conditions measure} and~\ref{sec: necessary conditions measures}.
\begin{proposition}
\label{prop: necessary conditions}
The sets of parameters satisfy $\widetilde{\mathcal{D}}_{\rm bulk} \subseteq \mathcal{D}_{\rm bulk}$ and 
$\widetilde{\mathcal{D}}_{-\infty} \subseteq \mathcal{D}_{-\infty}$. 
Moreover, if we impose the additional constraints on the measure $\tau$
\begin{equation}
\label{eq: finite integrals}
\left|\int_{[0,1)} \frac{\alpha}{1-\alpha}\, \mathrm{d}\tau(\alpha)\right|\,,
\left|\int_{(1,+\infty]} \frac{\alpha}{1-\alpha}\, \mathrm{d}\tau(\alpha)\right| < +\infty,
\end{equation}
then $t$ and $\tau$ must satisfy the conditions characterizing $\widetilde{\mathcal{D}}_{\rm bulk}$ for $(t,\tau)\in\mathcal{D}_{\rm bulk}$.
\end{proposition}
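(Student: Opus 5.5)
The proof will assemble results already established in Sections~\ref{sec: sufficient conditions measure} and~\ref{sec: necessary conditions measures}, with the bridge between convexity and monotonicity supplied by Appendix~\ref{app: convexity and monotonicity}.

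\textbf{Sufficiency for the bulk set.} Take $(t,\tau)\in\widetilde{\D}_{\rm bulk}$. Proposition~\ref{prop: sufficient conditions} shows that the column function $\vec x\mapsto\|\vec x\|_1\exp\big(t\int H_\alpha(\hat x)\,\d\tau\big)$ is concave when $t>0$ and convex when $t<0$. By the additive form~\eqref{eq:TemperateForm}, $\Phi_{t,\tau}$ is a sum of such column functions, so it inherits the same concavity or convexity as a function of the joint matrix. Lemma~\ref{lem: extension to general channels} from the appendix then converts this property into monotonicity of $\Phi_{t,\tau}$ under channels acting on $Y$. The monotonicity should be nondecreasing for $t>0$ and nonincreasing for $t<0$. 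Section~\ref{sec: monotonicty doubly stochastic} supplies monotonicity under conditional doubly stochastic maps on $X$. The map $\Phi\mapsto \frac1t\log\Phi$ is increasing for $t>0$ and decreasing for $t<0$. In both cases it turns the monotonicity of $\Phi_{t,\tau}$ into monotonicity of $H_{t,\tau}$ in the correct direction, which gives $(t,\tau)\in\D_{\rm bulk}$.

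\textbf{Sufficiency for $\D_{-\infty}$.} The inclusion $\widetilde{\D}_{-\infty}\subseteq\D_{-\infty}$ is exactly Proposition~\ref{prop: sufficient tropical}. That result takes the pointwise limit $t\to-\infty$ in~\eqref{eq: limit tropical}. The limit is legitimate because, whenever $\int\frac{\alpha}{1-\alpha}\d\tau\le 0$, the pair $(t,\tau)$ lies in case 2(b) of $\widetilde{\D}_{\rm bulk}$ for every $t<0$, and non-strict inequalities survive pointwise limits.

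\textbf{Necessity under~\eqref{eq: finite integrals}.} Suppose $(t,\tau)\in\D_{\rm bulk}$ and~\eqref{eq: finite integrals} holds. Put $\mu=t\tau$; then Assumption~\ref{assumption:finite integrals} holds for $\mu$. The appendix equivalence gives the reverse direction: monotonicity of $H_{t,\tau}$, and hence of $\Phi_{t,\tau}$, forces concavity of the column function when $t>0$ and convexity when $t<0$. For $t>0$, Proposition~\ref{prop: first prop measure} yields $\operatorname{supp}\tau\subseteq[0,1]$ together with $\int\frac{\alpha}{1-\alpha}\d\tau\le 1/t$, which is case 1. For $t<0$ we split into three cases.
\begin{itemize}
\item If $\tau(\{1\})>0$, Proposition~\ref{prop: prop no alpha=1} gives $\operatorname{supp}\tau\subseteq[0,1]$, which is case 2(a).
\item If $\tau(\{1\})=0$ and $\int\frac{\alpha}{1-\alpha}\d\mu<1$, Proposition~\ref{prop: second prop measure} gives $\operatorname{supp}\tau\subseteq[0,1]$, again case 2(a).
\item If $\tau(\{1\})=0$ and $\int\frac{\alpha}{1-\alpha}\d\mu\ge1$, Proposition~\ref{prop: third prop measure} restricts the support to $[0,1]\cup\{\alpha_*\}$. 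Dividing by $t<0$ turns $\int\frac{\alpha}{1-\alpha}\d\mu\ge1$ into $\int\frac{\alpha}{1-\alpha}\d\tau\le 1/t$, which is case 2(b); if $\alpha_*=1$, we are back in 2(a).
\end{itemize}

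\textbf{Main obstacle.} The delicate step is the reverse direction of the convexity–monotonicity equivalence used in the necessity argument. The counterexample vectors in Propositions~\ref{prop: second prop measure} and~\ref{prop: third prop measure} are unnormalized and are perturbed along non-traceless directions. To handle this, I would invoke the two remarks following Proposition~\ref{prop: second prop measure}. The idea is to rescale the point and add an auxiliary $Y$-column that absorbs the change in total weight, so the non-traceless perturbation becomes a genuine convex combination of joint distributions. A failure of convexity then yields an explicit conditionally mixing channel that violates monotonicity.
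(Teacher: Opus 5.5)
Your proposal takes the same route as the paper. The paper's proof also just assembles Propositions~\ref{prop: sufficient conditions} and~\ref{prop: sufficient tropical} for the inclusions, and Propositions~\ref{prop: first prop measure}, \ref{prop: second prop measure} and~\ref{prop: third prop measure} for necessity, through the convexity--monotonicity equivalences of Appendix~\ref{app: convexity and monotonicity}. Your case split for $t<0$ is more complete than the paper's write-up. The paper does not cite Proposition~\ref{prop: prop no alpha=1}, yet the case $\tau(\{1\})>0$ falls outside the hypotheses of Propositions~\ref{prop: second prop measure} and~\ref{prop: third prop measure} and needs it.

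Two slips should be fixed.

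\textbf{The citation in the sufficiency step.} Passing from concavity (resp.\ convexity) of $\Phi_{t,\tau}$ to monotonicity under channels on $Y$ is the ``if'' direction of Proposition~\ref{prop:convconc}. Lemma~\ref{lem: extension to general channels} is the next step: it writes a general conditionally mixing channel as $Y$-channels composed with a conditional doubly stochastic map, so that this monotonicity combines with Section~\ref{sec: monotonicty doubly stochastic}.

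\textbf{The tropical limit.} The condition $\int\frac{\alpha}{1-\alpha}\,\mathrm{d}\tau\le 0$ does \emph{not} put $(t,\tau)$ in case 2(b) for every $t<0$, because 2(b) requires the integral to be at most $1/t<0$. For example, $\tau=\delta_2$ with $t=-1/10$ fails. Write $I$ for the integral.
\begin{itemize}
\item If $I<0$, the condition holds for $t\le 1/I$, which is enough for the limit $t\to-\infty$.
\item If $I=0$, it holds for no $t<0$, so you need one more approximation. Take $\tau_\epsilon=(1-\epsilon)\tau+\epsilon\,\delta_{\alpha_*}$: it has the same support and strictly negative integral. Moreover $H_{-\infty,\tau_\epsilon}\to H_{-\infty,\tau}$ pointwise as $\epsilon\to0$, since R\'enyi entropies on a fixed alphabet are bounded.
\end{itemize}
The paper's proof of Proposition~\ref{prop: sufficient tropical} also skips this boundary case.

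\textbf{The obstacle you flag is not really one.} The ``only if'' direction of Proposition~\ref{prop:convconc} already holds for unnormalized $P,Q$ of equal size. The chain~\eqref{eq:affinechain} preserves total weight, and $\Phi_{t,\tau}$ is positively homogeneous of degree one. So concavity or convexity holds on the whole cone, and second derivatives along non-traceless directions are legitimate tests without the auxiliary-column construction.
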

\begin{proof}
    The equality \(\widetilde{\mathcal{D}}_{0} = \mathcal{D}_{0}\) follows from Propositions~\ref{prop: sufficient derivations} and~\ref{prop: prop measure deriv}, together with the equivalence between convexity and monotonicity properties discussed in Lemma~\ref{lem: conv derivations} of Appendix~\ref{app: convexity and monotonicity}.
 Similarly, the inclusions $\widetilde{\mathcal{D}}_{\rm bulk} \subseteq \mathcal{D}_{\rm bulk}$, 
$\widetilde{\mathcal{D}}_{-\infty} \subseteq \mathcal{D}_{-\infty}$ follow from the sufficient conditions derived in Propositions~\ref{prop: sufficient conditions} and~\ref{prop: sufficient tropical}. The last statement of the above proposition 
is a consequence of Propositions~\ref{prop: first prop measure},~\ref{prop: second prop measure},~\ref{prop: third prop measure}, while the finiteness of the integrals in~\eqref{eq: finite integrals} is a consequence of Assumption~\ref{assumption:finite integrals} included in the propositions.
\end{proof}
In particular, we note that whenever $\tau$ is a discrete measure supported on a finite number of points, the condition in equation~\eqref{eq: finite integrals} is satisfied, and hence for this case we obtain a tight characterization of the set $\D_{\mathrm{bulk}}$.
The counterexamples constructed and analyzed in Section~\ref{sec: necessary conditions measures} strongly suggest the following conjecture.

\begin{conjecture}
\label{conj; true sets}
We propose that $\widetilde{\D}_{\rm bulk}=\D_{\rm bulk}$ and $\widetilde{\D}_{-\infty}=\D_{-\infty}$.
\end{conjecture}

We end this section by emphasizing that Theorem~\ref{th:barycenter conditional entropies} provides the explicit form of a conditional entropy. This result regarding their structure holds independently of the validity of the above conjecture and, in particular, does not rely on the assumption stated in equation~\eqref{eq: finite integrals} for the necessity of the monotonicity, which remains an important open problem.

\subsection{Conditional entropies in the literature}
\label{sec: literature}
In this section, we review in detail several conditional entropies that have been introduced in the literature and show how each of these quantities arises as a special case of the general form established in Theorem~\ref{th:barycenter conditional entropies}.

The conditional Shannon entropy $H(X|Y)$,
\begin{equation}
    H(X|Y)_P=-\sum_{x \in \X,y\in\Y} P_{XY}(x,y)\log{\frac{P_{XY}(x,y)}{P_Y(y)}} \,,
\end{equation}
is equal to \(H_{0,\alpha}\) for \(\alpha = 1\).
The Hayashi conditional entropy introduced by Hayashi and Skorić et al.~\cite{hayashi2011exponential,vskoric2011sharp}
\begin{equation}
H^\text{H}_{\alpha}(X|Y)_P=\frac{1}{1-\alpha}\log{\Big(\sum\nolimits_{x\in\X,y\in\Y} P_Y(y) P_{X|Y}(x|y)^\alpha\Big)} \,,
\end{equation}
arises as a special case of $H_{t,\tau}$ when $\tau$ is the point at $\alpha$ and $t = 1 - \alpha$. The Arimoto Rényi conditional entropy~\cite{arimoto1977information} 
\begin{align}
H^\text{A}_{\alpha}(X|Y)_P=\frac{\alpha}{1-\alpha}\log{\bigg(\sum_{y\in \Y} P_Y(y)\Big(\sum_{x \in \X} P_{X|Y}(x|y)^\alpha\Big)^\frac{1}{\alpha}\bigg)} \,,
\end{align}
is recovered similarly as a special case of $H_{t,\tau}$ when $\tau$ is the point measure at $\alpha$ and $t = \frac{1 - \alpha}{\alpha}$. A two-parameter family that interpolates between the Arimoto and Hayashi conditional entropies was introduced in~\cite{hayashi2016equivocations} and later investigated in detail in~\cite{rubboli2024quantum}
\begin{align}
\label{eq: H alpha beta}
H_{\alpha,\beta}(X|Y)_P=\frac{\alpha}{1-\alpha}\frac{1}{\beta}\log{\bigg(\sum_{y \in \Y} P_Y(y)\Big(\sum_{x \in \X} P_{X|Y}(x|y)^\alpha\Big)^\frac{\beta} {\alpha}\bigg)} \,.
\end{align}
This quantity is a special case of $H_{t,\tau}$ when $\tau$ is the point measure at $\alpha$ and $t = \frac{1 - \alpha}{\alpha} \beta$. More recently, this quantity has been studied in~\cite{li2025two}.

More exotic quantities have appeared in the literature. In~\cite{tan2018analysis}, the authors defined the conditional entropy
\begin{equation}
    H_{1+a|1+b}(X|Y)_P=-\frac{1+b}{a}\log{\bigg(\sum_{y \in \Y} P_Y(y)\Big(\sum_{x \in \X} P_{X|Y}(x|y)^{1+a}\Big)\Big(\sum_{x \in \X} P_{X|Y}(x|y)^{1+b}\Big)^{-\frac{a}{1+b}}\bigg)} \,.
\end{equation}
This corresponds to $H_{t,\tau}$ for the two-point measure $\tau=(1+b)\delta_{1+a}-b\delta_{1+b}$ and $t=-a/(1+b)$.
Another conditional entropy has been introduced in~\cite{hayashi2016uniform}
\begin{align}
    H_{1+\theta|1+\theta'}(X|Y)_P=&-\frac{1}{\theta}\log{\bigg(\sum_{y \in \Y} P_Y(y)\Big(\sum_{x \in \X} P_{X|Y}(x|y)^{1+\theta}\Big)\Big(\sum_{x \in \X} P_{X|Y}(x|y)^{1+\theta'}\Big)^{-\frac{\theta}{1+\theta'}}\bigg)} \\
    &\qquad + \frac{\theta'}{1+\theta'}H^{\text{A}}_{1+\theta'}(P_{XY}) \\
    =&\frac{1}{1+\theta'}H_{t_1,\tau_1}(X|Y)_P+\frac{\theta'}{1+\theta'}H_{t_2,\tau_2}(X|Y)_P\,,
\end{align}
Here, $\tau_1=(1-\theta')\delta_{1+\theta}-\theta'\delta_{1+\theta'}$, $t_1=-\theta/(1+\theta')$, $\tau_2$ is the point measure at $1+\theta'$, and $t_2=-\theta'/(1+\theta')$.

The conditional entropy
 \begin{align}
H_{0,\alpha}(P)=\sum_{y\in\mc Y} P_Y(y)H_\alpha(P_{X|Y=y})\,
 \end{align}
 was introduced by Cachin~\cite{cachin1997entropy} and later studied in~\cite{teixeira2012conditional}.

In~\cite{renner2005simple},
as a part of a study of conditional smooth R\'enyi entropies, the authors introduced the quantity
\begin{align}
    H^R_\alpha(X|Y) = \min _{y} H_\alpha(X|Y=y)
\end{align}
for $\alpha > 1$ and with $\min$ replaced by $\max$ for $\alpha \in (0,1)$. In the range $\alpha>1$, it coincides with $H_{-\infty,\tau}$ when $\tau$ is a point measure at $\alpha$.
For $\alpha \in (0,1)$, it coincides with the limit $t\rightarrow +\infty$ of $H_{t,\tau}$ in the case where $\tau$ is a point measure at $\alpha$. However, although this limit is true for all $\alpha \in (0,1)$, Theorem~\ref{th:barycenter conditional entropies} implies that these quantities satisfy all our axioms required for a conditional entropy  (Definition~\ref{def: conditional entropy}) only in the case $\alpha = 0$.
In this case, it recovers \(H_{+\infty,0}\).  

Finally, the quantities $ H_{+\infty,0}$ and $H_{-\infty,\tau}$ for single-point measures have been studied as limiting cases of
$H_{\alpha,\beta}$~\eqref{eq: H alpha beta} in~\cite{li2025two} and~\cite{rubboli2024quantum}.

\section{Applications}
\label{sec: Applications}
\subsection{Rate of transformation under conditional majorization}
As a first application of our result, we show that the conditional entropies $H_{t,\tau}$ determine the optimal transformation rate under conditionally mixing operations and isometries, thereby providing them with an operational interpretation as characterizing the number of output copies obtainable per input copy within this class of operations.

Specifically, given two joint probability distributions $P_{XY}$ and $Q_{X'Y'}$, we seek the largest $r \in \mathbb{R}$ for which $P_{XY}^{\otimes n} \succeq Q_{X'Y'}^{\otimes \lfloor r n \rfloor}$ holds for all sufficiently large $n$, i.e., we want to determine the value
\begin{equation}
    R(P_{XY} \rightarrow Q_{XY}) =\; \sup\left\{ r \ge 0 \;\middle|\; P_{XY}^{\otimes n}\succeq Q_{X'Y'}^{\otimes \lfloor rn \rfloor} \;\; \text{for large } n \right\}.
\end{equation}
It is known that this problem is equivalent to the large sample problem (see e.g.~\cite[Theorem 3.11]{Jensen_Kjaerulf_2019}). In particular, we use the necessary and sufficient conditions for large sample of Theorem~\ref{th: large sample and catalytic} to obtain
\begin{theorem}
\label{thm: rate}
    Let $P_{XY}$ and $Q_{XY}$ be two joint probability distributions. Then, 
    \begin{equation}
    \label{eq: rate}
R(P_{XY}\rightarrow Q_{X'Y'}) = \inf\left\{ \frac{H_{t,\tau}(X'|Y')_Q}{H_{t,\tau}(X|Y)_P}\Bigg| (t,\tau) \in \D_{\rm{bulk}}\right\} \cup \left\{ \frac{H_{-\infty,\tau}(X'|Y')_Q}{H_{-\infty,\tau}(X|Y)_P}\Bigg| \tau \in \D_{-\infty}\right\},
\end{equation}
\end{theorem}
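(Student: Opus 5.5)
We prove the two inequalities $R\le r^*$ and $R\ge r^*$ separately, where $r^*$ denotes the right-hand side of \eqref{eq: rate}. We exclude the degenerate case in which some $H_{t,\tau}(X|Y)_P$ vanishes (for instance when $P$ has a deterministic column). In that case the ratio is read as $+\infty$, or the rate is handled directly.

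\emph{Converse ($R\le r^*$).} Suppose $P_{XY}^{\otimes n}\succeq Q_{X'Y'}^{\otimes\lfloor rn\rfloor}$ holds for all large $n$. Every $H_{t,\tau}$ with $(t,\tau)\in\D_{\rm bulk}$ and every $H_{-\infty,\tau}$ with $\tau\in\D_{-\infty}$ is monotone under conditional majorization by Definition~\ref{def: true sets}. Each is also additive under tensor products: $H_{t,\tau}$ is $\frac1t\log$ of a multiplicative homomorphism, and the tropical form is additive because the minimum over product columns splits. Hence
\[
nH(X|Y)_P\le \lfloor rn\rfloor H(X'|Y')_Q .
\]
Dividing by $n$ and letting $n\to\infty$ gives $r\le H_Q/H_P$ for every such $H$, so $R\le r^*$.

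\emph{Achievability ($R\ge r^*$).} Fix a rational $r=k/m<r^*$ and apply Theorem~\ref{th: large sample and catalytic} to $P^{\otimes m}$ and $Q^{\otimes k}$. We need strict inequalities for all four families. For the bulk and $-\infty$ families this follows from additivity:
\[
mH_P\ge m\,r^*H_Q/r^* \quad\text{and}\quad kH_Q=rmH_Q< r^* m H_Q,
\]
so the required strict inequality $H_{t,\tau}(P^{\otimes m})<\dots$ in the direction of the theorem holds, after being careful with the orientation of the preorder. The uniform strictness needed in the theorem comes from compactness of the parameter set (Proposition 8.5 of \cite{fritz2023abstractII}) together with continuity of the ratio.

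The main obstacle is the families $H_{0,\alpha}$ and $H_{+\infty,0}$, which do not appear in \eqref{eq: rate}. We plan to show that they are pointwise limits of admissible bulk entropies: $H_{0,\alpha}=\lim_{t\to0}H_{t,\delta_\alpha}$ for $\alpha\in[0,1]$, which lies in $\widetilde\D_{\rm bulk}$ for small $t$, and $H_{+\infty,0}=\lim_{t\to+\infty}H_{t,\delta_0}$ with $(t,\delta_0)\in\widetilde\D_{\rm bulk}$ since $\alpha/(1-\alpha)=0$. The infimum over $\D_{\rm bulk}$ therefore already dominates their ratios, and the inequality $r<r^*$ remains strict for these limits. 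We then need strictness to survive the limit; we handle this by working with $r<r'<r^*$. Theorem~\ref{th: large sample and catalytic} then yields $P^{\otimes mN}\succeq Q^{\otimes kN}$ for large $N$. To pass from multiples of $m$ to general $n$, we discard the extra copies of $P$; the partial-trace-like embedding is monotone. This gives $R\ge r$, and letting $r\uparrow r^*$ completes the proof.
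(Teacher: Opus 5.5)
Your overall plan matches the paper's Jensen-type argument: a converse from monotonicity plus additivity, and achievability from Theorem~\ref{th: large sample and catalytic}, with $H_{0,\alpha}$ and $H_{+\infty,0}$ absorbed as the $t\to0$ and $t\to+\infty$ limits of admissible $H_{t,\delta_\alpha}$ and $H_{t,\delta_0}$. However, your inequalities run the wrong way. Conditional entropies are non-decreasing along $\succeq$. So $P^{\otimes n}\succeq Q^{\otimes\lfloor rn\rfloor}$ gives $nH_P\le\lfloor rn\rfloor H_Q$, i.e.\ $r\ge H_P/H_Q$. That is a lower bound on $r$, not $r\le H_Q/H_P$. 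The achievability part has the mirror problem. Theorem~\ref{th: large sample and catalytic} for $P^{\otimes m}$ versus $Q^{\otimes k}$ needs $mH_P<kH_Q$, i.e.\ $H_P<rH_Q$, whereas $r<r^*$ only gives $rH_P<H_Q$. Your remark about ``being careful with the orientation'' is exactly where the argument breaks.

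This cannot be repaired in the convention you used. Since $1\succeq Q$ (Lemma~\ref{lem: degenerate and zig-zag}) and the preorder is compatible with products, extra copies of $Q$ can be appended for free. So the set of $r$ with $P^{\otimes n}\succeq Q^{\otimes\lfloor rn\rfloor}$ is upward closed and its supremum is $+\infty$. For example, with trivial $Y$, $P$ uniform on two points and $Q$ uniform on four, one has $r^*=2$, yet every $r>1/2$ is admissible. The right-hand side of \eqref{eq: rate} is the optimal rate for the convention $P^{\otimes\lfloor rn\rfloor}\succeq Q^{\otimes n}$; equivalently, its reciprocal is the least admissible $r$ in the convention as written. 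You need to adopt that convention explicitly. Then your computations go through with $k$ and $m$ exchanged. The converse reads $\lfloor rn\rfloor H_P\le nH_Q$. For achievability with $k/m<r^*$, you apply the large-sample theorem to $P^{\otimes k}$ and $Q^{\otimes m}$, and $kH_P<mH_Q$ follows pointwise from $k/m<r^*$. No compactness argument is needed, since strictness is only required pointwise.

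Your interpolation step to general $n$ is also invalid. Discarding copies of $P$ marginalizes part of the $X$ register, which lowers every conditional entropy (e.g.\ $H(X_1X_2|Y)\ge H(X_1|Y)$). It is therefore not a conditionally mixing operation, even allowing embeddings. What is free is padding. From $1\succeq P$ and $1\succeq Q$ you get $P^{\otimes a}\succeq P^{\otimes a'}$ and $Q^{\otimes b}\succeq Q^{\otimes b'}$ whenever $a\le a'$ and $b\le b'$. In the corrected convention, take $r'<r=k/m$ and $N=\lfloor n/m\rfloor$. Then $P^{\otimes\lfloor r'n\rfloor}\succeq P^{\otimes Nk}\succeq Q^{\otimes Nm}\succeq Q^{\otimes n}$ for all large $n$. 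This gives $R\ge r'$, and letting $r'\uparrow r\uparrow r^*$ yields achievability.
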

\begin{proof}
   The proof follows a similar argument to that of~\cite[Theorem 3.11]{Jensen_Kjaerulf_2019}, with the additional observation that the derivations $H_{0,\alpha}$ and the tropical homomorphism $H_{+\infty,0}$ can be excluded by taking the infimum in equation~\eqref{eq: rate}, since they are limits of $H_{t,\tau}$. Indeed, these derivations are monotone only for $\alpha \in [0,1]$, a region that can itself be viewed as the limit of $\widetilde{\D}_{\rm bulk}$ (see discussion in Section~\ref{sec: towards complete characterization}). The same argument also holds for $H_{+\infty,0}$. We also refer to the very similar proof of Theorem 5 of \cite{verhagen2025}.
\end{proof}

Let us remark that if Conjecture \ref{conj; true sets} holds, we may simplify the statement of Theorem \ref{thm: rate}. Indeed, if the conjecture is true, the conditional entropies $H_{-\infty,\tau}$ with $\tau\in\mc D_{-\infty}$ are all pointwise limits of $(t,\tau)\in\mc D_{\rm bulk}$ as $t\to-\infty$. This would imply the conjectural form
\begin{equation}
R(P_{XY}\rightarrow Q_{X'Y'})=\inf_{(t,\tau)\in\mc D_{\rm bulk}}\frac{H_{t,\tau}(X'|Y')_Q}{H_{t,\tau}(X|Y)_P}
\end{equation}
for the optimal rate.

\subsection{Conditional channels under conservation conditions and an application to second laws of thermodynamics with side information}
\label{sec: second laws of thermo}

A doubly stochastic or mixing channels $S$ are exactly those (column) stochastic matrices that fix a suitably large uniform distribution $U$, i.e., $SU=U$. Conditionally mixing channels are thus conditionally uniformity-preserving channels. It is thus quite natural to consider a generalization of our results where the conserved primary distribution is not uniform $U$ but some other fixed distribution $R$. For this purpose, let us introduce the following set of operations that generalize conditionally mixing channels.

\begin{definition}
Let us fix a finite random variable $X$ together with the probability distribution $R_X$. A \textit{conditioned $R_X$-preserving channel} has the form
\begin{equation}
P_{XY}\mapsto Q_{XY'}=\sum_{i=1}^k G^{(i)}P_{XY}D^{(i)}
\end{equation}
where
\begin{enumerate}
\item the matrices $G^{(i)}$ are column-stochastic and $G^{(i)}R_X=R_X$ for $i=1,\ldots,k$,
\item the matrices $D^{(i)}$ are substochastic for $i=1,\ldots,k$, and
\item the sum $\sum_{i=1}^k D^{(i)}$ is row-stochastic.
\end{enumerate}
\end{definition}

We are interested in conditions under which \(P_{XY}\) can be transformed into \(Q_{XY'}\) catalytically by a conditionally \(R_X\)-preserving channel, with arbitrarily small error $\varepsilon$.
The error here is quantified with the total variation distance defined between joint probabilities $S_{XY}$ and $T_{XY}$ through
\begin{equation}
d(S_{XY},T_{XY})=\frac{1}{2}\sum_{x\in X}\sum_{y\in Y}\big|S_{XY}(x,y)-T_{XY}(x,y)\big|.
\end{equation}
We introduce the quantities $I_{t,\tau}$,
\begin{equation}
I_{t,\tau}(P_{XY}\|R_X)=-\frac{1}{t}\log{\left(\sum_{y\in Y}P_Y(y)\exp{\left(-t\int_{[0,+\infty]}D_\alpha(P_{X|Y=y}\|R_X)\,\mathrm{d}\tau(\alpha)\right)}\right)} \,.
\end{equation}
Here, $t\in\R\setminus\{0\}$ and $\tau$ is a Borel probability measure on $[0,\infty]$. Moreover, the R\'enyi relative entropy for $\alpha=[0,+\infty]$ is defined as
\begin{equation}
    D_\alpha(P_X\|R_X)=
\frac{1}{\alpha-1}\log{\sum_{x\in X}P_X(x)^\alpha R_X(x)^{1-\alpha}}\,,
\end{equation}
with the cases \(\alpha=1\) and \(\alpha=+\infty\) defined by the corresponding limits.
 We also define \(I_{-\infty,\tau}\) as the pointwise limit of \(I_{t,\tau}\) for \(t \to -\infty\). In the following, we provide a set of sufficient conditions under which a joint distribution can be transformed into another using conditioned preserving channels and the assistance of a catalyst.
\begin{theorem}\label{thm:ConservedClassical}
Let us denote the uniform distribution of any system $Z$ by $U_Z$. Fix a finite probability distribution $R_X$ with rational entries and assume that $P_{XY}$ and $Q_{XY'}$ are joint probability distributions such that
\begin{equation}
\max \limits_{\substack{y \in \mathcal{Y} \\ P_Y( y) > 0}} |{\rm supp}(P_{X|Y=y})|<|{\rm supp}(R_X)|,\quad Q_{X|Y'=y'}\neq R_X\ \forall\,y'\in \mc Y'.
\end{equation}
For all $\varepsilon>0$, there exist joint probability distributions $Q^\varepsilon_{XY'}$ and $S^\varepsilon_{X_\varepsilon Y_\varepsilon}$, such that $d(Q_{XY'},Q^\varepsilon_{XY'})<\varepsilon$ and there exists a conditionally $R_X\otimes U_{X_\varepsilon}$-preserving channel that maps $P_{XY}\otimes S^\varepsilon_{X\varepsilon Y\varepsilon}$ into $Q^\varepsilon_{XY'}\otimes S^\varepsilon_{X\varepsilon Y\varepsilon}$ if
\begin{align}
I_{t,\tau}(P_{XY}\|R_X)&\geq I_{t,\tau}(Q_{XY'}\|R_X)\qquad\forall(t,\tau)\in\mc D_{\rm bulk},\\
I_{-\infty,\tau}(P_{XY}\|R_X)&\geq I_{-\infty,\tau}(Q_{XY'}\|R_X)\qquad \forall\tau\in\mc D_{-\infty}.
\end{align}
\end{theorem}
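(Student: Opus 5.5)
We reduce Theorem~\ref{thm:ConservedClassical} to the catalytic part of Theorem~\ref{th: large sample and catalytic} with the standard embedding trick used for thermomajorization (as in~\cite{Brandao}).

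\textbf{Embedding.} Write $R_X(x)=k_x/K$ with integers $k_x$. Let $\Gamma$ be the map sending each column $P_{X,Y=y}$ to the column of length $K$ in which the entry $P(x,y)$ is spread uniformly over a block of $k_x$ cells. So $\Gamma(P)(x,j,y)=P(x,y)/k_x$. This map sends $R_X$ to $U_K$, and it acts column by column. Its left inverse $\Gamma^{-1}$ sums over each block. One checks $H_\alpha(\Gamma(P_{X|Y=y}))=\log K-D_\alpha(P_{X|Y=y}\|R_X)$ for every $\alpha\in[0,\infty]$, including the limits $\alpha=0,1,\infty$. Integrating against $\tau$ and inserting this into \eqref{eq:TempEnt} gives
\[
H_{t,\tau}(\Gamma P)=\log K-I_{t,\tau}(P\|R_X),
\]
and likewise for the limiting quantities $H_{-\infty,\tau}$, $H_{0,\alpha}$ and $H_{+\infty,0}$. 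The hypotheses therefore become $H(\Gamma P)\le H(\Gamma Q)$ for all admissible extremal entropies. Conversely, if $\Gamma P\otimes C\succeq \Gamma Q\otimes C$ via $\sum_i S^{(i)}(\cdot)D^{(i)}$, then $G^{(i)}=(\Gamma^{-1}\otimes\mathrm{id})\,S^{(i)}\,(\Gamma\otimes\mathrm{id})$ is column-stochastic and fixes $R_X\otimes U$. This pulls the transformation back to a conditioned $R_X\otimes U_{X_\varepsilon}$-preserving channel. The catalyst is $S^\varepsilon$ with the $X$ part of $C$ on a uniform reference; embedding and relabeling are absorbed because they also commute with $\Gamma$ in the $Y$ direction.

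\textbf{Making the inequalities strict.} Theorem~\ref{th: large sample and catalytic} needs strict inequalities, and also needs the derivations $H_{0,\alpha}$ and $H_{+\infty,0}$, which are not among the hypotheses. This is where the approximation $Q^\varepsilon$ enters. We take $Q^\varepsilon_{X|Y'=y'}=(1-\delta)Q_{X|Y'=y'}+\delta R_X$ with $\delta$ small. Since each $Q_{X|Y'=y'}\neq R_X$, the map $D_\alpha(\cdot\|R_X)$ is strictly decreasing along this segment (by strict quasi-convexity, or by the data-processing inequality for the $R_X$-preserving channel). As a result every $I_{t,\tau}$ and $I_{-\infty,\tau}$ strictly drops: $I(Q^\varepsilon)<I(Q)\le I(P)$. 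We need this uniformly over the compact parameter set, so the strict gap is bounded below by a positive constant; continuity on the compact Hausdorff set (Proposition 8.5 of~\cite{fritz2023abstractII}) should give this.

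The derivations $H_{0,\alpha}$ and the tropical entropy $H_{+\infty,0}$ are pointwise limits of bulk entropies (Section~\ref{sec: towards complete characterization}), so they inherit the non-strict inequality. For them we separately need strictness, and this is exactly what the support hypothesis secures. Mixing with $R_X$ makes $Q^\varepsilon_{X|y'}$ have full support $\mathrm{supp}(R_X)$. Hence $H_{+\infty,0}(\Gamma Q^\varepsilon)=\log K$, which is strictly larger than $H_{+\infty,0}(\Gamma P)$ because $\max_y|\mathrm{supp}P_{X|y}|<|\mathrm{supp}R_X|$. For $H_{0,\alpha}$ the strict drop again comes from the mixing. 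Power universality of the input $\Gamma Q^\varepsilon$ follows from full support as in the proof of Theorem~\ref{th: large sample and catalytic}.

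\textbf{Main obstacle.} The delicate step is uniform strictness. We must show that $\inf$ over $\mc D_{\rm bulk}$, including near the boundary $t\to0,\pm\infty$, of $I(Q)-I(Q^\varepsilon)$ is positive, so that the compact-family version of Theorem~\ref{thm:Fritz2022} applies. If the direct approach fails, the fallback is to apply Theorem~\ref{thm:Fritz2022} to the pair $(\Gamma P\otimes$ a small extra uniform bit, $\Gamma Q^\varepsilon\otimes$ a correspondingly adjusted element$)$. This shifts every entropy by a uniform constant and yields strictness directly, at the price of one more $\varepsilon$-approximation of the output.
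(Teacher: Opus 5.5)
Your route is the paper's: embed with $\Gamma$ so that $I_{t,\tau}(P_{XY}\|R_X)=\log K-H_{t,\tau}(X|Y)_{\Gamma P}$, mix the target with $R_X$ to make the inequalities strict, obtain $H_{0,\alpha}$ ($\alpha\in[0,1]$) and $H_{+\infty,0}$ as limits of bulk entropies, apply the catalytic part of Theorem~\ref{th: large sample and catalytic}, and pull the channel back with $\Gamma^{-1}$.

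One step is false as written. Along $(1-\delta)Q_{X|Y'=y'}+\delta R_X$ the divergence $D_0(\cdot\|R_X)$ need not drop. If $Q_{X|Y'=y'}$ already has support $\mathrm{supp}(R_X)$, it equals $0$ at both ends. So your claim that every $I_{t,\tau}$ strictly drops fails for $\tau=\delta_0$, which lies in $\mathcal{D}_{\rm bulk}$ for every $t\neq0$. The same failure affects $I_{-\infty,\delta_0}$, and $H_{0,0}$, where you say the strict drop ``again comes from the mixing.'' For these functionals strictness must come from the input side. The support hypothesis gives $D_0(P_{X|Y=y}\|R_X)>0=D_0(Q^\varepsilon_{X|Y'=y'}\|R_X)$ for all $y$ with $P_Y(y)>0$ and all $y'$. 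This is exactly the observation you used for $H_{+\infty,0}$, and it is how the paper splits the argument: mixing handles every $\tau$ with $\tau((0,+\infty])>0$, and the support condition handles $\tau=\delta_0$. Also, data processing alone gives only a non-strict drop. Strictness for $\alpha>0$ comes from convexity (resp.\ concavity) of $Q_\alpha(\cdot\|R_X)$ together with $Q_\alpha(Q_{X|Y'=y'}\|R_X)\neq Q_\alpha(R_X\|R_X)=1$, as in the paper.

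Your ``main obstacle'' is not one. Theorem~\ref{thm:Fritz2022}, and hence Theorem~\ref{th: large sample and catalytic}, only requires each nondegenerate monotone homomorphism and each normalized monotone derivation to be strictly ordered individually. No gap uniform over the parameter set is needed; any uniformity is handled inside the proof of that theorem. Pointwise strictness therefore suffices, and the fallback can be dropped.

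One point to add to the pull-back. The map $G^{(i)}=(\Gamma^{-1}\otimes\mathrm{id})S^{(i)}(\Gamma\otimes\mathrm{id})$ is well defined and fixes $R_X\otimes U_{X_\varepsilon}$ only if the conditional majorization does not pad the $X$ register with zero rows. This holds because $\Gamma P\otimes C$ and $\Gamma Q^\varepsilon\otimes C$ have the same number of rows (Remark~\ref{remark: no isometries}).
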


\begin{proof}
   The proof follows a standard argument based on the so-called embedding map. Specifically, the conditions $I_{t,\tau}(P_{XY},R_X) \geq I_{t,\tau}(Q_{XY'},R_X)$ and $I_{-\infty,\tau} (P_{XY},R_X) \geq I_{-\infty,\tau}(Q_{XY'},R_X)$ are equivalent to the inequalities $H_{t,\tau}(\widetilde{P}_{\widetilde{X}Y}) \leq H_{t,\tau}(\widetilde{Q}_{\widetilde{X}Y'})$  and $H_{-\infty,\tau}(\widetilde{P}_{\widetilde{X}Y}) \leq H_{-\infty,\tau}(\widetilde{Q}_{\widetilde{X}Y'})$, where $\widetilde{P}_{\widetilde{X}Y}$ and $\widetilde{Q}_{\widetilde{X}Y'}$ are the embedded versions of $R_{XY}$ and $Q_{XY'}$, respectively, obtained via the application of the embedding map (see below for the formal definition).
   This entropy inequality guarantees the existence of a catalyst and a conditionally mixing channel 
   that transforms $\widetilde{R}_{XY}$ into $\widetilde{Q}_{XY'}$ according to Theorem~\ref{th: large sample and catalytic}. We then construct the desired overall channel by composing the embedding map, the conditionally mixing channel, and the (left) inverse of the embedding map. By construction, as we argue later, this overall transformation is $R_X$ preserving.

\bigskip

Let us denote the support size of $R_X$ by $d$. By assumption, $R_X$ has rational entries. Hence, there exists $D \in \mathbb{N}$ and $g_i \in \mathbb{N}$ with $D= \sum_{i=1}^d g_i$ such that
\begin{equation}
R_X = \left( \frac{g_1}{D}, \ldots, \frac{g_d}{D} \right)
\end{equation}
up to eliminating possible zero entries from $R_X$. The embedding map $\Gamma : \mathbb{R}^d \to \mathbb{R}^D$ is defined as follows
\begin{align}
\Gamma(p) := \Big( \underbrace{\frac{p_1}{g_1}, \ldots, \frac{p_1}{g_1}}_{g_1 \ \text{times}}, \ldots, \underbrace{\frac{p_d}{g_d}, \ldots, \frac{p_k}{g_d}}_{g_d \ \text{times}} \Big)\,.
\end{align}
This map has the crucial property of sending $R_X$ to the uniform distribution $U_D := \big(\tfrac{1}{D}, \ldots, \tfrac{1}{D}\big)$. This allows us to apply the results from Section~\ref{sec: large sample} on conditional majorization. 
The left inverse $\Gamma^{-1} : \mathbb{R}^D \to \mathbb{R}^d$ is 
\begin{equation}
\label{PsInvEmbed}
\Gamma^{-1}(q) := \left( \sum_{i=1}^{g_1} q_i,\ \sum_{i=g_1+1}^{g_1 + g_2} q_i,\ \ldots,\ \sum_{i = g_1 + \cdots + g_{k-1} + 1}^{D} q_i \right),
\end{equation}
It is straightforward to verify that (see e.g.,~\cite[Lemma 13]{Brandao})
\begin{align}
    D_{\alpha}(P_X \| R_X) = D_{\alpha}(\Gamma(P_X) \| U_D) = \log{D} - H_\alpha (\Gamma(P_X))\,.
\end{align}
The properties of the embedding map applied to the system $X$ imply that
\begin{align}
    I_{t,\tau}(P_{XY} \| R_X) &=  -\frac{1}{t} \log \Bigg( \sum_{y \in \mc Y} P_Y(y) \exp \Big( -t\int_{[0,+\infty]}D_\alpha(P_{X|Y=y}) \| R_{X}) \, \mathrm{d} \tau(\alpha) \Big) \Bigg) \\
    & =-\frac{1}{t} \log \Bigg( \sum_{y \in\mc Y} P_Y(y) \exp \Big( -t\int_{[0,+\infty]} D_\alpha(\Gamma_X(P_{X|Y=y}) \| U_D) \, \mathrm{d} \tau(\alpha) \Big) \Bigg) \\
    & = -\frac{1}{t} \log \Bigg( \sum_{y \in\mc Y} P_Y(y) \exp \Big( -t \log{D} +t\int_{[0,+\infty]} H_\alpha(\Gamma_X(P_{X|Y=y})) \, \mathrm{d} \tau(\alpha) \Big) \Bigg)\\
    & = -\frac{1}{t} \log \Bigg( \sum_{y \in\mc Y} P_Y(y) \exp \Big(t\int_{[0,+\infty]} H_\alpha(\Gamma_X(P_{X|Y=y})) \, \mathrm{d} \tau(\alpha) \Big) \Bigg) + \log{D}
\end{align}
Hence, we obtain that
\begin{align}
    I_{t,\tau}(P_{XY} \| R_X) = -H_{t,\tau}(X|Y)_{\Gamma_X(P_{XY})} + \log{D}
\end{align}

We note that the inequalities in the assumption of the theorem for $I_{t,\tau}$ extend naturally to the quantity $I_{0,\alpha}$, which can be defined as the limit $t \to 0$ of $F_{t,\tau}$, in direct analogy with the definition of $H_{0,\alpha}$ introduced in Section~\ref{sec: large sample}. Indeed, by a limiting argument, the condition $I_{0,\alpha}(P_{XY}\|R_X) \geq I_{0,\alpha}(Q_{XY'}\|R_X)$ in the region $\alpha \in [0,1]$ can be obtained as limits of the inequalities for $I_{t,\tau}$ in the bulk region $\mathcal{D}_{\rm bulk}$; see the discussion in Section~\ref{sec: towards complete characterization}.

To apply Theorem~\ref{th: large sample and catalytic}, we need to make the inequalities strict. Since these functions are defined in terms of convex combinations of Rényi relative entropies, it suffices to establish the strict inequality for each Rényi relative entropy individually, for all \(\alpha \ge 0\). Indeed, taking convex combinations preserves strict order, and the subsequent operations—namely summation, addition of constants, exponentiation, and application of the logarithm—are all strictly monotone. Consequently, the overall functions inherit the strict ordering. To make the inequalities strict, we construct the approximation
\begin{equation}
    Q^\varepsilon_{XY'} = \left(1-\varepsilon\right) Q_{XY'} + \varepsilon (R_X \otimes Q_{Y'}) \,.
\end{equation}
This distribution is an $\varepsilon$-approximation of $Q_{XY'}$:
\begin{align}
d(Q_{XY'},Q^\varepsilon_{XY'})&=\frac{1}{2}\sum_{x\in\mc X}\sum_{y'\in\mc Y'}\left|Q_{XY'}(x,y')-(1-\varepsilon)Q_{XY'}(x,y')-\varepsilon R_X(x)Q_{Y'}(y')\right|\\
&\leq\frac{\varepsilon}{2}\sum_{x\in\mc X}\sum_{y'\in\mc Y'}Q_{XY'}(x,y')+\frac{\varepsilon}{2}\sum_{x\in\mc X}R_X(x)\sum_{y'\in\mc Y'}Q_{Y'}(y')=\frac{\varepsilon}{2}+\frac{\varepsilon}{2}=\varepsilon.
\end{align}
The condition $Q_{X|Y'=y'}\neq R_X$ in the assumption of the claim guarantees that the inequalities between R\'{e}nyi relative entropies are strict for all $\alpha>0$. Indeed, for any $\alpha>0$ and $ y'\in \Y'$ we obtain
\begin{equation}
    D_\alpha\!\left(Q_{X|Y'=y'} \middle\| R_X \right)
    >
    D_\alpha\!\left(Q^\varepsilon_{X|Y'=y'} \middle\| R_X \right) .
\end{equation}
This follows from the joint concavity of the map $Q_\alpha(P_X\|R_X) = \exp\!\bigl((\alpha-1)D_\alpha(P_X\|R_X)\bigr)$ for $\alpha<1$ and its joint convexity for $\alpha>1$, together with the assumption that
$Q_{X|Y'=y'}\neq R_X$.
To illustrate this explicitly, consider the case $\alpha>1$. Then
\begin{align}
    Q_\alpha\!\left(Q^\varepsilon_{X|Y'=y'} \middle\| R_X \right)
    &\leq
    \left(1-\frac{\varepsilon}{2}\right)
    Q_\alpha\!\left(Q_{X|Y'=y'} \middle\| R_X \right)
    + \frac{\varepsilon}{2} <
    Q_\alpha\!\left(Q_{X|Y'=y'} \middle\| R_X \right) ,
\end{align}
where we used that the assumption $Q_{X|Y'=y'}\neq R_X$ implies $Q_\alpha\!\left(Q_{X|Y'=y'} \middle\| R_X \right)
>
Q_\alpha\!\left(R_X \middle\| R_X \right)
=1$.
By taking the logarithm, we obtain the desired strict inequality. Since the approximation decreases the value of $D_0$, and in particular satisfies $D_0(Q^{\varepsilon}_{X\mid Y'=y'} \| R_X)=0$, it follows that all $I_{t,\tau}$ are strictly ordered if the intersection of the support of $\tau$ and $(0,\infty]$ is non-empty. It therefore remains to consider the case in which the measure is supported entirely at $\alpha=0$.
In this case, the assumption that the support sizes of $P_{X|Y=y}$ are all strictly smaller than that of $R_X$ guarantees that the inequalities are also strict for $\alpha=0$. Indeed, for all $y\in\mc Y$ and $y'\in\mc Y'$ it holds that
 \begin{align}
     D_0 (P_{X|Y=y} \| R_X) \geq D_0 (Q_{X|Y'=y'} \| R_X) \geq D_0(Q^\varepsilon_{X|Y'=y'} \| R_X )  = 0 \,.
\end{align}
Here, the last inequality follows from the fact that $Q^\varepsilon_{X|Y'=y'} = (1-\varepsilon )Q_{X|Y'=y'} + \varepsilon R_X$ and hence has the same support of $R_X$. The assumption of the theorem on the support of $P_{XY}$ implies that $D_0(P_{X|Y=y}\|R_X)> 0$ for all $y\in \Y$.
 The preceding observations imply that
\begin{align}
    & H_{t,\tau}(X|Y)_{\Gamma_X(P_{XY})} < H_{t,\tau}(X|Y')_{\Gamma_X(Q^\varepsilon_{XY'})}, \qquad\qquad\qquad \forall (t,\tau) \in \D_{\rm{bulk}}, \\
    &H_{-\infty,\tau}(X|Y)_{\Gamma_X(P_{XY})} < H_{-\infty,\tau}(X|Y')_{\Gamma_X(Q^\varepsilon_{XY'})}, \quad\quad \quad  \;\;\; \forall \tau \in \D_{-\infty}, \\
    &H_{0,\alpha}(X|Y)_{\Gamma_X(P_{XY})} < H_{0,\alpha}(X|Y')_{\Gamma_X(Q^\varepsilon_{XY'})}, \quad\,\qquad \quad\quad\;\; \forall \alpha \in [0,1] \,,\\
    &H_{+\infty,0}(X|Y)_{\Gamma_X(P_{XY})} < H_{+\infty,0}(X|Y')_{\Gamma_X(Q^\varepsilon_{XY'})}.
\end{align}
Theorem~\ref{th: large sample and catalytic} implies the existence of a catalyst $S^\varepsilon_{X\varepsilon Y\varepsilon}$ and a conditionally mixing channel $\Upsilon$ such that
\begin{equation}
    \Upsilon(\Gamma_X (P_{XY}) \otimes S^\varepsilon_{X\varepsilon Y\varepsilon})= \Gamma_X (Q^\varepsilon_{XY'}) \otimes S^\varepsilon_{X\varepsilon Y\varepsilon} \,.
\end{equation}
Note that no additional embedding is required, since $P_{XY}$ and $Q^\varepsilon_{XY'}$ have the same number of rows, as discussed in Remark~\ref{remark: no isometries}.
Therefore, we construct the channel
\begin{equation}
	\Lambda := (\Gamma^{-1}_X \otimes I_{Y'}\otimes I_{X\varepsilon Y\varepsilon}) \circ \Upsilon_{XY'X\varepsilon Y\varepsilon} \circ (\Gamma_X \otimes I_{Y'}\otimes I_{X\varepsilon Y\varepsilon})
\end{equation}
so that
\begin{align}
	&\Lambda (P_{XY} \otimes S^\varepsilon_{X\varepsilon Y\varepsilon})  = Q^\varepsilon_{XY'} \otimes S^\varepsilon_{X\varepsilon Y\varepsilon}
\end{align}
Finally, to show that this map is a conditionally $R_X\otimes U_{X_\varepsilon}$-preserving operation, one can verify its structure directly. Another way is to observe that, for any distribution $T_{YY_\varepsilon}$, there is a distribution $T'_{Y'Y_\varepsilon}$ such that
\begin{equation}
\Upsilon\big(U_D\otimes U_{X_\varepsilon}\otimes T_{YY_\varepsilon}\big)=U_D\otimes U_{X_\varepsilon}\otimes T'_{Y'Y_\varepsilon}
\end{equation}
which immediately follows from the fact that $\Upsilon$ is conditionally mixing and implies that
\begin{equation}
\Lambda(R_{X}\otimes U_{X_\varepsilon}\otimes T_{YY_\varepsilon})=R_X\otimes U_{X_\varepsilon
}\otimes T'_{Y'Y_\varepsilon}.
\end{equation}
Moreover, $\Lambda$ does not signal from $XX_\varepsilon$ to $Y'Y_\varepsilon$ meaning that ignoring the $XX_\varepsilon$ information after performing $\Lambda$ results into a legitimate stochastic map $YY_\varepsilon\to Y'Y_\varepsilon$ as one easily verifies using the fact that $\Upsilon$ is conditionally mixing. These two properties together are equivalent to $\Lambda$ being a conditionally $R_X\otimes U_{X_\varepsilon}$-preserving channel, by an argument analogous to that in~\cite[Section C]{gour2024inevitability}. 
\end{proof}

Let us present a direct physical application of the above result. Let us fix a finite-dimensional quantum system $A$ with the Hamiltonian
\begin{equation}
H_A=\sum_a E_a\ketbra{a}{a}
\end{equation}
where $\{|a\rangle\}_a$ is the energy eigenbasis of the Hilbert space $\mc H_A$ of the system. If all $E_a$ coincide, we say that the Hamiltonian is trivial. In the inverse temperature $\beta$ the Hamiltonian defines the partition function $Z_A$ and the Gibbs state $\gamma_A$ through
\begin{equation}
Z_A=\sum_a e^{a\beta E_a},\qquad \gamma_A=\frac{1}{Z_A}e^{-\beta H_A}.
\end{equation}

A quantum channel $G$, i.e., a completely positive trace-preserving linear map on the operators on $\mc H_A$, is Gibbs-preserving if $G(\gamma_A)=\gamma_A$. Thermal operations are defined as those that allow one to borrow an ancillary system in a Gibbs state, apply a unitary that commutes with the total Hamiltonian, and then discard a subsystem via the partial trace. For classical states, i.e.,\ those that commute with $\gamma_A$, however, thermal operations are equivalent to Gibbs-preserving maps~\cite{janzing2000thermodynamic,Oppenheim}. Accordingly, in the following, we will use the term Gibbs-preserving maps interchangeably with thermal operations.

Conditioned thermal operations generalize this concept in the setting where side information is available and were introduced in~\cite{narasimhachar2017resource}. We say that a channel $G$ from a bipartite quantum system $AB$ to another bipartite system $AB'$ is conditionally Gibbs-preserving if there exists a channel $G_{B\to B'}$ such that
\begin{equation}
\ptr{A}{G(\rho_{AB})}=G_{B\to B'}(\rho_B),\quad G(\gamma_A\otimes\rho_B)=\gamma_A\otimes G_{B\to B'}(\rho_B)
\end{equation}
for all quantum states $\rho_{AB}$ on $AB$ with the reduced state $\rho_B$ on $B$.

We say that a state $\rho_{AB}$ is classical if there is a joint probability distribution $P_{AB}$ and an orthonormal basis $\{|b\rangle\}_b$ for the Hilbert space $\mc H_B$ of $B$ such that
\begin{equation}
\rho_{AB}=\sum_{a\in\mc A}\sum_{b\in\mc B}P_{AB}(a,b)\ketbra{a}{a}\otimes\ketbra{b}{b}.
\end{equation}
It follows immediately that, if $\rho_{AB}$ and $\sigma_{AB'}$ are classical states, where $\rho_{AB}$ is associated with the joint probability $P_{AB}$ and $\sigma_{AB'}$ with the joint probability $Q_{AB'}$, and there is a conditionally Gibbs-preserving channel $G$ such that $G(\rho_{AB})=\sigma_{AB'}$, then there is a classical conditionally $R_A$-preserving channel $\Lambda$, where $R_A(a)=e^{-\beta E_a}/Z_A$, such that $\Lambda(P_{AB})=Q_{AB'}$. Naturally, with this observation, we obtain a sufficient condition for asymptotic conditionally Gibbs-preserving transformability between classical states in the catalytic regime as a direct corollary of Theorem \ref{thm:ConservedClassical}. However, before stating this result, let us introduce the free energies
\begin{equation}
F_{t,\tau}(\rho_{AB},\gamma_A)=\frac{1}{\beta}\big(I_{t,\tau}(P_{AB}\|R_A)-\log{Z_A}\big)
\end{equation}
for any classical state $\rho_{AB}$ associated with the joint probability $P_{AB}$ where $R_A(a)=e^{-\beta E_a}/Z_A$. We also define the free energies $F_{-\infty,\tau}$ as pointwise limits of the above as $t\to-\infty$.

\begin{corollary}[Second laws of thermodynamics with side information]\label{cor:secondlaws}
Let us assume that the Gibbs-spectrum is rational, i.e., $e^{-\beta E_a}\in\mb Q$ for all $a$. Suppose that
\begin{equation}
\rho_{AB}=\sum_{a,b}P_{AB}(a,b)\ketbra{a}{a}\otimes\ketbra{b}{b},\quad\sigma_{AB'}=\sum_{a,b'}Q_{AB'}(a,b')\ketbra{a}{a}\otimes\ketbra{b'}{b'}
\end{equation}
are classical states such that
\begin{equation}
\max_{b,\ P_B(b)>0}\big|{\rm supp}(P_{A|B=b})\big|<d_A,\qquad Q_{A|B'=b'}\neq\frac{e^{-\beta E_a}}{Z_A}\ \forall(a,b'),
\end{equation}
where $d_A$ is the dimension of the Hilbert space $\mc H_A$. For all $\varepsilon>0$, there exist finite quantum systems $A_\varepsilon$ and $B_\varepsilon$, a classical state $\sigma^\varepsilon_{AB'}$, a state $\omega^\varepsilon_{A_\varepsilon B_\varepsilon}$ and a conditionally Gibbs-preserving channel $G_\varepsilon:AA_\varepsilon BB_\varepsilon\to AA_\varepsilon B'B_\varepsilon$ (where $A_\varepsilon$ carries the maximally mixed state, the Gibbs state of a trivial Hamiltonian of $A_\varepsilon$) such that $\|\sigma_{AB'}-\sigma^\varepsilon_{AB'}\|_1<\varepsilon$ (approximation in trace norm) and $G_\varepsilon(\rho_{AB}\otimes\omega^\varepsilon_{A_\varepsilon B_\varepsilon})=\sigma^\varepsilon_{AB'}\otimes\omega^\varepsilon_{A_\varepsilon B_\varepsilon}$ if
\begin{align}
F_{t,\tau}(\rho_{AB},\gamma_A)&\geq F_{t,\tau}(\sigma_{AB'},\gamma_A)\qquad\forall(t,\tau)\in\mc D_{\rm bulk},\\
F_{-\infty,\tau}(\rho_{AB},\gamma_A)&\geq F_{-\infty,\tau}(\sigma_{AB'},\gamma_A)\qquad\forall\,\tau\in\mc D_{-\infty}.
\end{align}
\end{corollary}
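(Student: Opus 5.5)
The plan is to reduce Corollary~\ref{cor:secondlaws} directly to Theorem~\ref{thm:ConservedClassical}, applied with $X=A$, $Y=B$, $Y'=B'$ and $R_A(a)=e^{-\beta E_a}/Z_A$. The rationality assumption on the Gibbs spectrum supplies the rational-entry hypothesis on $R_A$; if it is literally stated for the weights $e^{-\beta E_a}$, then $Z_A$ is rational and so are the ratios. Since $R_A$ has full support of size $d_A$, the support condition $\max_b|{\rm supp}(P_{A|B=b})|<d_A$ is exactly the first hypothesis of Theorem~\ref{thm:ConservedClassical}. The condition $Q_{A|B'=b'}\neq R_A$ for all $b'$ is the second hypothesis.

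The free-energy inequalities convert to the $I_{t,\tau}$ inequalities because $F_{t,\tau}=\beta^{-1}(I_{t,\tau}-\log Z_A)$ is an increasing affine function of $I_{t,\tau}$ when $\beta>0$. This relation persists in the pointwise limit $t\to-\infty$. Hence $F_{t,\tau}(\rho)\geq F_{t,\tau}(\sigma)$ for $(t,\tau)\in\mc D_{\rm bulk}$ and $F_{-\infty,\tau}(\rho)\geq F_{-\infty,\tau}(\sigma)$ for $\tau\in\mc D_{-\infty}$ give exactly the hypotheses of the theorem. The theorem then produces:
\begin{itemize}
\item $Q^\varepsilon_{AB'}$ with $d(Q,Q^\varepsilon)<\varepsilon$;
\item a catalyst $S^\varepsilon_{X_\varepsilon Y_\varepsilon}$;
\item a classical conditionally $R_A\otimes U_{X_\varepsilon}$-preserving channel $\Lambda$ with $\Lambda(P_{AB}\otimes S^\varepsilon)=Q^\varepsilon\otimes S^\varepsilon$.
\end{itemize}

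Next I lift everything to quantum objects. Take $A_\varepsilon$ with orthonormal basis indexed by $X_\varepsilon$ and trivial Hamiltonian, so its Gibbs state is maximally mixed, i.e.\ corresponds to $U_{X_\varepsilon}$. Similarly take $B_\varepsilon$ with basis indexed by $Y_\varepsilon$. Let $\sigma^\varepsilon_{AB'}$ and $\omega^\varepsilon$ be the diagonal states built from $Q^\varepsilon$ and $S^\varepsilon$. The trace norm of the difference of diagonal states equals twice the total variation distance. Since the theorem holds for every $\varepsilon$, I invoke it with $\varepsilon/2$ to get $\|\sigma-\sigma^\varepsilon\|_1<\varepsilon$.

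The channel is $G_\varepsilon=\mathcal{E}_\Lambda\circ\mathcal{D}$, where $\mathcal{D}$ is the complete dephasing in the product eigenbases of $AA_\varepsilon BB_\varepsilon$ and $\mathcal{E}_\Lambda$ is the classical-to-quantum embedding of $\Lambda$. Concretely, $\ketbra{i}{i}\mapsto\sum_j\Lambda_{ji}\ketbra{j}{j}$ on the joint label. This is a CPTP map and sends $\rho_{AB}\otimes\omega^\varepsilon$ to $\sigma^\varepsilon\otimes\omega^\varepsilon$.

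The main point to check, and the only nontrivial one, is that $G_\varepsilon$ is conditionally Gibbs-preserving with respect to the total Gibbs state $\gamma_A\otimes\mathds{1}/d_{A_\varepsilon}$. For a general (nonclassical) input $\rho_{AA_\varepsilon BB_\varepsilon}$, first dephase. I define $G_{BB_\varepsilon\to B'B_\varepsilon}$ as dephasing on $BB_\varepsilon$ followed by the stochastic map on $YY_\varepsilon$ induced by $\Lambda$. This map is well defined because $\Lambda$ is no-signalling from $XX_\varepsilon$ to $Y'Y_\varepsilon$, as established at the end of the proof of Theorem~\ref{thm:ConservedClassical}.

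Both defining identities then hold:
\begin{itemize}
\item The marginal identity holds because dephasing on $AA_\varepsilon$ commutes with tracing out $AA_\varepsilon$, and the dephased diagonal of $\rho_{BB_\varepsilon}$ is the marginal of the dephased joint.
\item The Gibbs identity holds because $\gamma_A\otimes\mathds{1}/d_{A_\varepsilon}$ is already diagonal, so dephasing acts only on $\rho_{BB_\varepsilon}$. Then $\Lambda(R_A\otimes U_{X_\varepsilon}\otimes T)=R_A\otimes U_{X_\varepsilon}\otimes T'$ with $T'$ the image of $T$ under the induced map, again by the last paragraph of the proof of Theorem~\ref{thm:ConservedClassical}.
\end{itemize}
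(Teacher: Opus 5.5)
Your proposal is correct and follows the paper's route: the corollary is obtained directly from Theorem~\ref{thm:ConservedClassical} with $R_A(a)=e^{-\beta E_a}/Z_A$, using that $F_{t,\tau}$ is an increasing affine function of $I_{t,\tau}$ (also in the limit $t\to-\infty$). You additionally spell out details the paper leaves implicit: the $\varepsilon/2$ adjustment for the trace norm, and the explicit quantum lift (dephasing followed by the embedded stochastic map, with the induced marginal map on $BB_\varepsilon$ serving as $G_{B\to B'}$). The lift is worth having, since the paper's preceding remark only states the converse, quantum-to-classical direction, while the sufficiency claim needs your classical-to-quantum step.
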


\section{Conclusion}
We derive the general form of conditional entropy and show that any quantity satisfying operationally relevant axioms admits a representation as an exponential average of R\'enyi entropies of the conditioned distribution, parameterized by a real parameter and a probability measure. 
These results reveal an underlying structure common to the various conditional entropies introduced in the literature and clarify the relationships among them.

We derived conditions for the transformation of joint probability distributions under conditionally mixing channels, both in the many-copy regime and in the presence of a catalyst that may be borrowed during the process and returned unchanged. We showed that these conditions can be expressed in terms of conditional entropies
and that they can be used to characterize the optimal conversion rate under conditionally mixing channels and to derive a set of second laws of quantum thermodynamics with side information.

While the general form of conditional entropies can be expressed in terms of a real parameter and a probability measure, not all choices of these parameters give rise to quantities that satisfy the full set of axioms. 
We therefore provide a set of sufficient conditions on the parameters ensuring that the resulting conditional entropy is well defined and satisfies all required axioms. 
Moreover, we derive necessary conditions which, for suitably well-behaved distributions—and in particular for discrete probability measures with finite support—yield a complete characterization of the admissible parameter range.
While we conjecture that the sufficient conditions we derive are also necessary in full generality, a complete characterization of the necessary conditions for arbitrary parameter choices remains an open problem and an interesting problem for future research.

In Theorem~\ref{th: large sample and catalytic}, we provided a set of sufficient conditions for many-copy and catalytic transformations, which include the condition
\begin{align}
\label{eq: support restriction}
    H_{+\infty,0}(P_{XY}) < H_{+\infty,0}(Q_{X'Y'}) .
\end{align}
This condition requires the maximal support of the columns of $P_{XY}$ to be strictly smaller than that of $Q_{X'Y'}$ and therefore excludes the special case in which the two distributions have equal support.
As a consequence, the applicability of this result is restricted to certain classes of distributions.
 This restriction likely stems from the fact that we do not consider conditional R\'enyi entropies with negative $\alpha$. This is a well-known fact, which also holds in the absence of side information (see~\cite{Jensen_Kjaerulf_2019,farooq2024matrix}).
 Informally, this can be understood by considering the case of a single $Y$ outcome and observing that, under the support assumption~\eqref{eq: support restriction}, the R\'enyi entropies of $P_{XY}$ with negative $\alpha$ take the value $-\infty$, whereas they remain finite for $Q_{XY}$, since $P_{XY}$ must be embedded into the space of $Q_{XY}$. 
Consequently, the support condition already enforces a strict ordering for all negative values of $\alpha$, and these entropies therefore do not need to be included among the sufficient conditions.
This fact is reflected later when deriving sufficient conditions for catalytic conditionally (gibbs-)preserving operations in Section~\ref{sec: second laws of thermo}, where we imposed the support-size separation in Theorem \ref{thm:ConservedClassical} and Corollary \ref{cor:secondlaws}.
Hence, an open question is to extend these results to a more general setting without relying on the support-size assumption.
Another open question is whether the assumption that the Gibbs state has rational entries can be dropped. We believe this to be possible, similarly to what is known in the unconditional case~\cite{farooq2024matrix}. However, since the methods of this manuscript rely on the embedding map, our current construction does not cover the case of irrational entries.

In Theorem~\ref{thm: rate} we derived the optimal rate $R$ for the transformation of multiple copies of a joint probability $P_{XY}$ into $Q_{XY}$. If our conjecture regarding the characterization of $\mc D_{\rm bulk}$ and $\mc D_{-\infty}$ were correct, by a limit argument, the optimal rate could be written in the simpler form
\begin{equation}
R(P_{XY}\rightarrow Q_{X'Y'})=\inf_{(t,\tau)\in\mc D_{\rm bulk}}\frac{H_{t,\tau}(X'|Y')_Q}{H_{t,\tau}(X|Y)_P}\,.
\end{equation}
Indeed, we expect that the monotonicity region $\D_{-\infty}$ could be obtained as the limit $t\to-\infty$ of the monotonicity region $\D_{\rm bulk}$.

Finally, we expect that the conditions in Theorem~\ref{thm:ConservedClassical} and Corollary \ref{cor:secondlaws} concerning $I_{-\infty,\tau}$ and $F_{-\infty,\tau}$ could also be removed by a limiting argument.

\section{Aknowledgements}
We thank Frits Verhagen for helpful discussions. R.R.\ was supported by the National Research Foundation, Singapore, and A*STAR under its CQT Bridging Grant and acknowledges financial support from the ERC grant GIFNEQ 101163938. E.H.\ and M.T.\ acknowledge support from the National Research Foundation Investigatorship Award (NRF-NRFI10-2024-0006) and from the National Research Foundation, Singapore through the National Quantum Office, hosted in A*STAR, under its Centre for Quantum Technologies Funding Initiative (S24Q2d0009).

\newpage
\appendix

\section{Convexity and monotonicity}
\label{app: convexity and monotonicity}
In this appendix, we present several known equivalences between convexity and monotonicity under channels. These will be useful in the main text to characterize the parameter regimes for which the conditional entropy is monotone under conditionally mixing channels, and thus qualifies as a valid conditional entropy according to Definition~\ref{def: conditional entropy}. In particular, these equivalences allow us to reduce the question of monotonicity to that of convexity, which is often easier to analyze. 

In Definition~\ref{def: conditional entropy}, we introduced conditional entropies as quantities satisfying a collection of operational axioms, motivated by the requirement that they quantify conditional uncertainty. A central axiom is monotonicity under conditionally mixing channels.
Within the mathematical framework of preordered semirings, monotone homomorphisms and derivations are functions designed to satisfy closely analogous structural properties (see Section~\ref{sec: background semiring}). In particular, in the semiring associated with conditional majorization, these functions correspond—up to an additive constant and a logarithmic factor—to (extremal) conditional entropies.
Consequently, understanding their monotonicity properties is essential. In the following, we show that determining whether these functions are monotone under conditionally mixing channels is equivalent to studying their convexity properties.

We begin by recalling several notions of convexity for functions defined on
(unnormalized) joint distributions~$\mathcal V$, which—up to embedding in a larger space (padding with zeros)—
can be identified with the elements of the semiring of conditional majorization
introduced in Section~\ref{sec: semiring of conditional entropies}.
 We say that a map $F:\V\to\mb R$ is convex if, for any $P,Q\in\V$ of the same size and $t\in[0,1]$, we have
\begin{equation}
F\big(tP+(1-t)Q\big)\leq tF(P)+(1-t)F(Q).
\end{equation}
We say that $F$ is concave if the above inequality is reversed. We say that $F$ is quasi-convex if, for $P,Q\in\V$ of the same size and $t\in(0,1)$,
\begin{equation}
F\big(tP+(1-t)Q\big)\leq\max\{F(P),F(Q)\}.
\end{equation}
We say that $F$ is strongly quasi-concave if the inequality above is reversed, i.e.,
\begin{equation}
\label{eq: def strongly quasi-conc}
    F\big(tP+(1-t)Q\big)\geq\max\{F(P),F(Q)\}.
\end{equation}

 In the following, we show that the required convexity properties depend on the output space of the homomorphism. As discussed in Section~\ref{sec: sufficient conditions measure}, the key monotonicity is that under maps acting on the conditioning subsystem of a joint distribution, which form the key subclass of conditionally mixing channels. In matrix form, for a positive joint distribution $P_{XY}$, the action on the conditioning subsystem can be represented as $ P_{XY}\rightarrow P_{XY} D$, where $D$ is a row stochastic matrix.

\begin{proposition}\label{prop:convconc}
Let $S$ be the semiring of conditional majorization of Section~\ref{sec: semiring of conditional entropies}. Let
$\Phi: \cS \to \mathbb{K}$ be a monotone homomorphism, and hence of the form in Proposition~\eqref{prop: monotone homorphisms}. Here, $\mathbb{K} \in \{\mathbb{R}_+,  \mathbb{R}_+^{\mathrm{op}},\mathbb{TR}_+,\mathbb{TR}_+^{\mathrm{op}}\}$. Then, monotonicity with respect to maps acting on the conditioning system is equivalent to the following convexity properties, depending on the set $\mathbb{K}$:
\begin{itemize}[itemsep=0pt, topsep=5pt]
\item If $\mathbb{K} =\mathbb{R}_+$, $\Phi$ is monotone if and only if it is concave.
\item If $ \mathbb{K}= \mathbb{R}_+^{\mathrm{op}}$, $\Phi$ is monotone if and only if it is convex.
\item If $ \mathbb{K}=\mathbb{TR}_+$, $\Phi$ is monotone if and only if it is strongly quasi-concave.
\item If $ \mathbb{K}=\mathbb{TR}_+^{\mathrm{op}}$, $\Phi$ is monotone if and only if it is quasi-convex.
\end{itemize}
\end{proposition}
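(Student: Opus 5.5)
The plan is to use the structural facts already established for monotone homomorphisms. By the proof of Proposition~\ref{prop: monotone homorphisms}, any $\Phi$ is additive (temperate or tropical) under direct sums and positively homogeneous of degree $1$ (temperate) or $0$ (tropical) under scalar multiplication. A map acting on the conditioning system sends $P_{XY}$ to $P_{XY}D$ with $D$ row-stochastic. Its columns are then $\sum_y D_{y,y'}P_{X,Y=y}$, so each output column is a nonnegative combination of input columns. By Lemma~\ref{lemma:shifting} and Step~1 of that proposition, $\Phi(P_{XY}D)=\bigoplus_{y'}\phi\big(\sum_y D_{y,y'}P_{X,Y=y}\big)$. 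Here $\phi$ is the restriction of $\Phi$ to single columns, and $\bigoplus$ denotes $\sum$ or $\max$ according to $\mb K$. The whole question therefore reduces to comparing $\phi$ evaluated on a combination of columns with the $\oplus$-sum of the values on the individual columns.

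\textbf{Temperate case.} First show that monotonicity implies concavity for $\mb R_+$ (convexity for $\mb R_+^{\rm op}$). Given columns $p,q$ and $\lambda\in[0,1]$, take the two-column matrix $(\lambda p,(1-\lambda)q)$ and apply the merging channel $D=(1,1)^T$. Monotonicity gives $\phi(\lambda p+(1-\lambda)q)\geq \phi(\lambda p)+\phi((1-\lambda)q)=\lambda\phi(p)+(1-\lambda)\phi(q)$, using degree-one homogeneity; for $\mb R_+^{\rm op}$ the inequality is reversed. Concavity of $\Phi$ on matrices then follows columnwise, since $\Phi$ is a sum of $\phi$ over columns. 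Conversely, a concave function that is $1$-homogeneous is superadditive, $\phi(a+b)\geq\phi(a)+\phi(b)$. Applying this to each output column $\sum_y D_{y,y'}p_y$, summing over $y'$, and using $\sum_{y'}D_{y,y'}=1$ together with homogeneity yields $\Phi(PD)\geq\Phi(P)$. For $\mb R_+^{\rm op}$ the same argument with subadditivity gives $\Phi(PD)\leq\Phi(P)$.

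\textbf{Tropical case.} Here $\phi$ is $0$-homogeneous and the $\oplus$ is $\max$ over nonzero columns. The same merging channel gives $\phi(\lambda p+(1-\lambda)q)\geq\max\{\phi(p),\phi(q)\}$ in $\mb{TR}_+$, which is strong quasi-concavity; in $\mb{TR}_+^{\rm op}$ it gives $\leq$, which is quasi-convexity. Both properties extend by induction to finite combinations, and $0$-homogeneity lets positive rescaling be ignored. Conversely, each output column is a positive combination of some input columns, so its value is $\geq$ (respectively $\leq$) the maximum of those inputs' values. For $\mb{TR}_+$, every input column with nonzero weight feeds some output column, because the rows of $D$ sum to one; hence $\max_{y'}\geq\max_y$. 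For $\mb{TR}_+^{\rm op}$, each output value is bounded by the maximum over inputs, so $\max_{y'}\leq\max_y$.

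\textbf{Main obstacle.} The delicate step is making the pass between matrices and columns rigorous. One must define convexity of $\Phi$ on matrices of the same size and check it is equivalent to the columnwise statement; this is immediate in the temperate case but needs care in the tropical one, where zero columns are excluded from the max. One must also handle boundary cases with zero columns or zero entries of $D$, using $\Phi(0)=0$ and the embedding invariance of Lemma~\ref{lemma:embedding}.
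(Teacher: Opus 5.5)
Your proof is correct. The ``only if'' half is essentially the paper's; the ``if'' half takes a different and more direct route.

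\textbf{The ``only if'' direction.} The paper also places $tP$ and $(1-t)Q$ on distinct outcomes of the conditioning register and merges them. It does this for whole matrices, via $tP\oplus(1-t)Q\succeq\big(tP\;\;(1-t)Q\big)\succeq tP+(1-t)Q$ (up to zero padding), which gives matrix-level concavity, strong quasi-concavity or quasi-convexity in one step. You can do the same with a pure conditioning map, $D=\binom{I}{I}$ applied to $\big(\lambda P\;\;(1-\lambda)Q\big)$. So the column-to-matrix lifting you flag as the main obstacle is not needed. In any case, your 0-homogeneity observation settles it.

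\textbf{The ``if'' direction.} Here the routes genuinely differ.
\begin{itemize}
\item The paper first treats a coarse-graining $R_{X:YZ}\mapsto R_{XY}$. It writes $\Phi(R)=\sum_z R_Z(z)\Phi(R_{XY|Z=z})$ and applies concavity. It then reduces a deterministic $D$, given by $f:\mc Y\to\mc Y'$, to this case by an embedding. Finally it decomposes a general row-stochastic $D$ into deterministic ones and uses concavity once more. The tropical cases are left as ``similar''.
\item You observe that concavity plus degree-one homogeneity gives superadditivity of the column function $\phi$. You then bound each output column $\sum_y D_{y,y'}p_y$ directly, and $\sum_{y'}D_{y,y'}=1$ closes the argument.
\item In the tropical cases, you use extended (strong) quasi-concavity or quasi-convexity, 0-homogeneity, and the fact that every nonzero input column feeds at least one nonzero output column. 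These do the same job explicitly.
\end{itemize}

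\textbf{Comparison.} Your argument is shorter and avoids the extreme-point decomposition of row-stochastic matrices. It also treats all four target semirings uniformly and actually writes out the tropical cases. The paper's argument is the more generic ``reduce to extreme (deterministic) conditioning maps'' strategy. It is phrased in terms of matrix-level concavity and the conditional decomposition of $\Phi$ rather than an explicit column-by-column bound. Both rest on the same structural input: column additivity and homogeneity from Proposition~\ref{prop: monotone homorphisms}.
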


\begin{proof}
Let us first prove the `only if' part of the claim. The proof relies on associating the distributions $tP$ and $(1 - t)Q$ with distinct outcomes of the conditioning variable $Y$. Monotonicity under discarding outcomes of $Y$ then yields the desired inequality. Since monotone homomorphisms are invariant under embedding (see the discussion in
Section~\ref{sec: invariance embedding and shifting}), we may, in the following,
identify their evaluation on equivalence classes with their evaluation on
unnormalized joint distributions.

Explicitly, suppose that $P,Q\in\V$ and $t\in(0,1)$. We now have
\begin{equation}\label{eq:affinechain}
\left(\begin{array}{cc}
tP&0\\
0&(1-t)Q
\end{array}\right)\succeq\left(\begin{array}{cc}
tP&(1-t)Q\\
0&0
\end{array}\right)\succeq\left(\begin{array}{cc}
tP+(1-t)Q&0\\
0&0
\end{array}\right).
\end{equation}
The first implication follows by applying a permutation on the $X$ register conditioned on the second $Y$ outcome, while the second follows by discarding both outcomes and introducing a dummy outcome with zero probability.

In case $\mathbb{R}_+$, it now follows that
\begin{align}
\Phi\big(tP+(1-t)Q\big)&\geq\Phi\left[\left(\begin{array}{cc}
tP&0\\
0&(1-t)Q
\end{array}\right)\right]\\
&=\Phi(tP)+\Phi\big((1-t)Q\big)\\
&=\Phi(t)\Phi(P)+\Phi(1-t)\Phi(Q)\\
&=t\Phi(P)+(1-t)\Phi(Q).
\end{align}
Similarly, in case $\mathbb{R}_+^{\mathrm{op}}$, we get
\begin{align}
\Phi\big(tP+(1-t)Q\big)&\leq t\Phi(P)+(1-t)\Phi(Q) \,.
\end{align}
In case $\mathbb{TR}_+$, we get
\begin{align}
\Phi\big(tP+(1-t)Q\big)&\geq\max\{\Phi(P),\Phi(Q)\} \,.
\end{align}
In case $\mathbb{TR}^{\rm{op}}_+$, we have
\begin{align}
\Phi\big(tP+(1-t)Q\big)&\leq\Phi\left[\left(\begin{array}{cc}
tP&0\\
0&(1-t)Q
\end{array}\right)\right]\\
&=\max\{\Phi(tP),\Phi\big((1-t)Q\big)\}\\
&=\max\{\Phi(t)\Phi(P),\Phi(1-t)\Phi(Q)\}\\
&=\max\{\Phi(P),\Phi(Q)\}.
\end{align}

Next, we prove the “if” direction and assume that $\Phi$ is concave. Let $P \in \mathcal{V}$ and let $D$ be a row-stochastic matrix. We aim to show that $\Phi(PD) \leq \Phi(P)$. Let us first assume that $P=R=R_{X:YZ}$ (colon separating the primary $X$ from the conditioning $YZ$) and that $D$ is the map ignoring $Z$. 
We have
\begin{align}
\Phi(R)&=\sum_{z\in\mc Z}R_Z(z)\Phi(R_{XY|Z=z})\leq\Phi\left(\sum_{z\in\mc Z}R_Z(z)R_{XY|Z=z}\right)=\Phi(R_{XY}),
\end{align}
where we have used the form of \eqref{eq:FinalForm} in the first equality and the assumed concavity of $\Phi$ in the inequality. This proves the first special case. Assume then the case $P_{XY}\succeq P_{XY}D=:Q_{XY'}$ where $D$ is a row-stochastic matrix which has only entries 0 or 1. This means that the matrix entries of $D$ are given by $D(y'|y)=\delta_{f(y),y'}$ for some function $f:Y\to Y'$. Let us define $R=R_{X:YY'}$ where $R(x,y,y')=P(x,y)$ whenever $f(y)=y'$ and $R(x,y,y')=0$ otherwise. It immediately follows that $P\approx R$ and $R_{XY'}=Q=PD$. Thus, using the above result, we have
\begin{align}
\Phi(PD)=\Phi(R_{XY'})\geq\Phi(R)=\Phi(P).
\end{align}
Finally, we prove the case of a general row-stochastic matrix $D$. Since $D$ can be expressed as a convex combination of row-stochastic matrices $D_k$ whose entries are either 0 or 1, there exist weights $t_k > 0$ with $\sum_k t_k = 1$
such that
\begin{equation}
    D = \sum_k t_k D_k.
\end{equation}
Our observations thus far, together with the concavity of $\Phi$ imply
\begin{align}\label{eq:apuconv}
\Phi(PD)&=\Phi\left(\sum_k t_kPD_k\right)\geq\sum_k t_k\Phi(PD_k)=\sum_k t_k\Phi(P)=\Phi(P).
\end{align}
The other cases are similar.
\end{proof}

A similar argument applies to derivations, which must be concave.

\begin{lemma}
\label{lem: conv derivations}
Let $S$ be the semiring of conditional majorization of Section~\ref{sec: semiring of conditional entropies}. Let $\Delta: \mathcal{S} \to \mathbb{R}$ be an extremal monotone derivation at $\|\cdot\|$, and hence of the form in Proposition~\ref{prop: monotone derivations}.
Then, $\Delta$ is a monotone derivation if and only if it is concave.
\end{lemma}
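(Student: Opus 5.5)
The plan is to mirror the proof of Proposition~\ref{prop:convconc}, replacing the homomorphism identities by the Leibniz-rule identities. By Proposition~\ref{prop: monotone derivations}, Step~1, we may assume $\Delta(x)=0$ for all scalars $x\in\mb R_+$. The Leibniz rule then gives $\Delta(xP)=\Delta(x)\|P\|+x\Delta(P)=x\Delta(P)$, so $\Delta$ is positively homogeneous. Together with additivity under $\oplus$ and the shifting Lemma~\ref{lemma:shifting}, this yields the representation \eqref{eq:derivFinalForm}, $\Delta(P_{XY})=\sum_y P_Y(y)\Delta(P_{X|Y=y})$.

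\emph{Only if.} Assume $\Delta$ is monotone. Take $P,Q\in\V$ of the same size and $t\in(0,1)$, and use the chain \eqref{eq:affinechain}. Its two ends are related by a permutation conditioned on $Y$ followed by a coarse-graining of $Y$, so $[tP+(1-t)Q]\rgeq [tP\oplus(1-t)Q]$. Monotonicity then gives
\begin{equation}
\Delta\big(tP+(1-t)Q\big)\geq \Delta(tP)+\Delta\big((1-t)Q\big)=t\Delta(P)+(1-t)\Delta(Q),
\end{equation}
where the equality uses additivity and homogeneity. Hence $\Delta$ is concave.

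\emph{If.} Assume $\Delta$ is concave. Monotonicity under doubly stochastic maps acting on $X$ columnwise already holds, since on each normalized column $\Delta$ is a barycentre of Rényi entropies. By Lemma~\ref{lem: extension to general channels}, it therefore suffices to prove $\Delta(PD)\geq \Delta(P)$ for every row-stochastic $D$. I would proceed in three stages, exactly as for Proposition~\ref{prop:convconc}.
\begin{enumerate}
\item For $R=R_{X:YZ}$ with $D$ the map discarding $Z$, the representation gives $\Delta(R)=\sum_z R_Z(z)\Delta(R_{XY|Z=z})$. Concavity bounds this by $\Delta\big(\sum_z R_Z(z)R_{XY|Z=z}\big)=\Delta(R_{XY})$.
\item A deterministic $D$, given by a function $f:\Y\to\Y'$, reduces to stage~1. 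Embed $P$ into $R_{X:YY'}$ supported on the graph of $f$; then $P\approx R$ and $R_{XY'}=PD$.
\item A general row-stochastic $D$ is a convex combination $\sum_k t_kD_k$ of deterministic ones. Concavity gives $\Delta(PD)\geq\sum_k t_k\Delta(PD_k)\geq\Delta(P)$.
\end{enumerate}

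The main obstacle is the normalization step. The convexity chain only closes if $\Delta$ is exactly homogeneous. For a derivation that does not vanish on scalars, the term $\Delta(t)$ appears with $\Delta(t)=-t\log t\cdot c$ type behaviour, which would spoil the implication. I therefore expect the argument to hinge on justifying the passage to $\Delta'=\Delta-\Delta(\|\cdot\|)$. Monotonicity is preserved by this passage because $\|\cdot\|$ is constant on comparable elements. Concavity is preserved on same-size inputs only after checking that $\Delta(\|P\|)$ is affine along the segment. This holds because on scalars $\Delta$ is additive, by the $x\oplus y\sim x+y$ argument, and the segment $tP+(1-t)Q$ has total weight $t\|P\|+(1-t)\|Q\|$. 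With this in hand, extremality of $\Delta$ plays no further role beyond guaranteeing the form \eqref{eq:derivFinalForm}.
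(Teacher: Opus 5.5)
Your proposal is correct and follows the paper's proof essentially verbatim. The ``only if'' direction uses the chain \eqref{eq:affinechain} together with the Leibniz rule and $\Delta(t)=0$, and the ``if'' direction repeats the three-stage argument (discarding a register, deterministic $D$, convex decomposition of a general $D$) from the proof of Proposition~\ref{prop:convconc}. The normalization worry in your last paragraph is moot: any $\Delta$ of the form \eqref{eq:derivFinalForm} already vanishes on scalars, because a $1\times 1$ matrix has a deterministic conditional column with $H_\alpha=0$, so no passage to $\Delta'$ is needed in either direction.
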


\begin{proof}
By assumption, $\Delta:\cS\to\mb R$ is a derivation at $\|\cdot\|$ of the form in Proposition~\ref{prop: monotone derivations}. In particular, it satisfies $\Delta(x)=0$ for all $x\in\mb R_+$. Let now $P,Q\in\V$ be of the same size and $t\in[0,1]$. Using the same argument in the proof of Lemma~\ref{prop:convconc}, and in particular the relation in equation~\eqref{eq:affinechain}, we have
\begin{align}
\Delta\big(tP+(1-t)Q\big)&\geq\Delta(tP)+\Delta\big((1-t)Q\big)\\
&=\underbrace{\Delta(t)}_{=0}\|P\|+t\Delta(P)+\underbrace{\Delta(1-t)}_{=0}\|Q\|+(1-t)\Delta(Q)\\
&=t\Delta(P)+(1-t)\Delta(Q).
\end{align}

For the reverse implication, the monotonicity of $\Delta$ can be shown in exactly the same way as in the second part of the proof of Lemma~\ref{prop:convconc}.
\end{proof}

The next lemma shows that, for homomorphisms, monotonicity under arbitrary maps acting on the conditioning system already implies monotonicity under the full class of conditionally mixing channels. This follows from the additional facts that homomorphisms are monotone under doubly stochastic maps and additive under direct sums. Together, these properties suffice to establish monotonicity under general conditionally mixing channels. 
\begin{lemma}
\label{lem: extension to general channels}
Let $S$ be the semiring of conditional majorization of Section~\ref{sec: semiring of conditional entropies}.
Let $\Phi: \cS \to \mathbb{K}$ be a monotone homomorphism, and hence of the form in Proposition~\ref{prop: monotone homorphisms}. If $\Phi$ is monotone under row stochastic maps acting on the conditioning system, then it is monotone under conditionally mixing operations.
\end{lemma}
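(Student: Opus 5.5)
Any conditionally mixing channel $P\mapsto \sum_{i=1}^k S^{(i)}PD^{(i)}$ factors as an ``instrument then merge'' operation, and I will use monotonicity for each factor separately. Write $P$ for a $d\times n$ representative and, for each $i$, put $P_i:=PD^{(i)}$. The plan has three steps:
\begin{enumerate}
\item Show that $P$ can be moved onto the direct sum $\bigoplus_i P_i$ by a row-stochastic map on the conditioning system alone.
\item Apply the doubly stochastic maps $S^{(i)}$ blockwise.
\item Merge the blocks $\bigoplus_i S^{(i)}P_i$ back into $\sum_i S^{(i)}P_i$ by another row-stochastic map on the conditioning system.
\end{enumerate}
The first and third steps are covered by the hypothesis. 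The second uses additivity together with the monotonicity of $\Phi$ under doubly stochastic maps from Section~\ref{sec: monotonicty doubly stochastic}.

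For the first step I work with the block matrix $\widetilde P:=(P_1\,|\,P_2\,|\cdots|\,P_k)=P\,(D^{(1)}|\cdots|D^{(k)})$. The matrix $(D^{(1)}|\cdots|D^{(k)})$ is row-stochastic, because $\sum_i D^{(i)}$ is row-stochastic and every entry is nonnegative. The hypothesis therefore gives $\Phi(P)\preceq_{\mb K}\Phi(\widetilde P)$, where $\preceq_{\mb K}$ denotes the order of the target semiring $\mb K$. By the shifting Lemma~\ref{lemma:shifting} (Remark~\ref{rem:SomOfColumns}) and the embedding Lemma~\ref{lemma:embedding}, $\widetilde P$ is equivalent in the semiring to $P_1+\cdots+P_k$, each piece being a sum of its columns. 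Hence $\Phi(\widetilde P)=\sum_i\Phi(P_i)$, with the sum replaced by the tropical max in the $\T\R_+$ cases.

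For the second step I use the explicit form of Proposition~\ref{prop: monotone homorphisms}. There $\Phi$ acts on each column through $P_Y(y)$ and $P_{X|Y=y}$, and each $S^{(i)}$ preserves the column weights. Monotonicity of $P_X\mapsto\int H_\alpha(P_X)\,\d\mu(\alpha)$ under doubly stochastic maps then gives $\Phi(P_i)\preceq_{\mb K}\Phi(S^{(i)}P_i)$ columnwise. Summing (or taking the max) preserves this order, so $\sum_i\Phi(P_i)\preceq_{\mb K}\sum_i\Phi(S^{(i)}P_i)=\Phi\big((S^{(1)}P_1|\cdots|S^{(k)}P_k)\big)$, where the last equality uses additivity and shifting again. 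Embedded rows are handled by Remark~\ref{remark: no isometries} and Lemma~\ref{lemma:embedding}.

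For the third step, the map $(M_1|\cdots|M_k)\mapsto\sum_i M_i$ is right multiplication by the row-stochastic $0/1$ matrix $(I;\,I;\dots;I)$ stacked vertically. It acts only on the conditioning system, so the hypothesis yields $\Phi\big((S^{(1)}P_1|\cdots)\big)\preceq_{\mb K}\Phi\big(\sum_iS^{(i)}PD^{(i)}\big)$. Chaining the three steps gives the required monotonicity in each of the four cases of $\mb K$, with the orientation dictated by $\mb K$.

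I expect the main obstacle to be the bookkeeping in the first two steps. One has to identify column concatenation with semiring addition rigorously, including zero columns (which contribute $0$ because $\Phi(0)=0$) and the possibly different sizes of the $S^{(i)}$. One also has to confirm that blockwise doubly stochastic action is legitimate when the $S^{(i)}$ differ across blocks; this is why I route it through additivity rather than through a single global doubly stochastic matrix.
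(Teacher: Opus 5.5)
Your proposal is correct and is essentially the paper's proof: you spread $P$ into $(PD^{(1)}|\cdots|PD^{(k)})$ with the row-stochastic matrix $(D^{(1)}|\cdots|D^{(k)})$, act with the $S^{(i)}$ blockwise, and merge with the row-stochastic map that forgets the block index. The only cosmetic difference is that the paper applies the single doubly stochastic matrix $\bigoplus_i S^{(i)}$ to $\bigoplus_i PD^{(i)}$, whereas you split into summands via additivity; also, all $S^{(i)}$ are $d\times d$, so the size concern does not arise. For the identification $\Phi\big((P_1|\cdots|P_k)\big)=\sum_i\Phi(P_i)$, cite the explicit column-wise form \eqref{eq:FinalForm} directly, as the paper does, rather than the shifting lemma. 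The shifting channels are themselves conditionally mixing, so using them would be circular when $\Phi$ is only known to have that form.
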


\begin{proof}
We present the proof for the case $\mathbb{K} = \mathbb{R}_+$, as the other cases follow analogously.

We first note that, as discussed in Section~\ref{sec: monotonicty doubly stochastic},
monotonicity under doubly stochastic maps is guaranteed by the explicit form of
the homomorphisms derived in Proposition~\ref{prop: monotone homorphisms}, since
these are given by exponential expectations of R\'enyi entropies, which are
monotone under such maps.

To establish monotonicity under conditionally mixing channels, we introduce an auxiliary register that separates the individual terms appearing in the sum defining the channel, and then exploit the direct-sum property of the homomorphisms. The proof then follows by applying monotonicity under doubly stochastic maps and row-stochastic maps to each term separately.
Explicitly, let $P_{XY}\in\V$, $S^{(i)}$ be doubly stochastic, and $D^{(i)}$ have non-negative entries ($i=1,\ldots,k$) with $D^{(1)}+\cdots+D^{(k)}$ row-stochastic. Let us denote the output of the conditionally mixing channel $Q_{X'Y'}=\sum_i S^{(i)}P_{XY}D^{(i)}$. Let us define new conditioning systems $\widetilde{Y}:=Y\times\{1,\ldots,n\}$ and $\widetilde{Y}':=Y'\times\{1,\ldots,n\}$. 
Let $\pi_{Y'|\widetilde{Y}'}$ be the row-stochastic matrix which ignores the part $\{1,\ldots,n\}$ in $\widetilde{Y}'$. Let us also define the matrices
\begin{align}
P_{X\widetilde{Y}}&:=\big(P_{XY}D^{(1)}\,\cdots\,P_{XY}D^{(n)}\big)\\
Q_{X'\widetilde{Y}'}&:=\big(S^{(1)}P_{XY}D^{(1)}\,\cdots\,S^{(n)}P_{XY}D^{(n)}\big).
\end{align}
Because of the form of \eqref{eq:FinalForm}, 
we have
\begin{equation}
\Phi(P_{X\widetilde{Y}})=\Phi\left(\bigoplus_{i=1}^n P_{XY}D^{(i)}\right),\quad\Phi(Q_{X'\widetilde{Y}'})=\Phi\left(\bigoplus_{i=1}^n S^{(i)}P_{XY}D^{(i)}\right).
\end{equation}
Putting all this together and noting that $(D^{(1)}\,\cdots D^{(n)})$ is row-stochastic and that $\bigoplus_{i=1}^n S^{(i)}$ is doubly stochastic, we have
\begin{align}
\Phi(P_{XY})&\leq\Phi\big[P_{XY}\big(D^{(1)}\,\cdots\,D^{(n)}\big)\big]\\
&=\Phi(P_{X\widetilde{Y}})=\Phi\left(\bigoplus_{i=1}^n P_{XY}D^{(i)}\right)\\
&\leq\Phi\left[\left(\bigoplus_{i=1}^n S^{(i)}\right)\left(\bigoplus_{i=1}^n P_{XY}D^{(i)}\right)\right]\\
&=\Phi\left(\bigoplus_{i=1}^n S^{(i)}P_{XY}D^{(i)}\right)\\
&=\Phi(Q_{X'\widetilde{Y}'})\\
&\leq\Phi(Q_{X'\widetilde{Y}'}\pi_{Y'|\widetilde{Y}'})\\
&=\Phi(Q_{X'Y'}),
\end{align}
showing the monotonicity in case $\mathbb{R}_+$. The other cases are essentially the same; the sums in \eqref{eq:apuconv} turn into maxima in the tropical cases $\mathbb{TR}_+$ and $\mathbb{TR}^{\mathrm{op}}_+$, and the inequalities are reversed in cases $\mathbb{R}^{\mathrm{op}}_+$ and $\mathbb{TR}^{\mathrm{op}}_+$.
\end{proof}
The same implication is also true for the derivations.
\begin{lemma}
Let $S$ be the semiring of conditional majorization of Section~\ref{sec: semiring of conditional entropies}. Let $\Delta: \mathcal{S} \to \mathbb{R}$ be an extremal monotone derivation at $\|\cdot\|$, and hence of the form in Proposition~\ref{prop: monotone derivations}.
If $\Delta$ is monotone under row stochastic maps acting on the conditioning system, then it is monotone under conditionally mixing operations.
\end{lemma}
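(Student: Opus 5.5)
The plan mirrors the proof of Lemma~\ref{lem: extension to general channels}, replacing the homomorphism $\Phi$ by the derivation $\Delta$ and using additivity under direct sums in place of the (temperate) sum property. By Proposition~\ref{prop: monotone derivations} we may write $\Delta(P_{XY})=\sum_{y}P_Y(y)\int H_\alpha(P_{X|Y=y})\,\mathrm{d}\tau(\alpha)$, and for an extremal derivation $\Delta=H_{0,\alpha}$. The argument has three ingredients.

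First, monotonicity under doubly stochastic maps acting on $X$. If $S$ is doubly stochastic, then $(SP_{XY})_{Y=y}=S\,P_{X,Y=y}$, so each conditional distribution is transformed by $S$ while the marginal $P_Y$ is unchanged. Each $H_\alpha$ is Schur concave, hence non-decreasing under $S$. Since $\Delta$ is a nonnegative combination of these terms with weights $P_Y(y)$, it follows that $\Delta(P)\le\Delta(SP)$. The same argument covers a block-diagonal doubly stochastic matrix $\bigoplus_i S^{(i)}$ acting on a direct sum, because $\Delta$ is additive under $\oplus$ by the definition of a derivation together with Lemma~\ref{lemma:shifting}.

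Second, the decomposition. Take a conditionally mixing channel $Q=\sum_i S^{(i)}P D^{(i)}$ and form $\widetilde P=P\,(D^{(1)}\cdots D^{(k)})$. This is obtained from $P$ by a single row-stochastic map on the conditioning system, so the hypothesis gives $\Delta(P)\le\Delta(\widetilde P)$. Lemma~\ref{lemma:shifting} combined with additivity then yields $\Delta(\widetilde P)=\Delta(\bigoplus_i PD^{(i)})$. Applying the block-diagonal doubly stochastic map $\bigoplus_i S^{(i)}$ and using the first step gives $\Delta(\bigoplus_i PD^{(i)})\le\Delta(\bigoplus_i S^{(i)}PD^{(i)})$. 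Lemma~\ref{lemma:shifting} identifies the latter with $\Delta(Q_{X'\widetilde Y'})$, where $Q_{X'\widetilde Y'}=(S^{(1)}PD^{(1)}\cdots S^{(k)}PD^{(k)})$. A final application of the hypothesis to the row-stochastic map $\pi_{Y'|\widetilde Y'}$, which discards the label $i$, gives $\Delta(Q_{X'\widetilde Y'})\le\Delta(Q)$. Chaining these inequalities proves $\Delta(P)\le\Delta(Q)$.

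Third, the identification steps. Invoking Lemma~\ref{lemma:shifting} needs an embedding of $X$; this is harmless because $H_\alpha$ for $\alpha\ge0$ is invariant under padding with zeros. It is also convenient that all matrices in the chain have the same total weight $\|P\|$, so no normalization issues arise in the Leibniz rule.

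The main obstacle is this third point, namely justifying the equalities $\Delta(\widetilde P)=\Delta(\bigoplus_i PD^{(i)})$ rigorously. Since the columns of $\widetilde P$ are exactly the columns of the blocks $PD^{(i)}$, I would check it directly from the explicit column-sum formula for $\Delta$, which sidesteps any subtlety about zero columns or embeddings.
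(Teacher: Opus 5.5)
Your proposal is correct and follows the same route as the paper. The paper proves this lemma by repeating the chain from Lemma~\ref{lem: extension to general channels}: apply the row-stochastic map $(D^{(1)}\cdots D^{(k)})$, identify the result with $\bigoplus_i PD^{(i)}$, apply the doubly stochastic $\bigoplus_i S^{(i)}$, and discard the label $i$ with $\pi_{Y'|\widetilde{Y}'}$. You justify the identification steps from the explicit column-sum form of $\Delta$ rather than from the monotonicity content of Lemma~\ref{lemma:shifting}, which would be circular here because the shifting channels are themselves conditionally mixing. The paper likewise relies on the explicit form at that point.
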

\begin{proof}
    The proof is analogous to that of one of Lemma~\ref{lem: extension to general channels} for the homomorphisms.
\end{proof}

\bibliographystyle{ultimate}
\bibliography{bibliography}

@article{fehr14conditional,
  author={Fehr, Serge and Berens, Stefan},
  journal={IEEE Transactions on Information Theory}, 
  title={On the Conditional Rényi Entropy}, 
  year={2014},
  volume={60},
  number={11},
  pages={6801-6810},
  doi={10.1109/TIT.2014.2357799}}

@inproceedings{renyi61,
   author = {A. Rényi},
   city = {Berkeley, California, USA},
   booktitle = {Proc. 4th Berkeley Symposium on Mathematical Statistics and Probability},
   pages = {547-561},
   publisher = {University of California Press},
   title = {On Measures of Information and Entropy},
   volume = {1},
   year = {1961},
}

@article{gour21_axiomatic,
   author = {Gilad Gour and Marco Tomamichel},
   doi = {10.1109/TIT.2021.3078337},
   issn = {0018-9448},
   issue = {10},
   journal = {IEEE Transactions on Information Theory},
   month = {10},
   pages = {6313-6327},
   title = {Entropy and Relative Entropy From Information-Theoretic Principles},
   volume = {67},
   year = {2021},
}

@article{mu21_renyi,
   author = {Xiaosheng Mu and Luciano Pomatto and Philipp Strack and Omer Tamuz},
   doi = {10.3982/ECTA17548},
   issn = {0012-9682},
   issue = {1},
   journal = {Econometrica},
   pages = {475-506},
   title = {From Blackwell Dominance in Large Samples to Rényi Divergences and Back Again},
   volume = {89},
   year = {2021},
}

@misc{bhatia1997graduate,
  title={Graduate Texts in Mathematics},
  author={Bhatia, R},
  journal={Matrix Analysis},
  volume={169},
  year={1997},
  publisher={Springer New York}
}

@article{vskoric2011sharp,
  title={Sharp lower bounds on the extractable randomness from non-uniform sources},
  author={{\v{S}}kori{\'c}, Boris and Obi, Chibuzo and Verbitskiy, Evgeny and Schoenmakers, Berry},
  journal={Information and Computation},
  volume={209},
  number={8},
  pages={1184--1196},
  year={2011},
  publisher={Elsevier},
  doi={10.1016/j.ic.2011.06.001}
}

@article{arimoto1977information,
  title={Information measures and capacity of order $\alpha$ for discrete memoryless channels},
  author={Arimoto, Suguru},
  journal={Topics in information theory},
  year={1977},
  publisher={The Netherlands}
}

@article{hayashi2011exponential,
  title={Exponential decreasing rate of leaked information in universal random privacy amplification},
  author={Hayashi, Masahito},
  journal={IEEE Transactions on Information Theory},
  volume={57},
  number={6},
  pages={3989--4001},
  year={2011},
  publisher={IEEE},
  doi={10.1109/TIT.2011.2110950}
}

@article{Brandao,
  title={The second laws of quantum thermodynamics},
  author={Brandao, Fernando and Horodecki, Micha{\l} and Ng, Nelly and Oppenheim, Jonathan and Wehner, Stephanie},
  journal={Proceedings of the National Academy of Sciences},
  volume={112},
  number={11},
  pages={3275--3279},
  year={2015},
  publisher={National Acad Sciences},
  doi={10.1073/pnas.1411728112}
}

@article{Oppenheim,
  title={Fundamental limitations for quantum and nanoscale thermodynamics},
  author={Horodecki, Micha{\l} and Oppenheim, Jonathan},
  journal={Nature communications},
  volume={4},
  number={1},
  pages={1--6},
  year={2013},
  publisher={Nature Publishing Group},
  doi={10.1038/ncomms3059}
}

@article{duan2005multiple,
  title={Multiple-copy entanglement transformation and entanglement catalysis},
  author={Duan, Runyao and Feng, Yuan and Li, Xin and Ying, Mingsheng},
  journal={Physical Review A},
  volume={71},
  number={4},
  pages={042319},
  year={2005},
  publisher={APS},
  doi={10.1103/PhysRevA.71.042319}
}

@article{nielsen1999conditions,
  title={Conditions for a class of entanglement transformations},
  author={Nielsen, Michael A},
  journal={Physical Review Letters},
  volume={83},
  number={2},
  pages={436},
  year={1999},
  publisher={APS},
  doi={10.1103/PhysRevLett.83.436}
}

@article{gour2024inevitability,
  title={Inevitability of knowing less than nothing},
  author={Gour, Gilad and Wilde, Mark M and Brandsen, Sarah and Geng, Isabelle Jianing},
  journal={Quantum},
  volume={8},
  pages={1529},
  year={2024},
  publisher={Verein zur F{\"o}rderung des Open Access Publizierens in den Quantenwissenschaften},
  doi={10.22331/q-2024-11-20-1529}
}

@article{gour2018conditional,
  title={Conditional uncertainty principle},
  author={Gour, Gilad and Grudka, Andrzej and Horodecki, Micha{\l} and K{\l}obus, Waldemar and {\L}odyga, Justyna and Narasimhachar, Varun},
  journal={Physical Review A},
  volume={97},
  number={4},
  pages={042130},
  year={2018},
  publisher={APS},
  doi={10.1103/PhysRevA.97.042130}
}

@book{bullen2013handbook,
  title={Handbook of means and their inequalities},
  author={Bullen, Peter S},
  volume={560},
  year={2013},
  publisher={Springer Science \& Business Media},
  doi={10.1007/978-94-017-0399-4}
}

@ARTICLE{Jensen_Kjaerulf_2019,
  author={Jensen, Asger Kjærulff},
  journal={IEEE Transactions on Information Theory}, 
  title={Asymptotic Majorization of Finite Probability Distributions}, 
  year={2019},
  volume={65},
  number={12},
  pages={8131-8139},
  doi={10.1109/TIT.2019.2922627}}

@article{Kantorovich,
    author = {Kantorovich, L.V.},
    title = {On the moment problem for a finite interval},
    journal = {Doklady Akademii Nauk SSSR},
    year = {1937},
    volume = {14},
    pages = {531-537}
}

@article{fritz2023abstract,
  title={Abstract Vergleichsstellens{\"a}tze for preordered semifields and semirings I},
  author={Fritz, Tobias},
  journal={SIAM Journal on Applied Algebra and Geometry},
  volume={7},
  number={2},
  pages={505--547},
  year={2023},
  publisher={SIAM}
}

@article{fritz2023abstractII,
  title={Abstract Vergleichsstellens{\"a}tze for preordered semifields and semirings II},
  author={Fritz, Tobias},
  journal={arXiv:2112.05949},
  year={2022}
}

@misc{haapasalo_inprep,
    author={Haapasalo, Erkka},
    title={Barycentric decompositions for extensive monotone divergences},
    journal={arXiv:2509.18725},
    year={2025},
    url={https://arxiv.org/abs/2509.18725}
}

@article{birkhoff1946tres,
  title={Tres observaciones sobre el algebra lineal},
  author={Birkhoff, Garrett},
  journal={Univ. Nac. Tucuman, Ser. A},
  volume={5},
  pages={147--154},
  year={1946}
}

@book{marshall1979inequalities,
  title={Inequalities: theory of majorization and its applications},
  author={Marshall, Albert W and Olkin, Ingram and Arnold, Barry C},
  year={1979},
  publisher={Springer},
  doi={10.1007/978-0-387-68276-1}
}

@article{rubboli2024quantum,
  title={Quantum Conditional Entropies},
  author={Rubboli, Roberto and Goodarzi, Milad M and Tomamichel, Marco},
  journal={arXiv preprint:2410.21976},
  year={2024}
}

@article{tan2018analysis,
  title={Analysis of remaining uncertainties and exponents under various conditional R{\'e}nyi entropies},
  author={Tan, Vincent YF and Hayashi, Masahito},
  journal={IEEE Transactions on Information Theory},
  volume={64},
  number={5},
  pages={3734--3755},
  year={2018},
  publisher={IEEE},
  doi={10.1109/TIT.2018.2792495}
}

@article{hayashi2016uniform,
  title={Uniform random number generation from Markov chains: Non-asymptotic and asymptotic analyses},
  author={Hayashi, Masahito and Watanabe, Shun},
  journal={IEEE Transactions on Information Theory},
  volume={62},
  number={4},
  pages={1795--1822},
  year={2016},
  publisher={IEEE},
  doi={10.1109/TIT.2016.2530084}
}

@article{shannon1948mathematical,title={A mathematical theory of communication},
  author={Shannon, Claude E},
  journal={The Bell system technical journal},
  volume={27},
  number={3},
  pages={379--423},
  year={1948},
  publisher={Nokia Bell Labs},
  doi={0.1002/j.1538-7305.1948.tb01338.x}
}

@article{farooq2024matrix,
  title={Matrix majorization in large samples},
  author={Farooq, Muhammad Usman and Fritz, Tobias and Haapasalo, Erkka and Tomamichel, Marco},
  journal={IEEE Transactions on Information Theory},
  volume={70},
  number={5},
  pages={3118--3144},
  year={2024},
  publisher={IEEE},
  doi={10.1109/TIT.2024.3352088}
}

@article{feng2006relation,
  title={Relation between catalyst-assisted transformation and multiple-copy transformation for bipartite pure states},
  author={Feng, Yuan and Duan, Runyao and Ying, Mingsheng},
  journal={Physical Review A—Atomic, Molecular, and Optical Physics},
  volume={74},
  number={4},
  pages={042312},
  year={2006},
  publisher={APS},
  doi={10.1103/PhysRevA.74.042312}
}

@article{horodecki2013fundamental,
  title={Fundamental limitations for quantum and nanoscale thermodynamics},
  author={Horodecki, Micha{\l} and Oppenheim, Jonathan},
  journal={Nature communications},
  volume={4},
  number={1},
  pages={2059},
  year={2013},
  publisher={Nature Publishing Group UK London},
  doi={10.1038/ncomms3059}
}

@article{brandsen2022entropy,
  title={What is entropy? A perspective from games of chance},
  author={Brandsen, Sarah and Geng, Isabelle Jianing and Gour, Gilad},
  journal={Physical Review E},
  volume={105},
  number={2},
  pages={024117},
  year={2022},
  publisher={APS},
  doi={10.1103/PhysRevE.105.024117}
}

@article{sagawa2008second,
  title={Second law of thermodynamics with discrete quantum feedback control},
  author={Sagawa, Takahiro and Ueda, Masahito},
  journal={Physical review letters},
  volume={100},
  number={8},
  pages={080403},
  year={2008},
  publisher={APS},
  doi={10.1103/PhysRevLett.100.080403}
}

@article{morris2019assisted,
  title={Assisted work distillation},
  author={Morris, Benjamin and Lami, Ludovico and Adesso, Gerardo},
  journal={Physical review letters},
  volume={122},
  number={13},
  pages={130601},
  year={2019},
  publisher={APS},
  doi={10.1103/PhysRevLett.122.130601}
}

@article{rio2011thermodynamic,
  title={The thermodynamic meaning of negative entropy},
  author={Rio, L{\'\i}dia del and {\AA}berg, Johan and Renner, Renato and Dahlsten, Oscar and Vedral, Vlatko},
  journal={Nature},
  volume={474},
  number={7349},
  pages={61--63},
  year={2011},
  publisher={Nature Publishing Group UK London},
  doi={10.1038/nature10123}
}

@article{bera2017generalized,
  title={Generalized laws of thermodynamics in the presence of correlations},
  author={Bera, Manabendra N and Riera, Arnau and Lewenstein, Maciej and Winter, Andreas},
  journal={Nature communications},
  volume={8},
  number={1},
  pages={2180},
  year={2017},
  publisher={Nature Publishing Group UK London},
  doi={10.1038/s41467-017-02370-x}
}

@article{ji2025fundamental,
  title={Fundamental work costs of preparation and erasure in the presence of quantum side information},
  author={Ji, Kaiyuan and Gour, Gilad and Wilde, Mark M},
  journal={arXiv preprint:2503.09012},
  year={2025}
}

@article{narasimhachar2017resource,
  title={Resource theory under conditioned thermal operations},
  author={Narasimhachar, Varun and Gour, Gilad},
  journal={Physical Review A},
  volume={95},
  number={1},
  pages={012313},
  year={2017},
  publisher={APS},
  doi={10.1103/PhysRevA.95.012313}
}

@article{janzing2000thermodynamic,
  title={Thermodynamic cost of reliability and low temperatures: Tightening Landauer's principle and the second law},
  author={Janzing, Dominik and Wocjan, Pawel and Zeier, Robert and Geiss, Rubino and Beth, Th},
  journal={International Journal of Theoretical Physics},
  volume={39},
  number={12},
  pages={2717--2753},
  year={2000},
  publisher={Springer},
  doi={10.1023/A:1026422630734}
}

@article{hayashi2016equivocations,
  title={Equivocations, exponents, and second-order coding rates under various R{\'e}nyi information measures},
  author={Hayashi, Masahito and Tan, Vincent YF},
  journal={IEEE Transactions on Information Theory},
  volume={63},
  number={2},
  pages={975--1005},
  year={2016},
  publisher={IEEE}
}

@phdthesis{cachin1997entropy,
  author={Cachin, Christian},
  title={Entropy measures and unconditional security in cryptography},
  school={ETH Zurich},
  year={1997},
  address= {},
  arxivid = {}
}

@article{teixeira2012conditional,
  title={Conditional rényi entropies},
  author={Teixeira, Andreia and Matos, Armando and Antunes, Luis},
  journal={IEEE Transactions on Information Theory},
  volume={58},
  number={7},
  pages={4273--4277},
  year={2012},
  publisher={IEEE}
}

@inproceedings{renner2005simple,
  title={Simple and tight bounds for information reconciliation and privacy amplification},
  author={Renner, Renato and Wolf, Stefan},
  booktitle={International conference on the theory and application of cryptology and information security},
  pages={199--216},
  year={2005},
  organization={Springer}
}

@article{li2025two,
  title={Two-Parameter R$\backslash$'enyi Information Quantities with Applications to Privacy Amplification and Soft Covering},
  author={Li, Shi-Bing and Li, Ke and Yu, Lei},
  journal={arXiv preprint arXiv:2511.02297},
  year={2025}
}

@article{hartley28,
author = {Hartley, R. V. L.},
journal = {Bell System Technical Journal},
month = {jul},
number = {3},
pages = {535--563},
title = {{Transmission of Information}},
volume = {7},
year = {1928}
}

@misc{verhagen2025,
      title={Conditions for Large-Sample Majorization of Pairs of Flat States in Terms of $\alpha$-$z$ Relative Entropies}, 
      author={Frits Verhagen and Marco Tomamichel and Erkka Haapasalo},
      year={2025},
      eprint={2507.07520},
      archivePrefix={arXiv},
      primaryClass={quant-ph},
      url={https://arxiv.org/abs/2507.07520}, 
}

@book{aczel75,
author = {Acz{\'{e}}l, J. and Dar{\'{o}}czy, Z.},
isbn = {9780080956244},
publisher = {Academic Press},
series = {Mathematics in Science and Engineering},
title = {{On Measures of Information and their Characterizations}},
volume = {115},
year = {1975}
}

@book{ebanks98,
author = {Ebanks, Bruce and Sahoo, Prasanna and Sander, Wolfgang},
doi = {10.1142/3354},
isbn = {978-981-02-3006-7},
month = {apr},
publisher = {World Scientific},
title = {{Characterizations of Information Measures}},
url = {https://www.worldscientific.com/worldscibooks/10.1142/3354},
year = {1998}
}

@article{csiszar08,
author = {Csisz{\'{a}}r, Imre},
doi = {10.3390/e10030261},
issn = {1099-4300},
journal = {Entropy},
month = {sep},
number = {3},
pages = {261--273},
title = {{Axiomatic Characterizations of Information Measures}},
url = {http://www.mdpi.com/1099-4300/10/3/261},
volume = {10},
year = {2008}
}

@article{tan2018conditional,
   author = {Vincent Y. F. Tan and Masahito Hayashi},
   doi = {10.1109/TIT.2018.2792495},
   issn = {0018-9448},
   issue = {5},
   journal = {IEEE Transactions on Information Theory},
   month = {5},
   pages = {3734-3755},
   title = {Analysis of Remaining Uncertainties and Exponents Under Various Conditional Rényi Entropies},
   volume = {64},
   year = {2018}
}

\appendix

\end{document}